\documentclass[12pt,a4paper]{amsart}
\usepackage{comment}
\usepackage{amsmath,amsfonts,amssymb,extarrows}
\usepackage{mathtools}
\usepackage{color}
\usepackage[hmargin=3cm,vmargin=3cm]{geometry}
\usepackage{hyperref}
\usepackage{nameref,zref-xr}                    
\usepackage[all]{xy}           
\usepackage{cancel}
\usepackage{longtable}
\usepackage{dsfont}
\usepackage{multicol}
\numberwithin{equation}{section}
\newtheorem{theorem}{Theorem}[section]   
\newtheorem{definition}[theorem]{Definition}
\newtheorem{proposition}[theorem]{Proposition}
\newtheorem{lemma}[theorem]{Lemma}
\newtheorem{corollary}[theorem]{Corollary}

\newtheorem{example}[theorem]{Example}
\newtheorem{example-notation}[theorem]{Example-Notation}
\newtheorem{remark}[theorem]{Remark}

\def\g{\mathfrak{g}}

\def\d{\partial}

\def\f{\frac}
\def\dna{d_{\nabla}}

\def\na{\nabla}

\newcommand{\eqa}{\begin{eqnarray}}
\newcommand{\eeqa}{\end{eqnarray}}
\newcommand{\beq}{\begin{equation}}
\newcommand{\eeq}{\end{equation}}

\begin{document}
\title{Integrable hierarchies and F-manifolds with compatible connection}
\author{Paolo Lorenzoni}
\address{P.~Lorenzoni:\newline Dipartimento di Matematica e Applicazioni, Universit\`a di Milano-Bicocca, \newline
Via Roberto Cozzi 55, I-20125 Milano, Italy and INFN sezione di Milano-Bicocca}
\email{paolo.lorenzoni@unimib.it}
\author{Sara Perletti}
\address{S.~Perletti:\newline Dipartimento di Matematica e Applicazioni, Universit\`a di Milano-Bicocca, \newline
Via Roberto Cozzi 55, I-20125 Milano, Italy and INFN sezione di Milano-Bicocca}
\email{sara.perletti1@unimib.it}
\author{Karoline van Gemst}
\address{K.~van Gemst:\newline Dipartimento di Matematica e Applicazioni, Universit\`a di Milano-Bicocca, \newline
Via Roberto Cozzi 55, I-20125 Milano, Italy and INFN sezione di Milano-Bicocca}
\email{karoline.vangemst@unimib.it}

\begin{abstract} 
Building on the interplay between geometry and integrability, we show that F-manifolds with compatible connection $(\nabla,\circ,e)$ are the geometric counterpart of integrable systems of quasilinear first order evolutionary PDEs. We consider F-manifolds equipped  with an Euler vector field and assume that the operator $L=E\circ$ is regular.
 This generalises previous results in the  semisimple context. As an example we study regular  F-manifolds with compatible connection $(\nabla,\circ,e,E)$
  associated with integrable  hierarchies obtained from the solutions of the equation $d\cdot d_L \,a_0=0$ by applying the construction of  \cite{LM}. We show that $n$-dimensional F-manifolds associated to operators $L$ with $r\le n$ Jordan blocks $L_\alpha$ of  size $m_\alpha$ are classified by $n$ arbitrary functions of a single variable, where each block $L_\alpha$ contributes with $m_\alpha$ functions of the  variable  appearing in the  diagonal of the block. In the case of a single Jordan block of arbitrary size we show that flat connections $\nabla$ correspond to linear solutions  $a_0$. This generalises part of the construction of \cite{LP23} where special linear solutions were considered. We illustrate the construction in dimensions $2,3,$ and $4$ for any choice of Jordan canonical form and any choice of the corresponding solution $a_0$. In these dimensions we have that linear solutions define bi-flat F-manifolds, and that  the special linear solutions studied in \cite{LP23} are related to Riemannian F-manifolds with Killing unit vector field. We conjecture that this is true in general.
\end{abstract}    
 
\maketitle

\tableofcontents

\section*{Introduction}
The last few years have witnessed a growing interest in integrable systems of quasilinear first order evolutionary PDEs  (integrable systems of hydrodynamic type),
\newline
\beq\label{SHT-intro}
w^i_{t}=V^i_k(w)w^k_x,
\eeq
\newline
that do not (necessarily) admit Riemann invariants \cite{LM,L2006,KK,KO,XF,FP,VF,BKM3}, and the (non-semisimple) geometric structures connected to them such as:
\begin{itemize}
\item Hamiltonian and bihamiltonian structures of hydrodynamic type and their deformations  \cite{FLS,DVLS};
\item F-manifolds, (bi-)flat F-manifolds  and Dubrovin-Frobenius manifolds \cite{LPR,LP,DH,ALmulti,ABLR,LP22,LP23,AK}.
\newline
\end{itemize}
In this paper, following the approach of \cite{LPR}, we consider systems of  the form
\newline
\beq\label{sht-intro}
w^i_{t}=(X\circ w_x)^i=c^i_{jk}X^jw^k_x,
\eeq
\newline
where $\circ$ is a commutative associative product satisfying the Hertling-Manin condition \cite{HM}.  In this setting, the existence of Riemann invariants is related to the semisimplicity  of the product. In the semisimple case, the integrability of the system can be written  as a  condition on a torsionless connection $\nabla$ uniquely determined by the system \eqref{sht-intro}. This condition relates the Riemann tensor of the connection $\nabla$ with the structure functions of the product.  It takes the two equivalent forms:
\newline
\beq\label{shc-intro}
R^s_{lmi}c^j_{ks}+R^s_{lik}c^j_{ms}+R^s_{lkm}c^j_{is}=0,\qquad R^j_{skl}c^s_{mi}+R^j_{smk}c^s_{li}+R^j_{slm}c^s_{ki}=0\ .
\eeq
\newline
This observation leads the authors  of \cite{LPR}
 to introduce the notion of \emph{F-manifold with compatible connection} $(M,\nabla,\circ,e)$. Here $e$ is the unit of the product and, in order to fix some inessential  freedom, it is required to be covariantly constant \cite{LP}.  If $\nabla$ is flat, the condition \eqref{shc-intro} is automatically satisfied. This selects a special class of F-manifolds called \emph{flat F-manifolds} or, according to  the original definition of Manin \cite{manin}, \emph{F-manifolds with compatible flat connection}.  
\newline
\newline
In this paper we present a construction of non semisimple F-manifolds with compatible connection starting from integrable systems of hydrodynamic type that do not admit Riemann invariants.
\newline
\newline
In the first part of the paper we show that, under some (not too restrictive) technical assumptions, given $n=\text{dim}(M)$ compatible systems of PDEs of the form \eqref{sht-intro} defined by $n$ linearly independent vector fields $X_{(0)},\dots,X_{(n-1)}$, it is possible to construct an F-manifold with compatible connection (with flat unit). The main assumption, called  \emph{regularity}, has been introduced by David and Hertling in \cite{DH} for F-manifolds with Euler vector field. Regularity   ensures the existence of  a special  set of coordinates $(u^1,\dots,u^n)$,  generalising Dubrovin's canonical coordinates in the semisimple setting. Here, the operator of multiplication by the Euler vector field $L=E\circ$ has a block diagonal form, as opposed to diagonal. In the case of $r$ Jordan blocks, we shall denote the blocks by  $L_1,...,L_r$ and their respective sizes by  $m_1,...,m_r$. We will refer also to such generalised coordinates as \emph{canonical coordinates}. By setting $m_1=m_2=\cdots=m_r=1$ we fall in the usual semisimple case and with this choice, the special coordinates do indeed  coincide with Dubrovin's canonical coordinates \cite{D96}. It will be  convenient to re-label the coordinates $(u^1,\dots,u^n)$ according to the following rule. For each $\alpha\in\{2,\dots,r\}$ and for each $j\in\{1,\dots,m_\alpha\}$ we write
\begin{equation}
	\label{relabellingcoordinates}
	j(\alpha)=m_1+\dots+m_{\alpha-1}+j
\end{equation}
(for $\alpha=1$ we set $j(\alpha)=j$), so that $u^{j(\alpha)}$ denotes the $j$-th coordinate associated to the Jordan block with label $\alpha$. Using this notation
 the blocks $\{L_\alpha\}_{\alpha\in\{1,\dots,r\}}$ have the  following lower triangular Toeplitz form 
\begin{equation}\label{Lblock}
L_\alpha=\begin{bmatrix}
	u^{1(\alpha)} & 0 & \dots & 0\cr
	u^{2(\alpha)} & u^{1(\alpha)} & \dots & 0\cr
	\vdots & \ddots & \ddots & \vdots\cr
	u^{m_\alpha(\alpha)} & \dots & u^{2(\alpha)} & u^{1(\alpha)}
\end{bmatrix},\qquad\alpha\in\{1,\dots,r\}.
 \end{equation}  
\newline
\newline
In the second part of the paper, we study F-manifolds with compatible  connection  associated with integrable hierarchies of hydrodynamic type 
\[u_{t_k}=X_{(k)}\circ u_x\]
which are defined by vector fields of the form
\begin{equation}\label{chain-intro}
X_{(k)}=E^k-a_0E^{k-1}-a_1E^{k-2}-\cdots-a_{k-1}E^0,
\end{equation}
where $E^0=e$,  $E^1=E$, $E^2=E\circ E$, $E^3=E\circ E\circ E$, and so on. We assume that the data $(\circ,e,E)$ define a regular F-manifold with Euler vector field, and as such  the  operator $L=E\circ$ has vanishing Nijenhuis torsion and we can apply the results of \cite{LM}. The coefficients $a_1,a_2,\dots, a_n$ are determined recursively, as in \cite{LM}, starting from a solution, $a_0$, of the equation
\beq\label{mainEQ}
d\cdot d_La_0=0,
\eeq
where $d_L$ is the differential associated with $L$. In local coordinates the equation \eqref{mainEQ} reads
\begin{equation}\label{eq:ddainL}
L^s_i\d_s\d_ja_0-L^s_j\d_s\d_ia_0+\d_sa_0\d_jL^s_i-\d_sa_0\d_iL^s_j=0.
\end{equation}
Classifying such hierarchies in the regular case amounts to solving the  equation \eqref{eq:ddainL} for each David-Hertling canonical form of $L$. 
\newline
\newline
In the semisimple case, $L=\text{diag}(u^1,...,u^n)$, the equation \eqref{eq:ddainL} is equivalent to the system
\[\d_i\d_j a_0=0,\qquad i\ne j,\]
for which the general solution is
\[a_0=\sum_{i=1}^nF_i(u^i),\]
where $F_i(u^i)$  are  arbitrary functions of  a single variable. The corresponding integrable systems of hydrodynamic type appeared in \cite{Pav} as  reductions of  the  
 infinite momentum chain
 \[\d_tc_k=\d_xc_{k+1}-c_1\d_xc_k,\qquad k\in\{0,\pm 1,\pm 2,\dots\} \, .\]
In canonical  coordinates, the associated F-manifolds with compatible flat connection and flat unit $(\nabla,\circ,e)$
 are defined by the data
\[\Gamma^i_{ij}=\f{\d_ja_0}{u^i-u^j}=\f{\d_jF_j}{u^i-u^j}=-\Gamma^i_{jj},\quad\Gamma^i_{ii}=-\sum_{s\ne i}\Gamma^i_{is},\qquad c^i_{jk}=\delta^i_j\delta^i_k,\qquad e^i=1,\]
where $\Gamma^i_{ij},\Gamma^i_{jj}$ (with $i\ne j$) and $\Gamma^i_{ii}$ are the
 non-vanishing Christoffel symbols of  $\nabla$,  $c^i_{jk}$ are the structure functions of $\circ$ and $e^i$ are the  components of  the unit vector field. It turns out that the connection $\nabla$ is flat if and only if the function $a_0$ is linear,
\[a_0=\sum_{i=1}^n\varepsilon_iu^i,\]
with 
$\varepsilon_1,...,\varepsilon_n$ being arbitrary constants. The corresponding integrable systems of  hydrodynamic type, especially those corresponding to the choice 
$a_0=\varepsilon\,\text{Tr}(L)$, have been studied by several authors. 
In particular, we refer to \cite{Pav87,ts89} for the case 
$\varepsilon=-1$, to \cite{FP91} for the case $\varepsilon=-\frac{1}{2}$, 
to  \cite{F91} for the case $\varepsilon=1$, and to \cite{Pav,LM,L2006} for the general case. Linear solutions also appeared in the works of Ferapontov on   
  non-local Hamiltonian formalism for diagonal  systems of hydrodynamic type (see for instance \cite{F92}) and in  the  study of a  special class  of bi-flat F-manifolds called Lauricella bi-flat F-manifolds (see  \cite{ALmulti} and references therein).
\newline
\newline
In the non-semisimple case the equation \eqref{mainEQ} looks much more involved.    
  In the case of a single Jordan block of  size  $n$ the general solution is given  by the polynomial in $(u^2,...,u^n)$ 
\newline  
\beq\label{sol-intro}
a_0^{(n)}=F_n(u^1)+\sum_{s>0}\frac{1}{s!}\sum_{k_1=2}^n\cdots\sum_{k_s=2}^n u^{k_1}\cdots u^{k_s} \left(\frac{d}{d u_1}\right)^{(s-1)}F_{n+s-(k_1+k_2+...+k_s)}(u_1),
\eeq
\newline
where $F_{n+s-(k_1+k_2+...+k_s)}=0$ when $n+s-(k_1+k_2+...+k_s)<1$ and 
 $F_1(u^1),\dots,F_n(u^1)$  are arbitrary  functions of a  single variable. Meanwhile, the general solution for $r$ Jordan blocks is the  sum  of the solutions associated to each block. 
\newline
\newline 
It turns out that the first $n$ vector fields of the sequence \eqref{chain-intro} are linearly independent. 
Applying the results of the first part of the paper, starting from the solutions  of  the equation \eqref{mainEQ}, one can construct families of F-manifolds with compatible connection. In particular, we show that linear solutions are in one-to-one correspondence with flat F-manifolds. This is consistent with the results of \cite{LP23}, in which  it was proved that for special linear solutions containing only the ``main''\footnote{The ``main'' variables is the one appearing on the diagonal in each block, i.e. the first variable of each block.} variables of each block
\begin{equation}\label{sls-intro}
		a_0=\overset{r}{\underset{\alpha=1}{\sum}}\,m_\alpha\varepsilon_{\alpha}u^{1(\alpha)}=\overset{r}{\underset{\alpha=1}{\sum}}\,m_\alpha\varepsilon_{\alpha}u^{m_0+m_1+\dots+m_{\alpha-1}+1},
	\end{equation}
 where $r$ denotes the number of Jordan blocks, the corresponding F-manifold  is in fact a bi-flat F-manifold.
\newline
\newline
For illustration, we show the construction in detail in dimensions $2,3,4$ for any possible Jordan block structure. In all cases (including multi-block cases) linear solutions of  \eqref{mainEQ} define bi-flat structures as in the semisimple setting, and as proved in general in the case of a single Jordan block. We conjecture
 that this fact remains true in any dimension. Moreover, we show that special linear solutions  \eqref{sls-intro} correspond to a special class of F-manifolds (Riemannian F-manifolds with Killing unit vector field) introduced in \cite{ABLR}.
\newline
\newline
The paper is organized as follows. Sections 1 and  2 contain  basic  material on integrable  systems of hydrodynamic type and F-manifolds with compatible  connection.
 In Section 3 we show how to construct F-manifolds with compatible connection (and flat unit) starting from an integrable hierarchy of hydrodynamic type. In particular, under some technical assumption, we show that the condition \eqref{shc-intro} is a consequence of integrability (Theorem \ref{mainTh}).  Section 4 is  devoted to the study of the general solution of the equation \eqref{mainEQ} and of the corresponding F-manifolds with compatible connection (whose  existence
  follows from Theorem \ref{mainTh} and Proposition \ref{LinIn}). The main result here is Theorem\eqref{lin=flat} where equivalence between linearity of $a_0$ and flatness of $\nabla$ is proved. In the final part we illustrate the construction in dimensions 2, 3, 4. This leads us (in Section 5) to conjecture that the flat structures associated with linear $a_0$ are indeed bi-flat and the bi-flat structures associated with the special linear solutions \eqref{sls-intro} come from Riemannian F-manifolds with Killing unit vector field.
\newline
\newline
\noindent{\bf Acknowledgements}. The authors are supported by funds of  INFN   (Istituto Nazionale di Fisica Nucleare) by IS-CSN4 Mathematical Methods of Nonlinear Physics. Authors are also thankful to GNFM (Gruppo Nazionale di Fisica Matematica) for supporting activities that contributed to the research reported in this paper. Data sharing not applicable to the present article as no datasets were generated or analyzed during the current study.

\section{Integrable systems of hydrodynamic type admitting Riemann invariants}
Diagonal integrable systems of hydrodynamic type,
\begin{equation}
\label{hts}
u^i_t=v^i(u)u^i_x\qquad i\in\{1,\dots,n\},
\end{equation}
have been studied by Tsarev in \cite{ts91}.  In the strictly hyperbolic case, $v^i\ne v^j$, he proved that integrability of such systems is equivalent to
\begin{equation}
\label{semih}
\partial_j\Gamma^i_{ik}=\partial_k\Gamma^i_{ij},\qquad\forall i\ne j\ne k\ne i,
\end{equation}
where
\begin{equation}\label{CHsymb}
\Gamma^i_{ij}=\f{\d_j v^i}{v^j-v^i},\qquad i\ne j.
\end{equation}

Systems satisfying the integrability condition \eqref{semih} are called semi-Hamiltonian systems or rich systems (see \cite{Serre}). In  particular, Tsarev  proved that a system satisfying \eqref{semih} 
\begin{itemize}
\item possesses infinitely many symmetries (depending on $n$ functions of a single variable)
\begin{equation}
\label{hts2}
u^i_{\tau}=\tilde{v}^i(u)u^i_x,\qquad i\in\{1,\dots,n\},
\end{equation}
obtained by solving the linear system
\beq\label{sym}
\d_j\tilde{v}^i=\Gamma^i_{ij}(\tilde{v}^j-\tilde{v}^i), \qquad i\neq j;
\eeq
\item possesses infinitely many densities of conservation laws 
 obtained by solving the linear system
\begin{equation}
\label{cl}
\d_i\d_j h-\Gamma^i_{ij}\d_i h-\Gamma^j_{ji}\d_j h=0, \qquad i\neq j.
\end{equation}
\end{itemize}
Due  to \eqref{semih}, both systems \eqref{sym} and \eqref{cl} are compatible and in both cases the general solution depends on $n$ arbitrary functions  of a single variable. Moreover, the knowledge of the symmetries allows one to write (locally) any solution, 
\[u(x,t)=(u^1(x,t),\dots,u^n(x,t)),\] 
of \eqref{hts} in implicit form as
\beq\label{ghm}
x+v^i(u^1,\dots,u^n)t=\tilde{v}^i(u^1,\dots,u^n),\qquad i\in\{1,\dots,n\}.
\eeq
Let us now consider a general system of hydrodynamic type
\beq\label{sht}
w^i_t=V^i_j(w)w^j_x,\qquad i\in\{1,\dots,n\}.
\eeq
The existence of  local coordinates (Riemann invariants)  reducing the system \eqref{sht} to  diagonal form is a non-trivial request even assuming  pairwise distinct eigenvalues of  the $(1,1)$-tensor field $V$ at each point. In this case the diagonalisability of the system is equivalent to the vanishing of the Haantjes tensor of $V$ \cite{H}. If this condition is fulfilled, the integrability condition \eqref{shder} is equivalent to the vanishing of a tensor field, called the \emph{semi-Hamiltonian tensor}  \cite{PSS}.
\newline
\newline
The main equations of Tsarev's theory can also be formulated in terms of a family of torsionless connections (see \cite{LPR,ALimrn,LP23}). Indeed, the $n(n-1)$ functions $\Gamma^i_{ij}$ defined by \eqref{CHsymb} can be identified with a subset of the Christoffel symbols of a torsionless connection $\nabla$. The condition \eqref{CHsymb} can be written as 
\beq\label{tsarevconnection}
d_{\nabla}V=0,
\eeq
where $d_{\nabla}$ is usually referred to as exterior covariant derivative of vector-valued differential forms and is defined by
$$(\dna \omega)(X_0, \dots, X_k)=\sum_{i=0}^k (-1)^i \na_{X_i}(\omega(X_0, \dots, \hat{X}_i, \dots, X_k))$$
$$+\sum_{0\leq i<j\leq k}(-1)^{i+j}\omega([X_i, X_j], X_0, \dots, \hat{X}_i, \dots, \hat{X}_j, \dots X_k).$$
Indeed, in Riemann invariants $(u^1,...,u^n)$, where $V={\rm diag}(v^1,...,v^n)$, the above condition is equivalent to $\Gamma^i_{jk}=0$, for pairwise distinct indices, together with \eqref{CHsymb}. Moreover, $\Gamma^i_{ji}=\Gamma^i_{ij}$ due to the vanishing of the torsion of $\nabla$, while all the remaining Christoffel symbols $\Gamma^i_{jj}$ and $\Gamma^i_{ii}$ are free. The integrability of a system of hydrodynamic type  \eqref{sht} defined by a $(1,1)$ tensor field $V$ with distinct eigenvalues and vanishing Haantjes tensor, is equivalent to the request that 
 \beq\label{shc-new}
 d_{\nabla}^2\tilde{V}=0,
 \eeq
 for all connections $\nabla$ satisfying \eqref{tsarevconnection} and for any $(1,1)$-tensor field $\tilde{V}$ commuting with $V$ (see Theorem 2.2 in \cite{LP23} and also \cite{ALimrn}).
  Condition \eqref{shc-new} can be written in terms of the Riemann tensor of $\nabla$ as
\beq\label{shc-new-LC}
[\dna^2\tilde{V}]_{jik}^l =R^l_{sij}\tilde{V}^s_k+R^l_{sjk}\tilde{V}^s_i+R^l_{ski}\tilde{V}^s_j,
\eeq
 where $R$ is the Riemann tensor of $\nabla$:
\[R^k_{lij}=\d_j\Gamma^k_{il}- \d_i\Gamma^k_{jl}+\Gamma^k_{js}\Gamma^s_{il}-\Gamma^k_{is}\Gamma^s_{jl}.\]
In Riemann invariants, the  matrices $\tilde{V}$ are diagonal and the
condition \eqref{shc-new-LC} reads
\beq\label{shc-new-RI}
 [\dna^2\tilde{V}]_{jik}^l =R^l_{kij}\tilde{v}^k+R^l_{ijk}\tilde{v}^i+R^l_{jki}\tilde{v}^j=0,
 \eeq
 for any $(\tilde{v}^1,...,\tilde{v}^n)$. Taking into account the arbitrariness of $\tilde{V}$ and the identity $R^l_{kji}=-R^l_{kij}$,  \eqref{shc-new-RI} reduces to the following pair of conditions
 \[R^i_{ijk}=0,\quad R^k_{jki}=0,\qquad i\ne j\ne k\ne i.\]
The latter condition, known as \emph{Darboux-Tsarev system}, reads
\begin{equation}\label{sh}
\d_i\Gamma^k_{kj}+\Gamma^k_{ki}\Gamma^k_{kj}-\Gamma^k_{kj}\Gamma^j_{ji}
-\Gamma^k_{ik}\Gamma^i_{ij}=0,\qquad\forall i\ne j\ne k\ne i,
\end{equation}
while the former reads
\begin{equation}
\label{shder}
\partial_j\Gamma^i_{ik}=\partial_k\Gamma^i_{ij},\qquad\forall i\ne j\ne k\ne i.
\end{equation}
Clearly, \eqref{shder} follows from \eqref{sh}. Remarkably, if the functions $\Gamma^i_{ij}$ are given by \eqref{CHsymb}, conditions \eqref{semih} and  \eqref{sh} are equivalent due to the following identity proved in \cite{ts91}
\begin{equation}\label{tsarevid}
\d_i\Gamma^k_{kj}+\Gamma^k_{ki}\Gamma^k_{kj}-\Gamma^k_{kj}\Gamma^j_{ji}
-\Gamma^k_{ki}\Gamma^i_{ij}=\f{v^i-v^k}{v^j-v^i}(\d_j\Gamma^k_{ki} -\d_i\Gamma^k_{kj}).
\end{equation} 

The symmetries 
\beq\label{symsht}
w^i_\tau=\tilde{V}^i_j(w)w^j_x,\qquad i\in\{1,\dots,n\},
\eeq
of the system \eqref{sht},  are defined by $(1,1)$-tensor fields $\tilde{V}(w)$ commuting with $V$ and satisfying the condition
\beq\label{symmetries}
d_{\nabla}\tilde{V}=0.
\eeq
It is easy to check that, in Riemann invariants, this condition reduces to the linear system \eqref{sym}. 
\newline
\newline
Any diagonalisable system can be written in the form
\beq
w^i_{t}=(X\circ w_x)^i=c^i_{jk}X^jw^k_x,
\eeq
where $\circ$ is a semisimple commutative associative product and $X$ is a vector field. 
 For such systems, the condition of integrability \eqref{shc-new} reads
\beq\label{shc-new-new}
[\dna^2(X\circ)]_{jik}^l =\left(R^l_{sij}c^s_{kt}+R^l_{sjk}c^s_{it}+R^l_{ski}c^s_{jt}\right)X^t.
\eeq
As a consequence, due to the arbitrariness of $X$, the above condition is equivalent to  condition \eqref{shc-intro}.

\section{F-manifolds with compatible connection} 
\subsection{F-manifolds}
F-manifolds have been introduced by Hertling and Manin in \cite{HM}.
\begin{definition}\label{defFmani}
An \emph{F-manifold} is a manifold $M$ equipped with
\begin{itemize}
\item[(i)] a commutative associative bilinear product  $\circ$  on the module of (local) vector fields, satisfying the following identity:
\begin{align}
&[X\circ Y,W\circ Z]-[X\circ Y, Z]\circ W-[X\circ Y, W]\circ Z\label{HMeq1free}\\
&-X\circ [Y, Z \circ W]+X\circ [Y, Z]\circ W +X\circ [Y, W]\circ Z\notag\\
&-Y\circ [X,Z\circ W]+Y\circ [X,Z]\circ W+Y\circ [X, W]\circ Z=0,\notag
\end{align}
for all local vector fields $X,Y,W, Z$, where $[X,Y]$ is the Lie bracket,
\newline
\item[(ii)] a distinguished vector field $e$ on $M$ such that 
\[e\circ X=X\] 
for all local vector fields $X$. 
\end{itemize}
\end{definition}
Condition \eqref{HMeq1free} is known as the \emph{Hertling-Manin condition}.

\subsection{F-manifolds with compatible connection}
F-manifolds are usually equipped with additional structures. For instance, in \cite{DS}, motivated by Dubrovin's duality, the authors introduced the notion of eventual identity.
\begin{definition}\label{defFwithEvId}
An \emph{F-manifold with an eventual identity} is a manifold $M$ equipped with a vector field $E$ satisfying
\beq
\mathcal{L}_E(\circ)(X,Y)=[e,E]\circ X\circ Y,
\eeq
for arbitrary local vector fields $X$, $Y$.
\end{definition}
The presence of an eventual identity $E$ allows one  to introduce a new product, with unit $E$, defined by
\[X*Y:=(E\circ)^{-1}X\circ Y,\] 
where $X$ and $Y$ are arbitrary local vector fields.

In \cite{ALimrn}, it was proved that if $E$ is an eventual identity, then the Nijenhuis torsion of the operator $V=E\,\circ$ vanishes. That is,
$$N_{V}(X,Y)=[E\circ X,E\circ Y]+E\circ E\circ [X,Y]-E\circ [X,E\circ Y]-E\circ [E\circ X,Y]$$
for all local vector fields $X$, $Y$.

Euler vector fields constitute an example of an eventual identities.  
\begin{definition}\label{defFwithE}
An \emph{F-manifold with Euler vector field} is a manifold $M$ equipped with a vector field $E$ satisfying
$$[e,E]=e,\qquad  \mathcal{L}_E \circ=\circ.$$
\end{definition} 

Following \cite{LPR}, we now introduce the notion of F-manifold with compatible connection (and flat unit).
\begin{definition}\label{Fmnfwcompatconn_flatunit}
An F-manifold with compatible connection and flat unit  is 
 a manifold $M$ equipped with a product 
\[\circ : TM \times TM \rightarrow TM\] 
with structure functions $c^i_{jk}$, a connection $\nabla$ with Christoffel symbols 
$\Gamma^i_{jk}$ and a distinguished vector field $e$ such that
\begin{itemize}
\item[(i)] the one parameter family of  connections $\nabla_{\lambda}$ with Christoffel symbols
$$\Gamma^i_{jk}-\lambda c^i_{jk},$$
is torsionless for any $\lambda$, and the Riemann  tensor of $\nabla_{\lambda}$ coincides
 with the Riemann tensor of $\nabla$
\beq\label{curvlambda}
R_{\lambda}(X,Y)(Z)=R(X,Y)(Z)
\eeq
and satisfies the condition
\beq\label{rc-intri}
Z\circ R(W,Y)(X)+W\circ R(Y,Z)(X)+Y\circ R(Z,W)(X)=0,
\eeq
for all local vector fields $X$, $Y$, $Z$, $W$;
\item[(ii)] $e$ is the unit of the product;
\item[(iii)] $e$ is flat: $\nabla e=0$.
\end{itemize}
\end{definition}
Alternatively, an $F$-manifold with compatible connection can be defined by dropping the requirement for flatness of the unit in Definition \ref{Fmnfwcompatconn_flatunit}.
\newline
\newline
Let's discuss some consequences of condition \eqref{curvlambda}.  
For a given $\lambda$, the torsion and curvature are respectively
\begin{eqnarray*}
T^{(\lambda)k}_{ij}&=&\Gamma^k_{ij}-\Gamma^k_{ji}+\lambda(c^k_{ij}-c^k_{ji}),\\
R^{(\lambda)k}_{ijl}&=&R^k_{ijl}+\lambda(\nabla_i c^k_{jl}-\nabla_j c^k_{il})+\lambda^2(c^k_{im}c^m_{jl}-c^k_{jm}c^m_{il}),
\end{eqnarray*}
where $R^k_{ijl}$ is the Riemann tensor of $\nabla$. Thus, condition (i) is equivalent to
\begin{enumerate}
\item the vanishing of the torsion of $\nabla$;
\item the commutativity of the product  $\circ$;
\item the symmetry  in the lower indices of the tensor field $\nabla_l c^k_{ij}$;
\item the associativity of the product $\circ$.
\end{enumerate}

\begin{remark}	
Condition \eqref{rc-intri} can be written in the equivalent  form  
\begin{equation}\label{rc-intri-2}
R(Y,Z)(X\circ W)+R(X,Y)(Z\circ W)+R(Z,X)(Y\circ W)=0,
\end{equation}
for all local vector fields $X$, $Y$, $Z$, $W$ (see \cite{LPR} for  details).
\end{remark}
In local coordinates conditions \eqref{rc-intri} and \eqref{rc-intri-2} coincide with conditions \eqref{shc-intro}.

\subsection{Flat F-manifolds}
If $\nabla$ is flat then condition \eqref{rc-intri} is automatically satisfied and  condition \eqref{curvlambda} reduces to
\beq\label{zerocurvlambda}
R_{\lambda}(X,Y)(Z)=0.
\eeq
This is the case of flat F-manifolds introduced by Manin in \cite{manin}.
\begin{remark}\label{RemarkFlatness}
	It is useful to observe that the condition of flatness for $\nabla$ can be split into two conditions:
	\begin{enumerate}
		\item Condition \eqref{rc-intri} or condition \eqref{rc-intri-2}, i.e. conditions \eqref{shc-intro} in local coordinates;
		\item 
		\beq\label{flatness}
		R(e,X)(Y)=0,
		\eeq
		for all local vector fields $X$, $Y$.
	\end{enumerate}
	This follows immediately from the identity \eqref{rc-intri-2} evaluated at $Y=e$  (see Lemma 1.11 in \cite{ABLR}).
\end{remark}
\subsection{Riemannian F-manifolds}\label{subsectionRiemannian}
Following \cite{ABLR}, we introduce Riemannian F-manifolds with Killing unit
vector field and recall some useful results in this context. Let $M$ be an $n$-dimensional manifold equipped with a commutative, associative product $\circ$ on the tangent bundle.

\begin{definition}
	A (pseudo-)Riemannian metric $g$ on $M$ is \emph{invariant}, or \emph{compatible with the product $\circ$}, if
	\begin{equation}\label{gcompatc}
		g(X\circ Y,Z)=g(X\circ Z,Y),
	\end{equation}
	for all local vector fields $X,Y,Z$ on $M$.
\end{definition}
\begin{definition}
	A \emph{(pseudo-)Riemannian F-manifold} is the data of an F-manifold $(M,\circ,e)$ with an invariant (pseudo-)Riemannian metric $g$ whose Riemann tensor satisfies \eqref{rc-intri}. Moreover, if $\mathcal{L}_e g=0$ then such a (pseudo-)Riemannian F-manifold is said to possess a \emph{Killing unit vector field}.
\end{definition}
\begin{theorem}(\cite{ABLR})
	Let $(M,\circ,g,e)$ be a Riemannian F-manifold with Killing unit vector field. Then there exists a unique torsionless connection $\nabla$ on $M$ satisfying the condition
	\begin{equation}
		\big(\nabla_X g\big)(Y,Z)=\frac{1}{2}\,d\theta(X\circ Y,Z)+\frac{1}{2}\,d\theta(X\circ Z,Y)
		\label{Riemannian_flatFmnf_bridge}
	\end{equation}
	for all local vector fields $X,Y,Z$, where the $1$-form $\theta(\cdot):=g(e,\cdot)$ is called the \emph{counit}. Moreover, $(M,\circ,e,\nabla)$ is a flat F-manifold.
\end{theorem}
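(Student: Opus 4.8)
The condition \eqref{Riemannian_flatFmnf_bridge} prescribes the non-metricity tensor $Q(X,Y,Z):=(\nabla_Xg)(Y,Z)$ of the sought connection, and $Q$ is manifestly symmetric in $Y,Z$. I would first settle existence and uniqueness as a Koszul-type problem. Writing $\nabla=\nabla^{\mathrm{LC}}+A$, with $\nabla^{\mathrm{LC}}$ the Levi-Civita connection of $g$ and $A$ a $(1,2)$-tensor, torsionlessness of $\nabla$ amounts to $A_XY=A_YX$, while $\nabla^{\mathrm{LC}}g=0$ gives $(\nabla_Xg)(Y,Z)=-g(A_XY,Z)-g(A_XZ,Y)$. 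Imposing \eqref{Riemannian_flatFmnf_bridge} and taking the standard cyclic combination in $X,Y,Z$ (using the symmetry of $A$ in its lower slots and of $Q$ in its last two) expresses $g(A_XY,Z)$ in terms of $Q$; non-degeneracy of $g$ then determines $A$, and hence $\nabla$, uniquely. This settles the existence and uniqueness part.

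Next I would use the Killing hypothesis to bring $A$ into closed form. Since $\theta=g(e,\cdot)$ and $\mathcal{L}_eg=0$, a short computation using $\nabla^{\mathrm{LC}}g=0$ and the vanishing of the torsion of $\nabla^{\mathrm{LC}}$ yields $d\theta(X,Y)=2\,g(\nabla^{\mathrm{LC}}_Xe,Y)$. Substituting this into $Q$ and into the cyclic formula for $A$, the commutativity of $\circ$ makes all but one pair of terms cancel, leaving
\[\nabla_XY=\nabla^{\mathrm{LC}}_XY-\nabla^{\mathrm{LC}}_{X\circ Y}e.\]
From this, $\nabla e=0$ is immediate: since $X\circ e=X$, one has $\nabla_Xe=\nabla^{\mathrm{LC}}_Xe-\nabla^{\mathrm{LC}}_Xe=0$. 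This already secures condition (iii) of the flat F-manifold data.

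It remains to check that $(M,\circ,e,\nabla)$ is a flat F-manifold, i.e. that $\nabla$ is flat and that $\nabla c$ is symmetric in its lower indices (the algebraic properties of $\circ$ being hypotheses). The symmetry of $\nabla c$ I would obtain by expanding $(\nabla_Xc)(Y,Z)-(\nabla_Yc)(X,Z)$ with the explicit $\nabla$ above: commutativity of $\circ$ removes several terms and reduces the claim to an identity among $\nabla^{\mathrm{LC}}c$, the product, and $\nabla^{\mathrm{LC}}e$, which follows from the invariance $g(X\circ Y,Z)=g(X\circ Z,Y)$. The main obstacle is flatness, $R^\nabla=0$: I would compute $R^\nabla(X,Y)Z$ from the change-of-connection formula with $A_XY=-\nabla^{\mathrm{LC}}_{X\circ Y}e$, and reorganise the terms quadratic in $A$ together with $(\nabla^{\mathrm{LC}}_XA)(Y,Z)-(\nabla^{\mathrm{LC}}_YA)(X,Z)$, using commutativity, associativity and the invariance of $g$, so as to cancel $R^{\mathrm{LC}}(X,Y)Z$. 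The decisive input is the Riemannian F-manifold curvature condition \eqref{rc-intri} on $R^{\mathrm{LC}}$, which together with the first Bianchi identity and the metric symmetries of $R^{\mathrm{LC}}$ forces the residual curvature to vanish. Once $R^\nabla=0$ and the symmetry of $\nabla c$ are in hand, the pencil-curvature decomposition recalled in Section 2 vanishes identically in $\lambda$ (its $\lambda^0,\lambda^1,\lambda^2$ parts being flatness of $\nabla$, symmetry of $\nabla c$, and associativity), so $\nabla_\lambda$ is flat and torsionless for all $\lambda$; with $\nabla e=0$ and $e$ the unit, this is exactly the data of a flat F-manifold.
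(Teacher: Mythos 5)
This theorem is recalled from \cite{ABLR} and is not proved in the present paper, so there is no in-paper argument to measure yours against; I can only assess your proposal on its merits. The overall architecture is sound and, as far as I can tell, reconstructs the ABLR proof. The Koszul-type argument for existence and uniqueness is complete (for any prescribed non-metricity symmetric in its last two slots the cyclic combination determines $g(A_XY,Z)$, and one checks the resulting $A$ is symmetric and reproduces $Q$). The identity $d\theta(X,Y)=2g(\nabla^{\mathrm{LC}}_Xe,Y)$ is correct under $\mathcal{L}_eg=0$, and the closed form $\nabla_XY=\nabla^{\mathrm{LC}}_XY-\nabla^{\mathrm{LC}}_{X\circ Y}e$ is most quickly obtained by observing that $A_XY:=-\nabla^{\mathrm{LC}}_{X\circ Y}e$ is symmetric and reproduces the prescribed $Q$, then invoking your uniqueness; $\nabla e=0$ follows. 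Your flatness outline also closes up: with $N:=\nabla^{\mathrm{LC}}e$ one finds $R^{\nabla}(X,Y)Z=R^{\mathrm{LC}}(X,Y)Z-(\nabla^{\mathrm{LC}}_XN)(Y\circ Z)+(\nabla^{\mathrm{LC}}_YN)(X\circ Z)$ after the terms quadratic in $A$ cancel against $N$ applied to the antisymmetric part of $\nabla^{\mathrm{LC}}c$; the Killing identity $(\nabla^{\mathrm{LC}}_XN)(Y)=R^{\mathrm{LC}}(X,e)Y$ (this is where Bianchi and the metric symmetries enter) turns the remaining bracket into $R^{\mathrm{LC}}(X,e)(Y\circ Z)-R^{\mathrm{LC}}(Y,e)(X\circ Z)$, and \eqref{rc-intri-2} applied to the triple $(X,Y,e)$ with fourth slot $Z$ identifies this with $R^{\mathrm{LC}}(X,Y)Z$. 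So you have located the decisive inputs correctly.

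The one step you should not wave through is the symmetry of $\nabla c$. With your explicit $\nabla$ it is equivalent to the identity
\begin{equation*}
(\nabla^{\mathrm{LC}}_Xc)(Y,Z)-(\nabla^{\mathrm{LC}}_Yc)(X,Z)=(\nabla^{\mathrm{LC}}_{Y\circ Z}e)\circ X-(\nabla^{\mathrm{LC}}_{X\circ Z}e)\circ Y,
\end{equation*}
and this does \emph{not} follow from the invariance $g(X\circ Y,Z)=g(X\circ Z,Y)$ alone, as you assert. Lowering an index, it says that the antisymmetrisation in the first two slots of $(\nabla^{\mathrm{LC}}_X\tilde c)(Y,Z,W)$, with $\tilde c(Y,Z,W)=g(Y\circ Z,W)$, equals a specific combination of $d\theta$ evaluated on products; this is Hertling's characterisation of F-manifolds with invariant metric and genuinely requires the Hertling--Manin condition \eqref{HMeq1free} (in the special case $d\theta=0$ it reduces to the classical total symmetry of $\nabla^{\mathrm{LC}}\tilde c$). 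Moreover this same identity is exactly what cancels the $A$-quadratic terms in your curvature computation, so it is doing double duty and deserves a proof rather than a one-line attribution. Once that lemma is supplied, your argument is complete.
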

In canonical coordinates, condition \eqref{Riemannian_flatFmnf_bridge} reads
\begin{align}
	\partial_kg_{ij}-\Gamma^s_{ik}g_{sj}-\Gamma^s_{jk}g_{si}=&\frac{1}{2}e^mc^s_{ik}(\partial_sg_{mj}-\partial_jg_{ms})+\frac{1}{2}e^mc^s_{jk}(\partial_sg_{mi}-\partial_ig_{ms})
	\label{Riemannian_flatFmnf_bridge_coords},
\end{align}
for $i,j,k\in\{1,\dots,n\}$.
\begin{theorem}(\cite{ABLR})
	Let $(M,\circ,e,\nabla)$ be a flat F-manifold and let $g$ be an invariant metric satisfying \eqref{Riemannian_flatFmnf_bridge}. Then $(M,\circ,g,e)$ is a Riemannian F-manifold with Killing unit vector field.
\end{theorem}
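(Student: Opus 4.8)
The plan is to verify directly the two properties in the definition of a Riemannian F-manifold with Killing unit vector field that are not already among the hypotheses. A flat F-manifold is in particular an F-manifold $(M,\circ,e)$, and $g$ is invariant by assumption, so it remains to establish: (A) that the unit is Killing, $\mathcal{L}_e g=0$; and (B) that the Riemann tensor $R^{g}$ of the Levi-Civita connection $\nabla^{g}$ of $g$ satisfies the curvature condition \eqref{rc-intri}.

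For (A) I would evaluate the bridge equation \eqref{Riemannian_flatFmnf_bridge} at $X=e$. Since $e\circ Y=Y$ and $e\circ Z=Z$, the right-hand side collapses to $\tfrac12 d\theta(Y,Z)+\tfrac12 d\theta(Z,Y)$, which vanishes because $d\theta$ is a $2$-form; hence $(\nabla_e g)(Y,Z)=0$. Feeding this into the identity $(\mathcal{L}_e g)(Y,Z)=(\nabla_e g)(Y,Z)+g(\nabla_Y e,Z)+g(Y,\nabla_Z e)$, valid for any torsionless connection, and using flatness of the unit $\nabla e=0$, I obtain $\mathcal{L}_e g=0$ immediately. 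This step is short and routine.

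For (B) the first move is to make the difference between $\nabla^{g}$ and $\nabla$ explicit. Writing $\nabla^{g}=\nabla+A$ with $A$ a $(1,2)$-tensor symmetric in its lower arguments (both connections are torsionless), the metricity condition $\nabla^{g}g=0$ reads $g(A(X,Y),Z)+g(A(X,Z),Y)=(\nabla_X g)(Y,Z)$, and the right-hand side is prescribed by the bridge equation. Solving by the Koszul combination $2g(A(X,Y),Z)=(\nabla_X g)(Y,Z)+(\nabla_Y g)(X,Z)-(\nabla_Z g)(X,Y)$ and using commutativity of $\circ$ together with the skew-symmetry of $d\theta$, all but one term cancels and one is left with the compact formula $g(A(X,Y),Z)=\tfrac12 d\theta(X\circ Y,Z)$; equivalently $A(X,Y)=\Omega(X\circ Y)$, where $\Omega$ is the $(1,1)$-tensor defined by $g(\Omega V,W)=\tfrac12 d\theta(V,W)$, a tensor that is $g$-skew-adjoint precisely because $d\theta$ is a $2$-form.

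I would then pass to flat coordinates of $\nabla$, in which $\Gamma^i_{jk}=0$ and, after a linear change, $e=\partial_1$. There $\nabla^{g}$ has Christoffel symbols $A^k_{ij}=\Omega^k_m c^m_{ij}$ and, since $R^{\nabla}=0$, its curvature is $(R^{g})^k_{lij}=\partial_j A^k_{il}-\partial_i A^k_{jl}+A^k_{js}A^s_{il}-A^k_{is}A^s_{jl}$. Substituting $A^k_{ij}=\Omega^k_m c^m_{ij}$ and inserting into the cyclic combination \eqref{shc-intro}, I would verify the identity using the structural inputs at my disposal: associativity of $\circ$, the symmetry of $\partial_l c^k_{ij}$ in the lower indices $l,i,j$ (the potentiality condition (3) following Definition \ref{Fmnfwcompatconn_flatunit}, which in flat coordinates is just symmetry of $\partial c$), the total symmetry of $g(X\circ Y,Z)$ coming from invariance of $g$ and commutativity, and the skew-symmetry of $g_{ks}\Omega^s_l$. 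I expect this verification to be the main obstacle: the cyclic sum in \eqref{shc-intro} has to be reorganised so that the derivative terms recombine through potentiality while the quadratic $A$-terms collapse through associativity, the skew-symmetry of $\Omega$ being what makes the surviving contributions cancel in triples. A convenient way to organise the bookkeeping is to regard it as showing that, once the bridge equation holds, $R^{g}$ satisfies \eqref{rc-intri} if and only if $R^{\nabla}$ does; as $R^{\nabla}=0$ trivially satisfies it, the task reduces to proving that the curvature difference $R^{g}-R^{\nabla}$ is annihilated by the cyclic-product operator of \eqref{rc-intri}, which is exactly the computation just described.
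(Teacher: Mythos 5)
The paper does not actually prove this statement: it is recalled from \cite{ABLR} without proof, so there is no internal argument to compare against. Judged on its own terms, your route is correct and essentially complete. Part (A) is airtight: putting $X=e$ in \eqref{Riemannian_flatFmnf_bridge} kills the right-hand side by skew-symmetry of $d\theta$, and the identity $(\mathcal{L}_e g)(Y,Z)=(\nabla_e g)(Y,Z)+g(\nabla_Y e,Z)+g(Y,\nabla_Z e)$ for the torsionless $\nabla$ together with $\nabla e=0$ gives $\mathcal{L}_eg=0$. The Koszul computation producing $g(A(X,Y),Z)=\frac{1}{2}\,d\theta(X\circ Y,Z)$, i.e.\ $A(X,Y)=\Omega(X\circ Y)$, is also correct. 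The only step you leave unexecuted --- checking that $R^{g}$ satisfies \eqref{shc-intro} --- does go through, and in fact more easily than you anticipate: neither the $g$-skew-adjointness of $\Omega$ nor the invariance of $g$ is needed there. In flat coordinates of $\nabla$, with $A^k_{ij}=\Omega^k_mc^m_{ij}$, potentiality ($\partial_lc^p_{ks}=\partial_kc^p_{ls}$) reduces the curvature to
\begin{equation*}
R^{j}_{skl}=(\partial_l\Omega^j_p)\,c^p_{ks}-(\partial_k\Omega^j_p)\,c^p_{ls}+\Omega^j_pc^p_{lt}\Omega^t_qc^q_{ks}-\Omega^j_pc^p_{kt}\Omega^t_qc^q_{ls},
\end{equation*}
and after contracting with $c^s_{mi}$ every one of the four terms has the form $T^{j}_{a;p}\,c^p_{bs}c^s_{mi}$ with $\{a,b\}=\{k,l\}$, where $c^p_{bs}c^s_{mi}$ is totally symmetric in $(b,m,i)$ by associativity and commutativity. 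The cyclic sum over $(k,l,m)$ in the second identity of \eqref{shc-intro} (equivalently \eqref{rc-intri-2}, hence \eqref{rc-intri}) therefore vanishes by pairing each term with its image under the transposition of the two indices sitting inside the symmetric factor; no property of $\Omega$ enters. So your plan closes: the only thing missing is writing out this short cancellation, which you have correctly reduced the problem to.
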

We point out that in the non-semisimple case,  given a flat F-manifold,  the existence of an invariant metric $g$ satisfying \eqref{Riemannian_flatFmnf_bridge} is not guaranteed (see \cite{ABLR}).

\begin{remark}
The notion of Riemannian F-manifold appears in literature with a slightly different meaning. In \cite{LPR} the  metric $g$ is required to be compatible with $\nabla$ and in \cite{DS} the counit is required to  be closed.
\end{remark} 

\subsection{Bi-flat F-manifolds and Dubrovin-Frobenius manifolds}
Bi-flat F-manifolds are manifolds equipped with two ``compatible'' flat structures. They are defined as follows. 
\begin{definition}[\cite{ALjgp}]
A \emph{bi-flat}  F-manifold is a manifold $M$ equipped with two distinct 
 flat structures $(\nabla,\circ,e)$, $(\nabla^{*},*,E)$ such that
\begin{itemize}
\item[(i)] $E$ is an Euler vector field for the first structure;
\item[(ii)] $X*Y := (E\circ)^{-1}X\circ Y$;
\item[(iii)] $(d_{\nabla}-d_{\nabla^{*}})(X\,\circ)=0,$
\end{itemize}
where $X$ and $Y$ are arbitrary local vector fields and at a generic point the operator $E\circ$ is assumed to be invertible. 
\label{def:biflatF}
\end{definition}
The second flat structure  $(\nabla^{*},*,E)$ is called the \emph{dual structure} and can be thought of as of a generalisation of the notion of (almost-)duality for Dubrovin-Frobenius manifolds  (see \cite{Dad}). The compatibility of the connections $\nabla$, $\nabla^*$ has a natural interpretation in terms of a differential bicomplex associated to the bi-flat structure (see \cite{AL24}).

\begin{remark}
Notice that not all the axioms in Definition \ref{def:biflatF} are independent. For instance, the compatibility between the dual connection and the dual product follows from the other axioms \cite{ALmulti}. Furthermore, the dual connection is only defined at the points where the operator $E\circ$ is invertible. At these points, it immediately follows from the axioms that
\beq\label{dualfromnatural}
\Gamma^{*k}_{ij} =\Gamma^k_{ij}- c^{*l}_{ji}\nabla_l E^k.
\eeq
Moreover, the flatness of the dual connection follows from the linearity of the Euler vector field (see Theorem 4.4 in \cite{ALcomplex} for the semisimple  case and Lemmas 4.2 and 4.3 in \cite{KMS} for the general case).
\end{remark}

\begin{remark}
The manifolds in the above definitions are real or complex manifolds. For the former, all the geometric data are to be smooth.
 For the latter, $TM$ is to be taken as the holomorphic tangent bundle and
all the geometric data are to be holomorphic. 
\end{remark}

Following a non-historical path, Dubrovin-Frobenius manifolds (see \cite{D96}) can be defined as bi-flat  F-manifolds equipped with 
 a metric $\eta$ compatible with the product $\circ$ and the connection $\nabla$:
\[\eta_{il}c^l_{jk}=\eta_{jl}c^l_{ik},\qquad\qquad\nabla\eta=0.\]

\subsection{The regular case}
\begin{definition}[\cite{DH}]\label{DavidHertlingdef}
 An F-manifold with Euler vector field $(M,\circ, e,E)$ is called \emph{regular} if for each $p\in M$ the matrix representing the endomorphism
 $$L_p := E_p\circ : T_pM \to T_pM$$
 has exactly one Jordan block for each distinct eigenvalue.
 \end{definition} 
 In this paper we will use the following important result of \cite{DH}
 regarding the existence of non-semisimple \textit{canonical} coordinates for regular F-manifolds with Euler vector field.
\begin{theorem}[\cite{DH}]\label{DavidHertlingth}
Let $(M, \circ, e, E)$ be a regular F-manifold of dimension $n \geq 2$ with an Euler vector field $E$. Furthermore, assume that locally around a point $p\in M$, the Jordan canonical form of the operator $L$ has $r$ Jordan blocks of sizes $m_1,...,m_r$ with distinct eigenvalues. Then
there exists locally around $p$ a distinguished system of coordinates $\{u^1, \dots, u^{m_1+\dots+m_r}\}$ such that, recalling the notation \eqref{relabellingcoordinates}, \begin{align}
	e^{i(\alpha)}&=\delta^i_1,\qquad
	E^{i(\alpha)}=u^{i(\alpha)},\qquad
	c^{i(\alpha)}_{j(\beta)k(\gamma)}=\delta^\alpha_\beta\delta^\alpha_\gamma\delta^i_{j+k-1},\notag
\end{align}
for all $\alpha,\beta,\gamma\in\{1,\dots,r\}$ and  $i\in\{1,\dots,m_\alpha\}$, $j\in\{1,\dots,m_\beta\}$, ${k\in\{1,\dots,m_\gamma\}}$.
\end{theorem}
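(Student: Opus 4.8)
The plan is to reconstruct the canonical coordinates from the pointwise algebraic structure of $\circ$, and then to promote that pointwise picture to an honest coordinate system using the integrability supplied by the Euler and Hertling--Manin conditions. First I would note that regularity is a condition on the single operator $L=E\circ$: having exactly one Jordan block per eigenvalue is equivalent to $L_p$ being cyclic (non-derogatory), i.e.\ to its minimal polynomial coinciding with its characteristic polynomial. For the commutative associative unital algebra $(T_pM,\circ,e)$ this is equivalent to $e$ being a cyclic vector, so that $E^0=e,E^1,\dots,E^{n-1}$ is a basis and the algebra is identified with $\bigoplus_{\alpha=1}^{r}\mathbb{C}[t]/((t-\lambda_\alpha)^{m_\alpha})$, where $\lambda_1,\dots,\lambda_r$ are the distinct eigenvalues of $L_p$. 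Each summand carries the truncated-polynomial product whose structure constants, in the monomial basis $1,(t-\lambda_\alpha),(t-\lambda_\alpha)^2,\dots$, are exactly $\delta^i_{j+k-1}$. This already exhibits the target value of $c^{i(\alpha)}_{j(\beta)k(\gamma)}$ at the level of a single tangent space, and the whole problem reduces to making this picture vary holonomically.

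Next I would produce the building blocks as smooth (holomorphic) objects on a neighbourhood of $p$. Since the $\lambda_\alpha$ remain pairwise distinct near $p$, they are smooth functions of the coefficients of the characteristic polynomial of $L$, and the spectral projectors $\pi_\alpha$ onto the generalized eigenspaces (the idempotents of the algebra) depend smoothly on $L$ through partial fractions / contour integrals. The crucial input is that $E$ is an Euler field, hence an eventual identity, so by the result recalled before Definition \ref{defFwithE} the Nijenhuis torsion of $L$ vanishes. Vanishing Nijenhuis torsion is exactly the integrability that makes the spectral data of $L$ holonomic: it forces the eigenvalue differentials $d\lambda_\alpha$ to be closed and the distributions $\operatorname{Im}\pi_\alpha$ to be integrable, and it allows the nilpotent chain generated by $L-\lambda_\alpha$ inside each block to be straightened. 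Using this together with the Hertling--Manin identity \eqref{HMeq1free} (which, as in the semisimple case, forces the frame extracted from the idempotents to commute pairwise), I would apply Frobenius to obtain coordinates in which $L$ has the block Toeplitz form \eqref{Lblock}.

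I would then use the Euler conditions to fix the residual freedom. From $[e,E]=e$ and $\mathcal{L}_E\circ=\circ$ a short computation gives $\mathcal{L}_E L=L$, so $L$ is homogeneous of weight one under the flow of $E$; this pins the eigenvalues by $E(\lambda_\alpha)=\lambda_\alpha$ and lets me take the ``main'' coordinate of each block to be $u^{1(\alpha)}=\lambda_\alpha$, the Toeplitz entries $u^{2(\alpha)},\dots,u^{m_\alpha(\alpha)}$ being fixed by the remaining entries of $L$ via the relabelling \eqref{relabellingcoordinates}. With these coordinates $E=\sum_{\alpha,i}u^{i(\alpha)}\partial_{i(\alpha)}$ yields $E^{i(\alpha)}=u^{i(\alpha)}$, and reading $L=E\circ$ off from \eqref{Lblock} recovers $c^{i(\alpha)}_{j(\beta)k(\gamma)}=\delta^\alpha_\beta\delta^\alpha_\gamma\delta^i_{j+k-1}$; since the unit of each block is its degree-zero element, $e^{i(\alpha)}=\delta^i_1$. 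It remains to check that these normalizations are compatible with the F-manifold axioms throughout the neighbourhood and not merely at $p$.

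The main obstacle is the integrability step in the middle, namely passing from the pointwise algebra isomorphism to genuine coordinates. Constructing $\pi_\alpha$ and $\lambda_\alpha$ smoothly is routine, and the Euler normalization is bookkeeping, but showing that the frame adapted to the block decomposition is holonomic — equivalently that straightening $L$ into the Toeplitz form \eqref{Lblock} is obstruction-free — is where the content lies. This is precisely where one must use both the vanishing of $N_L$ (to handle the nilpotent directions inside each block, which have no semisimple analogue) and the Hertling--Manin condition (to make the blocks mutually integrable). I expect the nilpotent directions to be the delicate part, since there the eigenvalue no longer separates the coordinates and one must instead integrate the tower generated by $L-\lambda_\alpha$.
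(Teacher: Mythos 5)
The paper does not prove this statement; it is imported verbatim from \cite{DH}, so there is no in-text proof to compare against. Judged on its own terms, your outline correctly identifies the pointwise picture (regularity $\Leftrightarrow$ $L_p$ non-derogatory $\Leftrightarrow$ $e$ cyclic, with $T_pM\cong\bigoplus_\alpha\mathbb{C}[t]/((t-\lambda_\alpha)^{m_\alpha})$) and the roles of the hypotheses, but it has a genuine gap exactly at the step you yourself flag as ``where the content lies''. The assertion that $N_L=0$ together with \eqref{HMeq1free} ``allows the nilpotent chain generated by $L-\lambda_\alpha$ inside each block to be straightened'' \emph{is} the theorem, and it does not follow from the vanishing of the Nijenhuis torsion alone: a regular Nijenhuis operator need not be reducible to the Toeplitz form \eqref{Lblock}, because that form requires all $n$ entries of $L$ to be functionally independent, and nothing in your argument establishes this. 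Already for a single $2\times2$ block one must show that the off-diagonal entry is independent of the eigenvalue; the homogeneity $E(\lambda_\alpha)=\lambda_\alpha$ you derive controls only the diagonal entries. Supplying this independence and integrating the nilpotent tower is precisely where the full F-manifold structure must be used (in \cite{DH} this is done by first reducing to a single irreducible germ via Hertling's local decomposition theorem and then running an explicit inductive construction on the frame $e,E,E\circ E,\dots$, whose pointwise linear independence is itself a lemma), and your outline defers rather than supplies that argument.

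There is also a circularity in your final normalisation. Writing $\mathcal{L}_EL=L$ in an arbitrary chart does not give $E(L^i_j)=L^i_j$, so declaring the entries of $L$ to be coordinates does not by itself yield $E^{i(\alpha)}=u^{i(\alpha)}$; that identity has to be proved together with the straightening, not read off afterwards. Likewise, the Toeplitz form of $L=E\circ$ only determines the contraction $c^i_{jk}E^k$, not the tensor $c^i_{jk}$ itself: to recover $c^{i(\alpha)}_{j(\beta)k(\gamma)}=\delta^\alpha_\beta\delta^\alpha_\gamma\delta^i_{j+k-1}$ one must additionally use the cyclicity of the algebra and Cayley--Hamilton blockwise to express the coordinate frame through $e,E,E\circ E,\dots$, a computation your sketch omits. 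As written, the proposal is a plausible programme with the correct ingredients, but the two essential steps --- functional independence of the entries of $L$ and the identification of the coordinate frame with the monomial basis of the truncated polynomial algebra --- are asserted rather than proved.
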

This special system of coordinates are called David-Hertling coordinates, and in such coordinates the operator  of multiplication by the Euler vector field has the lower  triangular Toeplitz form \eqref{Lblock} mentioned in the introduction.

\section{F-manifolds with compatible connection and integrable hierarchies of hydrodynamic type}

Let us consider a system of hydrodynamic type
\beq\label{SHS}
w^i_{t}=(X\circ w_x)^i=c^i_{jk}X^jw^k_x,
\eeq
associated with an F-manifold structure. Let $n$ be the dimension of such an F-manifold.

\subsection{The semisimple case}
Assuming that the product is semisimple, it is immediate to check (see \cite{LP}) that there is a unique torsionless  connection satisfying 
\[d_{\nabla}(X\circ)=0,\qquad \nabla e=0.\]
In canonical coordinates it is defined by
\[\Gamma^i_{ij}=\f{\d_jX^i}{X^j-X^i}=-\Gamma^i_{jj},\quad\Gamma^i_{ii}=-\sum_{s\ne i}\Gamma^i_{is},\]
where $\Gamma^i_{ij},\Gamma^i_{jj}$ (with $i\ne j$) and $\Gamma^i_{ii}$ are the
 non-vanishing Christoffel symbols of  $\nabla$. If the system in  \eqref{SHS} is integrable, this connection, together with the product
  $\circ$ with structure functions $c^i_{jk}=\delta^i_j\delta^i_k$ and the vector field $e$ of components $e^i=1$, defines (locally) an F-manifold with compatible  connection and flat unit. Indeed, we  know that  condition \eqref{rc-intri} coincides
 with Tsarev's integrability condition. 
  
\subsection{An example} Let us consider for instance the system of hydrodynamic type $u_{t}=X\circ u_x$ 
defined by 
\[X:=E-a_0e.\] 
In canonical coordinates, $(u^1,...,u^n)$, we have  
\[e^i=1,\qquad E^i=u^i,\qquad c^i_{jk}=\delta^i_j\delta^i_k\]
for all $i \in \{1, \dots, n\}$, and
\[L:=E\circ=\text{diag}(u^1,...,u^n).\]
The general solution of Equation \eqref{mainEQ} is of the form
\begin{equation}
    a_0=\sum_{i=1}^nF_i(u^i).
    \label{eq:a0exss}
\end{equation}
For any choice of the functions $\{F_i\}_{i\in\{1,\dots,n\}}$, the system  satisfies Tsarev's integrability conditions.
 In canonical  coordinates, the associated F-manifolds with compatible flat connection and flat unit $(\nabla,\circ,e)$
 are defined by the data
\[\Gamma^i_{ij}=\f{\d_jF_j}{u^i-u^j}=-\Gamma^i_{jj},\quad\Gamma^i_{ii}=-\sum_{s\ne i}\Gamma^i_{is},\qquad c^i_{jk}=\delta^i_j\delta^i_k,\qquad e^i=1,\]
where $\Gamma^i_{ij},\Gamma^i_{jj}$ (with $i\ne j$) and $\Gamma^i_{ii}$ are the
 non-vanishing Christoffel symbols of  $\nabla$,  $c^i_{jk}$ are the structure functions of $\circ$ and $e^i$ are the  components of  the unit vector field. It is easy to  check  that imposing the additional flatness condition \eqref{flatness} amounts to asking that 
\[ e(\Gamma^i_{jk})=\sum_{s=1}^n\d_s \Gamma^i_{jk}=0,\] 
which, given that $X=E-a_0\,e$ with $a_0$ as in \eqref{eq:a0exss}, can be rewritten as
\[
\partial_j^2 F_j=0,\qquad j\in\{1,\dots,n\}.
\]
This means that $\nabla$ is flat if and only if $a_0$ is a linear function of the canonical coordinates
\[a_0=\sum_{i=1}^n\varepsilon_iu^i.\]
It turns out that the Euler vector field is linear in flat coordinates, ensuring the existence of a second flat structure $(\nabla^*,*,E)$ compatible with $(\nabla,\circ,e)$. 

The function $a_0$ is a flat coordinate of $\nabla$. Besides $a_0$ there are $n-1$ functionally  independent flat coordinates satisfying the following conditions:
\begin{enumerate}
\item $e(f)=0$;
\item $f$ is homogeneous of degree $1-\sum_{i=1}^n \varepsilon_i$;
\item $f$ satisfies the Euler-Poisson-Darboux system 
\beq\label{ddf}
d\cdot d_L f=da_0\wedge df.
\eeq
\end{enumerate}
Functions satisfying the above conditions are called Lauricella functions (see \cite{Lauricella,Looijenga}) and  for this reason the semisimple bi-flat F-manifolds obtained in this way are called Lauricella bi-flat F-manifolds (see \cite{ALmulti}). This construction has been generalised in \cite{LP23} to arbitrary David-Hertling canonical forms.

\begin{theorem}[\cite{LP23}]\label{GT}
	For any choice of $\varepsilon_1,\dots,\varepsilon_r$, there exists a unique regular bi-flat structure $(\nabla,\nabla^*,\circ,*,e,E)$ with canonical coordinates $\{u^1,\dots,u^n\}$ such that $d_{\nabla}\left[(E-a_0\,e)\,\circ\right]=0$, where $r$ is the number of the Jordan blocks (of sizes $m_1,\dots,m_r$) of $E\circ$ and
	\begin{equation}
		a_0=\overset{r}{\underset{\alpha=1}{\sum}}\,m_\alpha\varepsilon_{\alpha}u^{1(\alpha)}
		.\notag
	\end{equation}
\end{theorem}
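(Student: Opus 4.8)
The plan is to treat the product $\circ$, the unit $e$, and the Euler field $E$ as \emph{given} data, fixed once and for all by the David--Hertling canonical coordinates of Theorem \ref{DavidHertlingth}: in these coordinates $c^i_{jk}$ is constant, $e=\sum_\alpha\d_{1(\alpha)}$, $E=\sum_i u^i\d_i$, and $L=E\circ$ has the block Toeplitz form \eqref{Lblock}. The only object to be constructed is the connection $\nabla$, since the dual product $*$ is then forced by $X*Y=(E\circ)^{-1}X\circ Y$ and the dual connection $\nabla^*$ by \eqref{dualfromnatural}. Writing $V:=(E-a_0 e)\circ=L-a_0\,\mathrm{Id}$ (using $e\circ=\mathrm{Id}$), the defining condition $\dna\big[(E-a_0 e)\circ\big]=0$ becomes, for a torsionless $\na$, the system
\beq
\d_iV^k_j-\d_jV^k_i+\Gamma^k_{is}V^s_j-\Gamma^k_{js}V^s_i=0,
\eeq
which, together with $\na e=0$, is \emph{at each point} a linear algebraic system for the Christoffel symbols $\Gamma^k_{ij}$. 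A dimension count is encouraging: the $n^2(n-1)/2$ equations from $\dna V=0$ plus the $n^2$ equations from $\na e=0$ match the $n^2(n+1)/2$ independent symbols, so the system is square.

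First I would solve this system and prove it has a unique solution. The block structure of $V$ decouples the analysis: for indices belonging to two distinct Jordan blocks the eigenvalues of $V$ differ and one divides by their difference exactly as in the semisimple computation, fixing the inter-block symbols. The genuinely new point is the intra-block analysis, where $V_\alpha=L_\alpha-a_0\,\mathrm{Id}$ is a single lower triangular Toeplitz (hence non-derogatory) block with repeated eigenvalue $u^{1(\alpha)}-a_0$. Here one cannot divide by eigenvalue differences; instead I would exploit the nilpotent Toeplitz part and solve the system by descending induction on the Toeplitz levels of the block, at each level dividing by the first subdiagonal coordinate $u^{2(\alpha)}$. \emph{This is the main obstacle}: proving that the coefficient matrix of the algebraic system is invertible. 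This is precisely where regularity is used, since one Jordan block per eigenvalue forces $u^{2(\alpha)}\neq 0$ near $p$ (were $u^{2(\alpha)}$ to vanish, the nilpotent part of $L_\alpha-u^{1(\alpha)}\mathrm{Id}$ would drop rank and the block would split), which both legitimises the division and yields existence and uniqueness of $\na$.

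Next I would prove that the resulting $\na$ is flat, organising the computation by the splitting of Remark \ref{RemarkFlatness}: $R\equiv 0$ is equivalent to condition \eqref{rc-intri} together with $R(e,X)(Y)=0$. Condition \eqref{rc-intri} --- i.e.\ \eqref{shc-intro} in coordinates --- expresses precisely that $(\na,\circ,e)$ is an F-manifold with compatible connection; for the system $u_t=(E-a_0 e)\circ u_x$ this is equivalent to its integrability, which holds because $a_0$ solves \eqref{mainEQ}. The remaining condition \eqref{flatness} is where linearity of $a_0$ enters decisively: evaluating $R(e,X)(Y)$ in canonical coordinates reduces, exactly as in the semisimple example above, to $e(\Gamma^k_{ij})=0$, which in turn amounts to the vanishing of the second derivatives of $a_0$; the weighted-trace form $a_0=\sum_\alpha m_\alpha\varepsilon_\alpha u^{1(\alpha)}=\sum_\alpha\varepsilon_\alpha\Tr(L_\alpha)$ is affine, so these vanish. (The single block of size $2$ already displays the mechanism: the only nonzero symbol is $\Gamma^2_{22}=-2\varepsilon/u^2$, whose sole curvature contribution $-\d_1\Gamma^2_{22}$ vanishes precisely because $a_0$ is linear, while a nonlinear $a_0=F(u^1)$ would give $\Gamma^2_{22}=-F'/u^2$ and curvature $F''/u^2$.)

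Finally I would assemble the bi-flat structure and conclude uniqueness. By the setup $E$ is an Euler vector field for $(\na,\circ,e)$, and since $\na$ is flat with $E$ linear in its flat coordinates, the dual connection $\na^*$ defined by \eqref{dualfromnatural} is flat by the result recalled in the Remark following Definition \ref{def:biflatF} (Lemmas 4.2--4.3 of \cite{KMS}); the compatibility of $\na^*$ with $*$ and condition (iii) of Definition \ref{def:biflatF} then follow from the non-independence of the bi-flat axioms (\cite{ALmulti}). Since $L=E\circ$ is unchanged, the resulting bi-flat structure is regular. Uniqueness is then immediate: the canonical coordinates fix $\circ,e,E,*$; the defining equation $\dna[(E-a_0 e)\circ]=0$ together with $\na e=0$ fixes $\na$ uniquely by the algebraic argument of the second paragraph; and $\na^*$ is forced by \eqref{dualfromnatural}.
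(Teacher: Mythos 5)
First, a point of orientation: this paper does not actually prove Theorem \ref{GT} --- it imports it from \cite{LP23} --- so your attempt can only be measured against the machinery developed in Sections 3--4, which your first three paragraphs reproduce essentially verbatim in spirit. The construction and uniqueness of $\nabla$ as the solution of a square pointwise linear system (inter-block symbols fixed by dividing by $u^{1(\alpha)}-u^{1(\beta)}$, intra-block ones by descending induction dividing by $u^{2(\alpha)}$, with regularity guaranteeing $u^{2(\alpha)}\neq0$), the derivation of \eqref{rc-intri} from the commuting hierarchy generated by a solution of \eqref{mainEQ} (Theorem \ref{mainTh} together with Proposition \ref{LinIn}), and the reduction of flatness of $\nabla$ to $e(\Gamma^k_{ij})=0$, which holds for linear $a_0$ (Theorem \ref{lin=flat}), are all exactly the paper's route; up to that point your argument is sound, if compressed.

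The gap is in your final paragraph. You dispose of the flatness of $\nabla^*$ by asserting that ``$E$ is linear in the flat coordinates of $\nabla$'' and then invoking \cite{KMS}. But that linearity --- i.e.\ $\nabla\nabla E=0$ --- is nowhere established in your argument, and it is not automatic: it is precisely the step separating what this paper proves in general (flatness of $\nabla$ for any linear $a_0$) from what it can only \emph{conjecture} (bi-flatness for a generic linear $a_0$; see Section 5 and Table \ref{tab:Summary}) and from what \cite{LP23} actually proves for the special weighted-trace solution \eqref{sls-intro}. Tellingly, your last paragraph never uses the specific coefficients $m_\alpha\varepsilon_\alpha$ nor the fact that only the main variables $u^{1(\alpha)}$ appear in $a_0$: if the argument were complete as written, it would settle the authors' open conjecture for arbitrary linear $a_0$. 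To close the gap you must compute $\nabla\nabla E$ from the explicit Christoffel symbols (or otherwise verify directly that the connection \eqref{dualfromnatural} is flat and that $\nabla^*E=0$, which the definition of a flat structure $(\nabla^*,*,E)$ also requires) and exhibit where the special form of $a_0$ enters. As it stands, the bi-flatness --- the only content of Theorem \ref{GT} not already delivered by Theorems \ref{mainTh} and \ref{lin=flat} --- is assumed rather than proved.
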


\subsection{A remark}
The existence of $n$ commuting flows of the form  \eqref{SHS} defined by a frame of vector fields $X_{(0)},\dots,X_{(n-1)}$ implies  that the quantities (indepedent on $\alpha$)  
\[\Gamma^i_{ij}=\f{\d_jX^i_{(\alpha)}}{X^j_{(\alpha)}-X^i_{(\alpha)}},\qquad i\ne j,\]
satisfy Tsarev's conditions of integrability. Indeed,  by straightforward computation one gets (for $i\ne j\ne k\ne i$)
\begin{eqnarray*}
0=\d_k\d_jX^i_{(\alpha)}-\d_j\d_kX^i_{(\alpha)}&=&(\d_j\Gamma^i_{ik}-\d_k\Gamma^i_{ij})X^i_{(\alpha)}\\
&&+(\d_k\Gamma^i_{ij}-\Gamma^i_{ij}\Gamma^j_{jk}-\Gamma^i_{ik}\Gamma^k_{kj}+\Gamma^i_{ij}\Gamma^i_{ik})X^j_{(\alpha)}\\
&&+(-\d_j\Gamma^i_{ik}+\Gamma^i_{ik}\Gamma^k_{kj}+\Gamma^i_{ij}\Gamma^j_{jk}-\Gamma^i_{ij}\Gamma^i_{ik})X^k_{(\alpha)}.
\end{eqnarray*}
Due to the linear independence  of the vector fields $X_{(0)},\dots,X_{(n-1)}$ it follows that the quantities 
 \begin{eqnarray*}
&&(\d_j\Gamma^i_{ik}-\d_k\Gamma^i_{ij})X^i+(\d_k\Gamma^i_{ij}-\Gamma^i_{ij}\Gamma^j_{jk}-\Gamma^i_{ik}\Gamma^k_{kj}+\Gamma^i_{ij}\Gamma^i_{ik})X^j\\
&&+(-\d_j\Gamma^i_{ik}+\Gamma^i_{ik}\Gamma^k_{kj}+\Gamma^i_{ij}\Gamma^j_{jk}-\Gamma^i_{ij}\Gamma^i_{ik})X^k
\end{eqnarray*}
vanish for a generic vector field $X$ and thus the coefficients of $X^i,X^j,X^k$ also vanish.
 We get  in this way conditions  \eqref{sh} and \eqref{shder}. This means that, given $n$ commuting flows of the form  \eqref{SHS} defined by a frame of vector fields $X_{(0)},\dots,X_{(n-1)}$, imposing the conditions
\[d_{\nabla}(X_{(\alpha)}\circ)=0,\qquad \nabla e=0,\qquad\alpha\in\{0,1,\dots,n-1\},\]
one obtains an F-manifold with compatible connection. We will use a similar idea in the general case.

\subsection{The general case}
Let us consider a set of $n$ linearly independent local vector fields $\{X_{(0)},\dots,X_{(n-1)}\}$ on the $n$-dimensional regular F-manifold $(M,\circ,e)$\footnote{By linear independence, we mean linear independence at a generic point.}. Let us assume that the corresponding flows
\begin{align}
	u_{t_i}=X_{(i)}\circ u_x,\qquad i\in\{0,\dots,n-1\},
	\notag
\end{align}
pairwise commute.
\begin{lemma}
	Two $(1,1)$-tensor fields $A$, $B$ define commuting flows
	\begin{equation*}
	    u_{t}= A\,u_x, \qquad \qquad		u_{\tau}= B\,u_x,
	\end{equation*}
	if and only if $[A,B]=0$ and
	\begin{align}
		&\big(\partial_sA^i_j\big)B^s_m+A^i_s\big(\partial_jB^s_m\big)=\big(\partial_sB^i_j\big)A^s_m+B^i_s\big(\partial_jA^s_m\big),\qquad i,j,m\in\{1,\dots,n\}.
		\label{commflows2}
	\end{align}
\end{lemma}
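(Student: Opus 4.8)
The plan is to verify the compatibility condition $u^i_{t\tau}=u^i_{\tau t}$ directly and to read the two hypotheses off from it. First I would write the flows as $u^i_t=A^i_a u^a_x$ and $u^i_\tau=B^i_b u^b_x$ and compute $u^i_{t\tau}=\partial_\tau(A^i_a u^a_x)$ by the chain rule, differentiating the coefficients through $\partial_\tau u^s=B^s_b u^b_x$ and the $x$-derivative through $\partial_\tau u^a_x=\partial_x(B^a_b u^b_x)$. This produces
\[
u^i_{t\tau}=(\partial_c A^i_a)B^c_b\,u^b_x u^a_x+A^i_a(\partial_c B^a_b)\,u^c_x u^b_x+A^i_a B^a_b\,u^b_{xx},
\]
together with the analogous expression for $u^i_{\tau t}$ obtained by interchanging $A$ and $B$. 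The flows commute if and only if $u^i_{t\tau}-u^i_{\tau t}$ vanishes identically as a differential polynomial in the jet $(u_x,u_{xx})$; since $u_x$ and $u_{xx}$ can be prescribed independently at a point and $u^a_x u^b_x$ is symmetric in $a,b$, this is equivalent to the separate vanishing of the coefficient of $u_{xx}$ and of the \emph{symmetric} part of the coefficient of $u_x\otimes u_x$.

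The coefficient of $u^m_{xx}$ is $(AB)^i_m-(BA)^i_m$, so its vanishing is exactly $[A,B]=0$. It then remains to match the quadratic part with \eqref{commflows2}. Writing $P^i_{jm}$ for the left-hand side minus the right-hand side of \eqref{commflows2}, a direct rearrangement of the two displays shows that the symmetric part of the $u_x\otimes u_x$ coefficient equals the symmetrization $P^i_{(jm)}$. Thus commutativity of the flows is equivalent to $[A,B]=0$ together with $P^i_{(jm)}=0$. The forward implication of the lemma is then immediate: if $[A,B]=0$ and the full identity \eqref{commflows2} hold, then in particular $P^i_{(jm)}=0$, so the difference $u^i_{t\tau}-u^i_{\tau t}$ vanishes and the flows commute.

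The substantive direction is to upgrade $P^i_{(jm)}=0$ to the full, unsymmetrized identity $P^i_{jm}=0$, and I expect this to be the main obstacle, since $P^i_{jm}$ is genuinely not symmetric in $j,m$. The tool is the commutator relation itself: differentiating $[A,B]^i_m=0$ with respect to $u^j$ yields an identity which, subtracted from $P^i_{jm}$, rewrites it in the manifestly ``curl'' form
\[
P^i_{jm}=\big(\partial_s A^i_j-\partial_j A^i_s\big)B^s_m-\big(\partial_s B^i_j-\partial_j B^i_s\big)A^s_m.
\]
In this form the antisymmetric part $P^i_{[jm]}$ is assembled from the same elementary quantities $\partial_s A^i_j-\partial_j A^i_s$ and $\partial_s B^i_j-\partial_j B^i_s$ that already occur in the diagonal symmetric conditions $P^i_{jj}=0$; combining the symmetric conditions for the various index pairs, the diagonal ones $j=m$ in particular, with $[A,B]=0$ forces $P^i_{[jm]}=0$ as well, and hence $P^i_{jm}=0$. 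Carrying out this linear-algebra step cleanly for an arbitrary pair of commuting $(1,1)$-tensors is the heart of the argument; working in a basis adapted to $A$, in which $A$ is in Jordan form and $B$ lies in its commutant (so that $B$ is block upper/lower triangular Toeplitz), should reduce the curl quantities to a small set of derivatives and make the bookkeeping transparent.
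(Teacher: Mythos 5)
Your computation of $u^i_{t\tau}$ and $u^i_{\tau t}$ is exactly the one in the paper, and your observation that commutativity only forces the vanishing of the coefficient of $u^m_{xx}$ together with the \emph{symmetric} part in $(j,m)$ of the coefficient of $u^j_xu^m_x$ is correct — and more careful than the paper, whose proof reads \eqref{commflows2} off from the contracted identity without comment. The ``if'' direction is fine. The gap is your final step: writing, as you do, $P^i_{jm}$ for the difference of the two sides of \eqref{commflows2}, the claim that $[A,B]=0$ together with $P^i_{(jm)}=0$ forces $P^i_{[jm]}=0$ is not proved, and in fact it is false for arbitrary $(1,1)$-tensor fields, so no choice of adapted basis will close it. Take $A=I$ and $B$ arbitrary: the flows $u_t=u_x$ and $u_\tau=Bu_x$ always commute and $[A,B]=0$, yet \eqref{commflows2} reduces to $\partial_jB^i_m=\partial_mB^i_j$, which fails already for $n=2$ with $B^1_2=u^1$ and all other entries zero. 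In your curl form this is transparent: with $A=I$ one gets $P^i_{jm}=\partial_jB^i_m-\partial_mB^i_j$, which is purely antisymmetric in $(j,m)$, so the symmetric conditions (the diagonal ones included) are vacuous while the unsymmetrized identity is a genuine extra constraint. (Note also that your parenthetical ``$B$ lies in the commutant, so it is block triangular Toeplitz'' presupposes $A$ regular, which is not a hypothesis here.)

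What the computation actually proves is that the flows commute if and only if $[A,B]=0$ and the symmetrization of \eqref{commflows2} in $(j,m)$ holds; the unsymmetrized \eqref{commflows2} is strictly stronger. This is a defect of the lemma as stated rather than of your reading of it, and it does not propagate into the rest of the paper: the only consequence extracted from the lemma is \eqref{dnablaBhp}, which is already symmetric in $(j,m)$ and follows from the symmetrized identity together with $d_\nabla A=0$, $[A,B]=0$ and the symmetry of the Christoffel symbols. So the correct way to complete your argument is to prove the lemma with \eqref{commflows2} replaced by its $(j,m)$-symmetrization — which your computation already does — rather than to attempt the upgrade to the full identity.
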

\begin{proof}
	For each $i\in\{1,\dots,n\}$, a straightforward computation yields
	\begin{align}
		u^i_{t\tau}=&\big(\partial_sA^i_j\big)B^s_m\,u^m_x\,u^j_x+A^i_s\big(\partial_jB^s_m\big)u^j_x\,u^m_x+A^i_j\,B^j_m\,u^m_{xx}
		\notag
	\end{align}
	and, analogously,
	\begin{align}
		u^i_{\tau t}=&\big(\partial_sB^i_j\big)A^s_m\,u^m_x\,u^j_x+B^i_s\big(\partial_jA^s_m\big)u^j_x\,u^m_x+B^i_j\,A^j_m\,u^m_{xx}.
		\notag
	\end{align}
	Then, commutativity of the flows $u_{t\tau}=u_{\tau t}$ amounts to
	\begin{equation}
		\begin{cases}
			\big(A^i_jB^j_m-B^i_jA^j_m\big)u^m_{xx}=0,\notag\\
			\big[\big(\partial_sA^i_j\big)B^s_m+A^i_s\big(\partial_jB^s_m\big)-\big(\partial_sB^i_j\big)A^s_m-B^i_s\big(\partial_jA^s_m\big)\big]u^m_xu^j_x=0,\notag
		\end{cases}
	\end{equation}
where $i\in\{1,\dots,n\}$,	concluding the proof.	
\end{proof}
\begin{remark}
	If $[A,B]=0$ then \eqref{commflows2} amounts to
	\begin{align}
		&\big(\partial_sA^i_j-\partial_jA^i_s\big)B^s_m=\big(\partial_sB^i_j-\partial_jB^i_s\big)A^s_m,\qquad i,j,m\in\{1,\dots,n\}.
		\label{commflows2_bis}
	\end{align}
	Moreover, if $d_\nabla A=0$ then
	\begin{align}
		\partial_sA^i_j-\partial_jA^i_s=\Gamma^i_{jt}A^t_s-\Gamma^i_{st}A^t_j,\qquad i,j,s\in\{1,\dots,n\},
		\notag
	\end{align}
	which, together with \eqref{commflows2_bis}, gives
	\begin{align}
		\big(\Gamma^i_{jt}A^t_s-\Gamma^i_{st}A^t_j\big)B^s_m=&\big(\partial_sB^i_j-\partial_jB^i_s\big)A^s_m\notag\\
		=&\big(d_\nabla B\big)^i_{sj}A^s_m+\big(\Gamma^i_{jt}B^t_s-\Gamma^i_{st}B^t_j\big)A^s_m,\qquad i,j,m\in\{1,\dots,n\}.
		\notag
	\end{align}
	Thus, a straightforward computation yields
	\begin{align}
		\big(d_\nabla B\big)^i_{js}A^s_m+\big(d_\nabla B\big)^i_{ms}A^s_j=0,\qquad i,j,m\in\{1,\dots,n\}.
		\label{dnablaBhp}
	\end{align}
	
\end{remark}
\begin{lemma}\label{LemmadnablaBfromtriangular}
	If $A=X\circ$, $B=Y\circ$ satisfy \eqref{dnablaBhp}, with $X^{1(\alpha)}\neq X^{1(\beta)}$ for $\alpha\neq\beta$, $\alpha,\beta\in\{1,\dots,r\}$, and $X^{2(\alpha)}\neq0$ for each ${\alpha\in\{1,\dots,r\}}$, then
	$d_{\nabla} B=0$.
\end{lemma}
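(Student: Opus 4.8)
The plan is to reduce \eqref{dnablaBhp} to a purely linear-algebraic statement about the endomorphism $A=X\circ$ and then exploit the regular block structure of $A$ in canonical coordinates. By Theorem \ref{DavidHertlingth} the structure functions in David-Hertling coordinates make $A=X\circ$ block diagonal, each block being a lower triangular Toeplitz matrix of the same shape as \eqref{Lblock}: the $\alpha$-th block $A_\alpha$ has diagonal entry $X^{1(\alpha)}$ and subdiagonal entry $X^{2(\alpha)}$. Fixing the upper index $i$ and writing $M$ for the matrix with entries $M_{js}=(d_\nabla B)^i_{js}$, which is antisymmetric in $j,s$, the hypothesis \eqref{dnablaBhp} reads $(MA)_{jm}+(MA)_{mj}=0$ for all $j,m$, i.e. $MA$ is antisymmetric. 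Since $M^T=-M$ this is equivalent to
\[
A^T M=MA,\qquad M^T=-M .
\]
Thus it suffices to show that the only antisymmetric $M$ satisfying this intertwining relation is $M=0$, and I would establish this block by block.

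For the off-diagonal blocks I would invoke Sylvester's theorem. Partitioning $M$ according to the Jordan blocks of $A$, the relation $A^TM=MA$ restricts on the $(\alpha,\beta)$ block to $A_\alpha^T M_{\alpha\beta}=M_{\alpha\beta}A_\beta$. Because each $A_\gamma$ is lower triangular Toeplitz with constant diagonal $X^{1(\gamma)}$, the only eigenvalue of $A_\alpha^T$ is $X^{1(\alpha)}$ and the only eigenvalue of $A_\beta$ is $X^{1(\beta)}$; the hypothesis $X^{1(\alpha)}\neq X^{1(\beta)}$ for $\alpha\neq\beta$ makes these disjoint, so the Sylvester equation forces $M_{\alpha\beta}=0$. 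Hence $M$ is block diagonal.

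The remaining, and main, step is to treat a diagonal block $S:=M_{\alpha\alpha}$, which is itself antisymmetric and satisfies $A_\alpha^T S=S A_\alpha$. Writing $A_\alpha=X^{1(\alpha)}I+B_\alpha$ the scalar part cancels and the equation reduces to $B_\alpha^T S=S B_\alpha$, where $B_\alpha$ is strictly lower triangular with subdiagonal $X^{2(\alpha)}$. This is exactly where the hypothesis $X^{2(\alpha)}\neq0$ enters: it guarantees that $B_\alpha$ is a \emph{regular} (cyclic) nilpotent operator, i.e. a single Jordan block, so there is a cyclic vector $v$ for which $v,B_\alpha v,\dots,B_\alpha^{m_\alpha-1}v$ is a basis and $B_\alpha^{m_\alpha}v=0$. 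The relation $B_\alpha^TS=SB_\alpha$ says precisely that $B_\alpha$ is self-adjoint for the antisymmetric bilinear form $\omega(u,w)=u^TSw$. Moving all factors of $B_\alpha$ to one side then gives
\[
\omega\big(B_\alpha^a v,\,B_\alpha^b v\big)=\omega\big(v,\,B_\alpha^{a+b}v\big),
\]
whose right-hand side depends only on $a+b$ and is therefore symmetric under $a\leftrightarrow b$. Since these values determine $\omega$ on the basis, $\omega$ is symmetric; being simultaneously antisymmetric, it vanishes, so $S=0$. Combined with the previous step this yields $M=0$ for every $i$, that is $d_\nabla B=0$.

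The conceptual crux is the diagonal-block argument, where the nondegeneracy assumption $X^{2(\alpha)}\neq0$ is used to turn $B_\alpha$ into a cyclic nilpotent and force the self-adjoint form to be at once symmetric and antisymmetric. By contrast, the off-diagonal (distinct-eigenvalue) step is routine once the block Toeplitz structure has been recorded, and the reduction of \eqref{dnablaBhp} to $A^TM=MA$ is a short index manipulation.
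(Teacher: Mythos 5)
Your proof is correct, and it takes a genuinely different route from the paper's. The paper proves the lemma by a direct coordinate computation: it first observes that the components $(d_\nabla B)^{i(\alpha)}_{j(\beta)k(\gamma)}$ with $\alpha\notin\{\beta,\gamma\}$ vanish automatically from the block structure of $B=Y\circ$, and then disposes of the remaining cases ($\alpha=\beta=\gamma$ and $\alpha=\gamma\neq\beta$) through a sequence of nested descending inductions on the lower indices, peeling off one component at a time by dividing by $X^{2(\alpha)}$ or by $X^{1(\alpha)}-X^{1(\beta)}$. You instead package the unknowns, for each fixed upper index, into the antisymmetric matrix $M$ and convert the hypothesis \eqref{dnablaBhp} into the intertwining relation $A^TM=MA$; the off-diagonal blocks then die by Sylvester's theorem (disjoint spectra $\{X^{1(\alpha)}\}$ and $\{X^{1(\beta)}\}$), and the diagonal blocks die because a cyclic nilpotent operator that is self-adjoint for an antisymmetric bilinear form forces the form to be simultaneously symmetric and antisymmetric, hence zero. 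Each step checks out: $M^T=-M$ holds by definition of $d_\nabla$, the passage from ``$MA$ antisymmetric'' to $A^TM=MA$ is immediate, and $X^{2(\alpha)}\neq0$ does make the strictly lower triangular Toeplitz part of $A_\alpha$ a single nilpotent Jordan block with $e_{1(\alpha)}$ as cyclic vector. Your argument is shorter, isolates cleanly where each hypothesis enters, and is in fact slightly more general, since it never uses that $B$ is of the form $Y\circ$ (the paper only uses that fact to dispose of the trivially vanishing components, which your Sylvester step covers anyway). What the paper's computational proof buys in exchange is a set of explicit recursive formulas of the same type reused in the subsequent existence-and-uniqueness proposition for the connection.
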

\begin{proof}
	For each $\alpha,\beta,\gamma\in\{1,\dots,r\}$ and $i\in\{1,\dots,m_\alpha\}$, $j\in\{1,\dots,m_\beta\}$, $k\in\{1,\dots,m_\gamma\}$ we have
	\begin{align}
		(d_\nabla B)^{i(\alpha)}_{j(\beta)k(\gamma)}= \, & \, \nabla_{j(\beta)}B^{i(\alpha)}_{k(\gamma)}-\nabla_{k(\gamma)}B^{i(\alpha)}_{j(\beta)}\notag\\
		= \, & \, (\nabla_{j(\beta)}Y^{l(\lambda)})\,c^{i(\alpha)}_{l(\lambda)k(\gamma)}+Y^{l(\lambda)}\,\nabla_{j(\beta)}c^{i(\alpha)}_{l(\lambda)k(\gamma)}\notag\\ \, & \, -(\nabla_{k(\gamma)}Y^{l(\lambda)})\,c^{i(\alpha)}_{l(\lambda)j(\beta)}-Y^{l(\lambda)}\,\nabla_{k(\gamma)}c^{i(\alpha)}_{l(\lambda)j(\beta)}\notag\\
		= \, & \, \delta^{\alpha}_{\gamma}\,(\nabla_{j(\beta)}Y^{(i-k+1)(\alpha)})+\delta^{\alpha}_{\gamma}\,Y^{l(\alpha)}\,\nabla_{j(\beta)}c^{i(\alpha)}_{l(\alpha)k(\alpha)}\notag\\ \, & \, -\delta^{\alpha}_{\beta}\,(\nabla_{k(\gamma)}Y^{(i-j+1)(\alpha)})-\delta^{\alpha}_{\beta}\,Y^{l(\alpha)}\,\nabla_{k(\gamma)}c^{i(\alpha)}_{l(\alpha)j(\alpha)},
		\notag
	\end{align}
	which trivially vanishes when $\alpha\notin\{\beta,\gamma\}$. Hence, the only cases to be considered are $\alpha=\beta=\gamma$ and  $\alpha=\gamma\neq\beta$ (or, equivalently, $\alpha=\beta\neq\gamma$).
	
	Let us start by considering the case where $\alpha=\beta=\gamma$. For every ${i,j,k\in\{1,\dots,m_\alpha\}}$, condition \eqref{dnablaBhp} reads
	\begin{equation}
		(d_{\nabla} B)^{i(\alpha)}_{j(\alpha)s(\sigma)}A^{s(\sigma)}_{k(\alpha)}+(d_{\nabla} B)^{i(\alpha)}_{k(\alpha)s(\sigma)}A^{s(\sigma)}_{j(\alpha)}=0.
		\notag
	\end{equation}
	That is,
	\begin{equation}
		(d_{\nabla} B)^{i(\alpha)}_{j(\alpha)s(\alpha)}\,X^{(s-k+1)(\alpha)}+(d_{\nabla} B)^{i(\alpha)}_{k(\alpha)s(\alpha)}\,X^{(s-j+1)(\alpha)}=0,
		\label{dnablaB_proof_oneblock}
	\end{equation}
	as $A^{s(\sigma)}_{h(\alpha)}=\delta^\sigma_\alpha\,X^{(s-h+1)(\alpha)}$ for all $\sigma\in\{1,\dots,r\}$ and $h\in\{1,\dots,m_\alpha\}$, $s\in\{1,\dots,m_\sigma\}$. For $k=m_\alpha$ in \eqref{dnablaB_proof_oneblock} we get that
	\begin{equation}
		(d_{\nabla} B)^{i(\alpha)}_{j(\alpha)m_\alpha(\alpha)}\,X^{1(\alpha)}+(d_{\nabla} B)^{i(\alpha)}_{m_\alpha(\alpha)s(\alpha)}\,X^{(s-j+1)(\alpha)}=0.
		\notag
	\end{equation}
	In particular, the second term is a sum over $s\geq j$, where the term corresponding to $s=j$ cancels with $(d_{\nabla} B)^{i(\alpha)}_{j(\alpha)m_\alpha(\alpha)}\,X^{1(\alpha)}$, yielding
	\begin{equation}
		\overset{m_\alpha}{\underset{s=j+1}{\sum}}\,(d_{\nabla} B)^{i(\alpha)}_{m_\alpha(\alpha)s(\alpha)}\,X^{(s-j+1)(\alpha)}=0.
		\label{dnablaB_proof_oneblock_kmax}
	\end{equation}
	Setting $j=m_\alpha-1$ in \eqref{dnablaB_proof_oneblock_kmax} gives
	\begin{equation}
		(d_{\nabla} B)^{i(\alpha)}_{m_\alpha(\alpha)m_\alpha(\alpha)}\,X^{2(\alpha)}=0,
		\notag
	\end{equation}
	which implies $(d_{\nabla} B)^{i(\alpha)}_{m_\alpha(\alpha)m_\alpha(\alpha)}=0$ for every $i\in\{1,\dots,m_\alpha\}$. Let's now fix an integer $h\in\{2,\dots,m_\alpha-1\}$ and inductively assume that $(d_{\nabla} B)^{i(\alpha)}_{m_\alpha(\alpha)l(\alpha)}=0$ for every $i\in\{1,\dots,m_\alpha\}$ for each $l\in\{h+1,\dots,m_\alpha\}$. Setting $j=h-1$ in \eqref{dnablaB_proof_oneblock_kmax} we get
	\begin{equation}
		\overset{m_\alpha}{\underset{s=h}{\sum}}\,(d_{\nabla} B)^{i(\alpha)}_{m_\alpha(\alpha)s(\alpha)}\,X^{(s-h+2)(\alpha)}=0,
		\notag
	\end{equation}
	where, by the inductive assumption, only the term corresponding to $s=h$ survives, yielding $(d_{\nabla} B)^{i(\alpha)}_{m_\alpha(\alpha)h(\alpha)}=0$. This proves that $(d_{\nabla} B)^{i(\alpha)}_{m_\alpha(\alpha)j(\alpha)}=0$ for every ${i\in\{1,\dots,m_\alpha\}}$, for each $j\in\{2,\dots,m_\alpha\}$.	Let us fix $m\in\{2,...,m_\alpha-1\}$ and inductively assume that $(d_{\nabla} B)^{i(\alpha)}_{k(\alpha)j(\alpha)}=0$ for every $i\in\{1,\dots,m_\alpha\}$, $j\in\{2,\dots,m_\alpha\}$, for each $k\in\{m+1,\dots,m_\alpha\}$. We need to show that $(d_{\nabla} B)^{i(\alpha)}_{m(\alpha)j(\alpha)}=0$ for every $i\in\{1,\dots,m_\alpha\}$, $j\in\{2,\dots,m_\alpha\}$. For $k=m$ in \eqref{dnablaB_proof_oneblock} we get
	\begin{equation}
		(d_{\nabla} B)^{i(\alpha)}_{j(\alpha)s(\alpha)}\,X^{(s-m+1)(\alpha)}+(d_{\nabla} B)^{i(\alpha)}_{m(\alpha)s(\alpha)}\,X^{(s-j+1)(\alpha)}=0,
		\notag
	\end{equation}
	where the first terms in the sums cancel out with each other, yielding
	\begin{equation}
		\overset{m_\alpha}{\underset{s=m+1}{\sum}}\,(d_{\nabla} B)^{i(\alpha)}_{j(\alpha)s(\alpha)}\,X^{(s-m+1)(\alpha)}+\overset{m_\alpha}{\underset{s=j+1}{\sum}}\,(d_{\nabla} B)^{i(\alpha)}_{m(\alpha)s(\alpha)}\,X^{(s-j+1)(\alpha)}=0.
		\notag
	\end{equation}
	By the inductive assumption, only the second sum survives, giving
	\begin{equation}
		\overset{m_\alpha}{\underset{s=j+1}{\sum}}\,(d_{\nabla} B)^{i(\alpha)}_{m(\alpha)s(\alpha)}\,X^{(s-j+1)(\alpha)}=0,
		\label{dnablaB_proof_oneblock_km}
	\end{equation}
	which, again by the inductive assumption, trivially holds whenever $j\geq m$. For $j=m-1$ in \eqref{dnablaB_proof_oneblock_km} we get
	\begin{equation}
		\overset{m_\alpha}{\underset{s=m}{\sum}}\,(d_{\nabla} B)^{i(\alpha)}_{m(\alpha)s(\alpha)}\,X^{(s-m+2)(\alpha)}=0,
		\notag
	\end{equation}
	where, by the inductive assumption, only the term corresponding to $s=m$ survives, yielding $(d_{\nabla} B)^{i(\alpha)}_{m(\alpha)m(\alpha)}=0$ for every $i\in\{1,\dots,m_\alpha\}$. Let us fix $h\in\{2,\dots,m_\alpha-1\}$ and inductively assume that $(d_{\nabla} B)^{i(\alpha)}_{m(\alpha)l(\alpha)}=0$ for every $i\in\{1,\dots,m_\alpha\}$, for each $l\in\{h+1,\dots,m_\alpha\}$. For $j=h-1$ in \eqref{dnablaB_proof_oneblock_km}
	\begin{equation}
		\overset{m_\alpha}{\underset{s=h}{\sum}}\,(d_{\nabla} B)^{i(\alpha)}_{m(\alpha)s(\alpha)}\,X^{(s-h+2)(\alpha)}=0,
		\notag
	\end{equation}
	which, by the inductive assumption, gives $(d_{\nabla} B)^{i(\alpha)}_{m(\alpha)h(\alpha)}=0$ for each $i\in\{1,\dots,m_\alpha\}$. This proves that $(d_{\nabla} B)^{i(\alpha)}_{m(\alpha)j(\alpha)}=0$ for every $i\in\{1,\dots,m_\alpha\}$, for each $j\in\{2,\dots,m_\alpha\}$.	Thus, we have proved that $(d_{\nabla} B)^{i(\alpha)}_{k(\alpha)j(\alpha)}=0$ for all $i\in\{1,\dots,m_\alpha\}$ and for all ${j,k\in\{2,\dots,m_\alpha\}}$, and we are left with showing that $(d_{\nabla} B)^{i(\alpha)}_{j(\alpha)k(\alpha)}=0$ when $k=1$ or $j=1$. For this purpose, let us now consider $j=1$ in \eqref{dnablaB_proof_oneblock}. This gives
	\begin{equation}
		\overset{m_\alpha}{\underset{s=k+1}{\sum}}\,(d_{\nabla} B)^{i(\alpha)}_{1(\alpha)s(\alpha)}\,X^{(s-k+1)(\alpha)}+\overset{m_\alpha}{\underset{s=2}{\sum}}\,(d_{\nabla} B)^{i(\alpha)}_{k(\alpha)s(\alpha)}\,X^{s(\alpha)}=0,
		\label{dnablaB_proof_oneblock_j1}
	\end{equation}
	which for $k=m_\alpha-1$ becomes
	\begin{equation}
		(d_{\nabla} B)^{i(\alpha)}_{1(\alpha)m_\alpha(\alpha)}\,X^{(s-m_\alpha+2)(\alpha)}+\overset{m_\alpha}{\underset{s=2}{\sum}}\,(d_{\nabla} B)^{i(\alpha)}_{(m_\alpha-1)(\alpha)s(\alpha)}\,X^{s(\alpha)}=0,
		\notag
	\end{equation}
	implying $(d_{\nabla} B)^{i(\alpha)}_{1(\alpha)m_\alpha(\alpha)}=0$ by what we showed above. Let us fix ${h\in\{3,\dots,m_\alpha-1\}}$ and inductively assume that $(d_{\nabla} B)^{i(\alpha)}_{1(\alpha)k(\alpha)}=0$ for each $k\in\{h+1,\dots,m_\alpha\}$. For $k=h-1$ in \eqref{dnablaB_proof_oneblock_j1} we get
	\begin{equation}
		\overset{m_\alpha}{\underset{s=h}{\sum}}\,(d_{\nabla} B)^{i(\alpha)}_{1(\alpha)s(\alpha)}\,X^{(s-h+2)(\alpha)}+\overset{m_\alpha}{\underset{s=2}{\sum}}\,(d_{\nabla} B)^{i(\alpha)}_{(h-1)(\alpha)s(\alpha)}\,X^{s(\alpha)}=0,
		\notag
	\end{equation}
	implying $(d_{\nabla} B)^{i(\alpha)}_{1(\alpha)h(\alpha)}=0$. This proves that
	\begin{equation}
		(d_{\nabla} B)^{i(\alpha)}_{1(\alpha)k(\alpha)}=0,\qquad k\in\{3,\dots,m_\alpha\}.
		\label{dnablaB_proof_oneblock_jequalk_laststepbutone}
	\end{equation}
	Finally, consider \eqref{dnablaB_proof_oneblock} with $k=j$. This yields
	\begin{equation}
		(d_{\nabla} B)^{i(\alpha)}_{j(\alpha)s(\alpha)}\,X^{(s-j+1)(\alpha)}=0,
	\end{equation}
	which for $j=1$ becomes
	\begin{equation}
		\overset{m_\alpha}{\underset{s=2}{\sum}}\,(d_{\nabla} B)^{i(\alpha)}_{1(\alpha)s(\alpha)}\,X^{s(\alpha)}=0,
		\notag
	\end{equation}
	giving $(d_{\nabla} B)^{i(\alpha)}_{1(\alpha)2(\alpha)}\,X^{2(\alpha)}=-\overset{m_\alpha}{\underset{s=3}{\sum}}\,(d_{\nabla} B)^{i(\alpha)}_{1(\alpha)s(\alpha)}\,X^{s(\alpha)}=0$ by \eqref{dnablaB_proof_oneblock_jequalk_laststepbutone}. This completes the proof in the case of $\alpha=\beta=\gamma$.
	
	Let us now turn to the case of $\alpha=\gamma\neq\beta$. For every ${i,k\in\{1,\dots,m_\alpha\}}$ and every ${j\in\{1,\dots,m_\beta\}}$,  \eqref{dnablaBhp} reads
	\begin{equation}
		(d_{\nabla} B)^{i(\alpha)}_{j(\beta)s(\sigma)}A^{s(\sigma)}_{k(\alpha)}+(d_{\nabla} B)^{i(\alpha)}_{k(\alpha)s(\sigma)}A^{s(\sigma)}_{j(\beta)}=0,
		\notag
	\end{equation}
	giving
	\begin{equation}
		(d_{\nabla} B)^{i(\alpha)}_{j(\beta)s(\alpha)}X^{(s-k+1)(\alpha)}+(d_{\nabla} B)^{i(\alpha)}_{k(\alpha)s(\beta)}X^{(s-j+1)(\beta)}=0,
		\label{dnablaB_proof_manyblocks}
	\end{equation}
	which for $j=m_\beta$ in \eqref{dnablaB_proof_manyblocks} gives
	\begin{equation}
		(d_{\nabla} B)^{i(\alpha)}_{m_\beta(\beta)k(\alpha)}(X^{1(\alpha)}-X^{1(\beta)})+\overset{m_\alpha}{\underset{s=k+1}{\sum}}\,(d_{\nabla} B)^{i(\alpha)}_{m_\beta(\beta)s(\alpha)}X^{(s-k+1)(\alpha)}=0.
		\label{dnablaB_proof_manyblocks_jtop}
	\end{equation}
	Setting $k=m_\alpha$ in\eqref{dnablaB_proof_manyblocks_jtop} we get
	\begin{equation}
		(d_{\nabla} B)^{i(\alpha)}_{m_\beta(\beta)m_\alpha(\alpha)}(X^{1(\alpha)}-X^{1(\beta)})=0,
		\notag
	\end{equation}
	yielding $(d_{\nabla} B)^{i(\alpha)}_{m_\beta(\beta)m_\alpha(\alpha)}=0$. Let us fix $H\in\{1,\dots,m_\alpha-1\}$ and inductively assume that $(d_{\nabla} B)^{i(\alpha)}_{m_\beta(\beta)k(\alpha)}=0$ for each $k\in\{H+1,\dots,m_\alpha\}$. For $k=H$ in \eqref{dnablaB_proof_manyblocks_jtop} we get
	\begin{equation}
		(d_{\nabla} B)^{i(\alpha)}_{m_\beta(\beta)H(\alpha)}(X^{1(\alpha)}-X^{1(\beta)})+\overset{m_\alpha}{\underset{s=H+1}{\sum}}\,(d_{\nabla} B)^{i(\alpha)}_{m_\beta(\beta)s(\alpha)}X^{(s-H+1)(\alpha)}=0,
		\notag
	\end{equation}
	which implies $(d_{\nabla} B)^{i(\alpha)}_{m_\beta(\beta)H(\alpha)}=0$ by the inductive assumption. This proves that $(d_{\nabla} B)^{i(\alpha)}_{m_\beta(\beta)k(\alpha)}=0$ for every $i,k\in\{1,\dots,m_\alpha\}$.	Let us fix $h\in\{1,\dots,m_\beta-1\}$ and inductively assume that $(d_{\nabla} B)^{i(\alpha)}_{j(\beta)k(\alpha)}=0$ for every $i,k\in\{1,\dots,m_\alpha\}$, for each $j\in\{h+1,\dots,m_\beta\}$. For $j=h$ in \eqref{dnablaB_proof_manyblocks} we get
	\begin{equation}
		(d_{\nabla} B)^{i(\alpha)}_{h(\beta)s(\alpha)}X^{(s-k+1)(\alpha)}+(d_{\nabla} B)^{i(\alpha)}_{k(\alpha)s(\beta)}X^{(s-h+1)(\beta)}=0,
		\notag
	\end{equation}
	which, by the inductive assumption, amounts to
	\begin{equation}
		(d_{\nabla} B)^{i(\alpha)}_{h(\beta)k(\alpha)}(X^{1(\alpha)}-X^{1(\beta)})+
		\overset{m_\alpha}{\underset{s=k+1}{\sum}}\,(d_{\nabla} B)^{i(\alpha)}_{h(\beta)s(\alpha)}X^{(s-k+1)(\alpha)}=0.
		\label{dnablaB_proof_manyblocks_jh}
	\end{equation}
	For $k=m_\alpha$, \eqref{dnablaB_proof_manyblocks_jh} becomes
	\begin{equation}
		(d_{\nabla} B)^{i(\alpha)}_{h(\beta)m_\alpha(\alpha)}(X^{1(\alpha)}-X^{1(\beta)})=0,
		\notag
	\end{equation}
	yielding $(d_{\nabla} B)^{i(\alpha)}_{h(\beta)m_\alpha(\alpha)}=0$. Let us fix $H\in\{1,\dots,m_\alpha-1\}$ and inductively assume that $(d_{\nabla} B)^{i(\alpha)}_{h(\beta)k(\alpha)}=0$ for each $k\in\{H+1,\dots,m_\alpha\}$. For $k=H$ in \eqref{dnablaB_proof_manyblocks_jh} we get
	\begin{equation}
		(d_{\nabla} B)^{i(\alpha)}_{h(\beta)H(\alpha)}(X^{1(\alpha)}-X^{1(\beta)})+
		\overset{m_\alpha}{\underset{s=H+1}{\sum}}\,(d_{\nabla} B)^{i(\alpha)}_{h(\beta)s(\alpha)}X^{(s-H+1)(\alpha)}=0,
		\notag
	\end{equation}
	which implies $(d_{\nabla} B)^{i(\alpha)}_{h(\beta)H(\alpha)}=0$ by means of the inductive assumption. Hence, ${(d_{\nabla} B)^{i(\alpha)}_{h(\beta)H(\alpha)}=0}$ for every $i,k\in\{1,\dots,m_\alpha\}$ and every $j\in\{1,\dots,m_\beta\}$.
\end{proof}
$\,$\newline
In particular, if we consider $X\in\{X_{(0)},\dots,X_{(n-1)}\}$ such that $X^{1(\alpha)}\neq X^{1(\beta)}$ for ${\alpha\neq\beta}$, $\alpha,\beta\in\{1,\dots,r\}$, and $X^{2(\alpha)}\neq0$ for each ${\alpha\in\{1,\dots,r\}}$, then imposing $d_\nabla(X\circ)=0$ is sufficient to guarantee that also $d_\nabla(Y\circ)=0$ for any other ${Y\in\{X_{(0)},\dots,X_{(n-1)}\}}$.Thus, we shall assume that the family $\{X_{(0)},\dots,X_{(n-1)}\}$ contains at least one local vector field $X$ such that $X^{1(\alpha)}\neq X^{1(\beta)}$ for ${\alpha\neq\beta}$, $\alpha,\beta\in\{1,\dots,r\}$, and $X^{2(\alpha)}\neq0$ for each ${\alpha\in\{1,\dots,r\}}$, satisfying $d_\nabla(X\circ)=0$.
\newline
\newline
The Hertling-Manin condition \eqref{HMeq1free} in coordinates takes the form
         \begin{equation}
             (\partial_s c^k_{im})c^s_{jl} - (\partial_s c^k_{jl})c^s_{im} + (\partial_i c^s_{jl})c^k_{sm} + (\partial_m c^s_{jl})c^k_{si} - (\partial_l c^s_{im})c^k_{js} - (\partial_j c^s_{im})c^k_{ls} = 0.
             \label{eq:HMcoords}
         \end{equation}
It turns out that this condition can be equivalently phrased in terms of covariant derivatives, as the following lemma shows.
 \begin{lemma}
 	For all suitable indices, the left-hand side of \eqref{eq:HMcoords} is equal to
         \begin{equation}\notag
              (\nabla_s c^k_{im})c^s_{jl} - (\nabla_s c^k_{jl})c^s_{im} + (\nabla_i c^s_{jl})c^k_{sm} + (\nabla_m c^s_{jl})c^k_{si} - (\nabla_l c^s_{im})c^k_{js} - (\nabla_j c^s_{im})c^k_{ls}.
            \end{equation}
            \label{lemma:HMinnabla}
         \end{lemma}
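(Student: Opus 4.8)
The plan is to reduce the statement to the defining relation between the two derivatives. For the structure functions, the covariant and partial derivatives differ by Christoffel terms,
\begin{equation}
\nabla_a c^b_{cd}=\d_a c^b_{cd}+\Gamma^b_{as}c^s_{cd}-\Gamma^s_{ac}c^b_{sd}-\Gamma^s_{ad}c^b_{cs}.
\notag
\end{equation}
Substituting this into each of the six terms of the $\nabla$-expression and subtracting the corresponding term of the $\partial$-expression, the lemma becomes equivalent to the vanishing of the resulting sum $D$ of eighteen correction terms, each linear in the Christoffel symbols and quadratic in the structure functions. The whole argument then consists in checking that these eighteen terms cancel, using only three facts already at our disposal: the torsion-freeness of $\nabla$ (so that $\Gamma^a_{bc}=\Gamma^a_{cb}$), the commutativity of $\circ$ (so that $c^a_{bc}=c^a_{cb}$), and its associativity, which in coordinates reads $c^s_{ab}c^k_{sc}=c^s_{bc}c^k_{as}$.

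First I would sort the eighteen terms according to the shape of the Christoffel factor each contains. Exactly two of them carry the free upper index $k$ on the connection, namely $\Gamma^k_{st}c^t_{im}c^s_{jl}$ and $-\Gamma^k_{st}c^t_{jl}c^s_{im}$; relabelling the dummy pair $s\leftrightarrow t$ in the second and using $\Gamma^k_{st}=\Gamma^k_{ts}$ shows the two are opposite, so they cancel immediately. The remaining sixteen split evenly into two groups: those in which $\Gamma$ carries one free lower index together with one index contracted against the \emph{other} structure function, and those in which $\Gamma$ carries two free lower indices and is contracted only through its upper index.

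For the eight terms of the first group I would rename dummy indices so that each takes the common shape $\pm\,\Gamma^t_{s\,\alpha}\,c^s_{\beta\gamma}\,c^k_{t\delta}$; each then pairs off with exactly one partner and cancels after a single use of commutativity or of the symmetry of $\Gamma$. For instance $-\Gamma^t_{si}c^k_{tm}c^s_{jl}$ and $\Gamma^s_{it}c^t_{jl}c^k_{sm}$ become opposite upon swapping $s\leftrightarrow t$ and using $\Gamma^t_{si}=\Gamma^t_{is}$, and the other three pairs go the same way. For the eight terms of the second group the cancellations are not visible until each product of two structure functions has been rewritten, by one or two applications of associativity (aided by commutativity), into the single normal form $c^s_{\beta\gamma}\,c^k_{ts}$ with a Christoffel factor $\Gamma^t_{\alpha\alpha'}$ in front; once this is done the eight terms pair up and cancel because $\Gamma^t_{ij}=\Gamma^t_{ji}$ and $c^s_{lm}=c^s_{ml}$. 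Collecting the three groups gives $D=0$, which is the claim.

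The only genuinely delicate step is the normalisation in the second group: associativity must be applied in the correct direction, with careful attention to which index is being contracted, since a careless choice leaves $c^k$ with the wrong index attached and hides the pairing. I expect this index bookkeeping — rather than any conceptual difficulty — to be where care is required; everything else is forced by the three algebraic identities above.
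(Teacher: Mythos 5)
Your proposal is correct and follows essentially the same route as the paper: expand each covariant derivative into its partial derivative plus three Christoffel corrections, obtain eighteen correction terms, and cancel them in nine pairs using torsion-freeness, commutativity, and associativity (the paper's pairs $T_{11}+T_{21}$, the four pairs resolved by dummy relabelling and symmetry, and the four pairs requiring associativity match your three groups exactly). The only difference is presentational — you organise the terms by the index structure of the Christoffel factor, while the paper simply tabulates the nine cancelling pairs.
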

         \begin{proof}
             As $\nabla_s c^k_{im} = \partial_s c^k_{im} + \Gamma^k_{ds}c^d_{im} - \Gamma^d_{is}c^k_{dm} - \Gamma^d_{ms}c^k_{id}$, we have that Lemma \ref{lemma:HMinnabla} is equivalent to the vanishing of
             \begin{equation}
             \begin{aligned}
             \, & \,   \left( \Gamma^k_{ds}c^d_{im} -  \Gamma^d_{is}c^k_{dm}- \Gamma^d_{ms}c^k_{id}\right)c^s_{jl} - \left( \Gamma^k_{ds}c^d_{jl} -  \Gamma^d_{js}c^k_{dl}- \Gamma^d_{ls}c^k_{jd}\right)c^s_{im}\\
             + \, & \, \left( \Gamma^s_{di}c^d_{jl} -  \Gamma^d_{ji}c^s_{dl}- \Gamma^d_{li}c^s_{jd}\right)c^k_{sm} + \left( \Gamma^s_{dm}c^d_{jl} -  \Gamma^d_{jm}c^s_{dl}- \Gamma^d_{lm}c^s_{jd}\right)c^k_{si} \\
             - \, & \, \left( \Gamma^s_{dl}c^d_{im} -  \Gamma^d_{il}c^s_{dm}- \Gamma^d_{ml}c^s_{id}\right)c^k_{js} - \left( \Gamma^s_{dj}c^d_{im} -  \Gamma^d_{ij}c^s_{dm}- \Gamma^d_{mj}c^s_{id}\right)c^k_{ls}.
               \end{aligned}
               \label{HMnablaextra}
             \end{equation}
             Let the $j^\text{th}$ term in the $i^\text{th}$ bracket in \eqref{HMnablaextra} be labelled $T_{ij}$, i.e. $T_{12} = -\Gamma^d_{is} c^k_{dm}c^s_{jl}$. We then obtain the following cancellations:
\begin{multicols}{2}
 \begin{itemize}
                 \item $T_{11} + T_{21} = 0$,
                  \item $T_{12} + T_{31} = 0$,
                   \item $T_{13} + T_{41} = 0$,
                    \item $T_{22} + T_{61} = 0$,
                     \item $T_{23} + T_{51} = 0$,
             \end{itemize} \begin{itemize}
                 \item $T_{32} + T_{62} = 0$,
                   \item $T_{33} + T_{52} = 0$,
                     \item $T_{42} + T_{63} = 0$,
                       \item $T_{43} + T_{53} = 0$.
             \end{itemize}
\end{multicols}
             Here, the relations in left column are obtained after relabelling the summed variables appropriately, while the right column follows from the associativity of the product
             \begin{equation*}
                 c^m_{ij} c^d_{mk} = c^m_{jk} c^d_{im},
             \end{equation*}
             commutativity ($c^k_{ij} = c^k_{ji}$), and the fact that $\nabla$ is torsionless ($\Gamma^i_{jk} = \Gamma^i_{kj}$).
             
            Consequently,  $\eqref{HMnablaextra}=0$, as required. 
         \end{proof}

 \begin{lemma}
  Let $X$ be a local vector field, then
         \begin{equation}
             (\nabla_e c^i_{js})X^s = 0.
         \end{equation}
         \label{lemma:nablaeczero}
     \end{lemma}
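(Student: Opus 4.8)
The plan is to prove the slightly stronger pointwise statement $\nabla_e c^i_{js}=0$, from which the contracted form $(\nabla_e c^i_{js})X^s=0$ for arbitrary $X$ is immediate. I would work in arbitrary local coordinates and exploit only three ingredients that are already available: that $e$ is the unit of $\circ$, that $e$ is flat ($\nabla e=0$), and the symmetry of the $(1,3)$-tensor $\nabla_l c^k_{ij}$ in its three lower indices, established in the unpacking of condition (i) of Definition~\ref{Fmnfwcompatconn_flatunit} (item~(3) there).

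First I would translate the unit axiom $e\circ X=X$ into the algebraic identity of tensors $c^i_{jk}e^j=\delta^i_k$. Since $X$ is arbitrary, the claim is equivalent to $e^m\nabla_m c^i_{js}=0$, so this is what I would target. Next I would differentiate the identity $c^i_{jk}e^j=\delta^i_k$ covariantly: applying $\nabla_m$ and using the Leibniz rule gives $(\nabla_m c^i_{jk})e^j+c^i_{jk}(\nabla_m e^j)=\nabla_m\delta^i_k$. The right-hand side vanishes because the Kronecker delta, viewed as the identity endomorphism, is covariantly constant for any connection, and the second term on the left vanishes by flatness of the unit, $\nabla e=0$. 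This produces the intermediate relation $(\nabla_m c^i_{jk})e^j=0$, valid for all $m,k$ and free $i$.

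Finally I would invoke the symmetry of $\nabla_l c^k_{ij}$ in its lower indices. Writing $\nabla_e c^i_{mk}=e^j\nabla_j c^i_{mk}$, the symmetry allows the interchange of the derivative slot $j$ with the lower index $m$, turning $e^j\nabla_j c^i_{mk}$ into $e^j\nabla_m c^i_{jk}=(\nabla_m c^i_{jk})e^j$, which is exactly the quantity shown to vanish in the previous step. Hence $\nabla_e c^i_{mk}=0$ for all $i,m,k$, and contracting with $X^k$ yields the stated identity.

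There is no genuine obstacle in this argument; it is a direct two-line consequence of the unit axiom, flatness of $e$, and the lower-index symmetry of $\nabla c$. The only point requiring care is the bookkeeping in the last step: one must be sure that the derivative index and the contracted lower index of $\nabla c$ are truly interchangeable, which is precisely what the symmetry of $\nabla_l c^k_{ij}$ guarantees.
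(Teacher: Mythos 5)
Your argument has a genuine circularity problem in the context where this lemma sits. The key step of your proof is the interchange of the covariant-derivative index with a lower index of $c$, i.e.\ passing from $e^j\nabla_j c^i_{mk}$ to $e^j\nabla_m c^i_{jk}$, which you justify by the total symmetry of $\nabla_l c^k_{ij}$ in its lower indices (item (3) in the unpacking of Definition \ref{Fmnfwcompatconn_flatunit}). But that symmetry is a property of an F-manifold \emph{with compatible connection}, and at this point in Section 3 the paper is in the middle of \emph{constructing} such a structure: the only hypotheses available are that $(M,\circ,e)$ is an F-manifold, that $\nabla$ is torsionless with $\nabla e=0$, and that $d_\nabla(X_{(i)}\circ)=0$. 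The symmetry $\nabla_j c^i_{ks}=\nabla_k c^i_{js}$ is established only later, in the corollary following Corollary \ref{cor:dnablezeroimplies3RC}, and its proof rests on Theorem \ref{thm:dnablezeroimplies3RC}, which in turn invokes Lemma \ref{lemma:nablaeczero}. So you are assuming a statement whose proof depends on the lemma you are trying to prove. (The condition $d_\nabla(X\circ)=0$ only gives $(\nabla_j c^i_{ks})X^s+c^i_{ks}\nabla_jX^s=(\nabla_k c^i_{js})X^s+c^i_{js}\nabla_kX^s$, which does not by itself yield the needed symmetry.)

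Your intermediate identity $(\nabla_m c^i_{jk})e^j=0$, obtained by covariantly differentiating $c^i_{jk}e^j=\delta^i_k$ and using $\nabla e=0$, is correct and is indeed used in the paper's proof. What the paper does differently is to avoid the index swap altogether: it rewrites the Hertling--Manin condition entirely in terms of covariant derivatives (Lemma \ref{lemma:HMinnabla}, which requires only torsion-freeness of $\nabla$ together with commutativity and associativity of $\circ$, all of which are available), and then contracts that identity twice with $e$. Four of the six terms die by your identity $(\nabla_m c^i_{jk})e^j=0$, and the two surviving terms combine to give $(\nabla_e c^s_{im})c^k_{js}=0$, whence $\nabla_e c^k_{im}=0$ after a further contraction with $e$. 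If you want to repair your argument, you must replace the symmetry-of-$\nabla c$ step by an appeal to the Hertling--Manin condition along these lines.
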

     \begin{proof}        
Using Lemma \ref{lemma:HMinnabla}, we have that
\begin{equation}
     (\nabla_s c^k_{im})c^s_{jl} - (\nabla_s c^k_{jl})c^s_{im} + (\nabla_i c^s_{jl})c^k_{sm} + (\nabla_m c^s_{jl})c^k_{si} - (\nabla_l c^s_{im})c^k_{js} - (\nabla_j c^s_{im})c^k_{ls} = 0.
\end{equation}
Now, let's contract with the unit vector field $e = e^l \partial_l$. That is, 
         \begin{equation}
             \begin{aligned}
               0 = \, & \,    e^l\bigg((\nabla_s c^k_{im})c^s_{jl} - (\nabla_s c^k_{jl})c^s_{im} + (\nabla_i c^s_{jl})c^k_{sm} + (\nabla_m c^s_{jl})c^k_{si} - (\nabla_l c^s_{im})c^k_{js} - (\nabla_j c^s_{im})c^k_{ls} \bigg) \\
                 = \, & \,  \bcancel{\nabla_j c^k_{im}} - \overset{= 0}{\overbrace{(\nabla_s \delta^k_j)}} \, c^s_{im} + \overset{= 0}{\overbrace{(\nabla_i \delta^s_j)}} \, c^k_{sm} +  \overset{= 0}{\overbrace{(\nabla_m \delta^s_j)}} \, c^k_{si} - (\nabla_e c^s_{im})c^k_{js} - \bcancel{\nabla_j c^k_{im}}   \\
                  \implies \, & \, (\nabla_e c^s_{im})c^k_{js} = 0,
             \end{aligned}
         \end{equation}
         where we have used the flatness of the unit vector field, and the fact that $e^lc^s_{jl} = \delta^s_j$. Contracting again with $e = e^j\partial_j$ gives
         \begin{equation}
             0 =( \nabla_e c^s_{im})\delta^k_s = \nabla_e c^i_{im}.
         \end{equation}
     \end{proof}
         
\begin{theorem}
    Let $X$ be a local vector field satisfying  
    \begin{equation}
\text{d}_\nabla(X \circ) = 0.
        \label{eq:dnablaXcirc}
    \end{equation} 
    Then, 
    \begin{equation}
        (R^k_{lmi}c^n_{pk} + R^k_{lip}c^n_{mk} + R^k_{lpm}c^n_{ik})X^l = 0.
        \label{eq: 3RCX}
    \end{equation}
    \label{thm:dnablezeroimplies3RC}
\end{theorem}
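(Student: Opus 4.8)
The conceptual heart of the statement is that the hypothesis forces the second exterior covariant derivative of the multiplication operator $X\circ$ to vanish. Indeed, from $d_\nabla(X\circ)=0$ one gets immediately $d_\nabla^2(X\circ)=d_\nabla\big(d_\nabla(X\circ)\big)=0$. On the other hand, $d_\nabla^2$ acting on any $(1,1)$-tensor is purely algebraic in the curvature: specialising the identity \eqref{shc-new-LC} to $\tilde V=X\circ$, i.e. $\tilde V^s_k=c^s_{kt}X^t$, gives $[d_\nabla^2(X\circ)]^l_{jik}=\big(R^l_{sij}c^s_{kt}+R^l_{sjk}c^s_{it}+R^l_{ski}c^s_{jt}\big)X^t$. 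Hence this cyclic curvature expression vanishes. This is already one of the two equivalent forms of \eqref{shc-intro}, contracted with $X$ through the product; the remaining task is to convert it into the stated form \eqref{eq: 3RCX}, in which $X$ sits in the curvature slot instead.

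The plan for this conversion is to generate the curvature directly, by differentiating the Hertling--Manin identity in its covariant form (Lemma \ref{lemma:HMinnabla}) and applying the Ricci identity. Writing $A:=X\circ$ and $D:=\nabla X$, the hypothesis \eqref{eq:dnablaXcirc} says that $\nabla_a A^i_b$ is symmetric in $a,b$; equivalently, the antisymmetric part of $c^n_{pk}\nabla_iX^k$ equals $\big(\nabla_pc^n_{il}-\nabla_ic^n_{pl}\big)X^l$. Expressing each term $c^n_{pk}R^k_{lmi}X^l$ of the target as $c^n_{pk}[\nabla_m,\nabla_i]X^k$ (up to the sign fixed by the paper's curvature convention) and peeling off $c$, one obtains $\nabla_m(c^n_{pk}\nabla_iX^k)-\nabla_i(c^n_{pk}\nabla_mX^k)-(\nabla_mc^n_{pk})\nabla_iX^k+(\nabla_ic^n_{pk})\nabla_mX^k$; summing cyclically over $(p,m,i)$ and feeding in the antisymmetric part of $c^n_{pk}\nabla_iX^k$ turns the second-derivative contributions into a cyclic sum of $\nabla\nabla c$ contracted with $X$. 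The Ricci identity then converts these into curvature-times-product terms of both contraction types, while the leftover first-derivative terms $(\nabla c)\,D$ assemble into the combination governed by the covariant Hertling--Manin identity.

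The main obstacle is the final cancellation: one must show that, after using Lemma \ref{lemma:HMinnabla} together with the flat-unit relations $\nabla e=0$ and $\nabla_ec=0$ (Lemma \ref{lemma:nablaeczero}), every term other than the target cyclic curvature sum drops out. This is delicate because the general equivalence between the two forms of \eqref{shc-intro} (as in \cite{LPR}) exploits the arbitrariness of all four vector arguments, whereas here a single fixed $X$ is contracted, so the equivalence must be re-established ``with $X$ frozen''. I expect the decisive input to be precisely the covariant Hertling--Manin identity of Lemma \ref{lemma:HMinnabla}, which relates the two curvature contractions once $\nabla e=0$ (giving $R(\cdot,\cdot)e=0$ and, via the first Bianchi identity, the symmetry $R(e,Y)Z=R(e,Z)Y$) and $\nabla_ec=0$ are used to eliminate the unit-direction terms.
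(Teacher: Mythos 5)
Your opening observation is correct: $d_\nabla(X\circ)=0$ gives $d_\nabla^2(X\circ)=0$, which is the cyclic curvature condition in the form \eqref{rc-intri-2} with the fixed $X$ sitting inside the product, i.e. $\big(R^l_{sij}c^s_{kt}+R^l_{sjk}c^s_{it}+R^l_{ski}c^s_{jt}\big)X^t=0$. But the target \eqref{eq: 3RCX} is the other form, \eqref{rc-intri}, with $X$ in the curvature slot, and the passage between the two is exactly where your argument stops being a proof. You describe a plan (Ricci identity applied to $c$, cyclic summation, cancellation via the covariant Hertling--Manin identity of Lemma \ref{lemma:HMinnabla}) and then state that you \emph{expect} the decisive cancellation to go through; that cancellation is the entire content of the theorem and is not carried out. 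Moreover, the general equivalence of the two forms of \eqref{shc-intro} uses the total symmetry of $\nabla_lc^i_{jk}$ in its lower indices, which at this stage is not yet available --- in the paper it is proved only afterwards, as a corollary whose proof invokes the present theorem through \eqref{eq:3RCcriterion}. So the route you sketch either risks circularity or requires a substantial computation you have not performed.

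The step you are missing is the contraction of the hypothesis with the unit. Contracting $(\nabla_jc^i_{ks})X^s+c^i_{ks}\nabla_jX^s=(\nabla_kc^i_{js})X^s+c^i_{js}\nabla_kX^s$ with $e^k$, and using $\nabla e=0$ together with Lemma \ref{lemma:nablaeczero} ($\nabla_ec=0$), yields $\nabla_jX^i=c^i_{js}\nabla_eX^s$; that is, $\nabla X$ is itself the operator of multiplication by the vector field $\nabla_eX$. Associativity and commutativity then give at once that $c^i_{ks}\nabla_jX^s=c^i_{ks}c^s_{jt}\nabla_eX^t$ is symmetric in $(j,k)$ --- condition \eqref{eq:3RCcriterion} --- and this symmetry is precisely the criterion of Proposition 16 of \cite{LPR} guaranteeing \eqref{eq: 3RCX}. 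Your proposal never establishes this symmetry: you only record that its antisymmetric part equals $(\nabla_pc^n_{il}-\nabla_ic^n_{pl})X^l$, which is merely a restatement of the hypothesis. The gap is therefore genuine, and closing it along your lines would amount to redoing the [LPR] computation with extra uncancelled $(\nabla c)X$ terms rather than eliminating them at the outset as the paper does.
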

\begin{proof}
In coordinates \eqref{eq:dnablaXcirc} takes the form
    \begin{equation}
        (\nabla_j c^i_{ks})X^s + c^i_{ks} \nabla_j X^s = (\nabla_k c^i_{js})X^s + c^i_{js} \nabla_k X^s.
        \label{eq:dnablaXcirccoords}
    \end{equation}
    Contracting with the unit vector field $e = e^k \partial_k$ gives
    \begin{equation}
    \begin{aligned}\notag
          e^k (\nabla_j c^i_{ks})X^s + e^k c^i_{ks} \nabla_j X^s \, & \, = e^k (\nabla_k c^i_{js})X^s + e^k c^i_{js} \nabla_k X^s,
    \end{aligned}
    \end{equation}
	implying
	\begin{equation}
		\begin{aligned}
			\nabla_j X^i \, & \, = (\nabla_e c^i_{js})X^s + c^i_{js}\nabla_e X^s,
		\end{aligned}
		\label{eq:dnablaXcirccoordscontracted}
	\end{equation}
    as \begin{equation*}
        e^k(\nabla_j c^i_{ks}) = \nabla_j (e^k c^i_{ks}) = \nabla_j \delta^i_s = 0,
    \end{equation*}
    \begin{equation*}
        e^k(c^i_{ks}\nabla_jX^s) = \delta^i_s \nabla_j X^s = \nabla_j X^i,
    \end{equation*}
    and 
    \begin{equation*}
        e^k \nabla_k = \nabla_e,
    \end{equation*}
    where we have used the flatness of the unit, and the fact that $e^k c^i_{ks} = \delta^i_s$.  
    
     Thus, by Lemma \ref{lemma:nablaeczero}, \eqref{eq:dnablaXcirccoordscontracted} becomes
     \begin{equation}
         \nabla_jX^i = c^i_{js} \nabla_e X^s. 
     \end{equation}
     By contracting with the $c$--tensor we obtain:
     \begin{equation*}
         c^i_{ks}\nabla_j X^s = c^i_{ks}c^s_{jt}\nabla_e X^t,
     \end{equation*}
 	 implying
 	 \begin{equation}
 	 	\begin{aligned}
 	 		c^i_{js}\nabla_kX^s-c^i_{ks}\nabla_jX^s&=\big(c^i_{js}c^s_{kt}-c^i_{ks}c^s_{jt}\big)\nabla_e X^t=0,
 	 	\end{aligned}
 	 	\label{eq:3RCcriterion}
 	 \end{equation}
  	 where we have used the associativity of the product. Finally, as shown in (\cite{LPR}, Proposition 16), \eqref{eq:3RCcriterion} is precisely the condition to ensure that \eqref{eq: 3RCX} holds.
\end{proof}
\begin{corollary}
  \eqref{eq: 3RCX} holds for $X$ being an arbitrary local vector field.
  \label{cor:dnablezeroimplies3RC}
\end{corollary}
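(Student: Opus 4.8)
The plan is to deduce the corollary from the theorem just proved by a pure linearity argument, using that the condition \eqref{eq: 3RCX} is homogeneous of degree one in the vector field $X$. The key structural observation is that the bracketed expression in \eqref{eq: 3RCX} is a tensor
\[
T^n_{lmip} := R^k_{lmi}c^n_{pk} + R^k_{lip}c^n_{mk} + R^k_{lpm}c^n_{ik},
\]
so that \eqref{eq: 3RCX} reads $T^n_{lmip}X^l = 0$ and depends linearly on the components $X^l$. Thus, to prove that it holds for an arbitrary local vector field, it suffices to verify it on a pointwise spanning set of vector fields.

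First I would invoke the standing assumption, in force just before the Hertling--Manin discussion, that the commuting frame $\{X_{(0)},\dots,X_{(n-1)}\}$ contains a distinguished vector field $X$ with $X^{1(\alpha)}\neq X^{1(\beta)}$ for $\alpha\neq\beta$ and $X^{2(\alpha)}\neq 0$ for each $\alpha$, satisfying $d_\nabla(X\circ)=0$. By Lemma \ref{LemmadnablaBfromtriangular} this single hypothesis propagates across the whole frame, giving $d_\nabla(X_{(\alpha)}\circ)=0$ for every $\alpha\in\{0,\dots,n-1\}$. Theorem \ref{thm:dnablezeroimplies3RC} then applies to each member of the frame and yields
\[
T^n_{lmip}\,X^l_{(\alpha)}=0,\qquad \alpha\in\{0,\dots,n-1\}.
\]

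Finally I would use the linear independence of $\{X_{(0)},\dots,X_{(n-1)}\}$. At a generic point these $n$ vector fields form a basis of the tangent space, so the $n\times n$ matrix $(X^l_{(\alpha)})$ is invertible there. For each fixed $(n,m,i,p)$, the covector $l\mapsto T^n_{lmip}$ annihilates every basis vector $X_{(\alpha)}$, hence annihilates the whole tangent space, forcing $T^n_{lmip}=0$ at every generic point and therefore everywhere by continuity (smoothness, or holomorphicity in the complex case). Consequently $T^n_{lmip}X^l=0$ for an arbitrary local vector field $X$, which is exactly \eqref{eq: 3RCX}. I do not expect a genuine obstacle: the only point needing care is to confirm that the propagation hypotheses of Lemma \ref{LemmadnablaBfromtriangular} are indeed satisfied, so that Theorem \ref{thm:dnablezeroimplies3RC} may be applied uniformly to the entire frame rather than to a single vector field; once this is secured, the argument reduces to the elementary observation that a tensor annihilated by a local basis vanishes identically.
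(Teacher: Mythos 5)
Your argument is correct and is essentially the proof given in the paper: Lemma \ref{LemmadnablaBfromtriangular} propagates $d_\nabla(X_{(i)}\circ)=0$ to the whole frame, Theorem \ref{thm:dnablezeroimplies3RC} then gives the relation for each $X_{(i)}$, and linear independence of the frame forces the tensor $T^n_{lmip}$ to vanish, hence \eqref{eq: 3RCX} for arbitrary $X$. The only difference is that you spell out the linearity/tensoriality observation and the passage from generic points explicitly, which the paper leaves implicit.
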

\begin{proof}
   By Lemma \ref{LemmadnablaBfromtriangular}, $V^i  \equiv X_{(i)}\circ$ satisfies \eqref{eq:dnablaXcirc} for any $i\in\{0,\dots,n-1\}$. Hence, we must have that  
    \begin{equation}
        (R^k_{lmj}c^n_{pk} + R^k_{ljp}c^n_{mk} + R^k_{lpm}c^n_{jk})X_{(i)}^l = 0
    \end{equation}
    holds for any  $i\in\{0,\dots,n-1\}$. Finally, under the assumption that $\{X_{(i)}\}_{i\in\{0,\dots,n-1\}}$ provides a basis, we must have that \eqref{eq: 3RCX} holds for arbitrary local vector field $X$. 
\end{proof}
This implies that
\begin{equation}
	R^j_{skl}c^s_{mi} + R^j_{smk}c^s_{li}+R^j_{slm}c^s_{ki} = 0,
	\label{eq:3RC}
\end{equation}
which phrases \eqref{rc-intri}, or equivalently \eqref{rc-intri-2}, in local coordinates.

\begin{corollary}
   $\nabla \circ$ is symmetric.
\end{corollary}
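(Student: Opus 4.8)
The plan is to show that the $(1,3)$ tensor $\nabla_l c^k_{ij}$ is symmetric in its three lower indices $i,j,l$. Commutativity of the product already gives symmetry in the two product indices (that is, $c^k_{ij}=c^k_{ji}$, hence $\nabla_l c^k_{ij}=\nabla_l c^k_{ji}$), so it suffices to establish the exchange symmetry between the differentiation index and one product index, e.g. $\nabla_j c^i_{ks}=\nabla_k c^i_{js}$. Once this is available, the two transpositions it encodes, together with the one coming from commutativity, generate the full symmetric group on the three lower indices, yielding complete symmetry of $\nabla\circ$.

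The key observation is that the essential identity has already been isolated in the proof of Theorem~\ref{thm:dnablezeroimplies3RC}. First I would recall that, for any local vector field $X$ satisfying \eqref{eq:dnablaXcirc}, one has the contracted relation
\begin{equation}
    \nabla_j X^i = c^i_{js}\,\nabla_e X^s,
    \notag
\end{equation}
obtained there by contracting $\mathrm{d}_\nabla(X\circ)=0$ with the flat unit and invoking Lemma~\ref{lemma:nablaeczero}. Substituting this relation into the two terms $c^i_{ks}\nabla_j X^s$ and $c^i_{js}\nabla_k X^s$ appearing in the coordinate form \eqref{eq:dnablaXcirccoords} of $\mathrm{d}_\nabla(X\circ)=0$, and using associativity in the guise $c^i_{ks}c^s_{jt}=c^i_{js}c^s_{kt}$, these two terms become equal and cancel. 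What survives of \eqref{eq:dnablaXcirccoords} is precisely
\begin{equation}
    \big(\nabla_j c^i_{ks}-\nabla_k c^i_{js}\big)X^s = 0 .
    \notag
\end{equation}

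To finish, I would remove the contraction with $X^s$. By Lemma~\ref{LemmadnablaBfromtriangular} every operator $X_{(\alpha)}\circ$, with $\alpha\in\{0,\dots,n-1\}$, satisfies \eqref{eq:dnablaXcirc}, so the displayed relation holds with $X=X_{(\alpha)}$ for each $\alpha$. Under the standing assumption that $\{X_{(\alpha)}\}_{\alpha\in\{0,\dots,n-1\}}$ is a basis at a generic point, the components $X_{(\alpha)}^s$ span, and one may strip the contraction to conclude $\nabla_j c^i_{ks}=\nabla_k c^i_{js}$ for all indices. Combined with commutativity this gives full symmetry of $\nabla_l c^k_{ij}$ in its lower indices, i.e. $\nabla\circ$ is symmetric.

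I do not expect a genuine obstacle here: the statement is essentially a repackaging of the computation already performed for Theorem~\ref{thm:dnablezeroimplies3RC}, the only points requiring care being the associativity cancellation of the $\nabla X$ terms and the spanning argument that legitimises removing the contraction with $X^s$. The one conceptual subtlety worth flagging is checking that the transposition obtained from the exchange identity, together with the transposition coming from commutativity, indeed generates the full symmetric group on the three lower indices, so that no further relation is needed.
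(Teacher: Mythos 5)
Your proposal is correct and follows essentially the same route as the paper: both combine the coordinate form of $d_\nabla(X\circ)=0$ with the identity $c^i_{js}\nabla_kX^s-c^i_{ks}\nabla_jX^s=0$ (derived via contraction with $e$ and associativity, exactly as in the proof of Theorem \ref{thm:dnablezeroimplies3RC}) to obtain $\big(\nabla_jc^i_{ks}-\nabla_kc^i_{js}\big)X^s=0$, and then strip the contraction using the basis $\{X_{(0)},\dots,X_{(n-1)}\}$. Your additional remark that this transposition together with commutativity generates full symmetry of $\nabla_lc^k_{ij}$ in all three lower indices is a correct and harmless elaboration of what the paper leaves implicit.
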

\begin{proof}
	Let $X$ belong to the set $\{X_{(0)},\dots,X_{(n-1)}\}$. Since such $X$ satisfies both \eqref{eq:dnablaXcirc}, or equivalently \eqref{eq:dnablaXcirccoords}, and \eqref{eq:3RCcriterion}, we get that 
	\begin{equation}
		\big(\nabla_jc^i_{ks}-\nabla_kc^i_{js}\big)X^s=0.
		\label{almostsymmetryofnablac}
	\end{equation}
	As $\{X_{(0)},\dots,X_{(n-1)}\}$ provides a basis, \eqref{almostsymmetryofnablac} holds for an arbitrary local vector field $X$, yielding
	\begin{equation}
		\nabla_jc^i_{ks}-\nabla_kc^i_{js}=0.
		\label{symmetryofnablac}
	\end{equation}
\end{proof}
We have proved that if among a set of linearly independent local vector fields $\{X_{(0)},\dots,X_{(n-1)}\}$ on an $n$-dimensional regular F-manifold $(M,\circ,e)$ defining pairwise commuting flows
\begin{align}
	u_{t_i}=X_{(i)}\circ u_x,\qquad i\in\{0,\dots,n-1\},
	\notag
\end{align}
there exists a local vector field $X\in\{X_{(0)},\dots,X_{(n-1)}\}$ such that ${X^{1(\alpha)}\neq X^{1(\beta)}}$ for ${\alpha\neq\beta}$, $\alpha,\beta\in\{1,\dots,r\}$, and $X^{2(\alpha)}\neq0$ for each ${\alpha\in\{1,\dots,r\}}$ then, for $\nabla$ being a torsionless connection such that $\nabla e=0$ and $d_\nabla(X\circ)=0$,  $(M,\circ,e,\nabla)$ defines an F-manifold with compatible connection and flat unit. Below, we discuss in more detail the existence and uniqueness of such a connection. In particular, in the proof of the following, explicit recursive expressions for the Christoffel symbols are recovered.
\begin{proposition}
	For $(M,\circ,e)$ being an $n$-dimensional regular F-manifold and $X$ being a local vector field realising
	\begin{itemize}
		\item $X^{1(\alpha)}\neq X^{1(\beta)}$ for $\alpha\neq\beta$, $\alpha,\beta\in\{1,\dots,r\}$,
		\item $X^{2(\alpha)}\neq0$, $\alpha\in\{1,\dots,r\}$,
	\end{itemize}
	there exists a unique torsionless connection $\nabla$, with Christoffel symbols $\big\{\Gamma^i_{jk}\big\}_{i,j,k\in\{1,\dots,n\}}$, such that $d_\nabla(X\circ)=0$ and $\nabla e=0$.
\end{proposition}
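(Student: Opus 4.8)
The plan is to work in the David--Hertling canonical coordinates of Theorem \ref{DavidHertlingth}, in which $A:=X\circ$ is block-diagonal with each block a lower-triangular Toeplitz matrix of the form \eqref{Lblock} (with $u^{i(\alpha)}$ replaced by $X^{i(\alpha)}$), so that $A^{i(\alpha)}_{j(\beta)}=\delta^\alpha_\beta\,X^{(i-j+1)(\alpha)}$, vanishing when $i<j$. Writing the two requirements in these coordinates, the condition $d_\nabla(X\circ)=0$ becomes $\partial_jA^i_k-\partial_kA^i_j+\Gamma^i_{js}A^s_k-\Gamma^i_{ks}A^s_j=0$, while $\nabla e=0$ becomes $\Gamma^i_{js}e^s=0$ (since $\partial_je^i=0$ and $e^s=\sum_{\alpha=1}^r\delta^s_{1(\alpha)}$). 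The crucial structural observation is that the upper index $i$ decouples in both equations: for each fixed $i$ the unknowns $G_{jk}:=\Gamma^i_{jk}$ form a symmetric $n\times n$ matrix, and I must solve the single linear system consisting of $d_\nabla(X\circ)=0$ and $Ge=0$ for this $G$.

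Next I would count equations. Being antisymmetric in $(j,k)$, the condition $d_\nabla(X\circ)=0$ imposes $\binom{n}{2}$ scalar equations, and since $G$ is symmetric it says precisely that $GA$ is symmetric, i.e.\ $GA=A^{T}G$. The condition $\nabla e=0$ reads $Ge=0$ and imposes $n$ further equations. The total $\binom{n}{2}+n=\binom{n+1}{2}$ equals the number of independent components of the symmetric matrix $G$, so the system is square. Consequently both existence and uniqueness reduce to the single claim that the homogeneous system $GA=A^{T}G$, $Ge=0$ has only the solution $G=0$; granting this, pointwise invertibility together with the smooth (resp.\ holomorphic) dependence of the coefficients and of the right-hand side on the base point yields a unique connection by Cramer's rule.

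To prove triviality of the homogeneous system I would argue as follows. Since $(M,\circ,e,E)$ is regular, $A=X\circ$ is non-derogatory (distinct eigenvalues $X^{1(\alpha)}$ across blocks, a single Jordan block on each), so its commutant is the $n$-dimensional algebra of polynomials in $A$. By the Taussky--Zassenhaus theorem there is a symmetric invertible $S$ with $A^{T}S=SA$; then every symmetric $G$ with $GA=A^{T}G$ is of the form $G=S\,p(A)$ for a polynomial $p$, and conversely each such $S\,p(A)$ is symmetric, so the solution space of $GA=A^{T}G$ is exactly $\{S\,p(A)\}$, of dimension $n$. The remaining condition $Ge=0$ then reads $p(A)\,e=0$. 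Here the two hypotheses on $X$ enter decisively: $X^{2(\alpha)}\neq0$ makes $e_\alpha=\partial_{1(\alpha)}$ a cyclic vector for the block $A_\alpha$ (the Krylov vectors $(A_\alpha-X^{1(\alpha)}I)^k e_\alpha$ are in echelon form with leading entries $(X^{2(\alpha)})^k$), while $X^{1(\alpha)}\neq X^{1(\beta)}$ for $\alpha\neq\beta$ glues these into a cyclic vector $e=\sum_\alpha e_\alpha$ for $A$ on the whole space. Cyclicity forces $p(A)e=0$ with $\deg p<n$ to give $p=0$, whence $G=0$.

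I expect the cyclicity/dimension step of the previous paragraph to be the main obstacle, since it is exactly the point where the two nondegeneracy assumptions on $X$ are consumed. If one instead wants the explicit recursive expressions for the $\Gamma^i_{jk}$ advertised after the statement, the same input can be exploited constructively: one solves $d_\nabla(X\circ)=0$ block by block, eliminating unknowns in the order dictated by the Toeplitz structure exactly as in the proof of Lemma \ref{LemmadnablaBfromtriangular}, so that each elimination step divides by either $X^{1(\alpha)}-X^{1(\beta)}$ (off-diagonal blocks) or $X^{2(\alpha)}$ (within a single block), and finally fixes the residual $n$-dimensional freedom by imposing $\nabla e=0$. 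The bookkeeping of this induction, rather than any conceptual difficulty, is then the laborious part.
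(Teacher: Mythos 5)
Your proposal is correct, and it takes a genuinely different route from the paper's. The paper proves the proposition by brute-force elimination: it splits into cases according to which Jordan blocks the three indices of $\Gamma^{k(\gamma)}_{i(\alpha)j(\beta)}$ belong to (pairwise distinct, $\alpha=\gamma\neq\beta$, $\alpha\neq\beta=\gamma$, $\alpha=\beta=\gamma$), and in each case runs a descending induction on the lower indices starting from the bottom-right corner of each block, solving for one Christoffel symbol at a time and dividing by $X^{1(\alpha)}-X^{1(\beta)}$ or $X^{2(\alpha)}$ as appropriate; the flat-unit condition then fixes the symbols with a lower index of the form $1(\sigma)$. This yields explicit recursive formulas that the paper reuses later (in the dimension $2$--$4$ examples and, implicitly, in the flatness argument for linear $a_0$). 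Your argument instead fixes the upper index $i$, observes that the torsionless, $d_\nabla(X\circ)=0$ and $\nabla e=0$ conditions form a square linear system on the symmetric matrix $G=(\Gamma^i_{jk})_{jk}$, and reduces everything to the triviality of the homogeneous system $GA=A^{T}G$, $Ge=0$. The identification of the symmetric intertwiners with $\{S\,p(A)\}$ via Taussky--Zassenhaus and the non-derogatory commutant, followed by the cyclicity of $e$ for $A=X\circ$ (each block's cyclicity coming from $X^{2(\alpha)}\neq 0$, the gluing across blocks from $X^{1(\alpha)}\neq X^{1(\beta)}$), is sound, and it isolates exactly why the two nondegeneracy hypotheses are needed. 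What your approach buys is brevity and a conceptual explanation; in particular, the trivial-kernel statement you prove directly is precisely the fact the paper later invokes (by comparison with its recursive construction) to conclude $e(\Gamma^i_{jk})=0$ for linear $a_0$. What it does not supply, as you acknowledge, are the explicit recursive expressions for the $\Gamma^i_{jk}$, which is the content the paper's longer computation delivers and subsequently exploits.
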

\begin{proof}
In coordinates, the condition $d_{\nabla}V = 0$ reads
\begin{equation}
    (d_{\nabla}V)^{k(\gamma)}_{i(\alpha) j(\beta)} = \partial_{i(\alpha)} V^{k(\gamma)}_{j(\beta)} - \partial_{j(\beta)} V^{k(\gamma)}_{i(\alpha)} + \Gamma^{k(\gamma)}_{i(\alpha) l(\sigma)} V^{l(\sigma)}_{j(\beta)} -\Gamma^{k(\gamma)}_{j(\beta) l(\sigma)} V^{l(\sigma)}_{i(\alpha)}  = 0,
    \label{eq:dnablaVcoords}
\end{equation}
 for every $\alpha,\beta,\gamma\in\{1,\dots,r\}$ with $k \in \{1, \dots, m_{\gamma}\}$, $i \in \{1, \dots, m_{\alpha}\}$, $j \in \{1, \dots, m_{\beta}\}$.

	We need to consider the following cases:
	\begin{itemize}
		\item[a.] $\alpha\neq\beta\neq\gamma\neq\alpha$,
		\item[b.] $\alpha=\gamma\neq\beta$ (which covers $\alpha=\beta\neq\gamma$ as well),
		\item[c.] $\alpha\neq\beta=\gamma$,
		\item[d.] $\alpha=\beta=\gamma$.
	\end{itemize}

\textbf{Case a: $\alpha\neq\beta\neq\gamma\neq\alpha$.} 
Here we prove that whenever $\alpha, \beta, \gamma$ are pairwise distinct, the Christoffel symbols vanish, i.e.
        \begin{equation}
            \Gamma^{k(\gamma)}_{i(\alpha) j(\beta)} = 0.
            \label{eq:chrdistinctzero}
        \end{equation}

If $\alpha, \beta, \gamma$ are pairwise distinct the first two terms in \eqref{eq:dnablaVcoords} vanish by the block diagonal form of $V$. Furthermore, for the same reason, the only nonzero contributions are given by $\sigma = \beta$ in the third term, and $\sigma = \alpha$ in the fourth. Consequently, we can write
\begin{equation}
    \Gamma^{k(\gamma)}_{i(\alpha) j(\beta)}(X^{1(\alpha)}- X^{1(\beta)}) = \sum_{l > j} \Gamma^{k(\gamma)}_{i(\alpha)l(\beta)} X^{(l-j+1)(\beta)} - \sum_{l > i} \Gamma^{k(\gamma)}_{j(\beta)l(\alpha)} X^{(l-i+1)(\alpha)}.
    \label{eq:gammadistinctequa}
\end{equation}

    We shall now proceed by induction over the sum of the lower indices, $i + j$. The base step is given by ${i,j}$ such $i+j = m_{\alpha} + m_{\beta}$ which implies $i = m_{\alpha}$ and $j = m_{\beta}$. In this case  \eqref{eq:gammadistinctequa} becomes
    \begin{equation}
        \Gamma^{k(\gamma)}_{m_{\alpha}(\alpha) m_{\beta}(\beta)} (X^{1(\alpha)}- X^{1(\beta)}) = 0,
    \end{equation}
    and as $X^{1(\alpha)} \neq X^{1(\beta)}$ by assumption for any $\alpha \neq \beta$, we must have $ \Gamma^{k(\gamma)}_{m_{\alpha}(\alpha) m_{\beta}(\beta)} = 0$. 
    Now, let's assume that $\Gamma^{k(\gamma)}_{i(\alpha) j(\beta)} = 0$ for any $i, j$ such that $i + j > M$. Then for $p, q$ with $p + q = M$ we have
    \begin{equation}
         \Gamma^{k(\gamma)}_{p(\alpha) q(\beta)}(X^{1(\alpha)}- X^{1(\beta)}) = \sum_{l > q} \underset{ = \, 0 \text{ by hypothesis}}{\underbrace{\Gamma^{k(\gamma)}_{p(\alpha)l(\beta)}}} X^{(l-q+1)(\beta)} - \sum_{l > p} \underset{ = \, 0 \text{ by hypothesis}}{\underbrace{\Gamma^{k(\gamma)}_{q(\beta)l(\alpha)}}} X^{(l-p+1)(\alpha)},
    \end{equation}
    giving $\Gamma^{k(\gamma)}_{p(\alpha) q(\beta)} = 0$, as required.

\vspace{1em}
 
	Letting $V = X \circ$ in \eqref{eq:dnablaVcoords} we have
	\begin{align}
 0=&\big(d_\nabla(X\circ)\big)^{i(\alpha)}_{j(\beta)k(\gamma)}=\nabla_{j(\beta)}(X\circ)^{i(\alpha)}_{k(\gamma)}-\nabla_{k(\gamma)}(X\circ)^{i(\alpha)}_{j(\beta)}\notag\\
		=&\partial_{j(\beta)}(X\circ)^{i(\alpha)}_{k(\gamma)}+\Gamma^{i(\alpha)}_{j(\beta)s(\sigma)}(X\circ)^{s(\sigma)}_{k(\gamma)}-\partial_{k(\gamma)}(X\circ)^{i(\alpha)}_{j(\beta)}-\Gamma^{i(\alpha)}_{k(\gamma)s(\sigma)}(X\circ)^{s(\sigma)}_{j(\beta)}\notag\\
		=&\delta^\alpha_\gamma\partial_{j(\beta)}X^{(i-k+1)(\alpha)}+\Gamma^{i(\alpha)}_{j(\beta)s(\gamma)}X^{(s-k+1)(\gamma)}-\delta^\alpha_\beta\partial_{k(\gamma)}X^{(i-j+1)(\alpha)}-\Gamma^{i(\alpha)}_{k(\gamma)s(\beta)}X^{(s-j+1)(\beta)}.
		\label{dnablaXcirc0}
	\end{align}
 The index $i(\alpha)$ will be fixed throughout the proof of the cases b, c, d.
	\\\textbf{Case b: $\alpha=\gamma\neq\beta$.} 
 In this case, condition \eqref{dnablaXcirc0} becomes
	\begin{align}
		0=&\partial_{j(\beta)}X^{(i-k+1)(\alpha)}+\Gamma^{i(\alpha)}_{j(\beta)s(\alpha)}X^{(s-k+1)(\alpha)}-\Gamma^{i(\alpha)}_{k(\alpha)s(\beta)}X^{(s-j+1)(\beta)}\notag\\
		=&\partial_{j(\beta)}X^{(i-k+1)(\alpha)}+\Gamma^{i(\alpha)}_{j(\beta)k(\alpha)}\big(X^{1(\alpha)}-X^{1(\beta)}\big)\notag\\&+\overset{m_\alpha}{\underset{s=k+1}{\sum}}\Gamma^{i(\alpha)}_{j(\beta)s(\alpha)}X^{(s-k+1)(\alpha)}-\overset{m_\beta}{\underset{s=j+1}{\sum}}\Gamma^{i(\alpha)}_{k(\alpha)s(\beta)}X^{(s-j+1)(\beta)}.
		\label{dnablaXcirc0_caseb}
	\end{align}
	By choosing $k=m_\alpha$ in \eqref{dnablaXcirc0_caseb}, we get
	\begin{align}
		0=&\partial_{j(\beta)}X^{(i-m_\alpha+1)(\alpha)}+\Gamma^{i(\alpha)}_{j(\beta)m_\alpha(\alpha)}\big(X^{1(\alpha)}-X^{1(\beta)}\big)-\overset{m_\beta}{\underset{s=j+1}{\sum}}\Gamma^{i(\alpha)}_{m_\alpha(\alpha)s(\beta)}X^{(s-j+1)(\beta)}.
		\label{dnablaXcirc0_caseb_kmalpha}
	\end{align}
	By choosing $j=m_\beta$ in \eqref{dnablaXcirc0_caseb_kmalpha}, we get
	\begin{align}
		0=&\partial_{m_\beta(\beta)}X^{(i-m_\alpha+1)(\alpha)}+\Gamma^{i(\alpha)}_{m_\beta(\beta)m_\alpha(\alpha)}\big(X^{1(\alpha)}-X^{1(\beta)}\big)
		\notag
	\end{align}
	that is
	\begin{align}
		\Gamma^{i(\alpha)}_{m_\beta(\beta)m_\alpha(\alpha)}=&-\frac{\partial_{m_\beta(\beta)}X^{(i-m_\alpha+1)(\alpha)}}{X^{1(\alpha)}-X^{1(\beta)}}.
		\notag
	\end{align}
	For any $j\in\{1,\dots,m_\beta-1\}$, the condition \eqref{dnablaXcirc0_caseb_kmalpha} allows one to recover $\Gamma^{i(\alpha)}_{j(\beta)m_\alpha(\alpha)}$ in terms of $\big\{\Gamma^{i(\alpha)}_{s(\beta)m_\alpha(\alpha)}\big\}_{s\geq j+1}$ as
	\begin{align}
		\Gamma^{i(\alpha)}_{j(\beta)m_\alpha(\alpha)}=&-\frac{1}{X^{1(\alpha)}-X^{1(\beta)}}\Bigg(\partial_{j(\beta)}X^{(i-m_\alpha+1)(\alpha)}-\overset{m_\beta}{\underset{s=j+1}{\sum}}\Gamma^{i(\alpha)}_{m_\alpha(\alpha)s(\beta)}X^{(s-j+1)(\beta)}\Bigg).
		\notag
	\end{align}
	By choosing $j=m_\beta$ in \eqref{dnablaXcirc0_caseb}, we get
	\begin{align}
		0=&\partial_{m_\beta(\beta)}X^{(i-k+1)(\alpha)}+\Gamma^{i(\alpha)}_{m_\beta(\beta)k(\alpha)}\big(X^{1(\alpha)}-X^{1(\beta)}\big)+\overset{m_\alpha}{\underset{s=k+1}{\sum}}\Gamma^{i(\alpha)}_{m_\beta(\beta)s(\alpha)}X^{(s-k+1)(\alpha)}
		\notag
	\end{align}
	which, for each $k\in\{1,\dots,m_\alpha-1\}$, allows one to recover $\Gamma^{i(\alpha)}_{m_\beta(\beta)k(\alpha)}$ in terms of $\big\{\Gamma^{i(\alpha)}_{m_\beta(\beta)h(\alpha)}\big\}_{h\geq k+1}$ as
	\begin{align}
		\Gamma^{i(\alpha)}_{m_\beta(\beta)k(\alpha)}=&-\frac{1}{X^{1(\alpha)}-X^{1(\beta)}}\Bigg(\partial_{m_\beta(\beta)}X^{(i-k+1)(\alpha)}+\overset{m_\alpha}{\underset{s=k+1}{\sum}}\Gamma^{i(\alpha)}_{m_\beta(\beta)s(\alpha)}X^{(s-k+1)(\alpha)}\Bigg).
		\notag
	\end{align}
	For any $k\in\{1,\dots,m_\alpha-1\}$ and $j\in\{1,\dots,m_\beta-1\}$, the condition \eqref{dnablaXcirc0_caseb} allows one to recover $\Gamma^{i(\alpha)}_{j(\beta)k(\alpha)}$ in terms of $\big\{\Gamma^{i(\alpha)}_{s(\beta)h(\alpha)}\big\}_{h\geq k+1, s\geq 1}$ and $\big\{\Gamma^{i(\alpha)}_{s(\beta)k(\alpha)}\big\}_{s\geq j+1}$ as
	\begin{align}
		\Gamma^{i(\alpha)}_{j(\beta)k(\alpha)}=&-\frac{1}{X^{1(\alpha)}-X^{1(\beta)}}\Bigg(\partial_{j(\beta)}X^{(i-k+1)(\alpha)}\notag\\&+\overset{m_\alpha}{\underset{s=k+1}{\sum}}\Gamma^{i(\alpha)}_{j(\beta)s(\alpha)}X^{(s-k+1)(\alpha)}-\overset{m_\beta}{\underset{s=j+1}{\sum}}\Gamma^{i(\alpha)}_{k(\alpha)s(\beta)}X^{(s-j+1)(\beta)}\Bigg).
		\notag
	\end{align}

	From the formulas obtained by considering the cases a and b, we are able, by means of the requirement concerning the flatness of the unit, to determine
	\[
	\big\{\Gamma^{i(\alpha)}_{j(\beta)k(\beta)}\,|\,j=1\,\vee\,k=1\big\}_{\alpha,\beta\in\{1,\dots,r\}}.
	\]
	We recall that such a condition, $\nabla e=0$, reads
	\[
	\overset{r}{\underset{\sigma=1}{\sum}}\,\Gamma^{i(\alpha)}_{1(\sigma)j(\beta)}=0
	\]
	for every $\alpha,\beta\in\{1,\dots,r\}$ and every $i\in\{1,\dots,m_\alpha\}$, $j\in\{1,\dots,m_\beta\}$. For $\alpha\neq\beta$, we have
	\begin{align}
		\Gamma^{i(\alpha)}_{j(\beta)1(\beta)}=&-\overset{}{\underset{\sigma\neq\beta}{\sum}}\,\Gamma^{i(\alpha)}_{j(\beta)1(\sigma)}\overset{\eqref{eq:chrdistinctzero}}{=}-\Gamma^{i(\alpha)}_{j(\beta)1(\alpha)}.
		\notag
	\end{align}
	For $\alpha=\beta$, we have
	\begin{align}
		\Gamma^{i(\alpha)}_{j(\alpha)1(\alpha)}=&-\overset{}{\underset{\sigma\neq\alpha}{\sum}}\,\Gamma^{i(\alpha)}_{j(\alpha)1(\sigma)}.
		\notag
	\end{align}
	With this additional piece of information, we discuss the remaining cases c and d. The formulas that we recover below are then to be considered in the remaining cases where both the lower indices are of the form $i(\sigma)$ for some $\sigma\in\{1,\dots,r\}$ and some $i\in\{2,\dots,m_\sigma\}$. In particular, we consider the case where both the lower indices refer to blocks of size greater or equal than $2$.
	\\\textbf{Case d: $\alpha=\beta=\gamma$.} Condition \eqref{dnablaXcirc0} becomes
	\begin{align}
		0=&\partial_{j(\alpha)}X^{(i-k+1)(\alpha)}+\Gamma^{i(\alpha)}_{j(\alpha)s(\alpha)}X^{(s-k+1)(\alpha)}-\partial_{k(\alpha)}X^{(i-j+1)(\alpha)}-\Gamma^{i(\alpha)}_{k(\alpha)s(\alpha)}X^{(s-j+1)(\alpha)}\notag\\
		=&\partial_{j(\alpha)}X^{(i-k+1)(\alpha)}+\overset{m_\alpha}{\underset{s=k+1}{\sum}}\Gamma^{i(\alpha)}_{j(\alpha)s(\alpha)}X^{(s-k+1)(\alpha)}-\partial_{k(\alpha)}X^{(i-j+1)(\alpha)}\notag\\&-\overset{m_\alpha}{\underset{s=j+1}{\sum}}\Gamma^{i(\alpha)}_{k(\alpha)s(\alpha)}X^{(s-j+1)(\alpha)}.
		\label{dnablaXcirc0_cased}
	\end{align}
	By choosing $k=m_\alpha$ in \eqref{dnablaXcirc0_cased}, we get
	\begin{align}
		\partial_{j(\alpha)}X^{(i-m_\alpha+1)(\alpha)}-\partial_{m_\alpha(\alpha)}X^{(i-j+1)(\alpha)}-\overset{m_\alpha}{\underset{s=j+1}{\sum}}\Gamma^{i(\alpha)}_{m_\alpha(\alpha)s(\alpha)}X^{(s-j+1)(\alpha)}=0.
		\label{dnablaXcirc0_cased_kmalpha}
	\end{align}
	By choosing $j=m_\alpha-1$ in \eqref{dnablaXcirc0_cased_kmalpha}, we get
	\begin{align}
		\partial_{(m_\alpha-1)(\alpha)}X^{(i-m_\alpha+1)(\alpha)}-\partial_{m_\alpha(\alpha)}X^{(i-m_\alpha+2)(\alpha)}-\Gamma^{i(\alpha)}_{m_\alpha(\alpha)m_\alpha(\alpha)}X^{2(\alpha)}=0
		\notag
	\end{align}
	that is
	\begin{align}
		\Gamma^{i(\alpha)}_{m_\alpha(\alpha)m_\alpha(\alpha)}=\frac{1}{X^{2(\alpha)}}\bigg(\partial_{(m_\alpha-1)(\alpha)}X^{(i-m_\alpha+1)(\alpha)}-\partial_{m_\alpha(\alpha)}X^{(i-m_\alpha+2)(\alpha)}\bigg).
		\notag
	\end{align}
	For any $j\in\{1,\dots,m_\alpha-2\}$, the condition \eqref{dnablaXcirc0_cased_kmalpha} allows one to recover $\Gamma^{i(\alpha)}_{m_\alpha(\alpha)(j+1)(\alpha)}$ in terms of $\big\{\Gamma^{i(\alpha)}_{m_\alpha(\alpha)s(\alpha)}\big\}_{s\geq j+2}$ as
	\begin{align}
		\Gamma^{i(\alpha)}_{m_\alpha(\alpha)(j+1)(\alpha)}=\frac{1}{X^{2(\alpha)}}\bigg(&\partial_{j(\alpha)}X^{(i-m_\alpha+1)(\alpha)}-\partial_{m_\alpha(\alpha)}X^{(i-j+1)(\alpha)}\notag\\&-\overset{m_\alpha}{\underset{s=j+2}{\sum}}\Gamma^{i(\alpha)}_{m_\alpha(\alpha)s(\alpha)}X^{(s-j+1)(\alpha)}\bigg).
		\notag
	\end{align}
	By simply shifting the index $j\to j-1$, for any $j\in\{2,\dots,m_\alpha-1\}$ we have recovered $\Gamma^{i(\alpha)}_{m_\alpha(\alpha)j(\alpha)}$ in terms of $\big\{\Gamma^{i(\alpha)}_{m_\alpha(\alpha)s(\alpha)}\big\}_{s\geq j+1}$ as
	\begin{align}
		\Gamma^{i(\alpha)}_{m_\alpha(\alpha)j(\alpha)}=\frac{1}{X^{2(\alpha)}}\bigg(&\partial_{(j-1)(\alpha)}X^{(i-m_\alpha+1)(\alpha)}-\partial_{m_\alpha(\alpha)}X^{(i-j+2)(\alpha)}\notag\\&-\overset{m_\alpha}{\underset{s=j+1}{\sum}}\Gamma^{i(\alpha)}_{m_\alpha(\alpha)s(\alpha)}X^{(s-j+2)(\alpha)}\bigg).
		\notag
	\end{align}
	By choosing $j=k-1$ in \eqref{dnablaXcirc0_cased}, we get
	\begin{align}
		&\partial_{(k-1)(\alpha)}X^{(i-k+1)(\alpha)}+\overset{m_\alpha}{\underset{s=k+1}{\sum}}\Gamma^{i(\alpha)}_{(k-1)(\alpha)s(\alpha)}X^{(s-k+1)(\alpha)}\notag\\&-\partial_{k(\alpha)}X^{(i-k+2)(\alpha)}-\overset{m_\alpha}{\underset{s=k}{\sum}}\Gamma^{i(\alpha)}_{k(\alpha)s(\alpha)}X^{(s-k+2)(\alpha)}=0
		\notag
	\end{align}
	which, for each $k\in\{2,\dots,m_\alpha-1\}$, allows one to recover $\Gamma^{i(\alpha)}_{k(\alpha)k(\alpha)}$ in terms of $\big\{\Gamma^{i(\alpha)}_{h(\alpha)j(\alpha)}\big\}_{h\geq k+1,\,j\geq1}$ as
	\begin{align}
		\Gamma^{i(\alpha)}_{k(\alpha)k(\alpha)}=\frac{1}{X^{2(\alpha)}}\bigg(&\partial_{(k-1)(\alpha)}X^{(i-k+1)(\alpha)}+\overset{m_\alpha}{\underset{s=k+1}{\sum}}\Gamma^{i(\alpha)}_{(k-1)(\alpha)s(\alpha)}X^{(s-k+1)(\alpha)}\notag\\&-\partial_{k(\alpha)}X^{(i-k+2)(\alpha)}-\overset{m_\alpha}{\underset{s=k+1}{\sum}}\Gamma^{i(\alpha)}_{k(\alpha)s(\alpha)}X^{(s-k+2)(\alpha)}\bigg).
		\notag
	\end{align}
	For any $k\in\{2,\dots,m_\alpha-1\}$ and $j\in\{1,\dots,k-2\}$, the condition \eqref{dnablaXcirc0_cased} allows one to recover $\Gamma^{i(\alpha)}_{k(\alpha)(j+1)(\alpha)}$ in terms of $\big\{\Gamma^{i(\alpha)}_{h(\alpha)t(\alpha)}\big\}_{h\geq k+1,\,t\geq1}$ and $\big\{\Gamma^{i(\alpha)}_{k(\alpha)s(\alpha)}\big\}_{s\geq j+2}$ as
	\begin{align}
		\Gamma^{i(\alpha)}_{k(\alpha)(j+1)(\alpha)}=\frac{1}{X^{2(\alpha)}}\bigg(&\partial_{j(\alpha)}X^{(i-k+1)(\alpha)}+\overset{m_\alpha}{\underset{s=k+1}{\sum}}\Gamma^{i(\alpha)}_{j(\alpha)s(\alpha)}X^{(s-k+1)(\alpha)}\notag\\&-\partial_{k(\alpha)}X^{(i-j+1)(\alpha)}-\overset{m_\alpha}{\underset{s=j+2}{\sum}}\Gamma^{i(\alpha)}_{k(\alpha)s(\alpha)}X^{(s-j+1)(\alpha)}\bigg).
		\notag
	\end{align}
	By simply shifting the index $j\to j-1$, for any $k\in\{2,\dots,m_\alpha-1\}$, ${j\in\{2,\dots,k-1\}}$ we have recovered $\Gamma^{i(\alpha)}_{k(\alpha)j(\alpha)}$ in terms of $\big\{\Gamma^{i(\alpha)}_{h(\alpha)t(\alpha)}\big\}_{h\geq k+1,\,t\geq1}$ and $\big\{\Gamma^{i(\alpha)}_{k(\alpha)s(\alpha)}\big\}_{s\geq j+1}$ as
	\begin{align}
		\Gamma^{i(\alpha)}_{k(\alpha)j(\alpha)}=\frac{1}{X^{2(\alpha)}}\bigg(&\partial_{(j-1)(\alpha)}X^{(i-k+1)(\alpha)}+\overset{m_\alpha}{\underset{s=k+1}{\sum}}\Gamma^{i(\alpha)}_{(j-1)(\alpha)s(\alpha)}X^{(s-k+1)(\alpha)}\notag\\&-\partial_{k(\alpha)}X^{(i-j+2)(\alpha)}-\overset{m_\alpha}{\underset{s=j+1}{\sum}}\Gamma^{i(\alpha)}_{k(\alpha)s(\alpha)}X^{(s-j+2)(\alpha)}\bigg).
		\notag
	\end{align}
	\textbf{Case c: $\alpha\neq\beta=\gamma$.} Condition \eqref{dnablaXcirc0} becomes
	\begin{align}
		0=&\Gamma^{i(\alpha)}_{j(\beta)s(\beta)}X^{(s-k+1)(\beta)}-\Gamma^{i(\alpha)}_{k(\beta)s(\beta)}X^{(s-j+1)(\beta)}\notag\\
		=&\overset{m_\beta}{\underset{s=k+1}{\sum}}\Gamma^{i(\alpha)}_{j(\beta)s(\beta)}X^{(s-k+1)(\beta)}-\overset{m_\beta}{\underset{s=j+1}{\sum}}\Gamma^{i(\alpha)}_{k(\beta)s(\beta)}X^{(s-j+1)(\beta)}.
		\label{dnablaXcirc0_casec}
	\end{align}
	By choosing $k=m_\beta$ in \eqref{dnablaXcirc0_casec}, we get
	\begin{align}
		0=&-\overset{m_\beta}{\underset{s=j+1}{\sum}}\Gamma^{i(\alpha)}_{m_\beta(\beta)s(\beta)}X^{(s-j+1)(\beta)}.
		\label{dnablaXcirc0_casec_kmbeta}
	\end{align}
	By choosing $j=m_\beta-1$ in \eqref{dnablaXcirc0_casec_kmbeta}, we get
	\begin{align}
		\Gamma^{i(\alpha)}_{m_\beta(\beta)m_\beta(\beta)}=0.
		\notag
	\end{align}
	For any $j\in\{1,\dots,m_\beta-2\}$, the condition \eqref{dnablaXcirc0_casec_kmbeta} allows one to recover $\Gamma^{i(\alpha)}_{m_\beta(\beta)(j+1)(\beta)}$ in terms of $\big\{\Gamma^{i(\alpha)}_{m_\beta(\beta)s(\beta)}\big\}_{s\geq j+2}$ as
	\begin{align}
		\Gamma^{i(\alpha)}_{m_\beta(\beta)(j+1)(\beta)}=&-\frac{1}{X^{2(\beta)}}\overset{m_\beta}{\underset{s=j+2}{\sum}}\Gamma^{i(\alpha)}_{m_\beta(\beta)s(\beta)}X^{(s-j+1)(\beta)}.
	\end{align}
	By simply shifting the index $j\to j-1$, for any $j\in\{2,\dots,m_\beta-1\}$ we have recovered $\Gamma^{i(\alpha)}_{m_\beta(\beta)j(\beta)}$ in terms of $\big\{\Gamma^{i(\alpha)}_{m_\beta(\beta)s(\beta)}\big\}_{s\geq j+1}$ as
	\begin{align}
		\Gamma^{i(\alpha)}_{m_\beta(\beta)j(\beta)}=&-\frac{1}{X^{2(\beta)}}\overset{m_\beta}{\underset{s=j+1}{\sum}}\Gamma^{i(\alpha)}_{m_\beta(\beta)s(\beta)}X^{(s-j+2)(\beta)}.
	\end{align}
	By choosing $j=k-1$ in \eqref{dnablaXcirc0_casec}, we get
	\begin{align}
		&\overset{m_\beta}{\underset{s=k+1}{\sum}}\Gamma^{i(\alpha)}_{(k-1)(\beta)s(\beta)}X^{(s-k+1)(\beta)}-\overset{m_\beta}{\underset{s=k}{\sum}}\Gamma^{i(\alpha)}_{k(\beta)s(\beta)}X^{(s-k+2)(\beta)}=0
		\notag
	\end{align}
	which, for each $k\in\{2,\dots,m_\beta-1\}$, allows one to recover $\Gamma^{i(\alpha)}_{k(\beta)k(\beta)}$ in terms of $\big\{\Gamma^{i(\alpha)}_{h(\beta)j(\beta)}\big\}_{h\geq k+1,\,j\geq1}$ as
	\begin{align}
		\Gamma^{i(\alpha)}_{k(\beta)k(\beta)}=\frac{1}{X^{2(\beta)}}\Bigg(\overset{m_\beta}{\underset{s=k+1}{\sum}}\Gamma^{i(\alpha)}_{(k-1)(\beta)s(\beta)}X^{(s-k+1)(\beta)}-\overset{m_\beta}{\underset{s=k+1}{\sum}}\Gamma^{i(\alpha)}_{k(\beta)s(\beta)}X^{(s-k+2)(\beta)}\Bigg).
		\notag
	\end{align}
	For any $k\in\{2,\dots,m_\beta-1\}$ and $j\in\{1,\dots,k-2\}$, the condition \eqref{dnablaXcirc0_casec} allows one to recover $\Gamma^{i(\alpha)}_{k(\beta)(j+1)(\beta)}$ in terms of $\big\{\Gamma^{i(\alpha)}_{h(\beta)t(\beta)}\big\}_{h\geq k+1,\,t\geq1}$ and $\big\{\Gamma^{i(\alpha)}_{k(\beta)s(\beta)}\big\}_{s\geq j+2}$ as
	\begin{align}
		\Gamma^{i(\alpha)}_{k(\beta)(j+1)(\beta)}=&\frac{1}{X^{2(\beta)}}\Bigg(\overset{m_\beta}{\underset{s=k+1}{\sum}}\Gamma^{i(\alpha)}_{j(\beta)s(\beta)}X^{(s-k+1)(\beta)}-\overset{m_\beta}{\underset{s=j+2}{\sum}}\Gamma^{i(\alpha)}_{k(\beta)s(\beta)}X^{(s-j+1)(\beta)}\Bigg).
		\notag
	\end{align}
	By simply shifting the index $j\to j-1$, for any $k\in\{2,\dots,m_\beta-1\}$, ${j\in\{2,\dots,k-1\}}$ we have recovered $\Gamma^{i(\alpha)}_{k(\beta)j(\beta)}$ in terms of $\big\{\Gamma^{i(\alpha)}_{h(\beta)t(\beta)}\big\}_{h\geq k+1,\,t\geq1}$ and $\big\{\Gamma^{i(\alpha)}_{k(\beta)s(\beta)}\big\}_{s\geq j+1}$ as
	\begin{align}
		\Gamma^{i(\alpha)}_{k(\beta)j(\beta)}=&\frac{1}{X^{2(\beta)}}\Bigg(\overset{m_\beta}{\underset{s=k+1}{\sum}}\Gamma^{i(\alpha)}_{(j-1)(\beta)s(\beta)}X^{(s-k+1)(\beta)}-\overset{m_\beta}{\underset{s=j+1}{\sum}}\Gamma^{i(\alpha)}_{k(\beta)s(\beta)}X^{(s-j+2)(\beta)}\Bigg).
		\notag
	\end{align}
\end{proof}
We have proved the following.
\begin{theorem}\label{mainTh}
	Let $\{X_{(0)},\dots,X_{(n-1)}\}$ be a set of linearly independent local vector fields on an $n$-dimensional regular F-manifold $(M,\circ,e)$. Let's assume that the corresponding flows
	\begin{align}
		u_{t_i}=X_{(i)}\circ u_x,\qquad i\in\{0,\dots,n-1\},
		\notag
	\end{align}
	pairwise commute, and that there exists a local vector field $X\in\{X_{(0)},\dots,X_{(n-1)}\}$ such that ${X^{1(\alpha)}\neq X^{1(\beta)}}$ for ${\alpha\neq\beta}$, $\alpha,\beta\in\{1,\dots,r\}$, and $X^{2(\alpha)}\neq0$ for each ${\alpha\in\{1,\dots,r\}}$. Then an F-manifold $(M,\circ,e,\nabla)$ with compatible connection and flat unit is defined by the unique torsionless connection $\nabla$ realising $\nabla e=0$ and $d_\nabla(X\circ)=0$.
\end{theorem}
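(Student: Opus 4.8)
The statement is a synthesis of the preceding Proposition, lemmas and corollaries, so the plan is simply to check that the connection $\nabla$ constructed there meets every clause of Definition \ref{Fmnfwcompatconn_flatunit}. Recall that the discussion following that definition reduces its condition (i) to the conjunction of four local requirements — vanishing torsion of $\nabla$, commutativity and associativity of $\circ$, and symmetry in the lower indices of $\nabla_l c^k_{ij}$ — together with the curvature condition \eqref{rc-intri}, whose coordinate form is \eqref{eq:3RC}. Since $\circ$ is commutative and associative by the F-manifold axioms of Definition \ref{defFmani} and $e$ is its unit, only the connection-dependent clauses remain to be verified.

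First I would invoke the preceding Proposition, whose hypotheses $X^{1(\alpha)}\neq X^{1(\beta)}$ (for $\alpha\neq\beta$) and $X^{2(\alpha)}\neq0$ are exactly those assumed here: it furnishes a unique torsionless $\nabla$ satisfying $\nabla e=0$ and $d_\nabla(X\circ)=0$. This disposes at once of existence, uniqueness, torsionlessness and clause (iii). Next I would propagate the vanishing of $d_\nabla$ from $X$ to the whole frame. Commutativity and associativity of $\circ$ give $[X\circ, X_{(i)}\circ]=0$, while the pairwise commutativity of the flows yields \eqref{commflows2}; feeding $d_\nabla(X\circ)=0$ into the Remark preceding Lemma \ref{LemmadnablaBfromtriangular} then produces \eqref{dnablaBhp} for the pair $(X\circ, X_{(i)}\circ)$, and Lemma \ref{LemmadnablaBfromtriangular} forces $d_\nabla(X_{(i)}\circ)=0$ for every $i$.

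The two remaining clauses now follow formally from the basis property of $\{X_{(i)}\}$. Applying Theorem \ref{thm:dnablezeroimplies3RC} to each $X_{(i)}$ gives \eqref{eq: 3RCX} along every frame direction, and Corollary \ref{cor:dnablezeroimplies3RC} upgrades this to an arbitrary vector field, which is precisely \eqref{eq:3RC}, i.e. \eqref{rc-intri}. For the symmetry of $\nabla c$, each $X_{(i)}$ satisfies both $d_\nabla(X_{(i)}\circ)=0$ and the criterion \eqref{eq:3RCcriterion}; subtracting these yields \eqref{almostsymmetryofnablac} along each direction, and the basis property upgrades it to \eqref{symmetryofnablac}. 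Collecting torsionlessness, commutativity and associativity of $\circ$, symmetry of $\nabla c$, condition \eqref{rc-intri}, together with the unit property of $e$ and $\nabla e=0$, all clauses of Definition \ref{Fmnfwcompatconn_flatunit} hold.

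I expect the real content of the argument to lie not in this assembly but in Lemma \ref{LemmadnablaBfromtriangular}, the propagation step, which is the main obstacle: transferring $d_\nabla(X\circ)=0$ to $d_\nabla(X_{(i)}\circ)=0$ relies on the intricate nested inductions over the lower-triangular Toeplitz block structure \eqref{Lblock}, and it is precisely there that the non-degeneracy hypotheses $X^{1(\alpha)}\neq X^{1(\beta)}$ and $X^{2(\alpha)}\neq0$ are consumed. A secondary subtlety worth flagging is that all conclusions are first obtained along the frame directions and only then promoted to arbitrary vector fields via the assumed linear independence of $\{X_{(0)},\dots,X_{(n-1)}\}$, so the frame must genuinely be a basis at the generic point.
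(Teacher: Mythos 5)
Your proposal is correct and follows essentially the same route as the paper: Theorem \ref{mainTh} is stated there as a summary (``We have proved the following'') of exactly the chain you assemble — the existence/uniqueness Proposition for $\nabla$, the propagation of $d_\nabla(X\circ)=0$ to the whole frame via the commuting-flows Lemma, the Remark yielding \eqref{dnablaBhp}, and Lemma \ref{LemmadnablaBfromtriangular}, followed by Theorem \ref{thm:dnablezeroimplies3RC}, Corollary \ref{cor:dnablezeroimplies3RC} and the symmetry of $\nabla\circ$. Your closing remarks correctly locate the real work in Lemma \ref{LemmadnablaBfromtriangular} and in the use of the frame's linear independence to pass from the $X_{(i)}$ to arbitrary vector fields.
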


\section{F-manifolds with compatible connection from Fr\"{o}licher-Nijenhuis bicomplexes}
\subsection{Fr\"olicher-Nijenhuis bicomplex and integrable systems}
Let $L$ be a  tensor field of type $(1,1)$ with vanishing Nijenhuis torsion. This means that for any pair of vector fields, $X$, $Y$, we have
\beq\label{nijenhis}
[LX,LY]-L\,[X,LY]-L\,[LX,Y]+L^2\,[X,Y]=0.
\eeq
Following \cite{LM}, we now recall a construction of integrable hierarchies starting from the  Fr\"{o}licher-Nijenhuis  bicomplex $(d,d_L,\Omega(M))$ (see \cite{FN}). 
  The differential $d$ is the usual de Rham differential, while the differential $d_L$ is defined as
 $$(d_L \omega)(X_0, \dots, X_k):=\sum_{i=0}^k (-1)^i (LX_i)(\omega(X_0, \dots, \hat{X}_i, \dots, X_k))$$
$$+\sum_{0\leq i<j\leq k}(-1)^{i+j}\,\omega([X_i, X_j]_L, X_0, \dots, \hat{X}_i, \dots, \hat{X}_j, \dots X_k),$$
where  $\omega\in\Omega^k(M)$ and  
$$[X_i,X_j]_L:=[LX_i, X_j]+[X_i, LX_j]-L[X_i,X_j].$$
For $L=I$ the vector field $[X_i,X_j]_L$ reduces to the commutator $[X_i,X_j]$ and the differential $d_L$ to $d$.

The fact that $d^2_L=0$, is equivalent to the vanishing of the Nijenhuis torsion.
The differentials $d$ and $d_L$ anticommute and thus the pair $(d,d_L)$ defines a bidifferential complex. 
In \cite{LM} (see also \cite{L2006}), it was proved that given any solution of 
 the equation
\beq
d\cdot d_La_0=0,
\label{eq:main}
\eeq
and the corresponding sequence of functions $a_1,a_2,a_3,...$  defined recursively by 
\[da_{k+1}=d_L a_k -a_k da_0,\qquad k\geq0,\]
the tensor fields of type $(1,1)$  defined  by 
\begin{eqnarray*}
V_{k+1} \coloneqq V_k L-a_k I,\qquad k\geq0,
\end{eqnarray*}
starting from $V_0$ being the identity $I$:
\begin{eqnarray*}
&&V_0=I\\
&&V_1=L-a_0 I\\
&&V_{2}=L^2-a_0 L-a_1 I\\
&&\qquad \vdots
\end{eqnarray*}
 define an integrable hierarchy of hydrodynamic type.
\newline
\newline
We apply the construction from the previous section to the case where ${X_{(0)},\dots,X_{(n-1)}}$ are specified by the vector fields
\begin{equation}
	V_i=X_{(i)}\circ,\qquad i\in\{0,\dots,n-1\}.
\end{equation}
By considering $X:=X_{(1)}=E-a_0\,e$, in canonical coordinates we have that ${X^{i(\alpha)}=u^{i(\alpha)}-a_0\,\delta^i_1}$ for each $\alpha\in\{1,\dots,r\}$, $i\in\{1,\dots,m_\alpha\}$, implying ${X^{1(\alpha)}\neq X^{1(\beta)}}$ for ${\alpha\neq\beta}$, $\alpha,\beta\in\{1,\dots,r\}$, and $X^{2(\alpha)}\neq0$ for each ${\alpha\in\{1,\dots,r\}}$. Moreover, we impose $d_\nabla(L-a_0\,I)=0$. Since, as shown in the following proposition, $X_{(0)},\dots,X_{(n-1)}$ are linearly independent, Lemma \ref{LemmadnablaBfromtriangular} guarantees that $d_\nabla(Y\circ)=0$ for any other ${Y\in\{X_{(0)},\dots,X_{(n-1)}\}}$.
\begin{proposition}\label{LinIn}
	The vector fields $X_{0},\dots,X_{(n-1)}$, defined by $X_{(k)}\circ=V_k$ for ${k\in\{0,\dots,n-1\}}$, are linearly independent.
\end{proposition}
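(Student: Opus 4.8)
The plan is to reduce the statement to the \emph{cyclicity} of the unit $e$ with respect to the operator $L=E\circ$, and then to verify this cyclicity using regularity together with the David--Hertling canonical form.

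First I would observe that each $V_k$ is a monic polynomial in $L$. Indeed, from $V_0=I$ and the recursion $V_{k+1}=V_kL-a_kI$ one obtains, by an immediate induction,
\[
V_k=L^k-a_0L^{k-1}-\dots-a_{k-1}I=p_k(L),
\]
where $p_k(\lambda)=\lambda^k-a_0\lambda^{k-1}-\dots-a_{k-1}$ is monic of degree $k$ (with coefficients depending on the point). Since $V_k=X_{(k)}\circ$ and $e$ is the unit of the product, evaluating on $e$ gives $X_{(k)}=X_{(k)}\circ e=V_k(e)=p_k(L)e$. Because the polynomials $p_0,\dots,p_{n-1}$ are monic of the distinct degrees $0,1,\dots,n-1$, they form a basis of the space of polynomials of degree $<n$, related to $\{1,\lambda,\dots,\lambda^{n-1}\}$ by a unipotent triangular matrix at each point. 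Consequently $\mathrm{span}\{X_{(0)},\dots,X_{(n-1)}\}=\mathrm{span}\{e,Le,\dots,L^{n-1}e\}$, so linear independence of $\{X_{(0)},\dots,X_{(n-1)}\}$ is equivalent to $e$ being a cyclic vector for $L$.

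Next I would verify cyclicity by working in the David--Hertling coordinates of Theorem \ref{DavidHertlingth}, in which $L$ is block diagonal with the lower triangular Toeplitz blocks \eqref{Lblock} and $e^{i(\alpha)}=\delta^i_1$; thus $e=\sum_\alpha e_\alpha$, where $e_\alpha$ is the first basis vector of the $\alpha$-th block. Within a single block, write $L_\alpha=u^{1(\alpha)}I+N_\alpha$ with $N_\alpha$ the strictly lower triangular Toeplitz nilpotent carrying $u^{2(\alpha)}$ on the first subdiagonal. A direct computation with the Toeplitz structure shows that $N_\alpha^{\,j}e_\alpha$ is supported on the positions $j+1,\dots,m_\alpha$ with leading entry $(u^{2(\alpha)})^{j}$; hence the matrix whose columns are $e_\alpha,N_\alpha e_\alpha,\dots,N_\alpha^{\,m_\alpha-1}e_\alpha$ is triangular with diagonal $1,u^{2(\alpha)},\dots,(u^{2(\alpha)})^{m_\alpha-1}$. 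Regularity (a single Jordan block per eigenvalue) forces $u^{2(\alpha)}\neq0$, so these vectors are independent and $e_\alpha$ is cyclic for $L_\alpha$; equivalently, the least-degree monic $q$ with $q(L_\alpha)e_\alpha=0$ is $(\lambda-u^{1(\alpha)})^{m_\alpha}$.

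Finally I would glue the blocks together using distinctness of the eigenvalues. Suppose $\sum_{k=0}^{n-1}c_kL^ke=0$; projecting onto each block gives $p(L_\alpha)e_\alpha=0$ for the polynomial $p(\lambda)=\sum_k c_k\lambda^k$ of degree $<n$. By the previous step $(\lambda-u^{1(\alpha)})^{m_\alpha}$ divides $p$ for every $\alpha$, and since $u^{1(1)},\dots,u^{1(r)}$ are pairwise distinct these factors are coprime, so their product $\prod_\alpha(\lambda-u^{1(\alpha)})^{m_\alpha}$, of degree $n$, divides $p$; as $\deg p<n$ this forces $p\equiv0$, i.e.\ all $c_k=0$. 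Hence $\{e,Le,\dots,L^{n-1}e\}$ is independent and, by the reduction above, so is $\{X_{(0)},\dots,X_{(n-1)}\}$. The main obstacle is precisely the cyclicity of $e$: it is here that both the regularity hypothesis (guaranteeing $u^{2(\alpha)}\neq0$) and the distinctness of the eigenvalues enter, and isolating these two ingredients via the divisibility argument is the crux.
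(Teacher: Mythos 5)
Your proof is correct, but its second half follows a genuinely different route from the paper's. Both arguments share the first step: the transition matrix from $\{e,E,\dots,E^{n-1}\}$ (equivalently $\{e,Le,\dots,L^{n-1}e\}$) to $\{X_{(0)},\dots,X_{(n-1)}\}$ is unipotent triangular, so the two families span the same space. The paper realises this by column operations on $\det[X_{(0)}|X_{(1)}|\dots|X_{(n-1)}]$, you by observing that the monic polynomials $p_0,\dots,p_{n-1}$ of degrees $0,\dots,n-1$ form a basis of the polynomials of degree $<n$ --- the same content, slightly more economically packaged. Where you genuinely diverge is the key input: the paper simply cites Lemma 6 of David--Hertling for the linear independence of $\{e,E,\dots,E^{n-1}\}$, whereas you reprove it from scratch, establishing cyclicity of $e$ block by block (the triangular matrix with diagonal $1,u^{2(\alpha)},\dots,(u^{2(\alpha)})^{m_\alpha-1}$, nondegenerate because regularity forces $u^{2(\alpha)}\neq0$) and then gluing the blocks with a coprimality argument on the annihilating polynomials $(\lambda-u^{1(\alpha)})^{m_\alpha}$, using that the eigenvalues $u^{1(\alpha)}$ are pairwise distinct. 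Your version is self-contained and makes explicit exactly where regularity and the distinctness of the eigenvalues enter; the paper's is shorter because it outsources precisely this point to the cited lemma. Both are valid on the neighbourhood where the coordinates of Theorem \ref{DavidHertlingth} exist, which suffices since linear independence is only claimed at a generic point.
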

\begin{proof}
	It is immediate to see that the vector fields $\{X_{0},\dots,X_{(n-1)}\}$ take the form
	\begin{align}
		&X_{(0)}=e,\notag\\
		&X_{(k)}=L\,X_{(k-1)}-a_{k-1}\,e=E^k-\overset{k-1}{\underset{l=0}{\sum}}\,a_l\,E^{k-l-1},\qquad k\in\{1,\dots,n-1\}.\notag
	\end{align}
	As the linear independence of $\{e,E,\dots,E^{n-1}\}$ is guaranteed by the regularity assumption (see Lemma 6 of \cite{DH}, see also \cite{AK} and \cite{BKM3}), the proof is reduced to showing that the linear independence of $\{X_{(0)},\dots,X_{(n-1)}\}$ amounts to the linear independence of $\{e,E,\dots,E^{n-1}\}$.
	
	Let us consider the matrix $\underline{X}$ whose $j$-th column is given by the vector field $X_{(j-1)}$ and, analogously, the matrix $\underline{E}$ whose $j$-th column is given by the vector field $E^{j-1}$. The determinant of $\underline{X}$ is
		\begin{align}
			\det\underline{X}=&\det[X_{(0)}|X_{(1)}|\dots|X_{(n-1)}]=\det[E^{0}|E^{1}-a_0\,E^0|X_{(2)}|\dots|X_{(n-1)}]\notag\\
			=&\det[E^{0}|E^{1}|X_{(2)}|\dots|X_{(n-1)}]-a_0\,\det[E^{0}|E^0|X_{(2)}|\dots|X_{(n-1)}],
			\notag
		\end{align}
		by multilinearity of the determinant. As the second term vanishes due to the presence of a repeated column, we get
		\begin{align}
			\det\underline{X}=\det[E^{0}|E^{1}|X_{(2)}|\dots|X_{(n-1)}].
			\notag
		\end{align}
		Let us fix $k\in\{2,\dots,n-1\}$ and inductively assume that
		\begin{align}
			\det\underline{X}=\det[E^{0}|\dots|E^{k-1}|X_{(k)}|\dots|X_{(n-1)}].
			\notag
		\end{align}
		Since the $k$-th vector $X_{(k)}$ can be rewritten as
		\[
		X_{(k)}=E^k-\overset{k-1}{\underset{l=0}{\sum}}\,a_l\,E^{k-l-1}
		\]
		for $k\in\{1,\dots,n-1\}$, we have
		\begin{align}
			\det\underline{X}= \, & \, \det[E^{0}|\dots|E^{k-1}|X_{(k)}|\dots|X_{(n-1)}]\notag\\
			= \, & \, \det\Bigg[E^{0}\Bigg|\dots\Bigg|E^{k-1}\Bigg|E^k-\overset{k-1}{\underset{l=0}{\sum}}\,a_l\,E^{k-l-1}\Bigg|X_{(k+1)}\Bigg|\dots\Bigg|X_{(n-1)}\Bigg]\notag\\
			= \, & \, \det[E^{0}|\dots|E^{k-1}|E^k|X_{(k+1)}|\dots|X_{(n-1)}]\notag\\
			\, & \, -\overset{k-1}{\underset{l=0}{\sum}}\,a_l\,\det[E^{0}|\dots|E^{k-1}|E^{k-l-1}|X_{(k+1)}|\dots|X_{(n-1)}].
			\notag
		\end{align}
		As each term in the second sum vanishes due to the presence of repeated columns in the corresponding matrix, we are left with 
		\begin{align}
			\det\underline{X}=\det[E^{0}|\dots|E^{k-1}|E^k|X_{(k+1)}|\dots|X_{(n-1)}].
			\notag
		\end{align}
		Thus,
		\begin{align}
			\det\underline{X}=\det[E^{0}|\dots|E^k|X_{(k+1)}|\dots|X_{(n-1)}]
			\notag
		\end{align}
		for each $k\in\{1,\dots,n-1\}$. In particular, for $k=n-1$ we have
		\begin{align}
			\det\underline{X}=\det[E^{0}|\dots|E^{n-1}]=\det\underline{E}.
			\notag
		\end{align}
\end{proof}

\subsection{Technical Lemmas}
Let $L=E\circ$ be regular, having $r$ Jordan blocks of sizes $m_1,\dots,m_r$, with $\overset{r}{\underset{\alpha=1}{\sum}}\,m_\alpha=n$. Furthermore,  let
\[
\{u^{i(\alpha)}\,|\,i\in\{1,\dots,m_\alpha\}\}_{\alpha\in\{1,\dots,r\}}=\{u^{j}\}_{j\in\{1,\dots,n\}}
\]
denote canonical coordinates. From \cite{LP23}, we know that
\begin{equation}
	L_{i(\alpha)}^{j(\gamma)} = \begin{dcases}
		u^{(j-i+1)(\alpha)}, &  \text{ if } \gamma = \alpha \text{ and } j \geq i,\\
		0, & \text{ otherwise},\\
	\end{dcases}
	\label{eq:diffL}
\end{equation}
and 
\begin{equation}
	\partial_{j(\gamma)} L_{i(\alpha)}^{s(\beta)} = \begin{dcases}
		1, & \text{ if } \gamma = \alpha = \beta \, \text{ and } s = i+j-1, \\
		0, & \text{ otherwise}.
	\end{dcases}\label{eq:diffL2}
\end{equation}
We want to study condition \eqref{eq:ddainL} for a function $a_0$($u^1,\dots,u^n$).
\begin{remark}
	Firstly, note that, given the regularity assumption, \eqref{eq:ddainL} is equivalent to 
	\begin{equation}
		L^s_i \partial_s \partial_j a_0 - L^s_j \partial_s \partial_i a_0 = 0, \quad i,j \in \{1, \dots, n\}.
		\label{eq:ddasimpl}
	\end{equation}
	Let us consider $i$, $j$ as associated to the $\alpha$-th, $\beta$-th block respectively, for some ${\alpha,\beta\in\{1,\dots,r\}}$. When $\alpha\neq\beta$, such equivalence immediately follows from \eqref{eq:diffL2}. Thus, let $\alpha = \beta$. We drop the greek indices to ease the notation, as only coordinates associated to one block are involved. Then,
	\begin{equation}
		\begin{aligned}
			\,  \,   \sum_{s \geq i}^{}\partial_s a_0 \partial_j u^{s-i+1} -  \sum_{s \geq j}^{}\partial_s a_0 \partial_i u^{s-j+1}  
			&= \,   \,  \sum_{s \geq i}^{}\partial_s a_0 \delta_{j, s-i+1} -  \sum_{s \geq j}^{}\partial_s a_0\delta_{i, s-j+1 } \\ &= \,  \, \partial_{i+j-1}a_0 - \partial_{i+j-1}a_0 
			= \,  \,  0,
		\end{aligned}
	\end{equation}
	as $s=i+j-1 \geq i$ and  $s=i+j-1 \geq j$. According to the double-index notation,  and taking into account \eqref{eq:diffL},  \eqref{eq:ddasimpl} may be rewritten as
	\begin{equation}
		\begin{aligned}
		&\overset{m_\alpha}{\underset{s=i+1}{\sum}}\,u^{(s-i+1)(\alpha)}\partial_{s(\alpha)}\partial_{j(\beta)}a_0 - \overset{m_\beta}{\underset{s=j+1}{\sum}}\,u^{(s-j+1)(\beta)}\partial_{s(\beta)}\partial_{i(\alpha)}a_0\\
		&+\big(u^{1(\alpha)}-u^{1(\beta)}\big)\partial_{i(\alpha)}\partial_{j(\beta)}a_0=0,
		\end{aligned}
		\label{eq:ddasimpl_bis}
	\end{equation}
	for $\alpha,\beta\in\{1,\dots,r\}$ and $i\in\{1,\dots,m_\alpha\}$, $j\in\{1,\dots,m_\beta\}$.
\end{remark}

\subsection{The case of one Jordan block}
Let us start by considering the situation in which $L=E\circ$ has a single Jordan block, that is $r=1$. In this case, $m_1=n$ and $u^{i(1)}=u^i$ for each $i\in\{1,\dots,n\}$. In order to ease the notation, we will drop greek indices. Condition \eqref{eq:ddasimpl_bis} then becomes
\begin{equation}
	\overset{n}{\underset{s=i+1}{\sum}}\,u^{s-i+1}\partial_{s}\partial_{j}a_0 - \overset{n}{\underset{s=j+1}{\sum}}\,u^{s-j+1}\partial_{s}\partial_{i}a_0 = 0,\qquad i,j\in\{1,\dots,n\}.
	\label{eq:ddasimpl_bis_J1}
\end{equation}
Condition \eqref{eq:ddasimpl_bis_J1} implies the two relations described in Proposition \ref{Prop1_statement} and Proposition \ref{Prop2_statement} below.
\begin{proposition}\label{Prop1_statement}
	Condition \eqref{eq:ddasimpl_bis_J1} implies that
	\begin{equation}
		\partial_i\partial_ja_0=0,\qquad\text{when }n-i-j\leq-2.
		\label{Prop1}
	\end{equation}
\end{proposition}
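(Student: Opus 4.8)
The plan is to recast the scalar equation \eqref{eq:ddasimpl_bis_J1} in a symmetric form and then extract the vanishing of the relevant second derivatives by a nested downward induction. Writing $b_{ij}:=\partial_i\partial_j a_0$ (so $b_{ij}=b_{ji}$) and reindexing the two sums through $s=i+p$ and $s=j+q$, equation \eqref{eq:ddasimpl_bis_J1} becomes
\[
\sum_{p=1}^{n-i}u^{p+1}\,b_{i+p,j}=\sum_{q=1}^{n-j}u^{q+1}\,b_{i,j+q}\qquad(\star_{ij})
\]
for $i,j\in\{1,\dots,n\}$. Two structural features drive the argument: only the coordinates $u^2,\dots,u^n$ appear as explicit coefficients, and every factor $b$ occurring on either side of $(\star_{ij})$ has the sum of its two indices strictly larger than $i+j$. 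I will also use repeatedly the elementary remark that if $u^2 f\equiv 0$ as a function of $(u^1,\dots,u^n)$ then $f\equiv 0$, since $u^2$ is a coordinate and hence nonvanishing on a dense open set.

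The target $b_{ij}=0$ for $i+j\ge n+2$ I would obtain from the statement $P(i)$: \emph{$b_{i,k}=0$ for every $k$ with $k\ge n+2-i$}, proved by downward induction on $i$ starting from $i=n$. Granting $P(i)$ for all $i$, the symmetry $b_{ij}=b_{ji}$ lets one apply $P(\max(i,j))$ to any pair with $i+j\ge n+2$ (indeed $\min(i,j)\ge n+2-\max(i,j)$), which yields the proposition. For the inductive step at a fixed $i$, assuming $P(i')$ for all $i'>i$, I restrict attention to the equations $(\star_{ij})$ with $j\ge n+1-i$. For such $j$ each left-hand term $b_{i+p,j}$ has index sum $(i+p)+j\ge n+2$ (as $p\ge1$), hence vanishes by $P(i+p)$; the entire left-hand side of $(\star_{ij})$ is therefore zero and one is left with
\[
u^2\,b_{i,j+1}+u^3\,b_{i,j+2}+\dots+u^{n-j+1}\,b_{i,n}=0.
\]
A second, inner downward induction on $j$ (from $j=n-1$ to $j=n+1-i$) now isolates the lowest-order monomial: by the inner hypothesis all $b_{i,j+q}$ with $q\ge2$ vanish (their second index exceeds $j+1$), leaving $u^2\,b_{i,j+1}\equiv0$ and hence $b_{i,j+1}=0$. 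As $j$ ranges over $\{n-1,\dots,n+1-i\}$ the index $j+1$ sweeps $\{n,\dots,n+2-i\}$, which is exactly $P(i)$. The base case $i=n$ needs no outer hypothesis, since the left-hand sum of $(\star_{nj})$ is empty.

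The main obstacle is arranging the two inductions so that the left-hand sum is annihilated \emph{before} the lowest term is peeled off; this is precisely what dictates the restriction $j\ge n+1-i$ and the decision to induct downward in both indices. A minor point that must be treated with care is the passage from $u^2 f\equiv0$ to $f\equiv0$: since the $b_{ij}$ are themselves functions of all the coordinates, one cannot simply compare coefficients and must instead invoke the density argument above. Once these two points are settled, everything else is bookkeeping, with the symmetry $b_{ij}=b_{ji}$ trimming the list of pairs that need to be handled directly.
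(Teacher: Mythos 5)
Your proof is correct and follows essentially the same strategy as the paper's: a nested downward induction in which the outer hypothesis annihilates one of the two sums in \eqref{eq:ddasimpl_bis_J1} and the inner hypothesis isolates the lowest-order term $u^2\,\partial_i\partial_{j+1}a_0$. The only difference is cosmetic --- you induct on the first index outermost while the paper inducts on the second, which by the symmetry of $\partial_i\partial_j a_0$ is the same argument --- and your explicit formulation of the statement $P(i)$ makes the index bookkeeping somewhat tidier.
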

\begin{proof}
	We prove this statement by induction over $j\in\{2,\dots,n\}$, starting from $j=n$. By choosing $j=n$ in \eqref{eq:ddasimpl_bis_J1} we get
	\begin{equation}
		\overset{n}{\underset{s=i+1}{\sum}}\,u^{s-i+1}\partial_{s}\partial_{n}a_0 = 0,\qquad i\in\{1,\dots,n\}.
		\label{eq:ddasimpl_bis_J1_jequalsn}
	\end{equation}
	Setting $i=n-1$ in \eqref{eq:ddasimpl_bis_J1_jequalsn} gives $u^{2}\partial_{n}\partial_{n}a_0 = 0$, implying $\partial_{n}\partial_{n}a_0 = 0$. Let us fix $q\in\{2,\dots,n-1\}$ and assume
	\begin{equation}
		\partial_i\partial_na_0=0,\qquad i\in\{q+1,\dots,n\}.
		\label{eq:ddasimpl_bis_J1_jequalsn_indhpi}
	\end{equation}
	By choosing $i=q-1$ in \eqref{eq:ddasimpl_bis_J1_jequalsn} we get
	\begin{equation}
		0=\overset{n}{\underset{s=q}{\sum}}\,u^{s-q+2}\partial_{s}\partial_{n}a_0\overset{\eqref{eq:ddasimpl_bis_J1_jequalsn_indhpi}}{=}u^{2}\partial_{q}\partial_{n}a_0,
	\end{equation}
	implying $\partial_{q}\partial_{n}a_0=0$ and hence 
	\begin{equation}
		\partial_i\partial_na_0=0,\qquad i\in\{2,\dots,n\}.
		\label{eq:ddasimpl_bis_J1_jequalsn_alli}
	\end{equation}
	Let us now fix $p\in\{2,\dots,n-1\}$ and assume
	\begin{equation}
		\partial_i\partial_ja_0=0,\qquad i\in\{n-p+2,\dots,n\},
		\label{eq:ddasimpl_bis_J1_jequalsn_indhpj}
	\end{equation}
	for each $j\in\{p+1,\dots,n\}$. In the following, we show that $\partial_i\partial_pa_0=0$ for each ${i\in\{n-p+2,\dots,n\}}$. By choosing $j=p$ in \eqref{eq:ddasimpl_bis_J1} we get
	\begin{equation}
		0=\overset{n}{\underset{s=i+1}{\sum}}\,u^{s-i+1}\partial_{s}\partial_{p}a_0 - \overset{n}{\underset{s=p+1}{\sum}}\,u^{s-p+1}\partial_{s}\partial_{i}a_0 \overset{\eqref{eq:ddasimpl_bis_J1_jequalsn_indhpj}}{=}\overset{n}{\underset{s=i+1}{\sum}}\,u^{s-i+1}\partial_{s}\partial_{p}a_0,\quad i\in\{1,\dots,n\},
		\label{eq:ddasimpl_bis_J1_iequalsp}
	\end{equation}
	and by further setting $i=n-1$ we get $u^{2}\partial_{n}\partial_{p}a_0=0$, implying $\partial_{n}\partial_{p}a_0=0$. Let us now fix $q\in\{n-p+2,\dots,n-1\}$ and assume
	\begin{equation}
		\partial_i\partial_pa_0=0,\qquad i\in\{q+1,\dots,n\}.
		\label{eq:ddasimpl_bis_J1_jequalsn_indhpj_indhpi}
	\end{equation}
	By choosing $i=q-1$ in \eqref{eq:ddasimpl_bis_J1_iequalsp},
	\begin{equation}
		0=\overset{n}{\underset{s=q}{\sum}}\,u^{s-q+2}\partial_{s}\partial_{p}a_0\overset{\eqref{eq:ddasimpl_bis_J1_jequalsn_indhpj_indhpi}}{=}u^{2}\partial_{q}\partial_{p}a_0,
	\end{equation}
	implying $\partial_{q}\partial_{p}a_0=0$ and hence proving that $\partial_i\partial_pa_0=0$ for each ${i\in\{n-p+2,\dots,n\}}$.	
\end{proof}

 In order to prove Proposition \ref{Prop2_statement}, we need Lemma \ref{Lemma1_statement}.

\begin{lemma}\label{Lemma1_statement}
	Let $h\geq-1$. If
	\begin{align}
		\partial_{i+1}\partial_{n-i-h-1}a_0=\partial_{n-i-h}\partial_{i}a_0,\qquad i\in\{1,\dots,n\},
		\label{Lemma1_hp}
	\end{align}
	then
	\begin{equation}
		\partial_i\partial_ja_0=\partial_{\overline{i}}\partial_{\overline{j}}a_0,\qquad\text{when }i+j=\overline{i}+\overline{j},\qquad n-i-j=h.
		\label{Lemma1_th}
	\end{equation}
\end{lemma}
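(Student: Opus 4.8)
The plan is to read both the hypothesis \eqref{Lemma1_hp} and the conclusion \eqref{Lemma1_th} as statements about the single ``anti-diagonal'' of index pairs whose entries sum to $n-h$. Accordingly, I would fix $h$, set $s:=n-h$, and introduce the shorthand $D(p):=\partial_p\partial_{s-p}a_0$ for every $p$ such that both $p$ and $s-p$ lie in $\{1,\dots,n\}$. Since second partial derivatives commute, $D(p)=D(s-p)$, and the conclusion \eqref{Lemma1_th} is precisely the assertion that $D(p)$ is independent of $p$ over its admissible range.

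The first step is to rewrite the hypothesis in terms of $D$. Using commutativity, the right-hand side of \eqref{Lemma1_hp} is $\partial_{n-i-h}\partial_i a_0=\partial_i\partial_{n-i-h}a_0=\partial_i\partial_{s-i}a_0=D(i)$, while the left-hand side is $\partial_{i+1}\partial_{n-i-h-1}a_0=\partial_{i+1}\partial_{s-(i+1)}a_0=D(i+1)$. Hence the hypothesis collapses to the single recurrence
\[
D(i+1)=D(i),
\]
valid for every $i$ for which all four indices occurring in \eqref{Lemma1_hp} belong to $\{1,\dots,n\}$.

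The second step is to chain these equalities along the whole anti-diagonal. The admissible first indices are $p\in\{p_{\min},\dots,p_{\max}\}$ with $p_{\min}=\max(1,-h)$ and $p_{\max}=\min(n,n-h-1)$; since $h\ge-1$ one checks that $p_{\min}=1$ in every case. A short verification of the four constraints $1\le i+1\le n$, $1\le n-i-h-1\le n$, $1\le n-i-h\le n$, $1\le i\le n$ shows that the recurrence $D(i+1)=D(i)$ holds exactly for $i\in\{1,\dots,p_{\max}-1\}$, i.e. for precisely the set of consecutive pairs needed to connect $D(p_{\min})$ to $D(p_{\max})$ without gaps. Transitivity then gives $D(p)=D(1)$ for every admissible $p$, and unwinding the shorthand (with $\partial_i\partial_j a_0=D(i)$ and $\partial_{\overline{i}}\partial_{\overline{j}}a_0=D(\overline{i})$) yields $\partial_i\partial_j a_0=\partial_{\overline{i}}\partial_{\overline{j}}a_0$ whenever $i+j=\overline{i}+\overline{j}=n-h$, which is \eqref{Lemma1_th}.

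The only genuinely delicate point is the index bookkeeping in the second step: one must confirm that the range of $i$ on which \eqref{Lemma1_hp} carries meaningful (in-range) indices coincides with $\{1,\dots,p_{\max}-1\}$, so that the chain of equalities is truly unbroken, and separately note that when the anti-diagonal degenerates to a single point (or is empty) the statement is trivial or vacuous. Everything else reduces to the commutativity of second partial derivatives, so I do not expect any further obstruction.
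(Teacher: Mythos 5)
Your proposal is correct and follows essentially the same route as the paper: both read the hypothesis as the nearest-neighbour identification $\partial_{i+1}\partial_{s-(i+1)}a_0=\partial_i\partial_{s-i}a_0$ along the fixed anti-diagonal $i+j=s=n-h$ and propagate it by chaining (the paper phrases this as an induction on $i-\overline{i}$, you as transitivity of the recurrence $D(i+1)=D(i)$). Your explicit check that the recurrence is valid on exactly the index range needed to connect all admissible $D(p)$ without gaps is a welcome piece of bookkeeping that the paper leaves implicit.
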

\begin{proof}
	Let us fix $h\geq-1$. By relabelling $a:=i$, condition \eqref{Lemma1_hp} reads
	\begin{align}
		\partial_{a+1}\partial_{n-a-h-1}a_0=\partial_{n-a-h}\partial_{a}a_0,\qquad a\in\{1,\dots,n\}.
		\label{Lemma1_hp_a}
	\end{align}
	Let us consider $i,j,\overline{i},\overline{j}\in\{1,\dots,n\}$ such that $i+j=\overline{i}+\overline{j}$. The requirement $n-i-j=h$ forces $j$ and $\overline{j}$ to be $j=n-i-h$ and $\overline{j}=n-\overline{i}-h$, respectively. Without loss of generality, let us consider $\overline{i}<i$.
	
	By choosing $a=i-1$ in \eqref{Lemma1_hp_a}, we get
	\begin{align}
		\partial_{i}\partial_{n-i-h}a_0=\partial_{n-i-h+1}\partial_{i-1}a_0,
		\notag
	\end{align}
	proving \eqref{Lemma1_th} for $\overline{i}=i-1$.
	
	Let us fix $s\geq2$ and inductively assume \eqref{Lemma1_th} for each $\overline{i}\geq i-s+1$. We need to show that \eqref{Lemma1_th} holds for $\overline{i}=i-s$, namely
	\begin{align}
		\partial_i\partial_{n-i-h}a_0=\partial_{i-s}\partial_{n-i+s-h}a_0.
		\label{Prop2_proof_endgoal}
	\end{align}
	
	Let us fix $\overline{i}=i-s$. By the inductive assumption for $\overline{i}=i-s+1$, we have
	\begin{align}
		\partial_i\partial_{n-i-h}a_0&=\partial_{i-s+1}\partial_{n-i+s-h-1}a_0.\notag
	\end{align}
	By choosing $a=i-s$ in \eqref{Lemma1_hp_a}, we have
	\begin{align}
		\partial_{i-s+1}\partial_{n-i+s-h-1}a_0&=\partial_{n-i+s-h}\partial_{i-s}a_0.
		\notag
	\end{align}
	Thus, we get \eqref{Prop2_proof_endgoal}, as required.
\end{proof}
\begin{proposition}\label{Prop2_statement}
	Condition \eqref{eq:ddasimpl_bis_J1} implies
	\begin{equation}
		\partial_i\partial_ja_0=\partial_{\overline{i}}\partial_{\overline{j}}a_0,\qquad\text{when }i+j=\overline{i}+\overline{j},\qquad n-i-j\geq-1.
		\label{Prop2}
	\end{equation}
\end{proposition}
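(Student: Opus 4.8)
The plan is to use Lemma \ref{Lemma1_statement} as a ``horizontal propagation'' device and thereby reduce the whole statement to producing, for each level $h\ge -1$, the single adjacent relation \eqref{Lemma1_hp}. Indeed, once \eqref{Lemma1_hp} is known for a given $h$, Lemma \ref{Lemma1_statement} upgrades it to the full equality \eqref{Prop2} among all pairs $(i,j)$ with $n-i-j=h$. So the real task is to establish \eqref{Lemma1_hp} for every $h\in\{-1,0,\dots,n-3\}$ (the levels with $n-i-j>n-3$ consist of the single pair $(1,1)$, for which \eqref{Prop2} is vacuous).

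First I would set up an induction on $h$ starting at $h=-1$ and increasing, which is equivalently a downward induction on the value $i+j=n-h$. The inductive hypothesis at stage $h$ is the \emph{full} conclusion \eqref{Prop2} at all lower levels $h'\in\{-1,\dots,h-1\}$, i.e. that the mixed second derivatives are constant on each of those level sets. It is essential that the hypothesis be this full statement and not merely the adjacent relations; supplying it is exactly the role played by the interleaved application of Lemma \ref{Lemma1_statement} at the end of each stage.

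For the inductive step I would feed into \eqref{eq:ddasimpl_bis_J1} a pair of indices $p,q$ with $p+q=n-h-1$ and read off the two sums. Writing $s=p+t$ (resp.\ $s=q+t$), the $u^{t+1}$-term on the left is $u^{t+1}\partial_{p+t}\partial_q a_0$ and on the right $u^{t+1}\partial_{q+t}\partial_p a_0$; both pairs $(p+t,q)$ and $(q+t,p)$ lie on the level set $n-i-j=h+1-t$. For $t=1$ these terms sit at the level $h$ under consideration. For $t\ge 2$ the level is $h+1-t<h$, so either $h+1-t\ge -1$, in which case the inductive hypothesis forces $\partial_{p+t}\partial_q a_0=\partial_{q+t}\partial_p a_0$, or $h+1-t\le -2$, in which case both vanish by Proposition \ref{Prop1_statement}; either way the two $u^{t+1}$-terms coincide and cancel in \eqref{eq:ddasimpl_bis_J1}. (The terms present in only one of the two sums, arising because the sums run to different upper limits when $p\ne q$, sit at levels $\le -2$ and are killed directly by Proposition \ref{Prop1_statement}.) After all cancellations only the $t=1$ contributions survive, leaving the identity $u^2\big(\partial_{p+1}\partial_q a_0-\partial_{q+1}\partial_p a_0\big)=0$; since $u^2$ is a coordinate function this yields $\partial_{p+1}\partial_q a_0=\partial_{q+1}\partial_p a_0$, which is precisely \eqref{Lemma1_hp} at level $h$ after setting $i=p$. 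Applying Lemma \ref{Lemma1_statement} then closes the stage and the induction proceeds. The base case $h=-1$ is the cleanest: there $p+q=n$, every $t\ge 2$ term already lies at a level $\le -2$ and is annihilated by Proposition \ref{Prop1_statement}, so no inductive input is needed.

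The step I expect to be delicate is guaranteeing that the higher-order ($t\ge 2$) terms genuinely cancel rather than merely being individually controlled: this hinges on having equality of \emph{arbitrary} same-level pairs in hand, which in turn forces the induction to be threaded through Lemma \ref{Lemma1_statement} at each stage. The remainder is bookkeeping --- checking that the indices $p+t,q+t$ stay within $\{1,\dots,n\}$, that the mismatched tails of the two sums fall into the range covered by Proposition \ref{Prop1_statement}, and that the passage from $u^2 f=0$ to $f=0$ is justified because $u^2$ is a nonvanishing coordinate.
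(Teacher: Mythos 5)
Your proposal is correct and follows essentially the same route as the paper: truncate the sums via Proposition \ref{Prop1_statement}, induct upward on the level $h$ starting from $h=-1$ with the full same-level equality as inductive hypothesis, extract the adjacent relation $\partial_{p+1}\partial_q a_0=\partial_{q+1}\partial_p a_0$ at level $h$ from the $u^2$-coefficient after the higher-order terms cancel, and then invoke Lemma \ref{Lemma1_statement} to propagate it across the level set. The points you flag as delicate (needing the full lower-level equality rather than just adjacency, and dividing by $u^2$) are handled in the paper exactly as you describe.
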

\begin{proof}
	Let us consider $i,j\in\{1,\dots,n\}$ such that $n-i-j\geq-1$. By means of Proposition \ref{Prop1_statement},  \eqref{eq:ddasimpl_bis_J1} becomes
	\begin{equation}
		\overset{n-j+1}{\underset{s=i+1}{\sum}}\,u^{s-i+1}\partial_{s}\partial_{j}a_0 - \overset{n-i+1}{\underset{s=j+1}{\sum}}\,u^{s-j+1}\partial_{s}\partial_{i}a_0 = 0,\qquad i,j\in\{1,\dots,n\},
		\label{eq:ddasimpl_bis_J1_bis}
	\end{equation}
	that is,
	\begin{equation}
		u^2\big(\partial_{i+1}\partial_{j}a_0-\partial_{j+1}\partial_{i}a_0\big)+\overset{n-j+1}{\underset{s=i+2}{\sum}}\,u^{s-i+1}\partial_{s}\partial_{j}a_0 - \overset{n-i+1}{\underset{s=j+2}{\sum}}\,u^{s-j+1}\partial_{s}\partial_{i}a_0 = 0,\quad i,j\in\{1,\dots,n\},
		\label{eq:ddasimpl_bis_J1_tris}
	\end{equation}
	where the last two sums only survive when $n\geq i+j+1$.
	
	By choosing $j=n-i$ in \eqref{eq:ddasimpl_bis_J1_tris}, we get
	\begin{equation}
		\partial_{i+1}\partial_{n-i}a_0=\partial_{n-i+1}\partial_{i}a_0,\qquad i\in\{1,\dots,n\},
		\label{eq:ddasimpl_bis_J1_tris_j=n-i}
	\end{equation}
	which, by Lemma \ref{Lemma1_statement}, implies 
	\begin{equation}
		\partial_i\partial_ja_0=\partial_{\overline{i}}\partial_{\overline{j}}a_0,\qquad\text{when }i+j=\overline{i}+\overline{j},\qquad n-i-j=-1.
		\notag
	\end{equation}
	
	Let us now fix $h\geq-1$ and inductively assume 
	\begin{equation}
		\partial_i\partial_ja_0=\partial_{\overline{i}}\partial_{\overline{j}}a_0,\qquad\text{when }i+j=\overline{i}+\overline{j},\qquad n-i-j\leq h-1.
		\label{Prop2_indhp}
	\end{equation}
	We need to show that
	\begin{equation}
		\partial_i\partial_ja_0=\partial_{\overline{i}}\partial_{\overline{j}}a_0,\qquad\text{when }i+j=\overline{i}+\overline{j},\qquad n-i-j=h.
		\label{Prop2_indth}
	\end{equation}

	By choosing $j=n-i-h-1$ in \eqref{eq:ddasimpl_bis_J1_tris}, we get
	\begin{align}
		0&=u^2\big(\partial_{i+1}\partial_{n-i-h-1}a_0-\partial_{n-i-h}\partial_{i}a_0\big)\notag\\& \quad +\overset{i+h+2}{\underset{s=i+2}{\sum}}\,u^{s-i+1}\partial_{s}\partial_{n-i-h-1}a_0 - \overset{n-i+1}{\underset{s=n-i-h+1}{\sum}}\,u^{s-n+i+h+2}\partial_{s}\partial_{i}a_0\notag\\
		&=u^2\big(\partial_{i+1}\partial_{n-i-h-1}a_0-\partial_{n-i-h}\partial_{i}a_0\big)\notag\\& \quad +\overset{i+h+2}{\underset{s=i+2}{\sum}}\,u^{s-i+1}\partial_{s}\partial_{n-i-h-1}a_0 - \overset{i+h+2}{\underset{t=i+2}{\sum}}\,u^{t-i+1}\partial_{t+n-2i-h-1}\partial_{i}a_0\notag\\
		&=u^2\big(\partial_{i+1}\partial_{n-i-h-1}a_0-\partial_{n-i-h}\partial_{i}a_0\big)\notag\\& \quad +\overset{i+h+2}{\underset{s=i+2}{\sum}}\,u^{s-i+1}\big(\partial_{s}\partial_{n-i-h-1}a_0-\partial_{s+n-2i-h-1}\partial_{i}a_0\big),\qquad i\in\{1,\dots,n\},
		\label{eq:ddasimpl_bis_J1_tris_j=n-i-h-1}
	\end{align}
	where $\partial_{s}\partial_{n-i-h-1}a_0=\partial_{s+n-2i-h-1}\partial_{i}a_0$ for each $s\geq i+2$ by \eqref{Prop2_indhp}. This gives
	\begin{align}
		\partial_{i+1}\partial_{n-i-h-1}a_0=\partial_{n-i-h}\partial_{i}a_0,\qquad i\in\{1,\dots,n\},
	\end{align}
	which, by Lemma \ref{Lemma1_statement}, implies
	\begin{equation}
		\partial_i\partial_ja_0=\partial_{\overline{i}}\partial_{\overline{j}}a_0,\qquad\text{when }i+j=\overline{i}+\overline{j},\qquad n-i-j=h,
		\notag
	\end{equation}
	namely \eqref{Prop2_indth}.
\end{proof}

Together, the conditions from Proposition \ref{Prop1_statement} and Proposition \ref{Prop2_statement} imply \eqref{eq:ddasimpl_bis_J1}.
\begin{proposition}\label{Prop3_statement}
	If \eqref{Prop1} and \eqref{Prop2} hold, then so does \eqref{eq:ddasimpl_bis_J1}.	
\end{proposition}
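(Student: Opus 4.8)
The plan is to take the left-hand side of \eqref{eq:ddasimpl_bis_J1} and show that, once \eqref{Prop1} is used to discard the vanishing terms, the two remaining sums cancel term by term by virtue of \eqref{Prop2}. First I would apply \eqref{Prop1} to truncate the sums: in the first sum $\partial_s\partial_ja_0=0$ whenever $s\geq n-j+2$, and in the second $\partial_s\partial_ia_0=0$ whenever $s\geq n-i+2$. Hence \eqref{eq:ddasimpl_bis_J1} reduces to
\begin{equation}
\overset{n-j+1}{\underset{s=i+1}{\sum}}\,u^{s-i+1}\partial_{s}\partial_{j}a_0-\overset{n-i+1}{\underset{s=j+1}{\sum}}\,u^{s-j+1}\partial_{s}\partial_{i}a_0=0,
\notag
\end{equation}
which is precisely \eqref{eq:ddasimpl_bis_J1_bis} from the proof of Proposition \ref{Prop2_statement}.

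Next I would align the two sums by taking the exponent of $u$ as the new summation index. Setting $p=s-i+1$ in the first sum and $p=s-j+1$ in the second, both sums run over $p\in\{2,\dots,n-i-j+2\}$, and the identity to be proved becomes
\begin{equation}
\overset{n-i-j+2}{\underset{p=2}{\sum}}\,u^{p}\big(\partial_{p+i-1}\partial_{j}a_0-\partial_{p+j-1}\partial_{i}a_0\big)=0.
\notag
\end{equation}
For each fixed $p$ the two index pairs $(p+i-1,j)$ and $(p+j-1,i)$ have the same total, $(p+i-1)+j=(p+j-1)+i=p+i+j-1$, so they are admissible for \eqref{Prop2} as soon as $n-(p+i-1)-j\geq-1$. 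Since $p\leq n-i-j+2$ throughout the range, this inequality holds (with equality exactly at the top value $p=n-i-j+2$). Thus \eqref{Prop2} yields $\partial_{p+i-1}\partial_{j}a_0=\partial_{p+j-1}\partial_{i}a_0$ for every $p$ in the sum, so each bracket vanishes and the whole expression is zero.

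Finally I would observe that when $n-i-j\leq-1$ the summation range is empty and \eqref{eq:ddasimpl_bis_J1} holds trivially, all terms having already been killed by \eqref{Prop1}, so no separate case is needed. There is no genuine obstacle here: the entire content is the remark that reindexing by the power of $u$ pairs each term of the first sum with the term of the second whose derivative indices carry the same total weight, which is exactly the hypothesis of \eqref{Prop2}, while \eqref{Prop1} merely cuts the sums down to the range in which \eqref{Prop2} applies. The only point demanding care is verifying that the two truncated ranges coincide after reindexing and that the weight condition $n-i-j\geq-1$ is satisfied at the upper endpoint; both follow from the single bound $p\leq n-i-j+2$.
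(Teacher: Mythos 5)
Your proof is correct and follows essentially the same route as the paper's: truncate both sums using \eqref{Prop1}, reindex by the power of $u$ so that the two sums share a common range, and cancel term by term via \eqref{Prop2} (your index $p$ is just the paper's $t$ shifted by one). The boundary checks you flag — that the weight condition holds up to equality at the top of the range and that the empty-range case is trivial — are exactly the points the paper's computation relies on.
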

\begin{proof}
	Let us assume that \eqref{Prop1} and \eqref{Prop2} hold and let us fix $i,j\in\{1,\dots,n\}$. We have
	\begin{align}
		&\overset{n}{\underset{s=i+1}{\sum}}\,u^{s-i+1}\partial_{s}\partial_{j}a_0 - \overset{n}{\underset{s=j+1}{\sum}}\,u^{s-j+1}\partial_{s}\partial_{i}a_0\notag\\
		& \overset{\eqref{Prop1}}{=}\overset{n-j+1}{\underset{s=i+1}{\sum}}\,u^{s-i+1}\partial_{s}\partial_{j}a_0 - \overset{n-i+1}{\underset{s=j+1}{\sum}}\,u^{s-j+1}\partial_{s}\partial_{i}a_0\notag\\
		& \ \ =\overset{n-i-j+1}{\underset{t=1}{\sum}}\,u^{t+1}\partial_{t+i}\partial_{j}a_0 - \overset{n-i-j+1}{\underset{t=1}{\sum}}\,u^{t+1}\partial_{t+j}\partial_{i}a_0\notag\\
		& \ \ =\overset{n-i-j+1}{\underset{t=1}{\sum}}\,u^{t+1}\big(\partial_{t+i}\partial_{j}a_0-\partial_{t+j}\partial_{i}a_0\big)\overset{\eqref{Prop2}}{=}0,
		\notag
	\end{align}
 where the second equality is obtained by respectively substituting $s = t -i$ and $s = t-j$ in the two sums.
\end{proof}
\begin{remark}\label{rmk_equivsys}
	As a consequence of Propositions \ref{Prop1_statement}, \ref{Prop2_statement} and \ref{Prop3_statement}, condition \eqref{eq:ddasimpl_bis_J1} is equivalent to the system
 \begin{subequations}\label{Props12}
	\begin{equation}\label{Props12a}
\partial_i\partial_ja_0=\partial_{\overline{i}}\partial_{\overline{j}}a_0,\qquad \text{when }i+j=\overline{i}+\overline{j},
	\end{equation}
 where \begin{equation}\label{Props12b}
     	\partial_i\partial_ja_0=0,\qquad \text{for }n-i-j\leq-2,
 \end{equation}
 \end{subequations}
	as given by Propositions \ref{Prop1_statement},  \ref{Prop2_statement}. Thus, a function $a_0$ is a solution to \eqref{eq:ddasimpl_bis_J1} if and only if it solves the system \eqref{Props12}.
\end{remark}

So far, we have studied condition \eqref{eq:ddasimpl_bis_J1} for a function $a_0$ of the canonical coordinates $u^1,\dots,u^n$. Let us now specify the size of the Jordan block to be $n$ and denote $a_0 \equiv a_0^{(n)}$. When doing so we will write $a_0^{(k)}$ and $L^{(k)}$ (instead of $a_0$ and $L$, respectively) to denote the case of a Jordan block of generic size $k$. The following proposition establishes a link between the general solutions of \eqref{mainEQ} for the case of a single Jordan block of different sizes.

\begin{proposition}
     The derivative with respect to $u^k$, $\partial_k a_0^{(n)}$,  of a general solution to \eqref{eq:ddainL} for $L$ having a single Jordan block of size $n$, $a_0^{(n)}$, is the general solution of \eqref{eq:ddainL} for $L$ having a single Jordan block of size $n-k+1$:
	\begin{equation}
		d \cdot d_{L^{(n-k+1)}}(\partial_k a_0^{(n)})=\partial_k(d \cdot d_{L^{(n)}} a_0^{(n)})=0.
		\label{prop:dimensionbridge_eq}
	\end{equation}
 \label{prop:dimensionbridge}
\end{proposition}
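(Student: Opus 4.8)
The second equality in \eqref{prop:dimensionbridge_eq} is immediate: since $a_0^{(n)}$ solves \eqref{eq:main}, the two-form $d\cdot d_{L^{(n)}}a_0^{(n)}$ vanishes identically, hence so does its derivative $\partial_k$. The whole content therefore lies in the first equality, i.e.\ in showing that $b:=\partial_k a_0^{(n)}$ solves the single-block equation \eqref{eq:ddasimpl_bis_J1} with $n$ replaced by $m:=n-k+1$. Rather than manipulating the operator $d\cdot d_L$ directly, I would exploit the equivalent pointwise characterisation of Remark \ref{rmk_equivsys}: a function solves the size-$m$ version of \eqref{eq:ddasimpl_bis_J1} if and only if its second derivatives satisfy the size-$m$ analogues of \eqref{Props12a} and \eqref{Props12b}. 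So the plan is to verify these two conditions for $b$ and then invoke the size-$m$ version of Proposition \ref{Prop3_statement}.

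First I would record a reduction in the number of active variables. Applying \eqref{Prop1} (Proposition \ref{Prop1_statement}) to the size-$n$ solution $a_0^{(n)}$ gives $\partial_j\partial_k a_0^{(n)}=0$ whenever $j+k\geq n+2$, i.e.\ whenever $j\geq m+1$. Thus $\partial_j b=0$ for all $j\geq m+1$, so $b$ is genuinely a function of the first $m$ canonical coordinates $u^1,\dots,u^m$, which is exactly the variable set attached to a single Jordan block of size $m$. This settles the otherwise delicate point of why $\partial_k a_0^{(n)}$ may legitimately be regarded as a function on the smaller block.

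Next I would check the two conditions, using throughout that $\partial_i\partial_j b=\partial_k(\partial_i\partial_j a_0^{(n)})$. The equality condition \eqref{Props12a} for $b$ is the cheapest: if $i,j,\overline{i},\overline{j}\in\{1,\dots,m\}$ with $i+j=\overline{i}+\overline{j}$, then by \eqref{Props12a} for $a_0^{(n)}$ one has $\partial_i\partial_j a_0^{(n)}=\partial_{\overline{i}}\partial_{\overline{j}}a_0^{(n)}$, and applying $\partial_k$ gives $\partial_i\partial_j b=\partial_{\overline{i}}\partial_{\overline{j}}b$. For the vanishing condition \eqref{Props12b} for $b$, take $i,j\in\{1,\dots,m\}$ with $i+j\geq m+2$. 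When $k\geq 2$ (so that $m+1\leq n$) I would use \eqref{Props12a} for $a_0^{(n)}$ to rewrite $\partial_i\partial_j a_0^{(n)}=\partial_{\overline{i}}\partial_{m+1}a_0^{(n)}$ with $\overline{i}=i+j-m-1\in\{1,\dots,m-1\}$, so that
\begin{equation*}
\partial_i\partial_j b=\partial_k\partial_{\overline{i}}\partial_{m+1}a_0^{(n)}=\partial_{\overline{i}}\partial_{m+1}b=0,
\end{equation*}
the last step because $\partial_{m+1}b=0$ from the variable reduction above. The remaining case $k=1$ (where $m=n$) is immediate, since then $i+j\geq n+2$ already forces $\partial_i\partial_j a_0^{(n)}=0$ by \eqref{Prop1}, whence $\partial_i\partial_j b=\partial_1(\partial_i\partial_j a_0^{(n)})=0$. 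With \eqref{Props12a} and \eqref{Props12b} established for $b$ at level $m$, the size-$m$ version of Proposition \ref{Prop3_statement} yields $d\cdot d_{L^{(m)}}b=0$, which is the first equality in \eqref{prop:dimensionbridge_eq}.

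The only genuine hazard — and the step I would treat most carefully — is keeping every re-indexation inside the admissible range $\{1,\dots,n\}$ when invoking \eqref{Props12a} for $a_0^{(n)}$, together with the separate treatment of the boundary value $k=1$; these are precisely the places where a careless argument could silently differentiate an out-of-range coordinate. As an independent cross-check I would also sketch the direct route: differentiating the size-$n$ operator, $\partial_k\big(d\cdot d_{L^{(n)}}a_0^{(n)}\big)_{ij}$ produces, besides the size-$m$ operator applied to $b$, an ``anomalous'' pair of terms $\partial_{k+i-1}\partial_j a_0^{(n)}-\partial_{k+j-1}\partial_i a_0^{(n)}$ arising from differentiating the coordinate coefficients $u^{s-i+1}$, plus a tail over $s\geq m+1$; the anomalous pair cancels by \eqref{Props12a} (the two index sums coincide) and the tail vanishes by \eqref{Prop1}, recovering the same conclusion. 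Finally, the ``general solution'' wording follows from the explicit formula \eqref{sol-intro}, since as the $n$ free functions $F_1,\dots,F_n$ vary the image $\partial_k a_0^{(n)}$ sweeps out the $m$-parameter family solving the size-$m$ equation.
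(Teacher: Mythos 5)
Your proof is correct, and its main line of argument is organised differently from the paper's. The paper proceeds through Lemma \ref{lemma:diffprop}, a commutation identity valid for an \emph{arbitrary} function $f$ that expresses $\partial_k(d\cdot d_{L^{(n)}}f)_{ij}$ as $\big(d\cdot d_{L^{(n-k+1)}}(\partial_k f)\big)_{ij}$ plus an anomalous pair $\partial_i\partial_{k+j-1}f-\partial_j\partial_{k+i-1}f$ and a tail over $s\geq n-k+2$, and then kills the extra terms for $f=a_0^{(n)}$ using \eqref{Props12a} and \eqref{Props12b}; this is exactly the ``direct route'' you sketch as a cross-check. Your primary route instead verifies the size-$m$ analogues of \eqref{Props12a} and \eqref{Props12b} for $b=\partial_k a_0^{(n)}$ (with $m=n-k+1$) and invokes the size-$m$ version of Proposition \ref{Prop3_statement}. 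Both arguments rest on the same machinery (Propositions \ref{Prop1_statement}--\ref{Prop3_statement} and Remark \ref{rmk_equivsys}); what yours buys is that the variable-reduction step ($\partial_j b=0$ for $j\geq m+1$) is made explicit, which cleanly justifies why $\partial_k a_0^{(n)}$ may be regarded as a function on the smaller block, a point the paper leaves implicit, and your index bookkeeping in the case split $k\geq 2$ versus $k=1$ is sound. What the paper's route buys is a reusable identity (Lemma \ref{lemma:diffprop}) stated for generic $f$, which isolates exactly which consequences of \eqref{eq:main} are needed. One caution on your closing sentence: justifying the ``general solution'' wording by appeal to \eqref{sol-intro} is circular within the paper's logical order, since that formula is derived in Proposition \ref{prop:gensol} using the present proposition; but the paper's own proof likewise only establishes the displayed identity \eqref{prop:dimensionbridge_eq}, so this is not a defect of your argument relative to the paper's.
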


    In order to prove Proposition \ref{prop:dimensionbridge}, we need the following lemma.

\begin{lemma} 
	The derivative with respect to $u^k$ of a function $f(u^1,\dots,u^n)$, $\partial_k f$, satisfies the relation
	\begin{equation}
		\begin{aligned}
			\partial_k(d \cdot d_{L^{(n)}} f)_{ij}&=\big( d \cdot d_{L^{(n-k+1)}}(\partial_k f)\big)_{ij}+\partial_i\partial_{k+j-1}f-\partial_j\partial_{k+i-1}f\\
			&+\overset{n}{\underset{s=n-k+2}{\sum}}u^{s-j+1}\partial_i\partial_s(\partial_k f)-\overset{n}{\underset{s=n-k+2}{\sum}}u^{s-i+1}\partial_j\partial_s(\partial_k f)
		\end{aligned}
		\label{switch_genf}
	\end{equation}
	for every $i,j\in\{1,\dots,n\}$.
	\label{lemma:diffprop}
\end{lemma}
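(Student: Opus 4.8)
The plan is to prove \eqref{switch_genf} by a direct computation: differentiate the explicit single-block form of $d\cdot d_{L^{(n)}}f$ with respect to $u^k$, expand by Leibniz, and then reorganise the terms so that the ``head'' of each resulting sum reassembles the operator $d\cdot d_{L^{(n-k+1)}}$ applied to $\partial_k f$, leaving precisely the boundary pair and the tail sums displayed on the right-hand side of \eqref{switch_genf}. Throughout I would fix the convention for the $2$-form component once and for all as the de Rham differential of the one-form $d_L f$ with components $(d_L f)_p=L^s_p\,\partial_s f$, so that
\[
(d\cdot d_{L^{(n)}}f)_{ij}=\partial_i\big(L^s_j\partial_s f\big)-\partial_j\big(L^s_i\partial_s f\big).
\]
The two structural inputs that drive everything are \eqref{eq:diffL} and \eqref{eq:diffL2}: in the single-block case the entries of $L$ are the coordinates themselves, so $\partial_k L^s_i=\delta_{s,\,i+k-1}$ is a constant and every second derivative of $L$ vanishes.

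First I would apply $\partial_k$ and expand. Since the second derivatives of $L$ vanish, $\partial_k$ lands either on an entry of $L$ or on a derivative of $f$. The terms in which $\partial_k$ hits $L$ give, by \eqref{eq:diffL2}, the contribution $(\partial_k L^s_j)\,\partial_i\partial_s f-(\partial_k L^s_i)\,\partial_j\partial_s f=\partial_i\partial_{k+j-1}f-\partial_j\partial_{k+i-1}f$, which is exactly the boundary pair in \eqref{switch_genf}. The terms in which $\partial_k$ hits $f$ give $\partial_i\big(L^s_j\,\partial_s\partial_k f\big)-\partial_j\big(L^s_i\,\partial_s\partial_k f\big)$. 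Expanding these once more, the cross terms $(\partial_i L^s_j-\partial_j L^s_i)\,\partial_s(\partial_k f)$ cancel, because by \eqref{eq:diffL2} both $\partial_i L^s_j$ and $\partial_j L^s_i$ equal $\delta_{s,\,i+j-1}$.

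What remains is the genuinely second-order piece $L^s_j\,\partial_i\partial_s(\partial_k f)-L^s_i\,\partial_j\partial_s(\partial_k f)$, which by \eqref{eq:diffL} equals $\sum_{s=j+1}^{n}u^{s-j+1}\partial_i\partial_s(\partial_k f)-\sum_{s=i+1}^{n}u^{s-i+1}\partial_j\partial_s(\partial_k f)$ (the lowest $s=j$ and $s=i$ terms cancel against each other). The key step is to split each sum at $s=n-k+1$. The head $s\le n-k+1$ involves exactly the coordinates $u^1,\dots,u^{n-k+1}$, which are precisely the entries of the reduced operator $L^{(n-k+1)}$; hence, again by \eqref{eq:diffL} read off for a block of size $n-k+1$, the head reassembles literally into $\big(d\cdot d_{L^{(n-k+1)}}(\partial_k f)\big)_{ij}$. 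The tail $s\ge n-k+2$ is exactly the last two sums of \eqref{switch_genf}. Collecting the boundary pair, the reassembled smaller operator, and the tail then yields \eqref{switch_genf}.

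The statement is an identity valid for arbitrary $f$, so no deeper idea is required; the main obstacle is purely the index and range bookkeeping. One must track carefully which coordinate $u^{s-i+1}$ is selected by $\partial_k$ (namely $s=i+k-1$), and, above all, verify that the truncation point $s=n-k+1$ coincides with the size of the reduced block, so that the head sum is genuinely $d\cdot d_{L^{(n-k+1)}}(\partial_k f)$ and not merely something of the same shape. A secondary point requiring care is consistency of the sign and index convention for $(d\cdot d_L f)_{ij}$; fixing it via the de Rham differential of $d_L f$ as above keeps all signs aligned with the stated right-hand side.
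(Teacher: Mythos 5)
Your proposal is correct and follows essentially the same route as the paper: both apply $\partial_k$ by Leibniz to the explicit single-block expression $\sum_{s=j+1}^{n}u^{s-j+1}\partial_i\partial_s f-\sum_{s=i+1}^{n}u^{s-i+1}\partial_j\partial_s f$, identify the boundary pair $\partial_i\partial_{k+j-1}f-\partial_j\partial_{k+i-1}f$ from the terms where $\partial_k$ hits a coordinate entry of $L$, and recognise the truncation of the remaining sums at $s=n-k+1$ as $\big(d\cdot d_{L^{(n-k+1)}}(\partial_k f)\big)_{ij}$, leaving the displayed tails. The only difference is that you rederive the sum formula from $\partial_i(L^s_j\partial_sf)-\partial_j(L^s_i\partial_sf)$ (checking the cancellation of the cross terms), a step the paper takes as already established.
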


\begin{proof}
    	Let us consider a generic function $f$ of $u^1,\dots,u^n$ and fix ${i,j\in\{1,\dots,n\}}$. We have
	\begin{equation}
		\begin{aligned}
			\partial_k(d \cdot d_{L^{(n)}} f)_{ij}&=\partial_k\Bigg(\overset{n}{\underset{s=j+1}{\sum}}u^{s-j+1}\partial_i\partial_s f-\overset{n}{\underset{s=i+1}{\sum}}u^{s-i+1}\partial_j\partial_s f\Bigg)\\
			&=\partial_i\partial_{k+j-1} f-\partial_j\partial_{k+i-1} f\\
			& \quad +\overset{n}{\underset{s=j+1}{\sum}}u^{s-j+1}\partial_i\partial_s(\partial_k f)-\overset{n}{\underset{s=i+1}{\sum}}u^{s-i+1}\partial_j\partial_s (\partial_k f).
		\end{aligned}
		\notag
	\end{equation}
	On the other hand, 
	\begin{equation}
		\begin{aligned}
			\big( d \cdot d_{L^{(n-k+1)}}(\partial_k f)\big)_{ij}&=\overset{n-k+1}{\underset{s=j+1}{\sum}}u^{s-j+1}\partial_i\partial_s(\partial_k f)-\overset{n-k+1}{\underset{s=i+1}{\sum}}u^{s-i+1}\partial_j\partial_s(\partial_k f).
		\end{aligned}
		\notag
	\end{equation}
	This implies \eqref{switch_genf}.
\end{proof}

We are now ready to prove Proposition \ref{prop:dimensionbridge}.

 \begin{proof}
	Let us consider a solution $a_0^{(n)}$ of \eqref{eq:ddainL} for $L \equiv L^{(n)}$ having a single Jordan block of size $n$. As pointed out in Remark \ref{rmk_equivsys}, this requirement is equivalent to satisfying system \eqref{Props12}. In this particular case, we get
	\[
	\partial_i\partial_{k+j-1} a_0^{(n)}-\partial_j\partial_{k+i-1} a_0^{(n)}=0,
	\]
	by \eqref{Props12a}, and
	\[
	\overset{n}{\underset{s=n-k+2}{\sum}}u^{s-j+1}\partial_i\partial_s(\partial_k a_0^{(n)})\,\,\,-\overset{n}{\underset{s=n-k+2}{\sum}}u^{s-i+1}\partial_j\partial_s (\partial_k a_0^{(n)})=0,
	\]
	by  \eqref{Props12b} as $n-s-k \leq n-(n-k+2)-k = -2$.  Thus, using Lemma \ref{lemma:diffprop},
	\[
	d \cdot d_{L^{(n-k+1)}}(\partial_k a_0^{(n)})=\partial_k(d \cdot d_{L^{(n)}} a_0^{(n)}).
	\]
	Finally, since $d \cdot d_{L^{(n)}} a_0^{(n)}=0$, we obtain  \eqref{prop:dimensionbridge_eq}.
\end{proof}

\begin{lemma}
	A solution to \eqref{eq:ddasimpl_bis_J1} is a polynomial in $u^2,\dots,u^n$ with functions of $u^1$ as coefficients.
    \label{lemma:poly}
\end{lemma}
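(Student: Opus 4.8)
The plan is to work not with \eqref{eq:ddasimpl_bis_J1} directly but with its equivalent reformulation furnished by Remark \ref{rmk_equivsys}, namely the system \eqref{Props12}: the symmetry \eqref{Props12a}, which says that $\partial_i\partial_j a_0$ depends only on the sum $i+j$, together with the vanishing \eqref{Props12b}, $\partial_i\partial_j a_0=0$ whenever $i+j\geq n+2$. The goal is to produce a degree bound: I will show that every partial derivative of $a_0$ of order $\geq n$ taken solely with respect to $u^2,\dots,u^n$ vanishes identically, which is precisely the assertion that $a_0$ is a polynomial of degree at most $n-1$ in $u^2,\dots,u^n$ with coefficients depending only on $u^1$. (For $n=1$ there are no variables $u^2,\dots,u^n$ and the statement is trivial, so throughout I assume $n\geq 2$.)

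The first step is to promote the second-order symmetry \eqref{Props12a} to arbitrary order. As \eqref{Props12a} is an identity between smooth functions, I may differentiate it, obtaining $\partial_{k_1}\cdots\partial_{k_{m-2}}\partial_a\partial_b a_0=\partial_{k_1}\cdots\partial_{k_{m-2}}\partial_{a'}\partial_{b'}a_0$ whenever $a+b=a'+b'$ and all indices lie in $\{1,\dots,n\}$. Since partial derivatives commute, this says that the order-$m$ derivative $\partial_{i_1}\cdots\partial_{i_m}a_0$ is invariant under the elementary move replacing two of the indices $(a,b)$ by $(a-1,b+1)$, provided the new indices remain in $\{1,\dots,n\}$. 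Any two multisets of $m$ indices in $\{1,\dots,n\}$ with the same total are connected by such unit-transfer moves: each spread-increasing move strictly increases the bounded quantity $\sum_l i_l^2$ while preserving both the sum and the range, so iterating drives every configuration to a normal form supported on the extremes $\{1,n\}$ up to a single intermediate value, and configurations with the same $(m,\sigma)$ share this normal form. Hence $\partial_{i_1}\cdots\partial_{i_m}a_0$ depends only on $\sigma:=i_1+\cdots+i_m$.

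The second step turns \eqref{Props12b} into a vanishing statement for all orders. Given indices with sum $\sigma\geq n+m$, I use the freedom from the first step to pass to the extremal normal form, which consists of $k$ copies of $n$, one intermediate value $v$, and $m-k-1$ copies of $1$, with $k(n-1)+(v-1)\geq n$ forced by $\sigma\geq n+m$. This excludes $k=0$ and leaves either $k\geq 2$, giving two entries equal to $n$, or $k=1$ with $v\geq 2$, giving a pair $(n,v)$; in both cases there are two indices summing to at least $n+2$. Writing the order-$m$ derivative as $\partial_{(\cdots)}\big(\partial_a\partial_b a_0\big)$ with $a+b\geq n+2$ and invoking \eqref{Props12b} yields $\partial_{i_1}\cdots\partial_{i_m}a_0=0$ whenever $\sigma\geq n+m$.

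Finally, I specialise to derivatives in $u^2,\dots,u^n$ only: if $i_1,\dots,i_m\in\{2,\dots,n\}$ then $\sigma\geq 2m$, so as soon as $m\geq n$ one has $\sigma\geq 2m\geq m+n$ and the derivative vanishes by the previous step. Thus all order-$n$ pure derivatives of $a_0$ in the variables $u^2,\dots,u^n$ vanish identically, and the standard fact that a smooth function whose partials of order $n$ in a set of variables all vanish is a polynomial of degree at most $n-1$ in those variables (with the remaining dependence, here on $u^1$, absorbed into the coefficients) gives the claim; this degree bound is moreover consistent with the explicit solution \eqref{sol-intro}. I expect the delicate point to be the first step: making rigorous that the differentiated symmetry permits free redistribution of indices, i.e. that all equal-sum index multisets inside $\{1,\dots,n\}$ are connected by admissible unit-transfer moves that never leave the range. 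The remaining steps are then bookkeeping built on \eqref{Props12b} and the elementary characterisation of polynomials by vanishing high derivatives.
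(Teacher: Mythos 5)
Your proof is correct, but it follows a genuinely different route from the paper's. The paper proves Lemma \ref{lemma:poly} by induction on the block size $n$: the base case $n=2$ follows from $\partial_2\partial_2a_0^{(2)}=0$, and the inductive step uses Proposition \ref{prop:dimensionbridge} (the ``dimension bridge'' $d\cdot d_{L^{(n-k+1)}}(\partial_k a_0^{(n)})=0$) to conclude that $\partial_k a_0^{(N)}$ is, for $k\geq 2$, a solution for a strictly smaller block and hence a polynomial by the inductive hypothesis, after which one integrates back up. You instead bootstrap the second-order system \eqref{Props12} directly to all orders: differentiating \eqref{Props12a} shows that an order-$m$ derivative is invariant under unit transfers between two of its indices, your monovariant $\sum_l i_l^2$ correctly establishes that all equal-sum index multisets in $\{1,\dots,n\}^m$ reach a common normal form supported on $\{1,n\}$ up to one intermediate entry (the moves are reversible, and the normal form is unique for fixed $(m,\sigma)$), and the case analysis on that normal form ($k=0$ impossible, $k=1\Rightarrow v\geq 2$, $k\geq 2\Rightarrow$ a pair $(n,n)$) correctly extracts a pair of indices summing to at least $n+2$, so \eqref{Props12b} kills every derivative with $\sigma\geq n+m$. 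Your approach is self-contained --- it bypasses Proposition \ref{prop:dimensionbridge} entirely --- and it buys more than the lemma asserts, namely the explicit degree bound $\deg_{u^2,\dots,u^n}a_0\leq n-1$, which is consistent with \eqref{sol-intro}; the price is the combinatorial connectivity argument, which you rightly identified as the delicate point and which your argument does settle. The paper's route is shorter at this spot because the heavy lifting was already done in Proposition \ref{prop:dimensionbridge}, and it fits into the chain of results (Corollary \ref{corollary:dimbridgeFs}, Theorem \ref{thm:gensol}) that reuse the dimension bridge.
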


\begin{proof}
	We proceed by induction over $n$, starting from $n=2$ as base case. By \eqref{Props12b}, we have
	\[
	\partial_2\partial_2 a_0^{(2)}=0,
	\]
	implying that $a_0^{(2)}$ is a polynomial in $u^2$ with functions of $u^1$ as coefficients. For each $m\geq2$, let us denote by $\mathcal{P}^{2\dots m}_1$ the set of polynomials in $u^2,\dots,u^m$ with coefficients being functions of $u^1$. Let us fix $N\geq3$ and inductively assume $a_0^{(n)}\in\mathcal{P}^{2\dots n}_{1}$ for each $n\leq N-1$. We need to show that $a_0^{(N)}\in\mathcal{P}^{2\dots N}_{1}$.
	
	By Proposition \ref{prop:dimensionbridge}, we know that $\partial_k a_0^{(N)}$ satisfies $d\cdot d_{L^{(n-k+1)}}\big(\partial_k a_0^{(N)}\big)=0$ for each $k\in\{1,\dots,N\}$. In particular, for each $k\geq2$ we get $N-k+1\leq N-1$ and then, by the induction hypothesis, $\partial_k a_0^{(N)}\in\mathcal{P}^{2\dots N-k+1}_1\subset\mathcal{P}^{2\dots N}_1$. For $k=2$, we get $\partial_2 a_0^{(N)}\in\mathcal{P}^{2\dots N}_1$, implying $a_0^{(N)}\in\mathcal{P}^{2\dots N}_1$.
\end{proof}
\begin{proposition}
    If it exists, the general solution of $d \cdot d_L a_0^{(n)}$ for $L$ having a single Jordan block of size $n$ is of the form  
    \begin{equation}
        a_0^{(n)} = F_n(u^1) + \sum_{s>0} \dfrac{1}{s!} \sum_{k_1=2}^n \cdots \sum_{k_s=2}^{n} u^{k_1}\cdots u^{k_s}\left(\dfrac{d}{d u^1}\right)^{(s-1)}F_{n+s-\sum_{j=1}^s k_j}(u^1),
        \label{eq:gensol}
    \end{equation}
    where $F_1, \dots, F_n$ are arbitrary functions of $u^1$, and $F_{n+s-\sum_{j=1}^{s}k_j} = 0$ whenever ${n+s-\sum_{j=1}^{s}k_j \leq 0}$.
    \label{prop:gensol}
\end{proposition}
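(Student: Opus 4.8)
The plan is to work throughout with the equivalent system \eqref{Props12} (via Remark \ref{rmk_equivsys}) and with the fact, established in Lemma \ref{lemma:poly}, that any solution is a polynomial in $u^2,\dots,u^n$ whose coefficients are functions of $u^1$. I would prove two things: that the right-hand side of \eqref{eq:gensol} solves \eqref{Props12}, and that every solution is of this form. The first gives existence, the second gives generality, and together they yield the statement.

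For the verification I would differentiate the symmetric multisum in \eqref{eq:gensol}. A single derivative $\partial_i$ with $i\ge 2$ selects a factor $u^{k_l}=u^i$; by symmetry the $s$ choices coincide, the prefactor $s/s!$ becomes $1/(s-1)!$, and after relabelling one obtains a sum of the same shape with $n$ replaced by $n-i+1$ and the derivative order raised by one. Iterating once more gives, for every pair $(i,j)\neq(1,1)$, the uniform expression
\[
\partial_i\partial_j a_0^{(n)}=\sum_{t\ge 0}\frac{1}{t!}\sum_{k_1=2}^{n}\cdots\sum_{k_t=2}^{n}u^{k_1}\cdots u^{k_t}\left(\frac{d}{du^1}\right)^{t+1}\!F_{(n-i-j+2)+t-(k_1+\cdots+k_t)}(u^1).
\]
The case $i=j=1$ is genuinely exceptional, because $\partial_1$ raises the derivative order of the coefficients rather than lowering the monomial degree; but this is harmless, since $m=i+j=2$ admits no other representation and \eqref{Props12a} imposes no constraint there. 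Because the displayed expression depends on $(i,j)$ only through $i+j$, property \eqref{Props12a} holds for all $m=i+j\ge 3$, and because every subscript drops to $F_{\le 0}=0$ once $n-i-j+2\le 0$, property \eqref{Props12b} follows as well.

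For generality I would extract the boundary data of an arbitrary solution $a_0$ at $u^2=\cdots=u^n=0$, setting $F_n:=a_0|_{0}$ and $F_{n-i+1}:=\partial_i a_0|_{0}$ for $i=2,\dots,n$ (functions of $u^1$); evaluating \eqref{eq:gensol} at the origin shows these are precisely the parameters it reproduces. Let $\hat a_0$ denote the instance of \eqref{eq:gensol} built from this data; then $w:=a_0-\hat a_0$ is a polynomial solution of the \emph{linear} system \eqref{Props12} all of whose first derivatives vanish at $u^2=\cdots=u^n=0$, and it remains to prove the uniqueness statement that such a $w$ vanishes identically. The key tool is the sliding consequence of \eqref{Props12a}, obtained by transferring weight from the first index to the second until the first index equals $1$,
\[
\partial_i\partial_j w=\partial_1\big(\partial_{i+j-1}w\big),\qquad i+j\le n+1 .
\]
Writing $\phi_k:=\partial_k w$, a downward induction on $k$ from $k=n$ shows $\phi_k\equiv 0$: for every $i\ge 2$ one has $\partial_i\phi_k=0$ either directly from \eqref{Props12b} (when $i+k\ge n+2$) or, by the sliding relation, as $\partial_1\phi_{i+k-1}$ with $i+k-1>k$ and hence $\phi_{i+k-1}\equiv 0$ by the inductive hypothesis (when $i+k\le n+1$); thus $\phi_k$ is independent of $u^2,\dots,u^n$ and equals its vanishing boundary value $\phi_k|_{0}$. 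Consequently $w$ depends on $u^1$ alone and equals $w|_{0}=0$, so $a_0=\hat a_0$ has the form \eqref{eq:gensol}.

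The routine but delicate part is the double differentiation in the verification step, in particular the bookkeeping that isolates the exceptional $(1,1)$ entry and confirms the uniform dependence on $i+j$. The conceptual crux is the sliding relation: it converts the Hankel-type symmetry \eqref{Props12a}, together with the truncation \eqref{Props12b}, into a triangular recursion on the single-index derivatives $\phi_k$ which the vanishing boundary data collapses to zero, thereby pinning down the solution uniquely from the $n$ functions $F_1,\dots,F_n$.
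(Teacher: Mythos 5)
Your proposal is correct, but it reaches the statement by a genuinely different route than the paper. The paper's proof is a direct Taylor reconstruction: using Lemma \ref{lemma:poly} it expands $a_0^{(n)}$ at $P=(u^2,\dots,u^n)=(0,\dots,0)$ and identifies \emph{every} Taylor coefficient $\partial_{k_1}\cdots\partial_{k_s}a_0^{(n)}\big|_P$ with $F^{(s-1)}_{n-k+1}$, $k=\sum_j k_j-s+1$, by iterating Proposition \ref{Prop2_statement} (your ``sliding'' relation) and invoking Proposition \ref{Prop1_statement} for the truncation; this is self-contained and never needs to know that \eqref{eq:gensol} actually solves the equation --- that converse is deferred to Theorem \ref{thm:gensol} and proved there via Corollary \ref{corollary:dimbridgeFs}. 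You instead prove sufficiency first (your uniform formula for $\partial_i\partial_j a_0^{(n)}$ is, in effect, Corollary \ref{corollary:dimbridgeFs} applied twice, and it is correct, including the exceptional role of $(1,1)$ and the harmless over-extension of the index ranges to $n$ via the convention $F_{\le 0}=0$), and then obtain necessity as a uniqueness statement for the homogeneous linear system: $w=a_0-\hat a_0$ solves \eqref{Props12} with vanishing data at $P$, and the downward recursion $\partial_i\phi_k=\partial_1\phi_{i+k-1}$ or $0$ collapses $\phi_k=\partial_k w$ to zero. Both arguments rest on the same two pillars (Propositions \ref{Prop1_statement} and \ref{Prop2_statement}); the trade-off is that your organization delivers the Proposition and Theorem \ref{thm:gensol} in one sweep and isolates a reusable linear-uniqueness principle, at the cost of making the necessity direction logically dependent on the sufficiency computation, which the paper's shorter coefficient-matching proof avoids entirely (and which makes the ``if it exists'' caveat in the statement natural).
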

\begin{proof}
By Lemma \ref{lemma:poly}, we may consider $a_0^{(n)}$ as a polynomial in $u^2, \dots, u^n$ while treating $u^1$ as a parameter.Thus,  let's write  $a_0^{(n)}$ as
\begin{equation}
    a_0^{(n)} = F_n + \sum_{i=2}^{n}u^iF_{n-i+1} + \big\{\text{terms of higher orders in } u^2, \dots, u^n\big\}.
\end{equation}
Let $P$ denote the point $(u^2, \dots, u^n) = (0, \dots, 0)$. Notice that for each $k\in\{2,\dots,n\}$
\begin{equation}
    \partial_1 \partial_k a_0^{(n)}\big|_P = F_{n-k+1}'(u^1)
\end{equation}
and, by successive use of Proposition \ref{Prop2_statement},
\begin{equation}
	\partial_1 \partial_k a_0^{(n)}\big|_P = \partial_2 \partial_{k-1} a_0^{(n)}\big|_P = \cdots = \partial_j \partial_{k-j+1} a_0^{(n)}\big|_P,\qquad j\in\{1,\dots,k\}.
\end{equation}
This implies that
\begin{equation}
    [u^i u^j]\,a_0^{(n)} = F_{n-i-j+2}'(u^1),
\end{equation}
after relabelling $k-j \equiv i-1$ as $n-k+1 = n-(k-j)-j+1 = n-i-j+2$, where $[u^i u_j]\,a_0^{(n)}$ denotes the coefficient of $u^i u^j$ in $a_0^{(n)}$.
MIn more generality, we have that 
\begin{equation}
   F_{n-k+1}^{(s-1)}(u^1) = \partial_1^{(s-1)}\partial_k a_0^{(n)}\big|_P = \partial_{i_1}\cdots \partial_{i_{s-1}}\partial_{k-\sum_{j=1}^{s-1}i_j +s-1} a_0^{(n)}\big|_P. 
\end{equation}
Hence, by taking the Taylor expansion near the point $P$,
\begin{equation}
    a_0^{(n)} = F_n(u^1) + \sum_{s>0}\dfrac{1}{s!} \sum_{k_1=2}^n \cdots \sum_{k_s=2}^{n}(\partial_{k_1} \cdots \partial_{k_s} a_0^{(n)})\big|_P\,u^{k_1}\dots u^{k_s},
\end{equation}
we obtain \eqref{eq:gensol}, after relabelling $k= \sum_{j=1}^{s}k_j-s+1$. 

Finally, by means of Proposition \ref{Prop1_statement}, we have that $\partial_1 \partial_{\sum_{j=1}^{s}{k_j}-s+1} a_0^{(n)} = 0$ whenever ${n-1-\sum_{j=1}^{s}k_j+s-1+2 = n-\sum_{j=1}^sk_j +s \leq 0}$.

\end{proof}

\begin{corollary}
    For a function $a_0^{(n)}$ of the form \eqref{eq:gensol}, we have 
    \begin{equation}
            \partial_{n+1-k}a_0^{(n)}(F_1, \dots, F_n) = \begin{dcases}
            a_0^{(n)}(F_1', \dots, F_n'), & \text{ for } k = n,\\
                a_0^{(k)}(F_1', \dots, F_{k-1}', F_k), & \text{ otherwise.}
            \end{dcases}  
    \end{equation}
    \label{corollary:dimbridgeFs}
\end{corollary}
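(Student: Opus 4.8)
The plan is to differentiate the explicit formula \eqref{eq:gensol} and to recognise the result as another instance of the same formula, using Proposition \ref{prop:dimensionbridge} to fix its shape in advance. Throughout I set $m:=n+1-k$, so that the operator in question is $\partial_m$. Proposition \ref{prop:dimensionbridge} tells us that $\partial_m a_0^{(n)}$ satisfies $d\cdot d_{L^{(n-m+1)}}\big(\partial_m a_0^{(n)}\big)=0$, and since $n-m+1=k$ this says precisely that $\partial_{n+1-k}a_0^{(n)}$ is a solution of the block-size-$k$ equation. By Proposition \ref{prop:gensol} it is therefore of the form $a_0^{(k)}(G_1,\dots,G_k)$ for suitable functions $G_1,\dots,G_k$ of $u^1$ (for $k=n$, i.e. $m=1$, the block size is unchanged and the form is $a_0^{(n)}(G_1,\dots,G_n)$). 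The entire content of the corollary is then the identification of these $G_i$.

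For the generic case $k<n$ (so $m\geq 2$) I would reuse the coefficient extraction from the proof of Proposition \ref{prop:gensol}. Writing $P$ for the point $(u^2,\dots,u^n)=(0,\dots,0)$ and reading off the constant and linear parts of $a_0^{(k)}(G)$ in $u^2,\dots,u^k$ gives $G_k=\big(\partial_m a_0^{(n)}\big)\big|_P$ and $G_{k-\ell+1}=\big(\partial_\ell\partial_m a_0^{(n)}\big)\big|_P$ for $\ell\in\{2,\dots,k\}$. Since $\ell,m\geq 2$, I may substitute the values computed in that same proof, namely $\big(\partial_m a_0^{(n)}\big)\big|_P=F_{n-m+1}=F_k$ (from the linear term $s=1$) and $\big(\partial_i\partial_j a_0^{(n)}\big)\big|_P=F'_{\,n-i-j+2}(u^1)$ for $i,j\geq 2$. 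This yields $G_{k-\ell+1}=F'_{\,n-\ell-m+2}=F'_{\,k-\ell+1}$, i.e. $G_j=F'_j$ for $j\in\{1,\dots,k-1\}$, while $G_k=F_k$ stays undifferentiated; hence $\partial_{n+1-k}a_0^{(n)}=a_0^{(k)}(F_1',\dots,F_{k-1}',F_k)$. The boundary case $k=n$ ($m=1$) is quickest to treat by differentiating \eqref{eq:gensol} directly: $\partial_1$ turns $F_n$ into $F_n'$ and each coefficient $\left(\frac{d}{du^1}\right)^{(s-1)}F_{n+s-\sum_j k_j}$ into $\left(\frac{d}{du^1}\right)^{(s-1)}F'_{n+s-\sum_j k_j}$, which is exactly \eqref{eq:gensol} with every $F_j$ replaced by $F_j'$. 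The single qualitative difference with the previous case is visible here: the top coefficient is now $\big(\partial_1 a_0^{(n)}\big)\big|_P=\partial_1 F_n=F_n'$, because $\partial_1$ also differentiates the $u^1$-dependence, so the last argument is $F_n'$ rather than $F_n$.

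The only genuinely delicate point I anticipate is the bookkeeping of the summation ranges: after differentiating one must verify that passing from index ranges $\{2,\dots,n\}$ to $\{2,\dots,k\}$ neither loses nor creates terms. This is controlled entirely by the convention $F_{\le 0}=0$ in \eqref{eq:gensol}. Indeed, for an argument $n+s-\sum_j k_j$ (equivalently $k+t-\sum_j k_j$ after the reindexing $t=s-1$ that produces $a_0^{(k)}$) to be positive one needs $\sum_j(k_j-1)\leq k-1$, which already forces every $k_j\leq k$; hence extending any upper summation limit up to $n$ only adds identically vanishing terms. Once this is recorded, the matching of derivative orders $F^{(t)}_j=G^{(t-1)}_j$, equivalent to $G_j=F_j'$, is immediate and the corollary follows.
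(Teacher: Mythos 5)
Your treatment of the boundary case $k=n$ coincides with the paper's (direct differentiation of \eqref{eq:gensol}), and your coefficient identification $G_k=F_k$, $G_j=F_j'$ for $j\le k-1$ is arithmetically correct. However, the route you take for $k<n$ introduces a circularity that the paper's proof is specifically designed to avoid. You invoke Proposition \ref{prop:dimensionbridge} to conclude that $\partial_m a_0^{(n)}$ solves the block-size-$k$ equation, and then Proposition \ref{prop:gensol} to conclude that it must therefore have the form $a_0^{(k)}(G_1,\dots,G_k)$. But Proposition \ref{prop:dimensionbridge} applies to \emph{solutions} of $d\cdot d_{L^{(n)}}a_0^{(n)}=0$, whereas the hypothesis of Corollary \ref{corollary:dimbridgeFs} is only that $a_0^{(n)}$ is a function \emph{of the form} \eqref{eq:gensol}. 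At this stage of the paper only the implication ``solution $\Rightarrow$ form \eqref{eq:gensol}'' (Proposition \ref{prop:gensol}) is available; the converse is precisely Theorem \ref{thm:gensol}, and its proof consists of repeated applications of this very corollary to a function that is not yet known to be a solution. So your argument either establishes the corollary only for those functions of the form \eqref{eq:gensol} that are already known to solve the PDE (which is not enough for the use made of it in Theorem \ref{thm:gensol}), or it silently presupposes Theorem \ref{thm:gensol}, which is circular.

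The paper avoids this by proving the identity $\partial_k a_0^{(n)}(F_1,\dots,F_n)=a_0^{(n-k+1)}(F_1',\dots,F_{n-k}',F_{n-k+1})$ as a purely formal statement about the explicit polynomial \eqref{eq:gensol}: it groups the monomials $u^{i_1}\cdots u^{i_s}$ according to the multiplicity with which $u^k$ occurs, differentiates, shifts the two summation indices down by one, and checks (as you correctly observe at the end of your proposal) that the convention $F_{\le 0}=0$ allows the summation ranges to be truncated from $\{2,\dots,n\}$ to $\{2,\dots,n-k+1\}$. If you want to retain your slicker structure, you would first need to show, without appealing to the PDE, that $\partial_m a_0^{(n)}(F)$ again has the form \eqref{eq:gensol} for block size $k$ --- but that verification is essentially the computation the paper performs, so nothing is gained. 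The coefficient-matching part of your argument is fine and could be kept as a way of identifying the $G_j$ once that formal step is in place.
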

\begin{proof}
    The case of $k=n$ follows immediately from \eqref{eq:gensol}:
    \begin{equation}
    \begin{aligned}
        \partial_1 a_0^{(n)}(F_1, \dots, F_n) & = F_n'(u^1) + \sum_{s>0}\dfrac{1}{s!}\sum_{k_1=2}^{n} \cdots \sum_{k_s=2}^{n} \partial_1^{(s-1)} \left(F_{n+s-\sum_{j=1}^{s}k_j}'(u^1)\right) \\ & \equiv a_0^{(n)}(F_1', \dots, F_n').
         \end{aligned}
    \end{equation}
   We now consider the case $k\leq n-1$. Equivalently, we prove
   \begin{equation}
   	\partial_{k}a_0^{(n)}(F_1, \dots, F_n)=a_0^{(n-k+1)}(F_1', \dots, F_{n-k}', F_{n-k+1})
   	\label{corollary:dimbridgeFs_proofhelp}
   \end{equation}
	for $k\neq 1$. 
	Let us consider $k\geq2$. By \eqref{eq:gensol}, the left-hand side of \eqref{corollary:dimbridgeFs_proofhelp} reads
	\begin{align}
		\partial_{k}a_0^{(n)}&=\partial_{k}a_0^{(n)}(F_1, \dots, F_n)\notag\\
		&=\partial_k\bigg(F_n(u^1) + \sum_{s>0} \dfrac{1}{s!} \sum_{i_1=2}^n \cdots \sum_{i_s=2}^{n} u^{i_1}\cdots u^{i_s}\left(\tfrac{d}{d u^1}\right)^{(s-1)}F_{n+s-\sum_{j=1}^s i_j}(u^1)\bigg).
		\notag
	\end{align}
	In each term $[u^{i_1}\cdots u^{i_s}]\,a_0^{(n)}$ for ${i_1,\dots,i_s\in\{2,\dots,n\}}$, $u^k$ may be present. Let's denote by $m$ the number of times that $u^k$ appears in $u^{i_1}\cdots u^{i_s}$, namely let us set ${m=\overset{s}{\underset{j=1}{\sum}}\delta^{k}_{i_j}\in\{0,\dots,s\}}$. Given the symmetry of  \eqref{eq:gensol} in the indices $i_1,\dots,i_s$, for each $s>0$ we can rearrange the corresponding sums as
	\begin{align}
		& \quad \sum_{i_1=2}^n \cdots \sum_{i_s=2}^{n} u^{i_1}\cdots u^{i_s}\left(\tfrac{d}{d u^1}\right)^{(s-1)}F_{n+s-\sum_{j=1}^s i_j}(u^1)\notag\\
		&=\overset{s}{\underset{m=0}{\sum}}\binom{s}{m}\sum_{i_1=2}^n \cdots \sum_{i_{s-m}=2}^{n} u^{i_1}\cdots u^{i_{s-m}}(u^k)^m\left(\tfrac{d}{d u^1}\right)^{(s-1)}F_{n+s-\sum_{j=1}^{s-m} i_j-mk}(u^1)\notag.
	\end{align}
	Then\small
	\begin{align}
		\partial_{k}a_0^{(n)}&=\partial_k\bigg(\sum_{s>0} \dfrac{1}{s!} \overset{s}{\underset{m=0}{\sum}}\binom{s}{m}\sum_{i_1=2}^n \cdots \sum_{i_{s-m}=2}^{n} u^{i_1}\cdots u^{i_{s-m}}(u^k)^m\left(\tfrac{d}{d u^1}\right)^{(s-1)}F_{n+s-\sum_{j=1}^{s-m} i_j-mk}(u^1)\bigg)\notag\\
		&=\sum_{s>0}\overset{s}{\underset{m=1}{\sum}}\tfrac{1}{(m-1)!(s-m)!}\sum_{i_1=2}^n \cdots \sum_{i_{s-m}=2}^{n} u^{i_1}\cdots u^{i_{s-m}}(u^k)^{m-1}\left(\tfrac{d}{d u^1}\right)^{(s-1)}F_{n+s-\sum_{j=1}^{s-m} i_j-mk}(u^1).
		\notag
	\end{align}\normalsize
	By decreasing both of the indices $m$ and $s$ by $1$, we get\small
	\begin{align}
		\partial_{k}a_0^{(n)}&=\sum_{s>0}\overset{s-1}{\underset{m=0}{\sum}}\tfrac{1}{m!(s-m-1)!}\sum_{i_1=2}^n \cdots \sum_{i_{s-m-1}=2}^{n} u^{i_1}\cdots u^{i_{s-m-1}}(u^k)^{m}\left(\tfrac{d}{d u^1}\right)^{(s-1)}F_{n+s-\overset{s-m-1}{\underset{j=1}{\sum}} i_j-mk-k}(u^1)\notag\\
		&=\sum_{s>-1}\overset{s}{\underset{m=0}{\sum}}\tfrac{1}{m!(s-m)!}\sum_{i_1=2}^n \cdots \sum_{i_{s-m}=2}^{n} u^{i_1}\cdots u^{i_{s-m}}(u^k)^{m}\left(\tfrac{d}{d u^1}\right)^{(s)}F_{n+s+1-\overset{s-m}{\underset{j=1}{\sum}} i_j-mk-k}(u^1)\notag\\
		&=F_{n-k+1}(u^1)\notag\\
		& \quad +\sum_{s>0}\overset{s}{\underset{m=0}{\sum}}\tfrac{1}{m!(s-m)!}\sum_{i_1=2}^n \cdots \sum_{i_{s-m}=2}^{n} u^{i_1}\cdots u^{i_{s-m}}(u^k)^{m}\left(\tfrac{d}{d u^1}\right)^{(s-1)}F_{n+s+1-\overset{s-m}{\underset{j=1}{\sum}} i_j-mk-k}'(u^1),
		\notag
	\end{align}\normalsize
	where the sums over $i_1,\dots,i_{s-m}\in\{2,\dots,n\}$ can be truncated at $n-k+1$, as if (at least) one of the indices $i_1,\dots,i_{s-m}\in\{2,\dots,n\}$ is greater or equal than $n-k+2$ we get
	\begin{align}
		n+s+1-\sum_{j=1}^{s-m} i_j-mk-k&=n+s+1-\sum_{j=1}^{s} i_j-k\notag\\
		&\leq n+s+1-(n-k+2)-2(s-1)-k=1-s\leq0.
		\notag
	\end{align}
	By \eqref{eq:gensol}, the right-hand side of \eqref{corollary:dimbridgeFs_proofhelp} reads
	\begin{align}
		&a_0^{(n-k+1)}(F_1', \cdots, F_{n-k}', F_{n-k+1})=F_{n-k+1}(u^1)\notag\\
		&+\sum_{s>0} \dfrac{1}{s!} \sum_{i_1=2}^{n-k+1} \cdots \sum_{i_s=2}^{{n-k+1}} u^{i_1}\cdots u^{i_s}\left(\tfrac{d}{d u^1}\right)^{(s-1)}F_{{n-k+1}+s-\sum_{j=1}^s i_j}'(u^1),
		\notag
	\end{align}
	which becomes
	\begin{align}
		&a_0^{(n-k+1)}(F_1', \dots, F_{n-k}', F_{n-k+1})=F_{n-k+1}(u^1)\notag\\
		&+\sum_{s>0}\overset{s}{\underset{m=0}{\sum}}\tfrac{1}{m!(s-m)!} \sum_{i_1=2}^{n-k+1} \cdots \sum_{i_{s-m}=2}^{{n-k+1}} u^{i_1}\cdots u^{i_{s-m}}(u^k)^m\left(\tfrac{d}{d u^1}\right)^{(s-1)}F_{{n-k+1}+s-\overset{s-m}{\underset{j=1}{\sum}} i_j-mk}'(u^1).
		\notag
	\end{align}
   This proves \eqref{corollary:dimbridgeFs_proofhelp} for $k\geq2$.   
    \end{proof}
\begin{theorem}
    The general solution of $d \cdot d_L a_0^{(n)}=0$, for $L$ having a single Jordan block, is of the form \eqref{eq:gensol}.
    \label{thm:gensol}
\end{theorem}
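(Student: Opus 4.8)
The plan is to read Theorem~\ref{thm:gensol} as the conjunction of two statements: that every solution of $d\cdot d_L a_0^{(n)}=0$ is of the form \eqref{eq:gensol}, and, conversely, that every function of the form \eqref{eq:gensol} is a solution. The first (necessity) direction is already essentially Proposition~\ref{prop:gensol}, whose proof reconstructs any solution — polynomial by Lemma~\ref{lemma:poly} — from its Taylor coefficients at $P$ and matches them to the arbitrary functions $F_1,\dots,F_n$. What is genuinely left open there is the qualifier ``if it exists'': one must still verify that for \emph{arbitrary} data $F_1,\dots,F_n$ the explicit series \eqref{eq:gensol} actually solves the equation. By Remark~\ref{rmk_equivsys} (that is, Propositions~\ref{Prop1_statement}, \ref{Prop2_statement}, \ref{Prop3_statement}) this sufficiency reduces to checking that \eqref{eq:gensol} satisfies the system \eqref{Props12}, namely \eqref{Props12a} (the value of $\partial_i\partial_j a_0^{(n)}$ depends only on $i+j$) together with \eqref{Props12b} ($\partial_i\partial_j a_0^{(n)}=0$ whenever $i+j\geq n+2$).

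The key tool I would use is the dimensional-bridge identity of Corollary~\ref{corollary:dimbridgeFs}, which is an algebraic identity about the closed form \eqref{eq:gensol}, valid in every size and hence applicable repeatedly. Rewritten with $l=n+1-k$, it reads $\partial_1 a_0^{(n)}(F_1,\dots,F_n)=a_0^{(n)}(F_1',\dots,F_n')$ and, for $2\leq l\leq n$,
\[
\partial_l a_0^{(n)}(F_1,\dots,F_n)=a_0^{(n+1-l)}(F_1',\dots,F_{n-l}',F_{n-l+1}),
\]
so that each first derivative is again a function of the form \eqref{eq:gensol}, of size $n+1-l$, whose data are (mostly) derivatives of the original $F$'s. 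Applying this twice to $\partial_i\partial_j a_0^{(n)}$ with $i,j\geq 2$ and $i+j\leq n+1$ yields, after tracking which arguments are differentiated, the closed form $\partial_i\partial_j a_0^{(n)}=a_0^{(n+2-i-j)}(F_1'',\dots,F_{n+1-i-j}'',F_{n+2-i-j}')$, which manifestly depends only on the sum $i+j$; this is \eqref{Props12a} in this range. The remaining cases of \eqref{Props12a}, where $i=1$ or $j=1$, I would settle by comparing $\partial_1\partial_j a_0^{(n)}$ — computed from $\partial_1 a_0^{(n)}=a_0^{(n)}(F_1',\dots,F_n')$ — with $\partial_2\partial_{j-1}a_0^{(n)}$ and observing that both equal the same size-$(n+1-j)$ object.

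For \eqref{Props12b} I would exploit that, by the same identity, $\partial_j a_0^{(n)}=a_0^{(n+1-j)}(\dots)$ is, by the very shape of \eqref{eq:gensol}, a polynomial in $u^2,\dots,u^{n+1-j}$ with coefficients depending on $u^1$, hence a function of $u^1,\dots,u^{n+1-j}$ alone. Consequently $\partial_i$ annihilates it as soon as $i>n+1-j$, i.e. precisely when $i+j\geq n+2$, giving \eqref{Props12b}. With \eqref{Props12a} and \eqref{Props12b} in hand, Proposition~\ref{Prop3_statement} (equivalently Remark~\ref{rmk_equivsys}) yields \eqref{eq:ddasimpl_bis_J1}, that is $d\cdot d_{L}a_0^{(n)}=0$; combining this sufficiency with the necessity supplied by Proposition~\ref{prop:gensol} establishes that \eqref{eq:gensol} is exactly the general solution.

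The main obstacle I anticipate is purely combinatorial bookkeeping at the boundary of the index ranges: one must carefully distinguish the two branches of Corollary~\ref{corollary:dimbridgeFs} and handle the transitional values $i+j=n+1$ (where the inner function collapses to size $1$, contributing $F_1'$) and $i+j=n+2$ (where the second derivative index overshoots the reduced block size and forces the vanishing in \eqref{Props12b}). Keeping precise track of how many primes each $F_s$ carries through the two successive applications of the identity — and checking that the interior formula, the $i=1$ comparison, and the two boundary branches all agree — is the only delicate point; away from these boundaries the argument is routine. An equivalent packaging would be an induction on $n$ (base case $n=1$, where \eqref{eq:gensol} reduces to $a_0^{(1)}=F_1(u^1)$ and the equation is vacuous), using Corollary~\ref{corollary:dimbridgeFs} to reduce each $\partial_l a_0^{(n)}$ to a lower-size function of the form \eqref{eq:gensol} covered by the inductive hypothesis (in the spirit of Proposition~\ref{prop:dimensionbridge} and Lemma~\ref{lemma:diffprop}); I would present whichever turns out shorter.
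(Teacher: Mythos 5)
Your proposal is correct and follows essentially the same route as the paper: necessity via Proposition \ref{prop:gensol}, reduction of sufficiency to the system \eqref{Props12} via Remark \ref{rmk_equivsys}, and verification of \eqref{Props12a}--\eqref{Props12b} by two successive applications of Corollary \ref{corollary:dimbridgeFs}, with the same case split at $i=1$ or $j=1$. The only (immaterial) difference is that for \eqref{Props12b} the paper lets the resulting $a_0^{(n-i-j+2)}$ vanish because its size is non-positive, whereas you observe that $\partial_j a_0^{(n)}$ depends only on $u^1,\dots,u^{n+1-j}$; both rest on the same corollary.
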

\begin{proof}
    From Proposition \ref{prop:gensol}, we have that Equation \eqref{eq:main} implies the form \eqref{eq:gensol}, so we are left with proving the converse statement. That is, that the function $a_0^{(n)}=a_0^{(n)}(F_1, \dots, F_n)$ defined by \eqref{eq:gensol} does indeed satisfy \eqref{eq:main}. As pointed out in Remark \ref{rmk_equivsys}, this requirement is equivalent to the conditions of \eqref{Props12}.
    Hence it is sufficient to show that $a_0^{(n)}$, as defined by \eqref{eq:gensol}, satisfies  \begin{subequations}
	\begin{equation*}
\partial_i\partial_ja_0^{(n)}=\partial_{\overline{i}}\partial_{\overline{j}}a_0^{(n)},\qquad \text{when }i+j=\overline{i}+\overline{j},
	\end{equation*}
 where \begin{equation*}
     	\partial_i\partial_ja_0^{(n)}=0,\qquad \text{for }n-i-j\leq-2.
 \end{equation*}
 \end{subequations}

    Let us fix $i,j\in\{1,\dots,n\}$. As the statement  trivially holds for $n=1$, let us assume $n>1$.  Firstly let's consider the case where $i+j\geq n+2$, which implies $i,j\geq2$. By successive use of Corollary \ref{corollary:dimbridgeFs}, we have
    \begin{align}
    	\partial_i\partial_ja_0^{(n)}&=\partial_i\big(a_0^{(n-j+1)}(F_1', \dots, F_{n-j}', F_{n-j+1})\big)\notag\\
    	&=a_0^{(n-i-j+2)}(F_1'', \dots, F_{n-i-j+1}'', F_{n-i-j+2}'),
    	\notag
    \end{align}
	which vanishes, as $n-i-j+2\leq0$. Let us now consider the case where $i+j\leq n+1$. We have three possible subcases:
	\begin{itemize}
		\item[$(a)$] $i=j=1$;
		\item[$(b)$] $i=1$ and $j\geq2$, or $i\geq2$ and $j=1$;
		\item[$(c)$] $i,j\geq2$.
	\end{itemize}
	\textbf{Subcase (a).} If $i=j=1$ we must have that $\overline{i}=\overline{j}=1$, which trivially guarantees the fulfillment of \eqref{Props12}.
	\\\textbf{Subcase (b).} Let's consider $i=1$, $j\geq2$. If $\overline{i}=1$ (or, equivalently, $\overline{j}=1$), then \eqref{Props12} is trivially satisfied. Let us then assume  $\overline{i},\overline{j}\geq2$. By successive use of Corollary \ref{corollary:dimbridgeFs}, as above, we have
	\begin{align}
		\partial_{\overline{i}}\partial_{\overline{j}}a_0^{(n)}&=a_0^{(n-{\overline{i}}-{\overline{j}}+2)}(F_1'', \dots, F_{n-{\overline{i}}-{\overline{j}}+1}'', F_{n-{\overline{i}}-{\overline{j}}+2}')\notag\\
		&=a_0^{(n-i-j+2)}(F_1'', \dots, F_{n-i-j+1}'', F_{n-i-j+2}')\notag\\
		&\overset{i=1}{=}a_0^{(n-j+1)}(F_1'', \dots, F_{n-j}'', F_{n-j+1}').
		\notag
	\end{align}
	On the other hand, again by Corollary \ref{corollary:dimbridgeFs}, we get
	\begin{align}
		\partial_i\partial_ja_0^{(n)}&=\partial_1\big(a_0^{(n-j+1)}(F_1', \dots, F_{n-j}', F_{n-j+1})\big)\notag\\
		&=a_0^{(n-j+1)}(F_1'', \dots, F_{n-j}'', F_{n-j+1}')=\partial_{\overline{i}}\partial_{\overline{j}}a_0^{(n)}.
		\notag
	\end{align}
	\textbf{Subcase (c).} Let us consider $i,j\geq2$. Without loss of generality, we may impose that $\overline{i},\overline{j}\geq2$ (as if one is equal to $1$ we fall into subcase (b) by inverting the roles of $(i,j)$ and $(\overline{i},\overline{j})$). By Corollary \ref{corollary:dimbridgeFs}, we immediately have
	\begin{align}
		\partial_{\overline{i}}\partial_{\overline{j}}a_0^{(n)}&=a_0^{(n-{\overline{i}}-{\overline{j}}+2)}(F_1'', \dots, F_{n-{\overline{i}}-{\overline{j}}+1}'', F_{n-{\overline{i}}-{\overline{j}}+2}')\notag\\
		&=a_0^{(n-i-j+2)}(F_1'', \dots, F_{n-i-j+1}'', F_{n-i-j+2}')=\partial_i\partial_ja_0^{(n)}.
		\notag
	\end{align}     
\end{proof}
    
\begin{example}
In this example we reconstruct $a_0^{(4)}$ using \eqref{Props12}, and show that it satisfies Theorem \ref{thm:gensol} and Corollary \ref{corollary:dimbridgeFs}.

\vspace{0.3em}

Let $P$ denote the point $(u^2, \dots, u^n) = (0, \dots, 0)$. 
By \eqref{Props12b}, $\partial_i\partial_j a_0^{(4)} = 0$ whenever $i+j-6 \geq 0$, that is $\partial_i\partial_4 a_0^{(4)} = 0$ for $i\in\{2,3,4\}$ and $\partial_3\partial_3 a_0^{(4)} = 0$. This implies
\begin{equation}
	a_0^{(4)}=u^4\,F_1(u^1)+u^3\,A(u^1;u^2)+B(u^1;u^2)
\end{equation}
for some function $F_1$ of $u^1$ and some polynomials $A$, $B$ of $u^2$ with coefficients being functions of $u^2$. By \eqref{Props12a}, $\partial_i\partial_ja_0^{(4)}=\partial_{\overline{i}}\partial_{\overline{j}}a_0^{(4)}$ when $i+j=\overline{i}+\overline{j}$. Thus, the nontrivial information is provided by the two following cases.

\vspace{0.3em}

\begin{itemize}
	\item[$(a)$] $i=1$, $j=4$, $\overline{i}=2$, $\overline{j}=3$; \[
F_1'(u^1)=\partial_1\partial_4a_0^{(4)}=\partial_2\partial_3a_0^{(4)}=\partial_2 A(u^1;u^2),
\]
implying $A(u^1;u^2)=u^2\,F_1'(u^1)+F_2(u^1)$ for some function $F_2$ of $u^1$. Hence,
\begin{equation*}	a_0^{(4)}=u^4\,F_1(u^1)+u^3\,F_2(u^1)+u^2\,u^3\,F_1'(u^1)+B(u^1;u^2).
\end{equation*}

\vspace{0.5em}

	\item[$(b)$] $i=1$, $j=3$, $\overline{i}=2$, $\overline{j}=2$; \[
F_2'(u^1)+u^2\,F_1''(u^1)=\partial_1\partial_3a_0^{(4)}=\partial_2\partial_2a_0^{(4)}=\partial_2^2 B(u^1;u^2),
\]
implying $B(u^1;u^2)=\frac{(u^2)^2}{2}\,F_2'(u^1)+\frac{(u^2)^3}{6}\,F_1''(u^1)+u^2\,F_3(u^1)+F_4(u^1)$ for some functions $F_3$, $F_4$ of $u^1$. Hence,
\begin{align*}	a_0^{(4)}= \, & \, u^4\,F_1(u^1) + u^3\,F_2(u^1) + u^2\,F_3(u^1) + F_4(u^1) + u^2\,u^3\,F_1'(u^1) + \tfrac{(u^2)^2}{2}\,F_2'(u^1) \\ \, & \, + \tfrac{(u^2)^3}{6}\,F_1''(u^1),
\end{align*}
in accordance with Theorem \ref{thm:gensol}.
\end{itemize}

\vspace{0.4em}

Let us now show that $a_0^{(4)}$ does indeed satisfy Corollary \ref{corollary:dimbridgeFs} as well. By analogous reconstruction procedures, one gets
\begin{gather*}
      a_0^{(1)}=F_1(u^1), \qquad   a_0^{(2)} = F_1(u^1)u^2 + F_2(u^1), \\
       a_0^{(3)} = u^3 F_1(u^1) + \dfrac{(u^2)^2}{2}F_1'(u^1) +u^2F_2(u^1) + F_3(u^1). 
\end{gather*}
Thus, considering $\partial_k a_0^{(4)}$ for $k=1, 2, 3, 4$, we get
\begin{equation*}
      \partial_4 a_0^{(4)} = F_1(u^1) = a_0^{(1)}(F_1), \quad   \partial_3 a_0^{(4)} = F_2(u^1) + u^2 F_1'(u^1) = a_0^{(2)}(F_1', F_2), 
\end{equation*}
\begin{equation*}
     \partial_2 a_0^{(4)} = F_3(u^1)+ u^3 F_1'(u^1) + u^2 F_2'(u^1) + \tfrac{(u^2)^2}{2} F_1''(u^1) = a_0^{(3)}(F_1', F_2', F_3), 
\end{equation*}
\begin{align*}
     \partial_1 a_0^{(4)} = \, & \,  u^4 F_1'(u^1) + u^3 F_2'(u^1) + u^2 F_3'(u^1) + F_4'(u^1) + u^2 u^3 F_1''(u^1)  \\ \, & \, + (u^2)^2 \dfrac{F_2''(u^1)}{2}  + (u^2)^3 \dfrac{F_1'''(u^1)}{6} \,   = a_0^{(4)}(F_1', F_2', F_3', F_4'),
\end{align*}
    respectively.
\end{example}

\vspace{0.3em}

\subsection{The general case}
\subsubsection{A splitting lemma}
This lemma has been proved in \cite{BKM3}. Here we provide an alternative proof fit for our purposes.

\begin{lemma}
    The general solution $a_0$ of \eqref{eq:ddainL} corresponding to an operator $L$ of regular type takes the form 
\begin{equation}
    a_0 = \sum_{\alpha = 1}^r a_{0,\alpha},
\end{equation}
where $a_{0, \alpha}$ is the general  solution of $d \cdot d_{L^{m_\alpha}}a_{0,\alpha} = 0$.
\label{lemma:splitting}
\end{lemma}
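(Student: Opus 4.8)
The plan is to work throughout with the local form \eqref{eq:ddasimpl_bis} of the equation $d\cdot d_L a_0=0$, which under the regularity assumption is equivalent to $d\cdot d_L a_0=0$, and to split that system according to whether the two blocks involved coincide or not. The decisive observation is that the \emph{diagonal} equations ($\alpha=\beta$) have vanishing coefficient $u^{1(\alpha)}-u^{1(\beta)}$ and involve only derivatives in the coordinates of a single block, so they are exactly the single-block equation \eqref{eq:ddasimpl_bis_J1} with $n$ replaced by $m_\alpha$; whereas the \emph{off-diagonal} equations ($\alpha\neq\beta$) couple only the mixed second derivatives $\partial_{i(\alpha)}\partial_{j(\beta)}a_0$ across two distinct blocks. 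I would prove the two inclusions (a sum of block-solutions is a solution, and conversely every solution splits) separately.

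For the easy inclusion, suppose $a_0=\sum_\alpha a_{0,\alpha}$ where each $a_{0,\alpha}$ is a function of the block-$\alpha$ coordinates alone solving $d\cdot d_{L^{m_\alpha}}a_{0,\alpha}=0$. For $\alpha\neq\beta$ every mixed derivative $\partial_{i(\alpha)}\partial_{j(\beta)}a_{0,\gamma}$ vanishes, since $\gamma$ cannot equal both $\alpha$ and $\beta$; hence all three terms of each off-diagonal equation of \eqref{eq:ddasimpl_bis} vanish identically. Each diagonal equation, after discarding the summands $a_{0,\gamma}$ with $\gamma\neq\alpha$ (which carry no block-$\alpha$ derivatives), reduces to the single-block equation for $a_{0,\alpha}$, true by hypothesis. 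Thus $a_0$ solves \eqref{eq:ddainL}.

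For the converse I would first show that any solution satisfies $\partial_{i(\alpha)}\partial_{j(\beta)}a_0=0$ for all $\alpha\neq\beta$ and all admissible $i,j$. Fixing $\alpha\neq\beta$ and writing $G_{i,j}:=\partial_{i(\alpha)}\partial_{j(\beta)}a_0$, the off-diagonal equation becomes
\[
\big(u^{1(\alpha)}-u^{1(\beta)}\big)\,G_{i,j}+\sum_{s=i+1}^{m_\alpha}u^{(s-i+1)(\alpha)}\,G_{s,j}-\sum_{s=j+1}^{m_\beta}u^{(s-j+1)(\beta)}\,G_{i,s}=0,
\]
which is structurally identical to \eqref{eq:gammadistinctequa} in Case a of the proposition preceding Theorem \ref{mainTh} and to the mechanism of Lemma \ref{LemmadnablaBfromtriangular}. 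I would argue by descending induction on $i+j$: at the top value $i+j=m_\alpha+m_\beta$ (so $i=m_\alpha$, $j=m_\beta$) both sums are empty and $\big(u^{1(\alpha)}-u^{1(\beta)}\big)G_{m_\alpha,m_\beta}=0$; since $u^{1(\alpha)}-u^{1(\beta)}$ is not identically zero by regularity (distinct blocks have distinct eigenvalues), continuity forces $G_{m_\alpha,m_\beta}=0$. In the inductive step every term of the two sums carries strictly larger index-sum $s+j>i+j$ or $i+s>i+j$, hence vanishes by the inductive hypothesis, leaving again $\big(u^{1(\alpha)}-u^{1(\beta)}\big)G_{i,j}=0$ and so $G_{i,j}=0$.

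Once all cross-block mixed second derivatives vanish, a standard separation argument—grouping block $1$ against the remaining blocks and inducting on $r$—gives $a_0=\sum_\alpha a_{0,\alpha}$, with each $a_{0,\alpha}$ depending only on the block-$\alpha$ coordinates (an additive constant may be absorbed into any one summand). Restricting the diagonal equations of \eqref{eq:ddasimpl_bis} to $a_{0,\alpha}$ then shows each $a_{0,\alpha}$ solves $d\cdot d_{L^{m_\alpha}}a_{0,\alpha}=0$, which completes the proof. The main obstacle is the descending induction establishing the vanishing of the mixed derivatives, but since it transcribes almost verbatim the triangular arguments already carried out in the paper, the remaining steps are routine bookkeeping.
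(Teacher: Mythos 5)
Your proposal is correct and follows essentially the same route as the paper: both arguments reduce to showing that all cross-block mixed second derivatives $\partial_{i(\alpha)}\partial_{j(\beta)}a_0$ vanish by a descending induction that exploits the triangular structure of the off-diagonal equations and the non-vanishing of $u^{1(\alpha)}-u^{1(\beta)}$, after which the splitting and the reduction of the diagonal equations to the single-block case are immediate. The only differences are cosmetic: you run a single descending induction on $i+j$ where the paper nests two inductions (first on $j$ with $i=m_\alpha$ fixed, then on $i$), and you spell out the easy converse inclusion that the paper leaves implicit.
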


\begin{proof}
Let $\alpha \neq \beta$. We shall prove Lemma \ref{lemma:splitting} in three steps as follows. 

  \begin{enumerate}
       \item  Letting $i = m_\alpha(\alpha)$, and $j= m_{\beta}(\beta)$ in  \eqref{eq:ddasimpl}, gives
    \begin{equation*}
      (u^{1(\alpha)}-u^{1(\beta)}) \partial_{m_\alpha(\alpha)} \partial_{m_{\beta}(\beta)} a_0 = 0.
    \end{equation*}
    Thus, $a_0$ cannot have any terms containing both $u^{m_\alpha(\alpha)}$ and $u^{m_{\beta}(\beta)}$.
    \item Let $i=m_{\alpha}(\alpha)$ and assume that the statement holds for $j \geq p + 1$, for some fixed $p$ such that $p-\overset{\beta-1}{\underset{\sigma=1}{\sum}}\,m_\sigma\in\{1,\dots,m_\beta-1\}$. Let's show that the statement holds for $j = p$. By setting $i = m_\alpha(\alpha)$, $j=p$ in \eqref{eq:ddasimpl}, we get
    \begin{equation*}
    \begin{aligned}
     \, & \,  ( u^{1(\alpha)}- u^{1(\beta)}) \partial_{m_{\alpha}(\alpha)} \partial_p a_0 -  \sum_{t = p+1}^{m_{\beta(\beta)}} u^{(t-p+1)(\beta)} \partial_t \partial_{m_{\alpha}(\alpha)} a_0 = 0 \\
     \implies \, & \, ( u^{1(\alpha)}- u^{1(\beta)}) \partial_{m_{\alpha}(\alpha)} \partial_p a_0 = 0,
       \end{aligned}
    \end{equation*}
    by the induction hypothesis, implying $\partial_{m_{\alpha}(\alpha)} \partial_p a_0 = 0$. Hence, as ${u^{1(\alpha)} \neq u^{1(\beta)}}$, $a_0$ does not have any terms containing both $u^{m_\alpha(\alpha)}$ and $u^{j(\beta)}$, $\forall \, j(\beta)$. 
    
    \item Now, let's vary $i \in \{1(\alpha), \dots, m_{\alpha}(\alpha)\}$, with base case $(i,j) = (m_\alpha(\alpha), j(\beta))$, as proved in step 2.  Thus, we assume that $\partial_{i(\alpha)} \partial_{j(\beta)} a_0 = 0$, $\forall \, i > q$, for some fixed $q\in\{1,\dots,m_\alpha-1\}$. Setting $(i,j) = (q, j)$ in \eqref{eq:ddasimpl} gives
    \begin{equation*}
        \begin{aligned}
         \, & \,    \sum_{s = q}^{m_{\alpha}} u^{(s-q+1)(\alpha)} \partial_{s(\alpha)} \partial_{j(\beta)} a_0 - \sum_{t = j}^{m_{\beta}} u^{(t-j+1)(\beta)} \partial_{t(\beta)} \partial_{q(\alpha)} a_0 = 0\\
            \implies \, & \, (u^{1(\alpha)}-u^{1(\beta)})\partial_{q(\alpha)} \partial_{j(\beta)} a_0 + \underbrace{\sum_{s = q+1}^{m_{\alpha}}u^{(s-q+1)(\alpha)}\partial_{s(\alpha)} \partial_{j(\beta)} a_0}_{ = \, 0 \text{ by hypothesis.}} \notag\\&- \sum_{t = j+1}^{m_{\beta}} u^{(t-j+1)(\beta)} \partial_{t(\beta)} \partial_{q(\alpha)} a_0 = 0.
        \end{aligned}
    \end{equation*}
	By an inductive argument over $j$, analogous to step $(2)$, one can easily show that also $\partial_{q(\alpha)}\partial_{j(\beta)}a_0=0$ for each $j\in\{1,\dots,m_\beta\}$, concluding the proof.
     \end{enumerate}
\end{proof}
Lemma \ref{lemma:splitting} implies that in order to characterise solutions of \eqref{eq:ddainL} it is sufficient to consider operators $L$ with a single Jordan block. 
\subsubsection{Linearity and flatness}
We prove that the linearity of $a_0$ amounts to the flatness of the connection $\nabla$ which, by Theorem \ref{mainTh} and Proposition \ref{LinIn}, is uniquely determined by the following three requirements:
\begin{subequations}
	\begin{equation}
		\nabla e=0,
		\label{eq:eflat}
	\end{equation}
	\begin{equation}
		\nabla \text{ is torsionless, }
		\label{eq:nablanotorsion}
	\end{equation}
	\begin{equation}
		d_\nabla(L-a_0\,I)=0.
		\label{eq:dnablaLminus}
	\end{equation}
\end{subequations}
\begin{remark}
	As shown in \cite{ABLR}, and recalled in Remark \ref{RemarkFlatness}, once \eqref{rc-intri} is satisfied then the flatness of the connection reduces to the condition
	\begin{equation}
		R(e,\cdot)=0,
		\notag
	\end{equation}
	which in canonical coordinates reads
	\begin{equation}
		e\big(\Gamma^i_{jk}\big)=0,\qquad i,j,k\in\{1,\dots,n\}.
		\label{eGammas0}
	\end{equation}
	\label{rmk:newflatcond}
\end{remark}
\begin{proposition}
	If $a_0$ is linear then $\nabla$ is flat.
\end{proposition}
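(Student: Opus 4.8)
The plan is to reduce flatness to the vanishing of $e$ applied to each Christoffel symbol and then to show that this vanishing propagates through the very recursion that determines $\nabla$. Since $(\circ,e,\nabla)$ is an F-manifold with compatible connection by Theorem \ref{mainTh}, condition \eqref{rc-intri} holds automatically; hence, by Remark \ref{rmk:newflatcond}, flatness of $\nabla$ is equivalent to \eqref{eGammas0}, namely $e\big(\Gamma^i_{jk}\big)=0$ for all indices. In the David--Hertling coordinates of Theorem \ref{DavidHertlingth} the unit is $e=\sum_{\sigma=1}^r\partial_{1(\sigma)}$, so $e$ is the sum of the derivatives along the main variable of each block, and this is the operator I must show annihilates every Christoffel symbol.

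First I would record the effect of linearity. Writing $X=E-a_0\,e$, so that $X^{i(\alpha)}=u^{i(\alpha)}-a_0\,\delta^i_1$, linearity of $a_0$ makes every component $X^{i(\alpha)}$ an affine function of the canonical coordinates; consequently all its first derivatives $\partial_{j(\beta)}X^{i(\alpha)}$ are constants and all its second derivatives vanish. Two structural computations are the crux. First, the denominators occurring in the recursive formulas for the Christoffel symbols in the proof of the existence-and-uniqueness proposition preceding Theorem \ref{mainTh} are $X^{1(\alpha)}-X^{1(\beta)}=u^{1(\alpha)}-u^{1(\beta)}$ and $X^{2(\alpha)}=u^{2(\alpha)}$; both are annihilated by $e$, since $e\big(u^{1(\alpha)}\big)=1$ for every $\alpha$ and $e\big(u^{2(\alpha)}\big)=0$, whence $e(1/D)=-e(D)/D^2=0$ for each such denominator $D$. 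Second, every factor $X^{(\cdots)}$ that multiplies another Christoffel symbol inside those formulas carries an upper index $\geq 2$, being always of the form $X^{(s-k+1)(\sigma)}$ with $s\geq k+1$, and one computes $e\big(X^{m(\sigma)}\big)=\delta^m_1\big(1-e(a_0)\big)=0$ for $m\geq 2$.

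Next I would run the induction mirroring exactly the order in which the Christoffel symbols are solved for in that proof: case $a$ (pairwise distinct blocks, where $\Gamma=0$ and there is nothing to check), case $b$, the unit-flatness relations, and then cases $c$ and $d$. In the base cases a Christoffel symbol equals a constant first derivative of $X$ divided by one of the denominators $D$; applying $e$ and using $e(1/D)=0$ together with $e(\text{const})=0$ gives $e(\Gamma)=0$. For the inductive step, each remaining Christoffel symbol equals $1/D$ times a sum of constant derivative-of-$X$ terms and products $\Gamma'\,X^{(\geq 2)}$ with $\Gamma'$ already treated; applying $e$, the prefactor $1/D$ passes through, the derivative-of-$X$ terms die, and by the Leibniz rule $e\big(\Gamma'\,X^{(\geq 2)}\big)=e(\Gamma')\,X^{(\geq 2)}+\Gamma'\,e\big(X^{(\geq 2)}\big)=0$ by the inductive hypothesis $e(\Gamma')=0$ and the second computation above. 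The unit-flatness relations are linear in already-treated symbols with constant coefficients $\pm 1$, so they too yield $e(\Gamma)=0$. Hence $e\big(\Gamma^i_{jk}\big)=0$ for all indices, and $\nabla$ is flat.

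I expect the main obstacle to be purely organisational bookkeeping: checking uniformly across all four cases and the unit-flatness step that every denominator is $e$-closed and that every $X$-factor multiplying a Christoffel symbol has upper index $\geq 2$, so that the recursion satisfied by the quantities $e(\Gamma^i_{jk})$ becomes \emph{homogeneous} with vanishing base cases. Once these two facts are verified, the induction is forced and no genuine analytic difficulty remains.
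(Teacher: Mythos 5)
Your proposal is correct and is essentially the paper's argument: both reduce flatness to $e\big(\Gamma^i_{jk}\big)=0$ via Remark \ref{RemarkFlatness}, and both rest on the same two computations (that linearity of $a_0$ makes the inhomogeneous terms $e$-closed, and that the $X$-coefficients multiplying Christoffel symbols — the off-diagonal entries of $V$ and the differences of diagonal ones — are annihilated by $e$). The only difference is packaging: the paper applies $\mathcal{L}_e$ to the defining system and invokes uniqueness of its solution to conclude $e(\Gamma^i_{jk})=0$, whereas you unwind that uniqueness into an explicit induction through the recursion; the content is the same.
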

\begin{proof}
	Above, we proved existence and uniqueness of a solution $\big\{\Gamma^{i}_{jk}\big\}_{i,j,k\in\{1,\dots,n\}}$ of the system
	\begin{subequations}\label{sysGammas}
		\begin{equation}\label{sysGammasI}
			T^i_{jk}=\Gamma^i_{jk}-\Gamma^i_{kj}=0,\qquad i,j,k\in\{1,\dots,n\},
		\end{equation}
		\begin{equation}\label{sysGammasII}
			(d_{\nabla}V)^i_{jk}=\partial_jV^i_k-\partial_kV^i_j+\Gamma^i_{js}V^s_k-\Gamma^i_{ks}V^s_k=0,\qquad i,j,k\in\{1,\dots,n\},
		\end{equation}
		\begin{equation}\label{sysGammasIII}
			\nabla_je^i=\Gamma^i_{js}e^s=0,\qquad i,j\in\{1,\dots,n\},
		\end{equation}
	\end{subequations}
	where $V=X\circ$ for some vector field realising
	\begin{itemize}
		\item $X^{1(\alpha)}\neq X^{1(\beta)}$ for $\alpha\neq\beta$, $\alpha,\beta\in\{1,\dots,r\}$,
		\item $X^{2(\alpha)}\neq0$, $\alpha\in\{1,\dots,r\}$.
	\end{itemize}
	In particular, we consider $X=E-a_0\,e$. Since $a_0$ is assumed to be linear, so are the components of $X$, and of $V$ as well. By taking the Lie derivatives of the torsion $T$, of $d_{\nabla}V$ and of $\nabla e$ along the unit vector field $e$, in canonical coordinates we get the following homogeneous linear system for the quantities $e\big(\Gamma^i_{jk}\big)$:
	\begin{eqnarray}\label{eGamma1}
		(\mathcal{L}_e(T))^i_{jk}&=&e\big(\Gamma^i_{jk}\big)-e\big(\Gamma^i_{kj}\big)=0,\qquad i,j,k\in\{1,\dots,n\},\\\label{eGamma2}
		(\mathcal{L}_e(\nabla  e))^i_{j}&=&e\big(\Gamma^i_{js}\big)e^s=0,\qquad i,j\in\{1,\dots,n\},\\\label{eGamma3}
	(\mathcal{L}_e(d_{\nabla} V))^i_{jk}&=&e\big(\Gamma^i_{js}\big)V^s_k-e\big(\Gamma^i_{ks}\big)V^s_k=0,\qquad i,j,k\in\{1,\dots,n\},
		\label{eq:ednablaV}	
	\end{eqnarray}
	where, to get the last part of the system, we have used the identity
	\[e\big(\partial_jV^i_k\big)=e\big(\partial_kV^i_j\big)=0\]
	that follows from the linearity of the components of $V$ and  the identity
	\begin{equation}
		\Gamma^i_{js}e\big(V^s_k\big)-\Gamma^i_{ks}e\big(V^s_j\big)=0.\label{AuxGammaVanishSymm}
	\end{equation}
In particular, the identity \eqref{AuxGammaVanishSymm} follows from observing that
	\begin{align}
		&\Gamma^{i(\alpha)}_{j(\beta)s(\sigma)}e\big(V^{s(\sigma)}_{k(\gamma)}\big)-\Gamma^{i(\alpha)}_{k(\gamma)s(\sigma)}e\big(V^{s(\sigma)}_{j(\beta)}\big)=\Gamma^{i(\alpha)}_{j(\beta)s(\gamma)}e\big(V^{s(\gamma)}_{k(\gamma)}\big)-\Gamma^{i(\alpha)}_{k(\gamma)s(\beta)}e\big(V^{s(\beta)}_{j(\beta)}\big)\notag\\
		&=\Gamma^{i(\alpha)}_{j(\beta)k(\gamma)}e\big(u^{1(\gamma)}-a_0-u^{1(\beta)}+a_0\big)+\overset{m_\gamma}{\underset{s=k+1}{\sum}}\Gamma^{i(\alpha)}_{j(\beta)s(\gamma)}e\big(V^{s(\gamma)}_{k(\gamma)}\big)-\overset{m_\beta}{\underset{s=j+1}{\sum}}\Gamma^{i(\alpha)}_{k(\gamma)s(\beta)}e\big(V^{s(\beta)}_{j(\beta)}\big)\notag\\
		&=\Gamma^{i(\alpha)}_{j(\beta)k(\gamma)}e\big(u^{1(\gamma)}-u^{1(\beta)}\big)+\overset{m_\gamma}{\underset{s=k+1}{\sum}}\Gamma^{i(\alpha)}_{j(\beta)s(\gamma)}e\big(V^{s(\gamma)}_{k(\gamma)}\big)-\overset{m_\beta}{\underset{s=j+1}{\sum}}\Gamma^{i(\alpha)}_{k(\gamma)s(\beta)}e\big(V^{s(\beta)}_{j(\beta)}\big)=0,
		\notag
	\end{align}
	as $e=\overset{r}{\underset{\sigma=1}{\sum}}\partial_{1(\sigma)}$ and $V^{s(\gamma)}_{k(\gamma)}$ does not depend on $\{u^{1(\sigma)}\}_{\sigma\in\{1,\dots,r\}}$ when $s\geq k+1$.
	
	Clearly, the system $(\ref{eGamma1},\ref{eGamma2},\ref{eGamma3})$ admits the trivial solution
	 \[e\big(\Gamma^i_{jk}\big)=0,\qquad i,j,k\in\{1,\dots,n\}.\]
	  Uniqueness follows the observation that the linear system $(\ref{eGamma1},\ref{eGamma2},\ref{eGamma3})$ for the quantities  $\big\{e(\Gamma^i_{jk})\big\}_{i,j,k\in\{1,\dots,n\}}$ coincides with the homogeneous part of the linear system for  the quantities $\big\{\Gamma^i_{jk}\big\}_{i,j,k\in\{1,\dots,n\}}$ (just replacing $\Gamma^i_{jk}$ with $e\big(\Gamma^i_{jk}\big)$) whose solution is unique by Theorem \ref{mainTh} and Proposition \ref{LinIn}.
\end{proof}
\begin{proposition}
	If $\nabla$ is flat then $a_0$ is linear.
\end{proposition}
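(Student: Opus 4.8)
The plan is to exploit the explicit form of the defining condition together with the fact that, since condition \eqref{rc-intri} already holds on the F-manifold constructed in Theorem \ref{mainTh}, flatness of $\nabla$ is equivalent (by Remark \ref{rmk:newflatcond}) to the single requirement $e\big(\Gamma^i_{jk}\big)=0$, i.e. \eqref{eGammas0}. So I would assume \eqref{eGammas0} and aim to deduce that every second derivative of $a_0$ vanishes. The first step is to rewrite the defining condition \eqref{eq:dnablaLminus}, $d_\nabla(L-a_0\,I)=0$, in the tensorial form
\[
\nabla_j L^i_k-\nabla_k L^i_j=\delta^i_k\,\partial_j a_0-\delta^i_j\,\partial_k a_0,
\]
which follows at once from $\nabla_l\delta^i_k=0$.

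The key step is to apply the derivation $e=\overset{r}{\underset{\sigma=1}{\sum}}\,\partial_{1(\sigma)}$ to this identity. Two facts make the left-hand side collapse: first, $e$ has constant coefficients, hence commutes with every $\partial_l$; second, a one-line computation from \eqref{eq:diffL} gives $e\big(L^i_j\big)=\delta^i_j$. Using these together with the hypothesis $e\big(\Gamma^i_{jk}\big)=0$, each term $e\big(\nabla_j L^i_k\big)$ reduces to $\Gamma^i_{js}\delta^s_k-\Gamma^s_{jk}\delta^i_s=\Gamma^i_{jk}-\Gamma^i_{jk}=0$, so the whole left-hand side vanishes. On the right-hand side $e$ passes through the Kronecker deltas and commutes with $\partial_j$, giving $\delta^i_k\,\partial_j\big(e(a_0)\big)-\delta^i_j\,\partial_k\big(e(a_0)\big)=0$ for all $i,j,k$. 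Choosing $i=k\neq j$ yields $\partial_j\big(e(a_0)\big)=0$ for every $j$, so that $e(a_0)$ is locally constant.

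Finally I would decouple the blocks using the splitting lemma (Lemma \ref{lemma:splitting}): writing $a_0=\overset{r}{\underset{\alpha=1}{\sum}}\,a_{0,\alpha}$ with $a_{0,\alpha}$ depending only on the coordinates of the $\alpha$-th block and solving the single-block equation, one has $e(a_0)=\overset{r}{\underset{\sigma=1}{\sum}}\,\partial_{1(\sigma)}a_{0,\sigma}$, so that differentiating the constant $e(a_0)$ in the direction $u^{j(\beta)}$ leaves only one surviving term and gives $\partial_{1(\beta)}\partial_{j(\beta)}a_{0,\beta}=0$ for all $\beta$ and all $j\in\{1,\dots,m_\beta\}$. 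Propositions \ref{Prop1_statement} and \ref{Prop2_statement}, applied inside block $\beta$, then promote this to the vanishing of all second derivatives: for $i+j\le m_\beta+1$, Proposition \ref{Prop2_statement} gives $\partial_{i(\beta)}\partial_{j(\beta)}a_{0,\beta}=\partial_{1(\beta)}\partial_{(i+j-1)(\beta)}a_{0,\beta}=0$, while for $i+j\ge m_\beta+2$, Proposition \ref{Prop1_statement} gives $0$ directly. Hence each $a_{0,\alpha}$, and therefore $a_0$, is affine-linear.

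I expect the only genuinely delicate point to be the passage from ``$e(a_0)$ is constant'' to full linearity. The $e$-argument by itself controls only the single combination $\overset{r}{\underset{\sigma=1}{\sum}}\,\partial_{1(\sigma)}\partial_j a_0$, which is much weaker than linearity; it is the block-decoupling provided by the splitting lemma that turns this into the block-by-block statement $\partial_{1(\beta)}\partial_{j(\beta)}a_{0,\beta}=0$, and it is Propositions \ref{Prop1_statement}--\ref{Prop2_statement} (equivalently the reduced system \eqref{Props12}) that upgrade the vanishing of these ``first-row'' second derivatives to the vanishing of all of them. Verifying $e\big(L^i_j\big)=\delta^i_j$ and the collapse of the left-hand side is routine but must be carried out carefully, since it is precisely the mechanism that converts the flatness hypothesis into a condition purely on $a_0$.
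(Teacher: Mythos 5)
Your proposal is correct and follows essentially the same route as the paper: both arguments apply the derivation $e$ to the condition $d_\nabla(L-a_0 I)=0$, use $e(\Gamma^i_{jk})=0$ (via Remark \ref{rmk:newflatcond}) together with $e(L)=I$ to extract $\partial_j\big(e(a_0)\big)=\sum_{\sigma}\partial_{1(\sigma)}\partial_j a_0=0$, and then invoke the splitting lemma and the single-block system \eqref{Props12} to upgrade the vanishing of the first-row second derivatives to full linearity. Your tensorial rewriting of the defining condition is a slightly cleaner packaging of the same computation, but the substance is identical.
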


\begin{proof}
  As in the converse direction, we consider the equation \eqref{eGamma3} obtained by taking the Lie derivative in the direction of the unit vector field, $e = \sum_{\alpha = 1}^r \partial_{1(\alpha)}$, where $r$ denotes the number of Jordan blocks, of the equation $d_{\nabla}V = 0$, \eqref{sysGammasII}.
    Now, however, we assume that $e(\Gamma^k_{ij}) = 0$ for any $\{i,j,k\}$. Furthermore, as $u^{1(\alpha)}$, for any $\alpha$, only appears on the diagonal of $V$, we get that $\Gamma^k_{il}e(V^j_l) \delta^l_{j} - \Gamma^k_{jl} e(V^i_l)\delta^l_{i} = 0 $, which leaves us with
    \begin{equation}
        e(\partial_i V_j^k) - e(\partial_j V^k_i) = 0.
    \end{equation}
    Note that we can only obtain non-trivial equations by setting $k = j$ or $k = i$, while  keeping $i \neq j$. Without loss of generality we consider the case of $k = j \neq i$ giving
    \begin{equation}
        \sum_{\alpha} \partial_{1(\alpha)} \partial_i V^j_j  = \sum_{\alpha} \partial_i \partial_{1(\alpha)}a_0 = 0, 
    \end{equation}
    where  we have used that $V^j_j = (L - a_0 I)^j_j = u^{1(\beta)}-a_0$, for $j$ referring to the $\beta$-th block. By the splitting lemma, Lemma \ref{lemma:splitting}, we get that
    \begin{equation}
       \partial_{1(\beta)}  \partial_{s(\beta)} a_{0, \beta} = 0,
    \end{equation}
    where $\beta$ is the block associated to the label $i \equiv s(\beta)$. This, however, is analogous to consider the one-block case of dimension $m_{\beta}$. Therefore, we may make use of \eqref{Props12a}, which implies that  
    \begin{equation}
        \partial_{p(\beta)} \partial_{q(\beta)} a_0^{(m_{\beta})} = 0 \Leftrightarrow \partial_{1(\beta)} \partial_{(p + q - 1)(\beta)} a_0^{(m_\beta)} = 0.
    \end{equation}
    Thus, we have that that $a_{0, \alpha}$ must be linear, by setting $s(\beta) = (p + q - 1)(\beta)$. Note that while this is only valid for $0 < s = p + q - 1 < m_{\beta} + 2$, the remaining cases follow by \eqref{Props12b}.     Finally, this must hold for any block $\beta$, as $i$ is arbitrary, which implies that $a_0$ is linear by the splitting lemma, Lemma \ref{lemma:splitting}.
\end{proof}

Thus, we have proved the following.
\begin{theorem}\label{lin=flat}
	Let $a_0$ be a solution to \eqref{eq:main}. Then the F-manifold with compatible connection and flat unit determined by $d_\nabla(L-a_0\,I)=0$, is flat if and only if $a_0$ is linear.
\end{theorem}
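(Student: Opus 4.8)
The statement is a biconditional, so the plan is to treat the two implications separately, in both cases exploiting the fact that the connection $\nabla$ is \emph{uniquely} pinned down by $\nabla e=0$, vanishing torsion, and $d_\nabla(L-a_0\,I)=0$ (Theorem \ref{mainTh} together with Proposition \ref{LinIn}). The common reduction is Remark \ref{rmk:newflatcond}: since $(M,\circ,e,\nabla)$ is already an F-manifold with compatible connection, condition \eqref{rc-intri} holds automatically, so flatness of $\nabla$ is equivalent to the single requirement $e(\Gamma^i_{jk})=0$ in canonical coordinates. Thus in both directions the real object of study is the collection of functions $e(\Gamma^i_{jk})$, and the unifying technique is to differentiate the defining equations of $\nabla$ along the unit vector field $e=\sum_\alpha\partial_{1(\alpha)}$.

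For the direction ``$a_0$ linear $\Rightarrow$ $\nabla$ flat'', the idea is that when $a_0$ is linear the components of $X=E-a_0\,e$, and hence of $V=X\circ$, are linear functions of the canonical coordinates. Applying $\mathcal{L}_e$ to the torsion equation, to $\nabla e=0$, and to $d_\nabla V=0$ then produces a homogeneous linear system in the unknowns $e(\Gamma^i_{jk})$. The key observation I would make is that the $\partial V$ terms drop out, since their $e$-derivatives vanish by linearity, and that, with the auxiliary identity $\Gamma^i_{js}e(V^s_k)-\Gamma^i_{ks}e(V^s_j)=0$, this system is \emph{exactly} the homogeneous part of the linear system that determines $\big\{\Gamma^i_{jk}\big\}$. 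Uniqueness of that system (again by Theorem \ref{mainTh} and Proposition \ref{LinIn}) forces the only solution to be trivial, i.e. $e(\Gamma^i_{jk})=0$, giving flatness.

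For the converse ``$\nabla$ flat $\Rightarrow$ $a_0$ linear'', I would again apply $\mathcal{L}_e$ to $d_\nabla V=0$, but now feed in the hypothesis $e(\Gamma^i_{jk})=0$. The Christoffel contributions then collapse, leaving $e(\partial_iV^k_j)-e(\partial_jV^k_i)=0$. Choosing $k=j\neq i$ and using $V^j_j=u^{1(\beta)}-a_0$ turns this into $\sum_\alpha\partial_i\partial_{1(\alpha)}a_0=0$. The Splitting Lemma \ref{lemma:splitting} localises this to a single block, $\partial_{1(\beta)}\partial_{s(\beta)}a_{0,\beta}=0$, and condition \eqref{Props12a} upgrades it to $\partial_{p(\beta)}\partial_{q(\beta)}a_0=0$ for all $p,q$ (the remaining index ranges being covered by \eqref{Props12b}); summing over blocks via the Splitting Lemma then yields linearity of $a_0$.

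The main obstacle, I expect, is the forward direction and specifically the auxiliary cancellation $\Gamma^i_{js}e(V^s_k)-\Gamma^i_{ks}e(V^s_j)=0$, which is what makes the $\mathcal{L}_e$-differentiated system close up into precisely the homogeneous part of the uniqueness system. Verifying it requires using that each ``main'' coordinate $u^{1(\alpha)}$ appears only on the diagonal of $V$, together with the precise lower-triangular Toeplitz block structure, so that the entries $V^{s(\gamma)}_{k(\gamma)}$ with $s\ge k+1$ are independent of the $u^{1(\sigma)}$ and the $a_0$-contributions cancel in pairs. Once this identity and the uniqueness statement of Theorem \ref{mainTh} are in hand, the biconditional follows by combining the two implications.
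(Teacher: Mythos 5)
Your proposal is correct and follows essentially the same route as the paper's proof: both directions proceed by applying $\mathcal{L}_e$ to the defining system for $\nabla$, using Remark \ref{rmk:newflatcond} to reduce flatness to $e(\Gamma^i_{jk})=0$, invoking the auxiliary cancellation $\Gamma^i_{js}e(V^s_k)-\Gamma^i_{ks}e(V^s_j)=0$ together with uniqueness from Theorem \ref{mainTh} and Proposition \ref{LinIn} for the forward implication, and combining the Splitting Lemma with \eqref{Props12a}--\eqref{Props12b} for the converse. You have also correctly identified the verification of the auxiliary identity via the Toeplitz block structure as the main technical point, exactly as in the paper.
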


\subsection{Examples}
Below we list, for all the regular cases up to\footnote{We omit $n = 1$ as this case is necessarily semisimple.} (and including) dimension $4$, the explicit expressions of $a_0$ and the non-vanishing Christoffel symbols, obtained via the three constraints \eqref{eq:eflat} - \eqref{eq:dnablaLminus}. Moreover, case by case, we show the following equivalences.

\vspace{1em}

 \emph{Equivalence 1.} This equivalence is between condition \eqref{eq:3RC} and condition \eqref{eq:main}, starting from the first nontrivial case of dimension $3$  (the two-dimensional case is degenerate since condition \eqref{eq:3RC} follows from the identity $R^i_{jkl}=-R^i_{jlk}$).
    
In agreement with the general theory the solutions to \eqref{eq:ddainL} in dimensions $n \leq 4$ do indeed give \eqref{eq:3RC}. Thus, in the below, we show the converse statement. That is, if we impose \eqref{eq:3RC} then $a_0$ must have the form obtained from  \eqref{eq:ddainL}.

Any element of the $c$--tensor is either 0 or 1. See Table \ref{tab:exC} for the non-zero elements for dimensions $ 1 < n \leq 4$.

\begin{table}[h!]
\begin{center}

{
\begin{tabular}{ |c|c|c|  }

\hline
$n$   &  Jordan type  & Non-zero elements of $c$ \\
\hline
\hline
$2$ &  $2 \times 2$  & $c^1_{11}, c^2_{12}, c^2_{21}$ \\
\hline
$3$ &  $3 \times 3$  &  $ c^1_{11},  c^2_{12},  c^2_{21},  c^3_{13}, c^3_{22}, c^3_{31}$ \\
\hline
 &  $2 \times 2 + 1 \times 1$  &  $   c^1_{11},  c^2_{12},  c^2_{21},  c^3_{33}$ \\
\hline
$4$ &  $4 \times 4$  &  $ c^1_{11}, c^2_{12}, c^2_{21}, c^3_{13}, c^3_{22},  c^3_{31}, c^4_{14}, c^4_{23},  c^4_{32},  c^4_{41}$ \\
\hline
 &  $3 \times 3 + 1 \times 1$  &  $c^1_{11}, c^2_{12}, c^2_{21}, c^3_{13}, c^3_{22}, c^3_{31}, c^4_{44}$ \\
\hline
 &  $2 \times 2 + 2 \times 2$  &  $c^1_{11}, c^2_{12}, c^2_{21}, c^3_{33}, c^4_{34}, c^4_{43}$ \\
\hline
 &  $2 \times 2 + 1 \times 1 + 1 \times 1$  &  $ c^1_{11}, c^2_{12}, c^2_{21}, c^3_{33}, c^4_{44}$ \\
\hline
\end{tabular}
}
\end{center}
\caption{Nonzero elements of the $c$-tensor in dimensions  2, 3, and  4.}
\label{tab:exC}
\end{table}

 \emph{Equivalence 2.} This equivalence states a correspondence between linearity of solutions $a_0$ to \eqref{eq:main} and  bi-flatness of the associated F-manifold structure. By Theorem \ref{lin=flat} we have proved the correspondence between flatness of $\nabla$ and linearity of $a_0$, but we have not touched upon the dual structure. Thus, in the below, we give the Christoffel symbols associated to the second flat structure, which we will denote by $\widetilde{\Gamma}^i_{jk}$, establishing the existence and uniqueness of such a structure. Additionally, using these, the flatness of the dual connection can be easily verified.  

\vspace{1em}
 
 \emph{Equivalence 3.} This equivalence is between linear solutions, $a_0$, of the special form appearing in \cite{LP23},
 \begin{equation}
	a_0=\overset{r}{\underset{\alpha=1}{\sum}}\,\epsilon_{1(\alpha)}\,u^{1(\alpha)},
	\label{a0_linear_maincoords}
\end{equation} and the associated F-manifold to be Riemannian with Killing vector field (as studied in \cite{ABLR} and recalled in Subsection \ref{subsectionRiemannian}).  Note that, \eqref{a0_linear_maincoords} is a linear function of the first coordinate of each block $\{u^{1(\alpha)}\}_{\alpha\in\{1,\dots,r\}}$, with coefficients $\{\epsilon_{1(\alpha)}\}_{\alpha\in\{1,\dots,r\}}$. This is due to the fact that when a generic linear function $a_0$ is taken as a starting point, the only solution to (\eqref{gcompatc}, \eqref{Riemannian_flatFmnf_bridge}) and satisfying $\mathcal{L}_eg=0$ (obtained with Maple) is a degenerate metric, while non-degenerate solutions are permitted when starting from functions of the form \eqref{a0_linear_maincoords}. In the forthcoming examples, we list the matrices representing such metrics in canonical coordinates.

In the remainder of this section we will, for the purpose of readability, write coordinates $u^i$ with lower indices, i.e. as $u_i$.

\subsubsection{Dimension $2$: $2\times 2$ Jordan block}
The operator of multiplication by the Euler vector field is represented by the matrix
\beq
L=\begin{bmatrix}
u_{1} & 0\cr
u_{2} & u_{1} \cr
\end{bmatrix}.
\eeq

The general solution  of the  equation \eqref{eq:ddainL} is
\begin{equation}
    a_0=F_1(u_1) \, u_2+F_2(u_1).\label{ddasol2}
\end{equation}

The non-vanishing Christoffel symbols are given by
\begin{equation}
    \Gamma^i_{jk} =\frac{(-1)^{i+1} \delta_{j,2} \, \delta_{k,2} }{u_2} \,  \partial_{3-i}a_0.
    \label{Chr:2}
\end{equation}

\underline{The second equivalence}\\
Let $a_0 = \epsilon_1\,u_1 + \epsilon_2\,u_2$. Then, the Christoffel symbols for the dual structure are given by
\begin{gather*}
    \widetilde{\Gamma}^1_{11} = \dfrac{\epsilon_2 u_2 - u_1}{u_1^2}, \quad  \widetilde{\Gamma}^1_{12}  = -\dfrac{\epsilon_2}{u_1}, \quad  \widetilde{\Gamma}^1_{22} = \dfrac{\epsilon_2}{u_2}, \quad \widetilde{\Gamma}^2_{11} = \dfrac{u_2(1-\epsilon_1)}{u_1^2}, \\ \widetilde{\Gamma}^2_{12} = \dfrac{\epsilon_1-1}{u_1}, \quad  \widetilde{\Gamma}^2_{22} = - \dfrac{\epsilon_1}{u_2}.   
\end{gather*}

\underline{The third equivalence}\\
The metric corresponding to the linear function $a_0=\epsilon_1\,u_1$ is represented by the matrix
\begin{equation}
	g=\begin{bmatrix}
		F_1(u_2)&C_1\,u_2^{-\epsilon_1}\\C_1\,u_2^{-\epsilon_1}&0
	\end{bmatrix}
\end{equation}
where $C_1$ is an arbitrary constant and $F_1$ is an arbitrary function of a single variable (this example appears  in \cite{ABLR}).

\subsubsection{Dimension $3$: $3\times 3$ Jordan block}
The operator of multiplication by the Euler vector field is represented by the matrix
\beq
L=\begin{bmatrix}
u_{1} & 0 & 0\cr
u_{2} & u_{1} &  0\cr
u_{3} & u_{2} & u_{1}
\end{bmatrix}.
\eeq
The general solution  of the  equation \eqref{eq:ddainL} is
\begin{equation}
    a_0=F_1( u_1)u_3+\frac{1}{2}F'_1(u_1)u_2^{2}+F_2(u_1) u_{{2}}+F_3(u_1).
\label{ddasol3}
\end{equation}

The non-vanishing Christoffel symbols are
\begin{equation}
\begin{gathered}
    \Gamma^{1}_{22}   = \frac{1}{u_2}\left(\partial_2 a_0 - \frac{u_3}{u_2} \, \partial_3 a_0 \right), \qquad \Gamma^{1}_{23} = - \frac{u_2}{u_3} \,  \Gamma^{2}_{23} = \Gamma^{2}_{33}  = \frac{\partial_3 a_0}{u_2},\\
   \Gamma^{2}_{22}   = \frac{1}{u_2}\left(\frac{u_3^2}{u_2^2} \,  \partial_3 a_0 - \partial_1 a_0 \right), \qquad  \Gamma^{3}_{22} = -\frac{u_3}{u_2} \, \Gamma^3_{23} = \frac{u_3}{u_2^2}\left(\partial_1 a_0 - \frac{u_3}{u_2} \, \partial_2 a_0 \right), \\
   \Gamma^{3}_{33}  = -\frac{\partial_2 a_0}{u_2}.
\end{gathered}
\label{Chr:3}
\end{equation}

\underline{The first equivalence.}\\
It is sufficient to consider \eqref{eq:3RC} for \begin{equation}
    (i,j,k,l,m) = \overset{a)}{\overbrace{(2,1,1,2,3)}},  \, \overset{b)}{\overbrace{(2,2,1,3,2)}}, \, \overset{c)}{\overbrace{(2,3,1,2,3)}}.
\end{equation} 

Inputting \eqref{Chr:3} into \eqref{eq:3RC}, together with $c$ as in Table \ref{tab:exC}, we get the following. 

\begin{gather*}
    a):  \dfrac{u_3 \partial_3^2 a_0}{u_2^2} = 0, \quad b): \dfrac{u_3(u_3\partial_3^2 a_0 + u_2 \partial_2 \partial_3 a_0)}{u_2^3} = 0,\\
    c):  \dfrac{u_3(u_3 \partial_2 \partial_3 a_0 + u_2(\partial_2^2 a_0 - \partial_1 \partial_3 a_0))}{u_2^3} = 0.
\end{gather*}

\begin{itemize}
    \item a):  $a_0$ must be linear in $u_3$. 
\item b): After using the fact that $a_0$ must be linear in $u_3$ as obtained in a), we must have no terms depending on both $u_2$ and $u_3$.
\item c): Using the result of b), we get that $ \partial^2_2 a_0 = \partial_1 \partial_3 a_0$, which is the only remaining coefficient to be specified. 
\end{itemize}

This exactly specifies the solution \eqref{ddasol3}.

\vspace{1em}

\underline{The second equivalence}\\
Let $a_0 = \epsilon_1\,u_1 + \epsilon_2\,u_2 + \epsilon_3\,u_3$. Then, the non-zero Christoffel symbols for the dual structure are then given by
\begin{gather*}
    \widetilde{\Gamma}^1_{11} = \dfrac{\epsilon_2 u_1 u_2 + \epsilon_3(u_1 u_3 - u_2^2) - u_1^2}{u_1^3}, \qquad  \widetilde{\Gamma}^1_{12}  = -\dfrac{\epsilon_2 u_1 - \epsilon_3 u_2}{u_1^2}, \qquad  \widetilde{\Gamma}^1_{13} = -\dfrac{\epsilon_3}{u_1}, \\ \widetilde{\Gamma}^1_{22} = \dfrac{\epsilon_2 u_1 u_2 - \epsilon_3(u_1 u_3 +  u_2^2) }{u_1 u_2^2}, \qquad \widetilde{\Gamma}^1_{23} = \dfrac{\epsilon_3}{u_2}, \qquad  \widetilde{\Gamma}^2_{11} = \widetilde{\Gamma}^3_{12} =  \dfrac{(1- \epsilon_1)u_2}{u_1^2}, \\ \widetilde{\Gamma}^2_{12} = \widetilde{\Gamma}^3_{13}  =  \dfrac{\epsilon_1 - 1}{u_1}, \qquad \widetilde{\Gamma}^2_{22} =  -\dfrac{\epsilon_1 u_2^2 - \epsilon_3 u_3^2}{u_2^3}, \qquad \widetilde{\Gamma}^2_{23} = - \dfrac{\epsilon_3 u_3}{u_2^2}, \qquad \widetilde{\Gamma}^2_{33} =  \dfrac{\epsilon_3}{u_2}, \\ \widetilde{\Gamma}^3_{11} =  \dfrac{(1- \epsilon_1)(u_1 u_3 - u_2^2)}{u_1^3}, \qquad \widetilde{\Gamma}^3_{22} = \dfrac{\epsilon_1u_2(u_1 u_3 + u_2^2) - \epsilon_2 u_1 u_3^2 - u_2^3}{u_1 u_2^3}, \\ \widetilde{\Gamma}^3_{23} = -\dfrac{\epsilon_1 u_2 - \epsilon_2 u_3}{u_2^2}, \qquad \widetilde{\Gamma}^3_{33} = -\dfrac{\epsilon_2}{u_2}.
\end{gather*}

\underline{The third equivalence}\\
The metric corresponding to the linear function $a_0=\epsilon_1\,u^1$ is represented by the matrix
\begin{equation}
	g=\begin{bmatrix}
		g_{11}&g_{12}&g_{13}\\g_{12}&g_{13}&0\\g_{13}&0&0
	\end{bmatrix},
\end{equation}
where
\begin{align}
	g_{11}= \, & \, \frac{2}{9}C_1\epsilon_1\bigg(\epsilon_1-\frac{3}{2}\bigg)u_2^{-2-\frac{4\epsilon_1}{3}}u_3^2+2F_1'(u_2)u_3+2\epsilon_1F_1(u_2)\frac{u_3}{u_2}+F_2(u_2),
	\notag\\
	g_{12}= \, & \, - \frac{2}{3}C_1\epsilon_1 u_2 ^{-1-\frac{4\epsilon_1}{3}}u_3+F_1(u_2),
	\notag\\
	g_{13}= \, & \, C_1 u_2^{-\frac{4\epsilon_1}{3}},
	\notag
\end{align}
for some constant $C_1$ and some functions $F_1$, $F_2$ of a single variable.

\subsubsection{Dimension $3$: $2\times 2+1\times 1$ Jordan blocks}
The operator of multiplication by the Euler vector field is represented by the matrix
\beq\label{Lblock_2+1}
L=\begin{bmatrix}
u_{1} & 0 & 0\cr
u_{2} & u_{1} &  0\cr
0 & 0 & u_{3}
\end{bmatrix}.
\eeq
The general solution  of the  equation \eqref{eq:ddainL} is
\begin{equation}
    a_0=F_1(u_1)u_2+F_2(u_1)+F_3(u_3).\label{ddasol21}
\end{equation}

The non-vanishing Christoffel symbols are
\begin{equation}
    \begin{gathered}
       \Gamma^{1}_{11} = -\Gamma^{1}_{13} = -\Gamma^{1}_{33} = \frac{u_1-u_3}{u_2} \, \Gamma^2_{11} =  - \Gamma^{2}_{12} = -\frac{u_1-u_3}{u_2} \, \Gamma^2_{13} =  \Gamma^{2}_{23}   =\frac{u_1-u_3}{u_2} \, \Gamma^2_{33}  \\
    = -\frac{\partial_3 a_0}{(u_1-u_3)},   \qquad \Gamma^{1}_{22}   = \frac{u_1-u_3}{u_2} \Gamma^3_{12} = -\frac{u_1-u_3}{u_2} \Gamma^3_{23} = \frac{\partial_2 a_0}{u_2}, \qquad \qquad \, \,  \\   \Gamma^{2}_{22} = - \frac{\partial_1 a_0}{u_2}, \qquad
      \Gamma^3_{11}  = -\Gamma^3_{13} = \Gamma^3_{33} = \frac{1}{(u_1-u_3)}\left(\partial_1 a_0 - \frac{u_2}{(u_1-u_3)}\partial_2 a_0 \right).
    \end{gathered}
    \label{Chr:21}
\end{equation}

\underline{The first equivalence}\\

In this case, it is sufficient to consider \eqref{eq:3RC}
 for \begin{equation*}
    (i,j,k,l,m) = \overset{a)}{\overbrace{(2,1,1,3,2)}}, \,  \overset{b)}{\overbrace{(2,2,1,2,3)}}, \, \overset{c)}{\overbrace{(2,3,1,3,2)}}.
\end{equation*}  

Inputting \eqref{Chr:21} into \eqref{eq:3RC} together with $c$ as in Table \ref{tab:exC},  we get the following.
\begin{gather*}
    a):  \dfrac{\partial_2 \partial_3 a_0}{u_2} = 0, \quad b):  \dfrac{\partial_2 \partial_3 a_0}{u_1-u_3} + \dfrac{\partial_1 \partial_3 a_0}{u_2} = 0, \quad c):  \dfrac{\partial_2^2 a_0}{u_1-u_3} = 0.
\end{gather*}
\begin{itemize}
    \item a):  This equation implies that $a_0$ contains no terms depending on both $u_2$ and $u_3$. 
\item b): After using the result of a), this equation implies that we must have no terms depending on both $u_1$ and $u_3$.
\item c): This equation implies that $a_0$ must be linear in $u_2$.
\end{itemize}
This precisely specifies the solution \eqref{ddasol21}.

\vspace{1em}

\underline{The second equivalence}\\
Let $a_0 = \epsilon_1\,u_1 + \epsilon_2\,u_2 + \epsilon_3\,u_3$. Then, the non-zero Christoffel symbols for the dual structure are given by
\begin{gather*}
    \widetilde{\Gamma}^1_{11} = \dfrac{\epsilon_2 u_2(u_1 - u_3) - \epsilon_3 u_1 u_3- u_1^2 + u_1 u_3}{u_1^2 (u_1 - u_3)}, \qquad  \widetilde{\Gamma}^1_{12}  = - \dfrac{\epsilon_2}{u_1}, \qquad  \widetilde{\Gamma}^1_{13} = \dfrac{\epsilon_3}{u_1 - u_3}, \\ \widetilde{\Gamma}^1_{22} = \dfrac{\epsilon_2}{u_2}, \qquad  \widetilde{\Gamma}^2_{11} =  -\dfrac{((\epsilon_1 - 1)(u_1-u_3)^2 - \epsilon_3 u_3(2 u_1 - u_3))u_2}{u_1^2}, \\ \widetilde{\Gamma}^2_{12} =  \dfrac{\epsilon_1(u_1 - u_3) - \epsilon_3 u_3 - u_1 + u_3}{u_1(u_1 - u_3)}, \qquad \widetilde{\Gamma}^2_{22} =  -\dfrac{\epsilon_1}{u_2}, \qquad \widetilde{\Gamma}^2_{23} = - \dfrac{\epsilon_3}{u_1 - u_3}, \\ \widetilde{\Gamma}^2_{33} =  \dfrac{\epsilon_3 u_2}{(u_1 - u_3)^2}, \qquad \widetilde{\Gamma}^3_{11} =  \dfrac{(\epsilon_1u_1(u_1 - u_3) -  \epsilon_2 u_2 (2 u_1- u_3) )u_3}{u_1^2 (u_1 - u_3)^2}, \\ \widetilde{\Gamma}^3_{12} = -\dfrac{\epsilon_2 u_3}{u_1(u_1 - u_3)}, \qquad \widetilde{\Gamma}^3_{13} = -\dfrac{\epsilon_1(u_1 - u_3) - \epsilon_2 u_2}{(u_1 - u_3)^2}, \qquad \widetilde{\Gamma}^3_{23} = -\dfrac{\epsilon_2}{u_1 - u_3}, \\ \widetilde{\Gamma}^3_{33} = \dfrac{\epsilon_1 u_1(u_1 -  u_3) - \epsilon_2 u_2 u_3 - (u_1-u_3)^2}{u_3(u_1 - u_3)^2}.
\end{gather*}

\underline{The third equivalence}\\
The metric corresponding to the linear function $a_0=\epsilon_1\,u_1+\epsilon_3\,u_3$ is represented by the matrix
\begin{equation}
	g=\begin{bmatrix}
		g_{11}&g_{12}&0\\g_{12}&0&0\\0&0&g_{33}
	\end{bmatrix},
\end{equation}
where
\begin{align}
	g_{11}= \, & \, F_1(u_2)(u_3-u_1)^{-2\epsilon_3}+C_1 u_2^{1-\epsilon_1}(u_3-u_1)^{-2\epsilon_3-1},
	\notag\\
	g_{12}= \, & \, \frac{C_1}{2\epsilon_3} u_2^{-\epsilon_1}(u_3-u_1)^{-2\epsilon_3},
	\notag\\
	g_{33}= \, & \, C_2(u_1-u_3)^{-2\epsilon_1},
	\notag
\end{align}
for some constants $C_1$, $C_2$ and some function $F_1$ of a single variable.

\subsubsection{Dimension $4$: $4\times 4$ Jordan block}
The operator of multiplication by the Euler vector field is represented by the matrix
\beq
L=\begin{bmatrix}
u_{1} & 0 & 0 & 0\cr
u_{2} & u_{1} & 0 & 0\cr
u_3 & u_2 & u_1 & 0\cr
u_{4} & u_3 & u_{2} & u_{1}
\end{bmatrix}.
\eeq
The general solution  of the  equation \eqref{eq:ddainL} is
\begin{equation}
    a_0=\left( F'_1( u_1) u_2+F_2( u_1)\right)u_3+ F_1( u_1) u_4+\frac{1}{6}F''_1(u_1))u_2^{3}+\f{1}{2}F_2'(u_1)u_2^{2}+F_3(u_1)u_2+F_4(u_1).
    \label{ddasol4}
\end{equation} 
The non-vanishing Christoffel symbols are
        \begin{equation}
        \begin{gathered}
             \Gamma^1_{22}  =   \frac{1}{u_2} \left(  \partial_2 a_0 - \frac{u_3}{u_2} \, \partial_3 a_0 + \frac{1}{u_2}\left(\frac{u_3^2}{u_2} - u_4\right)\partial_4 a_0 \right), \qquad   \Gamma^1_{23} = \frac{1}{u_2}\left(\partial_3 a_0 - \frac{u_3}{u_2} \,  \partial_4 a_0 \right),  \\
                \Gamma^1_{24}   = \Gamma^1_{33} = -\frac{u_2}{u_3} \, \Gamma^2_{24} = \Gamma^2_{34}    = \frac{u_2^2}{(u_3^2-u_2u_4)} \, \Gamma^3_{24} =  -\frac{u_2}{u_3} \, \Gamma^3_{34} = \Gamma^3_{44}   = \frac{\partial_4 a_0}{u_2}, \\
               \Gamma^2_{22}  = \frac{1}{u_2}\left(- \partial_1 a_0 + \frac{u_3^2}{u_2^2} \, \partial_3 a_0 + 2\frac{u_3}{u_2^2}\left(u_4 - \frac{u_3^2}{u_2} \right)\partial_4 a_0  \right),  \\
                \Gamma^2_{23}  = \frac{1}{u_2^2}\left(- u_3 \, \partial_3 a_0 + \left(\frac{2u_3^2}{u_2} - u_4\right)\partial_4 a_0\right), \qquad 
                \Gamma^2_{33} = \frac{1}{u_2}\left(\partial_3 a_0 - 2\frac{u_3}{u_2} \, \partial_4 a_0\right), \\
                \Gamma^3_{22}  = \frac{1}{u_2^2}\left(u_3 \,  \partial_1 a_0 - \frac{u_3^2}{u_2} \, \partial_2 a_0 + \frac{1}{u_2}\left( 2\frac{u_3^4}{u_2^2} - 3 \frac{u_3^2u_4}{u_2} + u_4^2\right)\partial_4 a_0 \right), \\
                \Gamma^3_{23}    = \frac{1}{u_2}\left(-\partial_1 a_0 + \frac{u_3}{u_2} \, \partial_2 a_0 + \frac{u_3}{u_2^2} \left(u_4 - \frac{u_3^2}{u_2} \right) \partial_4 a_0 \right), \\
                \Gamma^3_{33}   = \frac{1}{u_2}\left(-\partial_2 a_0 + \frac{1}{u_2}\left(\frac{2u_3^2}{u_2}- u_4\right)\partial_4 a_0 \right), \\ 
                \Gamma^4_{22}   = \frac{u_2 u_4-u_3^2}{u_2^3}\left(\partial_1 a_0 - \frac{2u_3}{u_2} \,  \partial_2 a_0 + \frac{1}{u_2}\left(\frac{2 u_3^2}{u_2}-u_4\right)\partial_3 a_0 \right), \\
                 \Gamma^4_{23}   = \frac{1}{u_2^2}\left(u_3 \,  \partial_1 a_0 + \left(u_4-\frac{2u_3^2}{u_2} \right) \partial_2 a_0 + \frac{2}{u_2} \left(\frac{u_3^3}{u_2}-u_3 u_4 \right) \partial_3 a_0\right),\\
                   \Gamma^4_{24}    = \frac{1}{u_2}\left(-\partial_1 a_0+ \frac{u_3}{u_2} \, \partial_2 a_0 + \frac{1}{u_2}\left(u_4-\frac{u_3^2}{u_2}\right)\partial_3 a_0\right),   \\          
               \Gamma^4_{33}   = \frac{1}{u_2} \left(-\partial_1 a_0 + 2\frac{u_3}{u_2} \, \partial_2 a_0 + \frac{1}{u_2}\left(u_4-\frac{2u_3^2}{u_2} \right)\partial_3 a_0 \right), \\
               \Gamma^4_{34}  = \frac{1}{u_2}\left(-\partial_2 a_0 + \frac{u_3}{u_2} \, \partial_3 a_0\right), \qquad \Gamma^4_{44} = -\frac{\partial_3 a_0}{u_2}.
                    \end{gathered}
                     \label{Chr:4}
                        \end{equation}
\underline{The first equivalence}\\
In this case, it is sufficient to consider \eqref{eq:3RC}
 for \begin{equation*}
 \begin{aligned}
     (i,j,k,l,m) = \overset{a)}{\overbrace{(2,1,1,3,4)}}, \, \overset{b)}{\overbrace{(4,1,1,3,2)}}, \,  \overset{c)}{\overbrace{(4,3,1,3,2)}}, \\\underset{d)}{\underbrace{(2,1,1,2,3)}}, \,  \underset{e)}{\underbrace{(4,4,3,2,1)}}, \,  \underset{f)}{\underbrace{(4,4,2,4,1)}}.
     \end{aligned}
 \end{equation*} 

Inputting \eqref{Chr:4} into \eqref{eq:3RC} together with $c$ as in Table \ref{tab:exC}, we get the following.

\begin{gather*}
    a): \dfrac{u_3 \partial_4^2 a_0}{u_2^2} = 0, \quad b): \dfrac{\partial_3 \partial_4 a_0}{u_2} = 0, \quad 
    c): \dfrac{(u_3^2-u_2u_4)\partial_3 \partial_4 a_0 + u_2 u_3 \partial_2 \partial_4 a_0}{u_2^3} = 0, \\ d): \dfrac{(u_2 u_4 - u_3^2)\partial_3\partial_4 a_0 + u_2 u_3(\partial_3^2 a_0-\partial_2 \partial_4 a_0)}{u_2^3}, \end{gather*}
    \begin{gather*} e): \dfrac{u_2^2(\partial_2^2 a_0- \partial_1 \partial_3 a_0) + (u_2 u_4 - u_3^2)\partial_3^2 a_0}{u_2^3}, \\ f): \dfrac{(u_3^2-u_2 u_4)\partial_3 \partial_4 a_0- u_2(u_3 \partial_2 \partial_4 a_0 + u_2(\partial_2 \partial_3 a_0 - \partial_1 \partial_4 a_0))}{u_2^3}.
    \end{gather*}

\begin{itemize}
    \item a): $a_0$ must be linear in $u_4$.
\item b): $a_0$ does not have any terms depending on both $u_3$ and $u_4$.
\item c): After using b), $a_0$ does not have any terms depending on both $u_2$ and $u_4$.
\item d): After using b) and c), $a_0$ must be linear in $u_3$.
\item e): After using d), $\partial_2^2 a_0 = \partial_1 \partial_3 a_0$. This fixes the maximum degree in $u_2$ to 3 and specifies $[u_2^2]$, and $[u_2^3]$ from $[u_2 u_3]$ and $[u_3]$, respectively. Here $[u_i^j]$ denotes the coefficient of $u_i^j$ in $a_0$. 
\item f): After using b), and c), $\partial_2 \partial_3 a_0 = \partial_1 \partial_4 a_0$, which specifies the remaining term $[u_2 u_3]$ from the linear term in $u_4$.
\end{itemize}
This precisely determines $a_0$ to coincide with the solution \eqref{ddasol4}.

\vspace{1em}

\underline{The second equivalence}\\
 Let $a_0 = \epsilon_1\,u_1 + \epsilon_2\,u_2 + \epsilon_3\,u_3 + \epsilon_4\,u_4$. Then, the non-zero Christoffel symbols for the dual connection are given by 

\begin{gather*}
    \widetilde{\Gamma}^1_{11} = \dfrac{\epsilon_2 u_1^2 u_2 + \epsilon_3 u_1(u_1 u_3 - u_2^2) + \epsilon_4 u_2(u_1 u_2 + u_2^2 - 2 u_3) - u_1^3}{u_1^4}, \\
     \widetilde{\Gamma}^1_{12}  = - \dfrac{\epsilon_2 u_1^2 - \epsilon_3 u_1 u_2 - \epsilon_4(u_1 u_3 - u_2^2)}{u_1^3}, \qquad  \widetilde{\Gamma}^1_{13} = - \dfrac{\epsilon_3 u_1 - \epsilon_4 u_2}{u_1^2}, \qquad \widetilde{\Gamma}^1_{14} = - \dfrac{\epsilon_4}{u_1}, \\  \widetilde{\Gamma}^1_{22} =  \dfrac{\epsilon_2 u_1^2 u_2^2 - \epsilon_3 u_1 u_2(u_1 u_3 + u_2^2)- \epsilon_4(u_1^2(u_2 u_4 - u_3^2) - u_2^4)}{u_1^2 u_2^3}, \\
     \widetilde{\Gamma}^1_{23} =  \dfrac{\epsilon_3 u_1 u_2 - \epsilon_4 (u_1 u_3 + u_2^2)}{u_1 u_2^2}, \qquad \widetilde{\Gamma}^1_{24} = \widetilde{\Gamma}^1_{33} = \widetilde{\Gamma}^2_{34} = \widetilde{\Gamma}^3_{44}  = \dfrac{\epsilon_4}{u_2},  \\ \widetilde{\Gamma}^2_{11} =  \widetilde{\Gamma}^3_{12} = \widetilde{\Gamma}^4_{13} =  \dfrac{(1 - \epsilon_1)u_2}{u_1^2}, \qquad \widetilde{\Gamma}^2_{12}  = \widetilde{\Gamma}^3_{13} = \widetilde{\Gamma}^4_{14} =  \dfrac{\epsilon_1 - 1}{u_1}, 
   \\\widetilde{\Gamma}^2_{22} = \widetilde{\Gamma}^3_{23} = - \dfrac{\epsilon_1 u_2^3 - \epsilon_3 u_2 u_3^2- 2 \epsilon_4 u_3(u_2 u_4 - u_3^2)}{u_2^4}, \qquad \widetilde{\Gamma}^3_{12} = -\dfrac{\epsilon_2 u_3}{u_1(u_1 - u_3)}, \\
    \widetilde{\Gamma}^2_{23} = -\dfrac{\epsilon_3 u_2 u_3 + \epsilon_4(u_2 u_4 - 2 u_3^2)}{u_2^3}, \qquad \widetilde{\Gamma}^2_{24} = \widetilde{\Gamma}^3_{34} = -\dfrac{\epsilon_4 u_3}{u_2^2}, \qquad  \widetilde{\Gamma}^2_{33} = \dfrac{\epsilon_3 u_2 - 2 \epsilon_4 u_3}{u_2^2}, \\
    \widetilde{\Gamma}^3_{11} = \widetilde{\Gamma}^4_{12} = \dfrac{(1 - \epsilon_1)(u_1 u_3 - u_2^2)}{u_1^3},\\
      \widetilde{\Gamma}^3_{22} = \dfrac{\epsilon_1 u_2^3(u_1 u_3 + u_2^2) - \epsilon_2 u_1 u_2^2 u_3^2 + \epsilon_4 u_1(u_2 u_4 - u_3^2)(u_2 u_4- 2 u_3^2) - u_2^5 }{u_1 u_2^5},\\
   \widetilde{\Gamma}^3_{24} = -\dfrac{\epsilon_4(u_2 u_4 - u_3^2)}{u_2^3}, \qquad \widetilde{\Gamma}^3_{33} = - \dfrac{\epsilon_2 u_2^2 + \epsilon_4(u_2 u_4 - 2 u_3^2)}{u_2^3}, \\ \widetilde{\Gamma}^4_{11} = \dfrac{(1 - \epsilon_1)(u_1^2 u_4 - 2 u_1 u_2 u_3 + u_2^3)}{u_1^4}, 
\end{gather*} 
\vspace{-1.5em}
\begin{align*}
    \widetilde{\Gamma}^4_{22} = \dfrac{1}{u_1^2 u_2^5}(\, & \, \epsilon_1 u_1^2 u_2^3 u_4 - \epsilon_1 u_2^2(u_1^2 u_3^2 + u_2^4) -  2 \epsilon_2 u_1^2 u_2 u_3( u_2 u_4 -   u_3^2)  \\ \, & \,  - \epsilon_3 u_1^2(u_2 u_4 - u_3^2)(u_2 u_4 - 2 u_3^2) + u_2^6), \end{align*} 
 \begin{gather*}
\widetilde{\Gamma}^4_{23} = \dfrac{\epsilon_1 u_2^2(u_1 u_3 + u_2^2) + \epsilon_2 u_1 u_2(u_2 u_4 - 2 u_3^2) - 2 \epsilon_3 u_1 u_3(u_2 u_4 -  u_3^2) - u_2^4     }{u_1 u_2^4}, \\ \widetilde{\Gamma}^4_{24} = - \dfrac{\epsilon_1 u_2^2 - \epsilon_2 u_2 u_3 - \epsilon_3(u_2 u_4 - u_3^2)}{u_2^3}, \\ \widetilde{\Gamma}^4_{33} = - \dfrac{\epsilon_1 u_2^2 - 2 \epsilon_2 u_2 u_3 - \epsilon_3(u_2 u_4 - 2  u_3^2) }{u_2^3},  \qquad \widetilde{\Gamma}^4_{34} = - \dfrac{\epsilon_2 u_2 - \epsilon_3 u_3}{u_2^2}, 
 \qquad \widetilde{\Gamma}^4_{44} = -\dfrac{\epsilon_3}{u_2}.
\end{gather*}

\underline{The third equivalence}\\
The metric corresponding to the linear function $a_0=\epsilon_1\,u_1$ is represented by the matrix
\begin{equation}
	g=\begin{bmatrix}
		g_{11}&g_{12}&g_{13}&g_{14}\\g_{12}&g_{13}&g_{14}&0\\g_{13}&g_{14}&0&0\\g_{14}&0&0&0
	\end{bmatrix}
\end{equation}
where 
\begin{align}
	g_{11}= \, & \, -\frac{1}{6}C_1\epsilon_1(\epsilon_1-2)(\epsilon_1+2)u_2^{-\frac{3\epsilon_1}{2}-3}u_3^{3}+\frac{1}{2}C_1\epsilon_1(\epsilon_1-2) u_2^{-\frac{3\epsilon_1}{2}-2}u_3 u_4\notag\\
	\, & \, + 2F_1''(u_2) u_3^2+\bigg(4\epsilon_1\frac{u_3^2}{u_2}+3u_4\bigg)F_1'(u_2)+2F_2'(u_2)u_3\notag\\
	\, & \, + \epsilon_1 u_2^{-2}\bigg(2\epsilon_1 u_3^2+4 u_2 u_4-3 u_3^{2}\bigg)F_1(u_2)+2\epsilon_1\frac{u_3}{u_2}F_2(u_2)+F_3(u_2),\notag
 \end{align}
 \begin{align}
	g_{12}= \, & \, \frac{1}{2}C_1\epsilon_1 u_2^{-\frac{3\epsilon_1}{2}-1}\bigg(\epsilon_1\frac{u_3^2}{u_2}-u_4\bigg)+2\big(F_1'(u_2)+\epsilon_1 u_2^{-1}F_1(u^2)\big)u_3+F_2(u_2),
	\notag\\
	g_{13}= \, & \, -C_1\epsilon_1 u_2^{-\frac{3\epsilon_1}{2}-1} u_3+F_1(u_2),\notag\\
	g_{14}= \, & \, C_1 u_2^{-\frac{3\epsilon_1}{2}},
	\notag
\end{align}
for some constant $C_1$ and some functions $F_1$, $F_2$, $F_3$ of a single variable.

\subsubsection{Dimension $4$: $3\times 3+1\times 1$ Jordan blocks}
The operator of multiplication by the Euler vector field is represented by the matrix
\beq
L=\begin{bmatrix}
u_{1} & 0 & 0 & 0\cr
u_{2} & u_{1} & 0 & 0\cr
u_3 & u_2 & u_1 & 0\cr
0 & 0 & 0 & u_{4}
\end{bmatrix}.
\eeq
The general solution  of the  equation \eqref{eq:ddainL} is
\begin{equation}
    a_0=F_1( u_1)u_{{3}}+\frac{1}{2}F'_1(u_1)u_2^{2}+F_2(u_1)u_2+F_3( u_1)+F_4(u_4).\label{ddasol31}
\end{equation}

The non-vanishing Christoffel symbols are 
\begin{equation}
            \begin{gathered}
                \Gamma^1_{11}  = - \Gamma^1_{14} = \Gamma^1_{44}  = - \frac{(u_1-u_4)}{u_2} \, \Gamma^2_{11} = \Gamma^2_{12} =   \frac{(u_1-u_4)}{u_2} \, \Gamma^2_{14} = -\Gamma^2_{24}   = - \frac{(u_1-u_4)}{u_2} \, \Gamma^2_{44}   \\
                =  -\frac{(u_1-u_4)^2}{u_3(u_1-u_4)-u_2^2} \,  \Gamma^3_{11}= - \frac{(u_1-u_4)}{u_2} \, \Gamma^3_{12} =  \Gamma^3_{13} =\frac{(u_1-u_4)^2}{u_3(u_1-u_4)-u_2^2} \, \Gamma^3_{14} \quad \, \, \, \,      \\
                = \frac{(u_1-u_4)}{u_2} \, \Gamma^3_{24}  = -\Gamma^3_{34}   = - \frac{(u_1-u_4)^2}{u_3(u_1-u_4)-u_2^2} \, \Gamma^3_{44}= -\frac{\partial_4 a_0}{u_1-u_4}, \qquad \qquad \qquad \,  \\
               \Gamma^1_{22}   = \frac{1}{u_2} \left(\partial_2 a_0 - \frac{u_3}{u_2} \, \partial_3 a_0 \right), \\ \Gamma^1_{23}    = -\frac{u_2}{u_3} \, \Gamma^2_{23} = \Gamma^2_{33}  = \frac{(u_1-u_4)}{u_2} \, \Gamma^4_{13}= \frac{(u_1-u_4)}{u_2} \,  \Gamma^4_{22}=  -\frac{(u_1-u_4)}{u_2} \, \Gamma^4_{34}= \frac{\partial_3 a_0}{u_2},  \\
                \Gamma^2_{22}  = \frac{1}{u_2}\left(-\partial_1 a_0 + \frac{u_3^2}{u_2^2} \, \partial_3 a_0 \right), \qquad \Gamma^3_{22} = \frac{u_3}{u_2^2}\left(\partial_1 a_0 - \frac{u_3}{u_2} \, \partial_2 a_0\right)-\frac{\partial_4 a_0}{u_1-u_4},\\
                \Gamma^3_{23}   = \frac{1}{u_2}\left(-\partial_1 a_0 + \frac{u_3}{u_2} \, \partial_2 a_0 \right), \qquad \Gamma^3_{33} = -\frac{\partial_2 a_0}{u_2}, \\
                \Gamma^4_{11}  = -\Gamma^4_{14} = \Gamma^4_{44} =\frac{1}{u_1-u_4}\left(\partial_1 a_0 - \frac{u_2}{(u_1-u_4)}\,  \partial_2 a_0 + \frac{u_2^2-u_3(u_1-u_4)}{(u_1-u_4)^2} \, \partial_3 a_0 \right),\\
                \Gamma^4_{12} = -\Gamma^4_{24}= \frac{1}{u_1-u_4} \left(\partial_2 a_0 - \frac{u_2}{u_1-u_4} \, \partial_3 a_0 \right).
            \end{gathered}
            \label{Chr:31}
        \end{equation}

\underline{The first equivalence}\\
In this case, it is sufficient to consider \eqref{eq:3RC}
 for \begin{equation*}
 \begin{aligned}
     (i,j,k,l,m) = \overset{a)}{\overbrace{(4,1,1,4,2)}}, \, \overset{b)}{\overbrace{(2,1,1,4,3)}}, \,  \overset{c)}{\overbrace{(2,1,1,2,3)}}, \\\underset{d)}{\underbrace{(3,4,1,4,2)}}, \,  \underset{e)}{\underbrace{(4,2,1,4,2)}}, \,  \underset{f)}{\underbrace{(3,3,1,2,3)}}.
     \end{aligned}
 \end{equation*}

Inputting \eqref{Chr:31} into \eqref{eq:3RC} together with $c$ as in Table \ref{tab:exC},  we get:
\begin{gather*}
    a): \dfrac{\partial_2 \partial_4 a_0}{u_1-u_4} = 0, \quad b): \dfrac{\partial_3 \partial_4 a_0}{u_2} = 0, \quad 
    c): \dfrac{u_3 \partial_3^2 a_0}{u_2^2} = 0, \quad   d): \dfrac{\partial_2 \partial_3 a_0}{u_1-u_4} = 0, \\ e): \dfrac{(u_4-u_1)\partial_1 \partial_4 a_0 - u_2 \partial_2 \partial_4 a_0}{(u_1-u_4)^2} = 0, \quad  f): \dfrac{u_2(\partial_1 \partial_3 a_0 - \partial_2^2 a_0)-u_3 \partial_2 \partial_3 a_0}{u_2^2} = 0.
\end{gather*}

\begin{itemize}
    \item a): $a_0$ does not have any terms depending on both $u_2$ and $u_4$.
    \end{itemize}
    \begin{itemize}
\item b): $a_0$ does not have any terms depending on both $u_3$ and $u_4$.
\item c): $a_0$ must be linear in $u_3$.
\item d): $a_0$ does not have any terms depending on both $u_2$ and $u_3$.
\item e): After using a), $a_0$ does not have any terms depending on both $u_1$, $u_4$.
\item f): After using d), $\partial_1 \partial_3 a_0 = \partial_2^2 a_0$.
\end{itemize}
Thus, $u_4$ is completely isolated but otherwise arbitrary, the $u_3$ dependence is linear and only coupled with $u_1$, making also $u_2$ coupled only with $u_1$. Finally $a_0$ must be quadratic in $u_2$ with coefficients specified by $[u_1 u_3]$.   This gives the solution \eqref{ddasol31}.

\vspace{1em}

\underline{The second equivalence}\\
Let $a_0 = \epsilon_1\,u_1 + \epsilon_2\,u_2 + \epsilon_3\,u_3 + \epsilon_4\,u_4$. Then, the non-zero Christoffel symbols for the dual structure are given by

\begin{gather*}
    \widetilde{\Gamma}^1_{11} = \dfrac{\epsilon_2 u_1 u_2(u_1 - u_4) + \epsilon_3(u_1 u_3 - u_2^2)(u_1 - u_4) - \epsilon_4 u_1^2 u_4 - u_1^3 + u_1^2 u_4}{u_1^3(u_1-u_4)}, \\
    \widetilde{\Gamma}^1_{12}  = - \dfrac{\epsilon_2 u_1- \epsilon_3 u_2}{u_1^2}, \qquad  \widetilde{\Gamma}^1_{13} = - \dfrac{\epsilon_3}{u_1}, \qquad \widetilde{\Gamma}^1_{14} =  \dfrac{\epsilon_4}{u_1 - u_4}, \\ \widetilde{\Gamma}^1_{22} =  \dfrac{\epsilon_2 u_1 u_2 - \epsilon_3(u_1 u_3 + u_2^2)}{u_1 u_2^2}, \qquad \widetilde{\Gamma}^1_{23} =  \dfrac{\epsilon_3}{u_2}, \qquad \widetilde{\Gamma}^1_{44} =  -\dfrac{\epsilon_4 u_1}{u_4(u_1 - u_4)}, \\
    \widetilde{\Gamma}^2_{11} = \widetilde{\Gamma}^3_{12} =  - \dfrac{u_2(\epsilon_1 (u_1 - u_4)^2 -  \epsilon_4 u_4(2u_1 - u_4) -(u_1 - u_4)^2)}{u_1^2(u_1 - u_4)^2}, \\
    \widetilde{\Gamma}^2_{12}  =   \dfrac{\epsilon_1(u_1^2 -  u_4) - \epsilon_4 u_4 - (u_1 - u_4)}{u_1(u_1 - u_4)}, \qquad \widetilde{\Gamma}^2_{14} = - \dfrac{\epsilon_4 u_2}{(u_1 - u_4)^2}  \qquad\widetilde{\Gamma}^2_{22} =  - \dfrac{\epsilon_1 u_2^2 - \epsilon_3 u_3^2}{u_2^3}, \\
    \widetilde{\Gamma}^2_{23} = -\dfrac{\epsilon_3 u_3}{u_2^2}, \qquad \widetilde{\Gamma}^2_{24} = \widetilde{\Gamma}^3_{34} = \dfrac{\epsilon_4}{u_1 - u_4}, \qquad \widetilde{\Gamma}^2_{33} = \dfrac{\epsilon_3}{u_2}, \qquad  \widetilde{\Gamma}^2_{44} = \dfrac{\epsilon_4 u_2}{(u_1 - u_4)^2},
\end{gather*}\vspace{-1.5em}
\begin{align*}
\widetilde{\Gamma}^3_{11}  = -\dfrac{1}{u_1^3(u_1 - u_4)^3}(& \,\epsilon_1(u_1-u_4)^3(u_1 u_3 - u_2^2) \\ \, & \,  - \epsilon_4 u_4\left((u_1 u_3 - u_2^2)(u_1-u_4)^2 - u_1(u_1 u_3 - u_2^2)(u_1 - u_4) + u_1^2 u_2^2 u_4\right)  \\ \, & \, - (u_1 - u_4)^3(u_1 u_3 - u_2^2)),\\
    \widetilde{\Gamma}^3_{22} = \dfrac{1}{u_1 u_2^3(u_1 - u_4)}(& \, \epsilon_1 u_2(u_1 - u_4)(u_1 u_3 + u_2^2) - \epsilon_2 u_1 u_3^2 (u_1  - u_4) - \epsilon_4 u_2^3 u_4 -u_2^3( u_1 -  u_4)),
\end{align*}\vspace{-1.5em}
\begin{gather*}
  \widetilde{\Gamma}^3_{23} = -\dfrac{\epsilon_1 u_2 - \epsilon_2 u_3}{u_2^2}, \qquad \widetilde{\Gamma}^3_{33} = - \dfrac{\epsilon_2}{u_2}, \qquad  \widetilde{\Gamma}^3_{44} = \dfrac{\epsilon_4(u_3(u_1 - u_4) - u_2^2)}{(u_1 - u_4)^3},
\end{gather*}\vspace{-1.5em}
  \begin{align*}
       \widetilde{\Gamma}^4_{11} = \dfrac{\epsilon_4}{u_1^3 (u_1 - u_4)^3}(& \, \epsilon_1 u_1^2(u_1 - u_4)^2 - \epsilon_2 u_1 u_2(u_1 - u_4)(2u_1 - u_4)  \\ \, & \,  - \epsilon_3 (u_4(u_1 - u_4)(u_1 u_3 - u_2^2) - 2u_1^2(u_1 u_3 - u_2^2) + u_1^2(2u_3 u_4 + u_2^2)) ),
  \end{align*}\vspace{-1.5em}
  \begin{gather*}
      \widetilde{\Gamma}^4_{12} = \dfrac{u_4(\epsilon_2 u_1(u_1 - u_4) - \epsilon_3 u_2(2u_1 - u_4))}{u_1^2 (u_1 - u_4)^2}, \qquad \widetilde{\Gamma}^4_{13} = \widetilde{\Gamma}^4_{22}  = \dfrac{\epsilon_3 u_4}{u_1(u_1 - u_4)}, \\ \widetilde{\Gamma}^4_{14} = -\dfrac{\epsilon_1(u_1- u_4)^2 - \epsilon_2 u_2 (u_1 - u_4) - \epsilon_3(u_1 u_3 - u_2^2 - u_3 u_4)}{(u_1 - u_4)^3}, \\ \widetilde{\Gamma}^4_{24} = -\dfrac{\epsilon_2(u_1 - u_4) - \epsilon_3 u_2}{(u_1 - u_4)^2}, \qquad \widetilde{\Gamma}^4_{34} = - \dfrac{\epsilon_3}{u_1 - u_4}, \\ 
          \widetilde{\Gamma}^4_{44}   = \dfrac{\epsilon_1 u_1(u_1 - u_4)^2 - \epsilon_2 u_2 u_4(u_1 - u_4)  - \epsilon_3 u_4(u_3(u_1 - u_4) - u_2^2)  - (u_1 - u_4)^3}{u_4(u_1 - u_4)^3}.
     \end{gather*}

\underline{The third equivalence}\\
The metric corresponding to the linear function $a_0=\epsilon_1\,u_1+\epsilon_4\,u_4$ is represented by the matrix
\begin{equation}
	g=\begin{bmatrix}
		g_{11}&g_{12}&g_{13}&0\\g_{12}&g_{13}&0&0\\g_{13}&0&0&0\\0&0&0&g_{44}
	\end{bmatrix},
\end{equation}
where 
\begin{align}
	g_{11} = \, & \, (u_1-u_4)^{-2\epsilon_4}\Big(\tfrac{2}{9}C_1\epsilon_1\big(\epsilon_1-\tfrac{3}{2}\big) u_2^{-\frac{4}{3}\epsilon_1-2} u_3^2 + 2F_1'(u_2)u_3+2\epsilon_1\frac{u_3}{u_2}F_1(u_2)+F_2(u_2)\Big)\notag\\
	\, & \, + (u_1-u_4)^{-2\epsilon_4-1}\Big(\tfrac{4}{3}C_1\epsilon_4\big(\epsilon_1-\tfrac{3}{2}\big) u_2^{-\frac{4}{3}\epsilon_1}u_3-2\epsilon_4 u_2 F_1(u_2)\Big)\notag\\
	\, & \, + (u_1-u_4)^{-2\epsilon_4-2}\Big(C_1\epsilon_4(2\epsilon_4+1)(u_2)^{-\frac{4}{3}\epsilon_1+2}\Big),
	\notag\\
	g_{12} = \, & \, -2C_1\epsilon_4 u_2^{-\frac{4}{3}\epsilon_1+1}(u_1-u_4)^{-2\epsilon_4-1}\notag\\ \, & \, - \frac{2}{3}C_1\epsilon_1 u_2^{-\frac{4}{3}\epsilon_1-1} u_3 (u_1-u_4)^{-2\epsilon_4}+(u_1-u_4)^{-2\epsilon_4}F_1(u_2),
	\notag\\
	g_{13}= \, & \, C_1 u_2^{-\frac{4\epsilon_1}{3}}(u_1-u_4)^{-2\epsilon_4},\notag\\
	g_{44}= \, & \, C_2(u_1-u_4)^{-2\epsilon_1},
	\notag
\end{align}
for some constants $C_1$, $C_2$ and some functions $F_1$, $F_2$ of a single variable.

\subsubsection{Dimension $4$: $2\times 2+2\times 2$ Jordan blocks}
The operator of multiplication by the Euler vector field is represented by the matrix
\beq
L=\begin{bmatrix}
u_{1} & 0 & 0 & 0\cr
u_{2} & u_{1} & 0 & 0\cr
0 & 0 & u^3 & 0\cr
0 & 0 & u_{4} & u_{3}
\end{bmatrix}.
\eeq
The general solution  of the  equation \eqref{eq:ddainL} is
\begin{equation}
    a_0=F_1(u_1)u_2+F_2(u_1)+F_3(u_3)u_4+F_4(u_3).\label{ddasol22}
\end{equation}

The non-vanishing Christoffel symbols are given by 
  \begin{equation}
            \begin{gathered}
                \Gamma^1_{11}  = -\Gamma^1_{13} = \Gamma^1_{33} = \Gamma^2_{12} = - \Gamma^2_{23} = - \frac{1}{u_1-u_3}\left(\partial_3 a_0+\frac{u_4}{u_1-u_3}\partial_4 a_0 \right),\\
             \Gamma^1_{14}  = - \Gamma^1_{34} = -\frac{(u_1-u_3)}{u_2} \Gamma^2_{14} = \Gamma^2_{24} = \frac{(u_1-u_3)}{u_2}\Gamma^2_{34}  = \frac{u_4}{u_1-u_4}\Gamma^3_{44}=\frac{\partial_4 a_0}{u_1-u_3}, \\
                 \Gamma^1_{22}   = \frac{u_1-u_3}{u_2} \Gamma^3_{12} =- \frac{u_1-u_3}{u_2} \Gamma^3_{23}  = \frac{(u_1-u_3)^2}{u_2 u_4} \, \Gamma^4_{12} =  - \frac{(u_1-u_3)^2}{u_2 u_4} \, \Gamma^4_{23} = - \frac{u_1-u_3}{u_2} \,  \Gamma^4_{24} \\ 
                = \frac{\partial_2 a_0}{u_2}, \qquad
               \Gamma^2_{11}  = -\Gamma^2_{13} = \Gamma^2_{33}= \frac{u_2}{(u_1-u_3)^2}\left(\partial_3 a_0 + \frac{2 u_4}{u_1-u_3} \, \partial_4 a_0\right), \qquad \qquad  \, \\ \Gamma^2_{22} = -\frac{\partial_1 a_0}{u_2}, \qquad
                \Gamma^3_{11}   = -\Gamma^3_{13} = \Gamma^3_{33} = - \Gamma^4_{14} =  \Gamma^4_{34} = \frac{1}{u_1-u_3}\left(\partial_1 a_0 - \frac{u_2}{u_1-u_3} \, \partial_2 a_0\right), \\
                \Gamma^4_{11}   = -\Gamma^4_{13}= \Gamma^4_{33} = \frac{u_4}{(u_1-u_3)^2}\left(\partial_1 a_0 - \frac{2 u_2}{u_1-u_3} \,  \partial_2 a_0\right), \qquad \Gamma^4_{44} = - \frac{\partial_3 a_0}{u_4}.
            \end{gathered}
            \label{Chr:22}
        \end{equation}
        
\underline{The first equivalence}\\
In this case, it is sufficient to consider \eqref{eq:3RC} for
 \begin{equation*}
 \begin{aligned}
     (i,j,k,l,m) = \overset{a)}{\overbrace{(2,1,1,4,2)}}, \, \overset{b)}{\overbrace{(2,1,1,3,2)}}, \,  \overset{c)}{\overbrace{(2,2,1,4,3)}}, \\\underset{d)}{\underbrace{(2,2,1,2,4)}}, \,  \underset{e)}{\underbrace{(2,2,1,2,3)}}, \,  \underset{f)}{\underbrace{(4,4,3,2,1)}}.
     \end{aligned}
 \end{equation*}

Inputting \eqref{Chr:22} into \eqref{eq:3RC} together with $c$ as in Table \ref{tab:exC}, we get the following.
\begin{gather*}
    a): \dfrac{\partial_2 \partial_4 a_0}{u_2} = 0, \quad b): \dfrac{\partial_2 \partial_3 a_0}{u_2} = 0, \quad 
    c): \dfrac{u_4 \partial_4^2 a_0}{(u_1-u_3)^2} = 0, \quad   d): \dfrac{\partial_2 \partial_4 a_0}{u_1-u_3} + \dfrac{\partial_1 \partial_4 a_0}{u_2} = 0, \\ e): \dfrac{u_2 u_4 \partial_2 \partial_4 a_0 + (u_1-u_3)(u_2 \partial_2 \partial_3 a_0 + (u_1-u_3)\partial_1 \partial_3 a_0)}{u_2(u_1-u_3)^2} = 0, \quad  f): \dfrac{u_2 \partial_2^2 a_0}{(u_1-u_3)^2} = 0.
\end{gather*}
\begin{itemize}
    \item a): $a_0$ does not have any terms depending on both $u_2$ and $u_4$.
\item b): $a_0$ does not have any terms depending on both $u_2$ and $u_3$.
\item c): $a_0$ must be linear in $u_4$.
\item d): After using a) $a_0$ does not have any terms depending on both $u_1$ and $u_4$.
\item e): After using a), and b), $a_0$ does not have any terms depending on both $u_1$, $u_3$.
\item f): $a_0$ must be linear in $u_2$.
\end{itemize}
From this we have that we have mixing between $u_3$ and $u_4$, and $u_1$ and $u_2$ separately. Moreover $a_0$ must be linear in both $u_2$ and $u_4$, giving precisely  the solution \eqref{ddasol22}. 

\vspace{0.5em}

\underline{The second equivalence}\\
Let $a_0 = \epsilon_1\,u_1 + \epsilon_2\,u_2 + \epsilon_3\,u_3 + \epsilon_4\,u_4$. Then, the non-zero Christoffel symbols for the dual structure are given by
\begin{gather*}
     \widetilde{\Gamma}^1_{11} = \dfrac{(\epsilon_2 u_2 - 1)(u_1 - u_3)^2  - \epsilon_3 u_1 u_3(u_1 - u_3) - \epsilon_4 u_1^2 u_4}{u_1^2(u_1 - u_3)^2},  \qquad  \widetilde{\Gamma}^1_{12} = -\dfrac{\epsilon_2}{u_1}, \\    \widetilde{\Gamma}^1_{13} = \widetilde{\Gamma}^2_{23} = \dfrac{\epsilon_3(u_1 - u_3) + \epsilon_4 u_4}{(u_1 - u_3)^2}, \qquad    \widetilde{\Gamma}^1_{14} = \widetilde{\Gamma}^2_{24} = \dfrac{\epsilon_4}{u_1 - u_3}, \qquad \widetilde{\Gamma}^1_{22} = \dfrac{\epsilon_2}{u_2}, \\   \widetilde{\Gamma}^1_{33} = -\dfrac{u_1(\epsilon_3 u_3(u_1 - u_3) - \epsilon_4 u_4 (u_1 - 2 u_3))}{u_3^2(u_1 - u_3)^2}, \qquad    \widetilde{\Gamma}^1_{34} = - \dfrac{\epsilon_4 u_1}{u_3(u_1 - u_3)}, \\
      \widetilde{\Gamma}^2_{11} = -  \dfrac{u_2}{u_1^2(u_1-u_3)^3}((\epsilon_1 - 1)(u_1 - u_3)^3  - \epsilon_3 u_3((u_1 - u_3)^2 + u_1(u_1 - u_3))  - 2 \epsilon_4 u_1^2 u_4),\\
     \widetilde{\Gamma}^2_{12} = \dfrac{(\epsilon_1 - 1)(u_1 - u_3)^2 - \epsilon_3u_3(u_1 - u_3) - \epsilon_4 u_1 u_4}{u_1(u_1 - u_3)^2}, \\ \widetilde{\Gamma}^2_{13}  = - \widetilde{\Gamma}^2_{33} = -\dfrac{u_2(\epsilon_3(u_1 - u_3) + 2 \epsilon_4 u_4)}{(u_1 - u_3)^3}, \qquad \widetilde{\Gamma}^2_{14} = - \widetilde{\Gamma}^2_{34} = - \dfrac{\epsilon_4 u_2}{(u_1 - u_3)^2},  \\
     \widetilde{\Gamma}^2_{22} = - \dfrac{\epsilon_1}{u_2}, \qquad    \widetilde{\Gamma}^3_{11} = \dfrac{u_3(\epsilon_1 u_1(u_1 - u_3) - \epsilon_2 u_2 (2u_1 - u_3))}{u_1^2(u_1 - u_3)^2}, \qquad \widetilde{\Gamma}^3_{12} = \dfrac{\epsilon_2 u_3}{u_1(u_1 - u_3)}, \\ \widetilde{\Gamma}^3_{13} = -\dfrac{\epsilon_1(u_1 - u_3) - \epsilon_2 u_2}{(u_1 - u_3)^2}, \qquad \widetilde{\Gamma}^3_{23} =  \widetilde{\Gamma}^4_{24} = - \dfrac{\epsilon_2}{u_1-u_3}, \\
     \widetilde{\Gamma}^3_{33} =  \dfrac{\epsilon_1 u_1 u_3(u_1 - u_3) - \epsilon_2 u_2 u_3^2 + (\epsilon_4 u_4 - u_3)(u_1 - u_3)^2}{u_3^2(u_1-u_3)^2},\\ \widetilde{\Gamma}^3_{34} = - \dfrac{\epsilon_4}{u_3},  \qquad \widetilde{\Gamma}^3_{44} =  \dfrac{\epsilon_4}{u_4},  \qquad \widetilde{\Gamma}^4_{11} = - \widetilde{\Gamma}^4_{13} = - \dfrac{\epsilon_4(\epsilon_1(u_1 - u_3) - 2 \epsilon_2 u_2)}{(u_1-u_3)^3},\\
     \widetilde{\Gamma}^4_{12}  = - \widetilde{\Gamma}^4_{23} =  \dfrac{\epsilon_2 u_4}{(u_1-u_3)^2}, \qquad \widetilde{\Gamma}^4_{14} = - \dfrac{\epsilon_1(u_1 -  u_3) - \epsilon_2 u_2}{(u_1-u_3)^2},
\end{gather*}
\begin{gather*}
     \widetilde{\Gamma}^4_{33} = - \dfrac{u_4 \left(\epsilon_1 u_1 ((u_1 - u_3)^2 - u_3(u_1 - u_3)) + \epsilon_2 u_2 u_3^2 + (\epsilon_3 - 1)(u_1 - u_3)^3   \right)}{u_3^2(u_1 - u_3)^3}, \\
         \widetilde{\Gamma}^4_{44} = - \dfrac{\epsilon_3}{u_4}.
\end{gather*}

\underline{The third equivalence}\\
The metric corresponding to the linear function $a_0=\epsilon_1\,u_1+\epsilon_3\,u_3$ is represented by the matrix
\begin{equation}
	g=\begin{bmatrix}
		g_{11}&g_{12}&0&0\\g_{12}&0&0&0\\0&0&g_{33}&g_{34}\\0&0&g_{34}&0
	\end{bmatrix},
\end{equation}
where
\begin{align}
	g_{11}=\, & \, (u_3-u_1)^{-2\epsilon_3}\bigg(F_1(u_2)+C_1\frac{u_2^{1-\epsilon_1}}{u_3-u_1}\bigg),
	\notag\\
	g_{12}= \, & \, \frac{C_1}{2\epsilon_3} u_2^{-\epsilon_1}(u_3-u_1)^{-2\epsilon_3},
	\notag\\
	g_{33}= \, & \, (u_3-u_1)^{-2\epsilon_1}\bigg(F_2(u_4)+C_2\frac{u_4^{1-\epsilon_3}}{u_3-u_1}\bigg),\notag\\
	g_{34}=\, & \, -\frac{C_2}{2\epsilon_1} u_4^{-\epsilon_3}(u_3-u_1)^{-2\epsilon_1},
	\notag
\end{align}
for some constants $C_1$, $C_2$ and some functions $F_1$, $F_2$ of a single variable.

\subsubsection{Dimension $4$: $2\times 2+1\times 1+1\times 1$ Jordan blocks}
The operator of multiplication by the Euler vector field is represented by the matrix
\beq
L=\begin{bmatrix}
u_{1} & 0 & 0 & 0\cr
u_{2} & u_{1} & 0 & 0\cr
0 & 0 & u_3 & 0\cr
0 & 0 & 0 & u_{4}
\end{bmatrix}.
\eeq
The general solution  of the  equation \eqref{eq:ddainL} is
\begin{equation}
    a_0=F_1(u_1)u_2+F_2(u_1)+F_3(u_3)+F_4(u_4).\label{ddasol211}
\end{equation}

The non-vanishing Christoffel symbols are 
 \begin{equation}
            \begin{gathered}
             \Gamma^1_{11}  = \Gamma^2_{12} =-\left(\frac{\partial_3 a_0}{u_1-u_3} + \frac{\partial_4 a_0}{u_1-u_4} \right), \\
                 \Gamma^1_{13}  = -\Gamma^1_{33} = \Gamma^2_{23} =- \frac{u_1-u_3}{u_2} \, \Gamma^2_{13} =\frac{u_1-u_3}{u_2} \, \Gamma^2_{33} =  \frac{u_3-u_4}{u_1-u_3} \, \Gamma^4_{33} = -\frac{u_3-u_4}{u_1-u_3} \, \Gamma^4_{34} \\
                   = \frac{\partial_3 a_0}{u_1-u_3}, \hspace{26em} \, \,  \\
                    \Gamma^1_{14}  = - \Gamma^1_{44} = \Gamma^2_{24} = -\frac{u_1-u_4}{u_2} \, \Gamma^2_{14} = \frac{u_1-u_4}{u_2} \, \Gamma^2_{44}=\frac{u_3-u_4}{u_1-u_4} \, \Gamma^3_{34}=-\frac{u_3-u_4}{u_1-u_4} \, \Gamma^3_{44}\\
                 = \frac{\partial_4 a_0}{u_1-u_4}, \hspace{26em} \, \, 
                \end{gathered}
                \label{Chr:211}
 \end{equation}
\begin{equation*}
	\begin{gathered}
		\Gamma^1_{22}   = \frac{u_1-u_3}{u_2} \, \Gamma^3_{12} =  -\frac{u_1-u_3}{u_2} \, \Gamma^3_{23} = \frac{u_1-u_4}{u_2} \, \Gamma^4_{12} = - \frac{u_1-u_4}{u_2} \, \Gamma^4_{24} = \frac{\partial_2 a_0}{u_2},\\
		\Gamma^2_{11}   = u_2\left(\frac{\partial_3 a_0}{(u_1-u_3)^2} + \frac{\partial_4 a_0}{(u_1-u_4)^2}\right), \qquad \Gamma^2_{22} = - \frac{\partial_1 a_0}{u_2},
	\end{gathered}
\end{equation*}
 \begin{equation*}
                 \begin{gathered}
                \Gamma^3_{11}  = -\Gamma^3_{13} = \frac{1}{u_1-u_3}\left(\partial_1 a_0 - \frac{u_2}{u_1-u_3} \, \partial_2 a_0\right),\\
                \Gamma^3_{33}  = \left(\frac{1}{u_1-u_3}-\frac{1}{u_3-u_4} \right)\partial_1 a_0 - \frac{u_2}{(u_1-u_3)^2} \,  \partial_2 a_0, \\
                \Gamma^4_{11}  = - \Gamma^4_{14} = \frac{1}{u_1-u_4}\left(\partial_1 a_0 - \frac{u_2}{u_1-u_4} \, \partial_2 a_0\right),\\
                \Gamma^4_{44}   = \frac{1}{u_1-u_4}\left(\partial_1 a_0 - \frac{u_2}{u_1-u_4} \, \partial_2 a_0\right) + \frac{1}{u_3-u_4} \, \partial_3 a_0.
            \end{gathered}
        \end{equation*}

\underline{The first equivalence}\\
In this case, it is sufficient to consider \eqref{eq:3RC}
 for \begin{equation*}
 \begin{aligned}
     (i,j,k,l,m) = \overset{a)}{\overbrace{(2,1,1,3,2)}}, \, \overset{b)}{\overbrace{(2,1,1,4,2)}}, \,  \overset{c)}{\overbrace{(1,1,1,4,3)}}, \\\underset{d)}{\underbrace{(3,4,1,3,4)}}, \,  \underset{e)}{\underbrace{(4,3,4,1,3)}}, \,  \underset{f)}{\underbrace{(4,4,4,2,1)}}.
     \end{aligned}
 \end{equation*} 

Inputting \eqref{Chr:211} into \eqref{eq:3RC} together with $c$ as in Table \ref{tab:exC},  we get the following.

\begin{gather*}
    a): \dfrac{\partial_2 \partial_3 a_0}{u_2} = 0, \quad b): \dfrac{\partial_2 \partial_4 a_0}{u_2} = 0, \quad 
    c): \dfrac{(u_3-u_4) \partial_3 \partial_4 a_0}{(u_1-u_3)(u_1-u_4)} = 0, \quad   d): \dfrac{\partial_1 \partial_3 a_0}{u_3-u_4} = 0, \\ e): \dfrac{\partial_1 \partial_4 a_0}{u_3-u_4} = 0, \quad  f): \dfrac{u_2 \partial_2^2 a_0}{(u_1-u_4)^2} = 0.
\end{gather*}

\begin{itemize}
    \item a) - e): $a_0$ does not have any terms depending on both $\{u_2, u_3\}$, $\{u_2, u_4\}$, $\{u_3, u_4\}$, $\{u_1, u_3\}$, $\{u_1, u_4\}$ (i.e. we only have mixing between $u_1$ and $u_2$) .
\item f): $a_0$ must be linear in $u_2$.
\end{itemize}
This gives precisely  the solution \eqref{ddasol211}.

\vspace{1em}

\underline{The second equivalence}\\
Let $a_0 = \epsilon_1\,u_1 + \epsilon_2\,u_2 + \epsilon_3\,u_3 + \epsilon_4\,u_4$. Then, the non-zero Christoffel symbols for the dual structure are given by
\begin{align*}
    \widetilde{\Gamma}^1_{11} = \dfrac{1}{u_1^2(u_1 - u_4)(u_1 - u_3)}(& \, \epsilon_2 u_2(u_1 - u_3)(u_1 - u_4) - \epsilon_3 u_1 u_3(u_1 - u_4) \\ \, & \, - \epsilon_4 u_1 u_4 (u_1 - u_3) - u_1(u_1 - u_3)(u_1 - u_4),
    \end{align*}
\begin{gather*}
    \widetilde{\Gamma}^1_{12} = - \dfrac{\epsilon_2}{u_1}, \qquad \widetilde{\Gamma}^1_{13} = \widetilde{\Gamma}^2_{23}  = \dfrac{\epsilon_3}{u_1 - u_3}, \qquad \widetilde{\Gamma}^1_{14} = \widetilde{\Gamma}^2_{24} = \dfrac{\epsilon_4}{u_1 - u_4}, \qquad  \widetilde{\Gamma}^1_{22} = \dfrac{\epsilon_2}{u_2}, \end{gather*}
    \begin{gather*}
    \widetilde{\Gamma}^1_{33} = - \dfrac{\epsilon_3 u_1}{u_3(u_1 - u_3)}, \qquad \widetilde{\Gamma}^1_{44} = \dfrac{\epsilon_4 u_1}{u_4(u_1 - u_4)},
\end{gather*}
\begin{align*}
    \widetilde{\Gamma}^2_{11} = - \dfrac{u_2}{u_1^2(u_1 - u_3)^2(u_1 - u_4)^2}(& \, (\epsilon_1 - 1) (u_1 - u_3)^2(u_1 - u_4)^2\\ \, & \,  - \epsilon_3 u_3 (2u_1(u_1 - u_4)^2 - u_3(u_1 - u_4)) \\ \, & \,  + \epsilon_4 u_4 (2u_1(u_1 - u_3)^2 - u_4(u_1 - u_3)) ),
\end{align*}
\begin{align*}
    \widetilde{\Gamma}^2_{12} = \dfrac{1}{u_1(u_1 - u_3)(u_1 - u_4)}(& \, \epsilon_1 u_1^2 - \epsilon_1 u_1(u_3 + u_4) + \epsilon_1 u_3 u_4 - \epsilon_3 u_3 (u_1 - u_4) \\ \, & \, - \epsilon_4 u_4 (u_1 - u_3) - u_1^2 + u_1(u_3 + u_4) - u_3 u_4) ,
\end{align*}
\begin{gather*}
    \widetilde{\Gamma}^2_{22} = -\dfrac{\epsilon_1}{u_2}, \qquad \widetilde{\Gamma}^2_{33} = \dfrac{\epsilon_3 u_2}{(u_1 - u_3)^2}, \qquad  \widetilde{\Gamma}^3_{11} = \dfrac{u_3(\epsilon_1 u_1(u_1 - u_3) - \epsilon_2u_2(2 u_1 - u_3))}{u_1^2(u_1 - u_3)^2}, \\ \widetilde{\Gamma}^3_{12} = \dfrac{\epsilon_2 u_3}{u_1(u_1 - u_3)}, \qquad \widetilde{\Gamma}^3_{13} = -\dfrac{\epsilon_1(u_1 - u_3) - \epsilon_2 u_2}{(u_1 -u_3)^2}, \qquad \widetilde{\Gamma}^3_{23} = - \dfrac{\epsilon_2}{u_1 - u_3},
\end{gather*}
\begin{align*}
    \widetilde{\Gamma}^3_{33} = \dfrac{1}{u_3(u_1 - u_3)^2(u_1 - u_4)}(&\, \epsilon_1 u_1 (u_1 - u_3)(u_3 - u_4) - \epsilon_2 u_2 u_3(u_3 - u_4) \\ \, & \,  - \epsilon_4 u_4(u_1 - u_3)^2 - (u_1 - u_3)^2(u_3 - u_4),
\end{align*}
\begin{gather*}
    \widetilde{\Gamma}^3_{34} = \dfrac{\epsilon_4}{u_3 - u_4}, \qquad \widetilde{\Gamma}^3_{44} = - \dfrac{u_3 \epsilon_4}{u_4(u_3 - u_4)}, \\ \widetilde{\Gamma}^4_{11} = - \dfrac{u_4(\epsilon_1 u_1(u_1 - u_4) - \epsilon_2u_2 (2u_1 - u_4))}{u_1^2(u_1 - u_4)^2}, \qquad \widetilde{\Gamma}^4_{12} = \dfrac{\epsilon_2 u_4}{u_1(u_1-u_4)}, \\ \widetilde{\Gamma}^4_{14} = - \dfrac{\epsilon_1(u_1 - u_4) - \epsilon_2 u_2}{(u_1 - u_4)^2}, \qquad \widetilde{\Gamma}^4_{24} = - \dfrac{\epsilon_2}{u_1 - u_4}, \qquad \widetilde{\Gamma}^4_{33} = \dfrac{\epsilon_3 u_4}{u_3(u_3- u_4)}, \\ \widetilde{\Gamma}^4_{34} = - \dfrac{\epsilon_3}{u_3 - u_4},
\end{gather*}
\begin{align*}
    \widetilde{\Gamma}^4_{44} = \dfrac{1}{u_4(u_1 - u_4)^2(u_3 - u_4)}(& \, \epsilon_1 u_1(u_3 - u_4)(u_1 - u_4)   - \epsilon_2 u_2 u_4(u_3 - u_4) \,  + \\ \, & \,   \epsilon_3 u_3(u_1 - u_4)^2 - (u_1^2 + u_4^2)(u_3 - u_4)  + 2 u_1 u_4(u_2 - u_4) ).
\end{align*}

\underline{The third equivalence}\\
The metric corresponding to the linear function $a_0=\epsilon_1\,u_1+\epsilon_3\,u_3+\epsilon_4\,u_4$ is represented by the matrix
\begin{equation}
	g=\begin{bmatrix}
		g_{11}&g_{12}&0&0\\g_{12}&0&0&0\\0&0&g_{33}&0\\0&0&0&g_{44}
	\end{bmatrix},
\end{equation}
where
\begin{align}
	g_{11}= \, & \, (u_1-u_3)^{-2\epsilon_3}(u_4-u_1)^{-2\epsilon_4}\bigg(F_1(u_2)-C_1\frac{\epsilon_3}{\epsilon_4}\frac{u_2^{1-\epsilon_1}}{u_1-u_3}+C_1\frac{u_2^{1-\epsilon_1}}{u_4-u_1}\frac{}{}\bigg),
	\notag\\
	g_{12}= \, & \, \frac{C_1}{2\epsilon_4} u_2^{-\epsilon_1}(u_1-u_3)^{-2\epsilon_3}(u_4-u_1)^{-2\epsilon_4},
	\notag\\
	g_{33}=\, & \, C_2(u_3-u_4)^{-2\epsilon_4}(u_1-u_3)^{-2\epsilon_1},\notag\\
	g_{44}= \, & \, C_3(u_3-u_4)^{-2\epsilon_3}(u_1-u_4)^{-2\epsilon_1},
	\notag
\end{align}
for some constants $C_1$, $C_2$, $C_3$ and some function $F_1$ of a single variable.

\section{Some conjectures}
Let us summarise the results of  the previous section. We started from regular F-manifolds with Euler vector field $(M,\circ,e,E)$ in dimensions $2,3,$ and $4$ and we constructed the torsionless connection $\nabla$ uniquely determined by imposing the conditions
\begin{itemize}
\item $\nabla e=0$;
\item $d_{\nabla}[(E-a_0e)\,\circ]=0$, where $a_0$ is an arbitrary function.
\end{itemize}
We found that
\begin{enumerate}
\item The data $(\nabla,\circ,e)$ define on $M$ the structure of an F-manifold with compatible connection and flat unit vector field if and only if  $a_0$ is a solution  of the equation \eqref{eq:main}. In one direction, i.e. from solutions of \eqref{eq:main} to F-manifolds with compatible connection, the result follows from Theorem \ref{mainTh} and Proposition \ref{LinIn}.
\item The data $(\nabla,\circ,e,\nabla^*,*,E)$, where $*$ is the dual product and the $\nabla^*$ is given by \eqref{dualfromnatural}, define a bi-flat structure if and only if $a_0$ is a linear function (notice that any linear function satisfies \eqref{eq:main}). In general, due to  Theorem \ref{lin=flat} we already know that the data $(\nabla,\circ,e)$ define a flat structure if and only if $a_0$ is a linear function. Moreover, due to the results of \cite{LP23}, the full statement is true in the case of $r$ Jordan blocks of  sizes $m_1,...,m_r$ for special linear solutions containing only the main variables of each block
\begin{equation*}
		a_0=\overset{r}{\underset{\alpha=1}{\sum}}\,m_\alpha\varepsilon_{\alpha}u^{1(\alpha)}=\overset{r}{\underset{\alpha=1}{\sum}}\,m_\alpha\varepsilon_{\alpha}u^{m_0+m_1+\dots+m_{\alpha-1}+1}.
	\end{equation*}
\item Computations with Maple suggest that, in the case of flat structures obtained from linear function $a_0$, the system 
\begin{equation*}
		\big(\nabla_X g\big)(Y,Z)=\frac{1}{2}\,d\theta(X\circ Y,Z)+\frac{1}{2}\,d\theta(X\circ Z,Y),
	\end{equation*}
for the metric $g$ defining the associated Riemannian F-manifolds	with Killing unit vector field admits non-degenerate solutions if and only if $a_0$ contains only the main variables of the blocks. This result, if confirmed in arbitrary dimension, would allow to ``distinguish'' the special bi-flat structures studied in \cite{LP23}
 (\emph{Lauricella bi-flat structures}) from those obtained from generic linear functions $a_0$. In the semisimple setting all the variables are on the same footing.  In this case the existence of non-degenerate metrics satisfying the above system follows from the results of  \cite{ABLR}.
\end{enumerate} 
Taking into account the comments above, we conjecture that the same results hold true in arbitrary dimension. This is summarized in the table below.

\begin{table}[h!]
\begin{center}

{
\begin{tabular}{ |c|c|c|  }
\hline
 $d\cdot d_La_0=0$   &  Geometric structures on F-manifolds  & Conjectures \\
\hline
\hline
Generic solutions &  Compatible  connection  &  $\eqref{shc-intro}\Rightarrow d\cdot d_La_0=0$? \\
\hline
Linear functions &  Compatible flat connection  & Bi-flat? \\
\hline
$a_0$ given by \eqref{sls-intro} & Bi-flat structure \cite{LP23} & Riemannian? \\
\hline
\end{tabular}
}

\end{center}
\caption{Summary of results and conjectures.}
\label{tab:Summary}
\end{table}

\end{document}